\newcommand{\mytitle}{Placing Green Bridges Optimally for Robust Habitat Reconnection}

\def\thxPACSs{Funded by Deutsche Forschungsgemeinschaft (DFG, German
	Research Foundation), project PACS (FL~1247/1-1, 522475669).}
\def\HUaffil{Humboldt-Universität zu Berlin,
	Department of Computer Science, Algorithm Engineering Group, Germany}

\documentclass[pagebackref]{article}
\usepackage[margin=3cm]{geometry}

\usepackage{authblk}

\title{\Large\bf \mytitle}
\author[1]{Gero Ellmies}
\author[1]{Till Fluschnik\footnote{\thxPACSs}}

\affil[1]{\small\HUaffil%
	\\
	\texttt{$\{$gero.ellmies,till.fluschnik$\}$@hu-berlin.de}
}

\date{}
\usepackage{amsthm}
\usepackage{thmtools}

\usepackage{url}
\usepackage[x11names, table]{xcolor}
\usepackage[utf8]{inputenc}

\usepackage{graphicx}
\usepackage{amsmath}
\usepackage{booktabs}
\usepackage{tabularray}
\UseTblrLibrary{booktabs}
\usepackage{nicefrac}

\usepackage[T1]{fontenc}
\usepackage{amssymb}
\usepackage[outline]{contour}
\usepackage{enumerate}
\usepackage{dsfont}
\usepackage{tabularx}
\usepackage[ruled,vlined,linesnumbered]{algorithm2e}

\usepackage[square,numbers]{natbib}
\usepackage[linkcolor=red!50!black,citecolor=green!40!black,colorlinks]{hyperref}
\usepackage[capitalize,nameinlink]{cleveref}

\usepackage{doi}
\usepackage{paralist}
\usepackage{tikz}
\usetikzlibrary{calc,positioning}
\usetikzlibrary{arrows.meta,positioning}
\usetikzlibrary{patterns.meta}
\usepackage{pgfplots}
\pgfplotsset{compat=1.5}
\pgfplotsset{major grid style={very thin,gray!20!white}} %
\usepackage{pgfplotstable}
\usepackage{mathtools}

\usepackage{multirow}

\usepackage{pifont}%

\newcommand{\appsymb}{$\bigstar$}
\newcommand{\appref}[1]{\ifappendix{}{\hyperref[proof:#1]{\appsymb}}\else{}\appsymb\fi{}}

\usepackage{etoolbox}

\usepackage{xargs}
\newcommandx{\set}[2][1=1]{\ensuremath{\{#1,\ldots,#2\}}}

\newcommandx{\mydefenv}[6][2=A,4=A,6=A]{%
  \ifstrequal{#2}{A}{\newtheorem{#1}{#3}}{}
  \ifstrequal{#4}{A}{\crefname{#1}{#3}{#3s}}{\crefname{#1}{#3}{#4}}%
  \ifstrequal{#6}{A}{\Crefname{#1}{#5{.}}{#5s{.}}}{\Crefname{#1}{#5{.}}{#5{.}}}
}

\mydefenv{theorem}[A]{Theorem}{Thm}
\mydefenv{lemma}[A]{Lemma}{Lem}
\mydefenv{proposition}[A]{Proposition}{Pro}
\mydefenv{conjecture}[A]{Conjecture}{Conj}
\mydefenv{corollary}[A]{Corollary}[Corollaries]{Cor}
\mydefenv{observation}[A]{Observation}{Obs}[B]
\mydefenv{fact}{Fact}{Fact}
\theoremstyle{definition}
\mydefenv{problem}{Problem}{Prob}
\mydefenv{definition}{Definition}{Def}
\mydefenv{construction}{Construction}{Constr}
\mydefenv{rrule}{Reduction Rule}{DRR}
\mydefenv{xrule}{Rule}{Rule}
\mydefenv{myalgo}{Algorithm}{Algo}
\theoremstyle{remark}
\mydefenv{example}{Example}{Ex}
\mydefenv{remark}{Remark}{Rem}
\mydefenv{idea}{Idea}{Idea}
\mydefenv{intuition}{Intuition}{Intuition}

\newcommandx{\decprob}[6][3=Input,5=Question]{
  \begin{problem}[{#1}]\label{prob:#2}
	\textbf{Given} #4,
	the \textbf{question} is whether #6.
  \end{problem}
}

\newcommand{\probref}[1]{\cref{prob:#1}}

\newcommand{\calC}{\mathcal{C}}

\newcommand{\calG}{\mathcal{G}}
\newcommand{\calH}{\mathcal{H}}
\newcommand{\calI}{\mathcal{I}}
\newcommand{\calJ}{\mathcal{J}}

\newcommand{\calS}{\mathcal{S}}

\newcommand{\yes}{\textnormal{\texttt{yes}}}
\newcommand{\no}{\textnormal{\texttt{no}}}
\newcommand{\RD}{$(\Rightarrow)\quad$}
\newcommand{\LD}{$(\Leftarrow)\quad$}
\newcommand{\cqed}{\hfill$\diamond$}

\newcommand{\N}{\mathbb{N}}
\newcommand{\Nzero}{\mathbb{N}_0}

\newcommand{\prob}[1]{\textnormal{\textsc{#1}}}
\newcommand{\hcvcTsc}{\prob{Hamiltonian Cubic Vertex Cover}}
\newcommand{\hcvcAcr}{\prob{HCVC}}
\newcommand{\cvcTsc}{\prob{Cubic Vertex Cover}}
\newcommand{\cvcAcr}{\prob{CVC}}
\newcommand{\tttsatTsc}{\prob{(2,2)-3 Satisfiability}}
\newcommand{\tttsatAcr}{\prob{(2,2)-3SAT}}

\newcommand{\cocl}[1]{\ensuremath{\operatorname{#1}}}
\newcommand{\NP}{\cocl{NP}}

\newcommand{\poly}{\cocl{poly}}
\newcommand{\OPT}{\mathrm{OPT}}

\newcommand{\true}{\ensuremath{\top}}
\newcommand{\false}{\ensuremath{\bot}}

\newcommand{\xcase}[2]{\textit{Case~#1}: #2.}
\newcommand{\xsubcase}[2]{\textit{Subcase~#1}: #2.}

\DeclareMathOperator*{\bigland}{\bigwedge}

\newcommand{\ceq}{\ensuremath{\coloneqq}}
\newcommand{\cif}{\text{if }}
\newcommand{\etal}{et~al.}%

\definecolor{lilla}{HTML}{750787}
\newcommand{\thecolor}{black}%

\usepackage{siunitx}

\usepackage{makecell}

\usepackage{etoolbox}
\newcommand{\ExternalLink}{%
\tikz[x=1.2ex, y=1.2ex, baseline=-0.05ex]{%
    \begin{scope}[x=1ex, y=1ex]
        \clip (-0.1,-0.1) --++ (-0, 1.2) --++ (0.6, 0) --++ (0, -0.6) --++ (0.6, 0) --++ (0, -1);
        \path[draw, line width = 0.5, rounded corners=0.5] (0,0) rectangle (1,1);
    \end{scope}
    \path[draw, line width = 0.5] (0.5, 0.5) -- (1, 1);
    \path[draw, line width = 0.5] (0.6, 1) -- (1, 1) -- (1, 0.6);
    }
}

\newcommand{\rgbpTsc}{\prob{Robust Green Bridges Placement}}
\newcommand{\rgbpAcr}{\prob{RGBP}}

\newcommand{\brgbpAcr}{\prob{B-RGBP}}

\newcommand{\Wlog}{W.l.o.g.}

\newcommand{\Fin}{\ensuremath{F_{\rm in}}}
\newcommand{\bhg}{basic habitat graph}
\newcommand{\Bhg}{Basic habitat graph}

\newcommand{\tikzpreamble}{%
  \tikzstyle{xnode}=[circle,fill,scale=0.5,draw]
  \tikzstyle{xnodeA}=[circle,fill=blue,scale=0.5,draw]
  \tikzstyle{xnodeB}=[circle,fill=orange,scale=0.5,draw]
  \tikzstyle{xnodeC}=[circle,fill=magenta,scale=0.5,draw]
  \tikzstyle{xnodeD}=[circle,fill=teal,scale=0.5,draw]
  \tikzstyle{xnodeF}=[circle,fill=green,scale=0.5,draw]
  \tikzstyle{xcnode}=[circle,fill=blue!10,inner sep=1pt,draw,font=\scriptsize]
  \tikzstyle{xedge}=[-,color=black]
  \tikzstyle{xedgeA}=[-,color=blue]
  \tikzstyle{xedgeB}=[-,color=orange]
  \tikzstyle{xedgeC}=[-,color=magenta]
  \tikzstyle{xedgeD}=[-,color=cyan]
  \tikzstyle{xfedge}=[thick,-,color=red!75!black]
  \tikzstyle{xsedge}=[ultra thick,-,color=green!66!black]
  \tikzstyle{xbedge}=[ultra thick,-,color=blue]
  \tikzstyle{xxedge}=[ultra thick,-,color=\thecolor]
  \tikzstyle{xedgedot}=[thick,-,dotted,color=white]
  \tikzstyle{xhab}=[-,opacity=0.2, line width=3pt, line cap=round, rounded corners]
  \tikzstyle{xhabA}=[xhab,color=cyan]
  \tikzstyle{xhabB}=[xhab,color=orange]
  \tikzstyle{xhabC}=[xhab,color=magenta]
  \tikzstyle{xhabD}=[xhab,color=teal]
  \tikzstyle{xhabE}=[xhab,color=blue]
  \tikzstyle{xhabF}=[xhab,color=green]

  \tikzstyle{xhili}=[circle,opacity=0.2,scale=1.5,fill,draw=none]
  \tikzstyle{xhiliA}=[xhili,color=cyan]
  \tikzstyle{xhiliB}=[xhili,color=orange]
  \tikzstyle{xhiliC}=[xhili,color=magenta]

	\tikzstyle{xemark}=[midway,inner sep=1pt,fill=white,opacity=0.8,font=\scriptsize,sloped]
	\tikzstyle{xemarkz}=[midway,inner sep=1pt,fill=white,font=\scriptsize,sloped]
	\tikzstyle{xemarkb}=[midway,below,inner sep=1pt,fill=white,opacity=0.4,font=\scriptsize,sloped]
	\tikzstyle{xemarka}=[midway,above,inner sep=1pt,fill=white,opacity=0.4,font=\scriptsize,sloped]

  \tikzstyle{xpath}=[->,>=latex,line width=0.35em,opacity=0.25]
  \tikzstyle{xtypeA}=[color=green!66!black]
  \tikzstyle{xtypeB}=[color=blue]
  \tikzstyle{xtypeC}=[color=orange!80!black]
  \tikzstyle{xpathA}=[xpath,xtypeA]
  \tikzstyle{xpathB}=[xpath,xtypeB]
  \tikzstyle{xpathC}=[xpath,xtypeC]
  \tikzstyle{xpathD}=[xpath,color=magenta]
  \tikzstyle{xpathE}=[xpath,color=red]
  \tikzstyle{xpathF}=[xpath,color=brown]
}

\def\teps{0.2*\xr*\yr}
\newcommand{\xn}[1]{($(#1)+(+0,+\teps)$)}
\newcommand{\xne}[1]{($(#1)+(+\teps,+\teps)$)}
\newcommand{\xe}[1]{($(#1)+(+\teps,+0)$)}
\newcommand{\xse}[1]{($(#1)+(+\teps,-\teps)$)}
\newcommand{\xs}[1]{($(#1)+(+0,-\teps)$)}
\newcommand{\xsw}[1]{($(#1)+(-\teps,-\teps)$)}
\newcommand{\xw}[1]{($(#1)+(-\teps,+0)$)}
\newcommand{\xnw}[1]{($(#1)+(-\teps,+\teps)$)}


\newcommand{\myabstract}{%
We study the problem of robustly reconnecting habitats
via the placement of green bridges at minimum total cost.
Habitats are fragmented into patches
and we seek to reconnect each habitat such that it remains connected
even if any of its patches becomes unavailable.
Formally,
we are given an undirected graph
with edge costs,
a set of fixed green bridges represented as a subset of the graph's edges,
a set of habitats represented as vertex subsets,
and some budget.
We decide whether there exists a subset of the graph's edges containing all fixed green bridges
such that,
for each habitat,
the induced subgraph on the solution edges is 2-vertex-connected,
and the total cost does not exceed the budget.
We also study the 2-edge-connectivity variant,
modeling the case where any single reconnecting green bridge may fail.
We analyze the computational complexity of these problems,
focusing on the boundary between NP-hardness and polynomial-time solvability
when the maximum habitat size and maximum vertex degree are bounded by constants.
We prove that for each constant maximum habitat size of at least four there exists a small constant maximum degree for which the problems are NP-hard,
and complement this with polynomial-time algorithms yielding
partial dichotomies for bounded habitat size and degree.
}

\begin{document}

\maketitle

\begin{abstract}
	\myabstract{}
\end{abstract}

\section{Introduction}

When reconnecting fragmented habitats via green bridges,
one is concerned with how robustly a habitat is connected.
Robustness here refers to the scenario in which either a habitat patch or a green bridge becomes unavailable.
A habitat patch can become unavailable due to forest fires, flooding,
or invasive species.
A green bridge can become unavailable due to
defects or maintenance.
In graph-theoretic terms,
the unavailability of habitat patches corresponds to vertex-connectivity,
while the unavailability of green bridges corresponds to edge-connectivity.
It is well-known that vertex-connectivity is stronger
in the sense that any $k$-vertex-connected
graph is also~$k$-edge connected,
while the converse does not hold.
We investigate the cases of 2-vertex-
and 2-edge-connectivity,
modelling the simplest case where a single failure
does not disconnect the habitat.
We present the case of 2-vertex-connectivity
and address the consequences for the case of 2-edge-connectivity
briefly.

We model the problem in line with the literature~\cite{FluschnikK24,HerkenrathFGK22,WallischFK25}
via an undirected graph~$G=(V,E)$.
The habitat patches are represented by the vertex set~$V$,
and the edges in~$E$ represent possible green bridges.
Since the cost of green bridges can differ,
we equip each edge with a cost~$c\colon E\to \Nzero$.
Moreover,
to represent already built bridges,
we provide a set~$\Fin\subseteq E$ of so-called \emph{forced} edges.
We next state our central problem.
Herein,
for a graph~$G=(V,E)$,
vertex subset~$H\subseteq V$ and edge subset~$F\subseteq E$,
we denote by $G[H,F]=(H,\{\{u,v\}\in F\mid u,v\in H\})$
the graph with vertex set~$H$ and all edges from~$F$ with both endpoints in~$H$.
\decprob{\rgbpTsc~(\rgbpAcr)}{rgbp}
{an undirected graph~$G=(V,E)$ with edge costs~$c\colon E\to \Nzero$,
a set~$\Fin\subseteq E$, a set~$\calH$ of subsets of~$V$,
and an integer~$k\in\N$}
{there exists a subset~$F\subseteq E$ with~$\Fin\subseteq F$ and~$\sum_{e\in F} c(e)\leq k$ such that for every~$H\in\calH$ we have that~$G[H,F]$ is 2-vertex-connected}
This work aims to study the
boundary between \NP-hardness and polynomial-time solvability
when the maximum habitat size~$\eta$ and maximum vertex degree~$\Delta$ are bounded by constants.
Concretely,
for each fixed constant value of $\eta$ and~$\Delta$,
we wonder whether \rgbpAcr{} is polynomial-time solvable or \NP-hard.

\subsection{Our Contributions}

By
\brgbpAcr{}
we denote the special case of \rgbpAcr{} with unit costs and no forced edges,
i.e.,
where~$c(e)=1$ for all~$e\in E$ and~$F^*=\emptyset$
(hence, we drop~$c$ and~$F^*$ from the input).
Our main contribution is the following
(depicted in~\cref{fig:results}).
\begin{figure}[t]
  \centering
	\begin{tikzpicture}

	\def\xr{1}
	\def\yr{0.375}
	\def\xsh{8}
	\def\tepsx{\xr*0.5}
	\def\tepsy{\yr*0.5}

	\def\xmin{2}
	\def\xmax{7}
	\def\ymin{3}
	\def\ymax{11}

	\newcommand{\thegrid}[1]{%
		\foreach \x in {\xmin,...,\xmax} {
				\draw[gray!20] (\x*\xr-0.5*\xr,\ymin*\yr-0.5*\yr) -- (\x*\xr-0.5*\xr,\ymax*\yr+0.5*\yr);
				\node at (\x*\xr,\ymin*\yr-1*\yr)[]{\x};
		}
		\foreach \y in {\ymin,...,\ymax} {
		    \def\zscl{1}
				\ifnum\y<7\def\bez{\y}\fi
				\ifnum\y=7\def\bez{$\vdots$}\def\zscl{0.6}\fi
				\ifnum\y=8\def\bez{13}\fi
				\ifnum\y=9\def\bez{14}\fi
				\ifnum\y=10\def\bez{$\vdots$}\def\zscl{0.6}\fi
				\ifnum\y>10\def\bez{22}\fi
				\draw[gray!20] (\xmin*\xr-0.5*\xr,\y*\yr-0.5*\yr) -- (\xmax*\xr+0.5*\xr,\y*\yr-0.5*\yr);
				\node at (\xmin*\xr-0.75*\xr,\y*\yr)[scale=\zscl,inner sep=0pt]{\bez};
		}

		\foreach \x in {\xmin,...,\xmax}
				\foreach \y in {\ymin,...,\ymax}
						\node[circle, inner sep=1pt, fill=red!30, opacity=0.0, draw=none] (n\x-\y) at (\x*\xr,\y*\yr) {}; %

		\node at (\xmin*\xr-0.66*\xr,\ymin*\yr-1.125*\yr)[inner sep=0pt]{$\nicefrac{\eta}{\Delta}$};
	}

	\newcommand{\thelabels}[1]{%
		\def\xlblfnt{\tiny}
		\node at (n7-4)[font=\xlblfnt]{\Cref{prop:H4D7:NPhard}};
		\node at (n5-4)[font=\xlblfnt]{\Cref{prop:H4D5:P}};
		\node at (n5-6)[font=\xlblfnt]{\Cref{prop:H6D5:NPhard}};
		\node at (n3-11)[font=\xlblfnt]{\Cref{prop:D3D4:NPhard}\eqref{item:H22D3:NPhard}};
		\node at (n3-5)[font=\xlblfnt]{\Cref{prop:H5D3:P}};
		\node at (n3-4)[font=\xlblfnt]{\Cref{prop:H4D4:P}};
		\node at (n4-4)[font=\xlblfnt]{\Cref{prop:H4D4:P}};
		\node at (n3-6)[font=\xlblfnt]{\Cref{prop:H6D3:P}};
		\ifnum#1=0
			\node at (n4-8)[font=\xlblfnt]{\Cref{prop:D3D4:NPhard}\eqref{item:H13D4:NPhard}};
			\node at (n4-5)[font=\xlblfnt]{\Cref{prop:H5D4:P}};
			\node at (n6-5)[font=\xlblfnt]{\Cref{prop:H5D6:NPhard}};
		\else
			\node at (n4-9)[font=\xlblfnt]{\Cref{prop:D3D4:NPhard}\eqref{item:H13D4:NPhard}$^\dagger$};
		\fi

	}

	\newcommand{\thedots}{
		\foreach \y in {7,10}{
		 \foreach \x in {2,...,7}{
			\node at (n\x-\y)[scale=0.6]{{\boldmath\(\vdots\)}};
		 }
		}
	}

	\begin{scope}
	 \thegrid{0}

	 \draw[fill=green!50!black, opacity=0.5] ($(n2-3)+(-\tepsx,-\tepsy)$) to ($(n2-11)+(-\tepsx,+\tepsy)$) to ($(n2-11)+(+\tepsx,+\tepsy)$) to ($(n2-3)+(+\tepsx,+\tepsy)$) to ($(n7-3)+(+\tepsx,+\tepsy)$)  to ($(n7-3)+(+\tepsx,-\tepsy)$) to cycle;

	 \draw[fill=green, opacity=0.5] ($(n2-3)+(-\tepsx,-\tepsy)$) to ($(n2-11)+(-\tepsx,+\tepsy)$) to ($(n2-11)+(+\tepsx,+\tepsy)$)
	 to ($(n2-6)+(+\tepsx,+\tepsy)$) to ($(n3-6)+(+\tepsx,+\tepsy)$)  to ($(n3-5)+(+\tepsx,+\tepsy)$)
	 to ($(n4-5)+(+\tepsx,+\tepsy)$) to ($(n4-4)+(+\tepsx,+\tepsy)$) to ($(n5-4)+(+\tepsx,+\tepsy)$) to ($(n5-3)+(+\tepsx,+\tepsy)$) to ($(n7-3)+(+\tepsx,+\tepsy)$) to ($(n7-3)+(+\tepsx,-\tepsy)$) to cycle;

		\draw[fill=red, opacity=0.5] ($(n7-4)+(-\tepsx,-\tepsy)$) to ($(n7-5)+(-\tepsx,-\tepsy)$) to ($(n6-5)+(-\tepsx,-\tepsy)$) to ($(n6-6)+(-\tepsx,-\tepsy)$)
		to ($(n5-6)+(-\tepsx,-\tepsy)$) to ($(n5-8)+(-\tepsx,-\tepsy)$) to ($(n4-8)+(-\tepsx,-\tepsy)$) to ($(n4-10)+(-\tepsx,+\tepsy)$) to ($(n3-10)+(-\tepsx,+\tepsy)$) to ($(n3-11)+(-\tepsx,+\tepsy)$) to ($(n7-11)+(+\tepsx,+\tepsy)$) to ($(n7-4)+(+\tepsx,-\tepsy)$)
		to cycle;

		\thelabels{0}
		\thedots{}
	\end{scope}

	\begin{scope}[xshift=1*\xsh*\xr cm]
	 \thegrid{1}

	 \draw[fill=green!50!black, opacity=0.5] ($(n2-3)+(-\tepsx,-\tepsy)$) to ($(n2-11)+(-\tepsx,+\tepsy)$) to ($(n2-11)+(+\tepsx,+\tepsy)$)
	 to ($(n2-3)+(+\tepsx,+\tepsy)$) to ($(n7-3)+(+\tepsx,+\tepsy)$)  to ($(n7-3)+(+\tepsx,-\tepsy)$) to cycle;

	 \draw[fill=green, opacity=0.5] ($(n2-3)+(-\tepsx,-\tepsy)$) to ($(n2-11)+(-\tepsx,+\tepsy)$) to ($(n2-11)+(+\tepsx,+\tepsy)$) to ($(n2-6)+(+\tepsx,+\tepsy)$) to ($(n3-6)+(+\tepsx,+\tepsy)$) to ($(n3-4)+(+\tepsx,+\tepsy)$) to ($(n5-4)+(+\tepsx,+\tepsy)$) to ($(n5-3)+(+\tepsx,+\tepsy)$) to ($(n7-3)+(+\tepsx,+\tepsy)$) to ($(n7-3)+(+\tepsx,-\tepsy)$) to cycle;

		\draw[fill=red, opacity=0.5] ($(n7-4)+(-\tepsx,-\tepsy)$) to ($(n7-6)+(-\tepsx,-\tepsy)$)
		to ($(n5-6)+(-\tepsx,-\tepsy)$) to ($(n5-9)+(-\tepsx,-\tepsy)$) to ($(n4-9)+(-\tepsx,-\tepsy)$) to ($(n4-10)+(-\tepsx,+\tepsy)$) to ($(n3-10)+(-\tepsx,+\tepsy)$) to ($(n3-11)+(-\tepsx,+\tepsy)$) to ($(n7-11)+(+\tepsx,+\tepsy)$) to ($(n7-4)+(+\tepsx,-\tepsy)$)
		to cycle;

		\thelabels{1}
		\thedots{}
	\end{scope}
	\end{tikzpicture}
	\caption{Overview of our results for (left) \rgbpAcr{} and (right)
	the edge-connectivity version of \rgbpAcr{}.
	Green corresponds to polynomial-time solvability,
	red to~\NP-hardness even with unit costs and no forced edges.
	($^\dagger$: cf.~\Cref{rem:H14D4:NPhard})}
	\label{fig:results}
\end{figure}

\begin{theorem}
 \label{thm:main}
 \rgbpAcr{} is polynomial-time solvable if~$\Delta\leq 2$ or
 $\eta\leq 3$
 Moreover:
 \begin{compactenum}[(i)]
  \item \rgbpAcr{} is polynomial-time solvable if~$\eta\leq 4$ and~$\Delta\leq 5$,
  but~\brgbpAcr{} is \NP-hard even if~$\eta\geq 4$ and~$\Delta\geq 7$.\label{thm:main:4}
  \item \rgbpAcr{} is polynomial-time solvable if~$\eta\leq 5$ and~$\Delta\leq 4$,
  but~\brgbpAcr{} is \NP-hard even if~$\eta\geq 5$ and~$\Delta\geq 6$.
  \item \rgbpAcr{} is polynomial-time solvable if~$\eta\leq 6$ and~$\Delta\leq 3$,
  but~\brgbpAcr{} is \NP-hard even if~$\eta\geq 6$ and~$\Delta\geq 5$.
  \item \brgbpAcr{} is \NP-hard even if~$\eta\geq 13$ and~$\Delta\geq 4$ or if $\eta\geq 22$ and~$\Delta\geq 3$.
 \end{compactenum}
\end{theorem}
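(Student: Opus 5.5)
The plan is to treat \cref{thm:main} as a collection of separate propositions, one per cell of \cref{fig:results}: each polynomial-time case is an algorithm, each \NP-hard case is a reduction, and the theorem follows by combining them. Monotonicity in $(\eta,\Delta)$ is used throughout. A hardness result for $(\eta_0,\Delta_0)$ carries over to every $\eta\geq\eta_0$ and $\Delta\geq\Delta_0$, because ``$\eta\geq\eta_0$'' only means that the class of instances contains those with maximum habitat size $\eta_0$ and maximum degree $\Delta_0$. Likewise, a polynomial-time result for $(\eta_0,\Delta_0)$ covers all smaller parameter values.

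\textbf{Easy polynomial cases.} For $\eta\leq 3$, a graph on at most three vertices is 2-vertex-connected only if it is complete; here I take $K_2$ as 2-connected by convention, or else habitats of size 2 force their edge. Hence every habitat $H$ with $|H|\geq 2$ forces all edges of $G[H]$, and we take exactly the union of these edges. Each such edge is required by some habitat and no other edge is needed, so this is optimal; we accept iff the union exists in $G$ and its cost together with $\Fin$ is at most $k$. For $\Delta\leq 2$, the graph $G[H]$ is a disjoint union of paths and cycles. A 2-connected subgraph on $H$ must therefore be all of $G[H]$, and $G[H]$ must itself be a cycle (or small). Again every edge is forced, and the same argument applies.

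\textbf{Bounded cases $\eta\leq 4,\Delta\leq 5$; $\eta\leq 5,\Delta\leq 4$; $\eta\leq 6,\Delta\leq 3$.} The approach is a case analysis that, in each of these regimes, reduces the problem either to forced edges or to a matching or 2-SAT-type structure. Within a habitat of size at most 6, the minimal 2-connected spanning subgraphs are few: they are Hamiltonian cycles, or ears attached to a cycle. With $\Delta$ small, each habitat has only constantly many candidate edge sets, and a chosen edge is shared by only a few habitats. The key claim to establish in each case is that interactions between habitats form a structure with bounded choices per component; concretely, each unforced edge lies in at most two ``optional'' habitats. This lets us build an auxiliary graph on which a minimum-weight solution can be found by dynamic programming over a path/cycle-like structure, or by a reduction to weighted matching. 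I expect this to be the main obstacle: the degree bound must be exploited exactly, since the next degree up is \NP-hard. My first attempt will be to characterize, for a 4-vertex habitat with $\Delta\leq 5$, when it has two alternative 4-cycles, and then to show these alternatives cannot chain into odd cycles of dependencies.

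\textbf{Hardness.} Each hardness case is a reduction to \brgbpAcr{} from a problem named in the preamble: \cvcTsc/\hcvcTsc{} for the small-$\eta$ cases, and \tttsatTsc{} for $(13,4)$ and $(22,3)$.
\begin{itemize}
\item For the vertex-cover reductions, we use cubic graphs. Each vertex becomes a gadget whose two alternative 4-cycle completions represent ``in/out of the cover.'' Each edge of the source graph becomes a habitat of size 4, 5 or 6 that is 2-connected only if one of its endpoints is chosen, with the budget tuned accordingly. Degree counts then give $\Delta=7,6,5$ respectively.
\item For \tttsatAcr{}, variable gadgets are cycles with two optimal completions. Clause habitats of bounded size connect to literal edges, and long-path gadgets keep the degree at 3 or 4, at the cost of larger habitats (22, resp.\ 13).
\end{itemize}
Correctness follows from the standard two directions, with the budget forcing each gadget into one of its canonical completions.
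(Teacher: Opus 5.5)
Your top-level decomposition (one proposition per cell, combined by monotonicity in $\eta$ and $\Delta$) matches the paper. The cases $\Delta\le 2$ and $\eta\le 3$ are fine, with one correction: under the paper's definition a 2-connected graph has at least three vertices, so a habitat of size at most two makes the instance a \no-instance rather than forcing an edge.

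\textbf{Gap in the polynomial cases.} The bounded cases carry most of the content, and you give no structural lemma for them. The one concrete claim you propose is both false and insufficient.
\begin{itemize}
\item It is false. With $\Delta=5$, the habitats $\{u,v,w,x\}$, $\{u,v,w,x'\}$, $\{u,v,w,x''\}$ can all induce $K_4$'s in a reduced instance, and $\{u,v\}$ is unforced in each of them (case (a) of \cref{lem:H4D5:P:triangle}).
\item It is insufficient. Even if every unforced edge lay in at most two habitats, a habitat with several unforced edges, each shared with a different habitat, has large degree in the interaction graph. Nothing then gives a path/cycle structure.
\end{itemize}
The paper's route is as follows. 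After the reduction rules only a few habitat types survive: $K_4$'s among size-4 habitats; the types of \cref{fig:H5D4:P}(b),(c) among size-5 habitats when $\Delta=4$; (6,3)-habitats among size-6 habitats when $\Delta=3$. Degree saturation then severely restricts how some $H'\in\bnd(\{H\})$ can share an unforced edge with $H$. This yields one of two outcomes.
\begin{itemize}
\item Components of $\calG_{G,\Fin,\calH}$ whose vertex union has constant size. These are solved by brute force and combined via \cref{obs:HG:components}.
\item The only unbounded case, $\eta=4$, $\Delta=5$ with edge intersections. Both endpoints of a shared edge are saturated ($1+2+2=5$), so each $K_4$ meets at most two others. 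The components are therefore paths or cycles (\cref{lem:H4D5:cyclespaths}), solved by the dynamic program of \cref{lem:HG:paths}.
\end{itemize}
Your worry about ``odd cycles of dependencies'' is a red herring. A cycle of any length is handled by guessing one habitat's edge set and running the path DP, so no matching or 2-SAT structure is needed. Mixed habitat sizes must also be handled; the paper shows the largest habitats lie in constant-size components and recurses to the smaller case.

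\textbf{Gap in the hardness part.} You give no construction at all, so the values $\Delta=7,6,5$ and $\eta=13,22$ are asserted rather than derived. The missing idea is how to emulate forced edges in \brgbpAcr{}, which has neither costs nor $\Fin$. Every edge that must be fixed has to be enforced by an auxiliary habitat: a triangle, a habitat in which the edge is incident to a degree-2 vertex, or an induced cycle. These habitats consume degree, so the real work is fitting them under the degree bound.

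This is where the paper uses the Hamiltonian cycle of \hcvcAcr{}. In the $(4,7)$ reduction it guarantees that each vertex occurs at most twice as a first coordinate. In the $(6,5)$ and $(22,3)$ reductions, cycle edges ($X'$) and chords ($X''$) get different gadgets, so that habitats such as $H_j^\dagger$, $H_i^\ddagger$, $H_{i,j}^\dagger$ can force the linking edges. Moreover, the two completions of a vertex gadget differ by exactly one edge (2 vs.\ 1, 3 vs.\ 2, 1 vs.\ 0), so that the budget counts the cover. Canonical solutions are obtained by exchange arguments (\cref{lem:H4D7:NPhard,lem:H6D5:NPhard:sol}), not by the budget alone.

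Your assignment of source problems also differs from the paper's: there, $(5,6)$ comes from \tttsatAcr{}, $(13,4)$ from \cvcAcr{}, and the other three from \hcvcAcr{}. That would be harmless if you had gadgets. But the values $13=7+7-1$ and $22=11+11$ arise from an edge habitat $H_{i,j}$ made of two induced cycles through both vertex gadgets. The cycles share $x_{i,j}$, resp.\ are joined by $\{x^a_{i,j},x^b_{i,j}\}$. Given the forced edges, $H_{i,j}$ is 2-connected exactly when $\{s_i^a,s_i^b\}$ or $\{s_j^a,s_j^b\}$ is chosen. That is a vertex-cover gadget, and nothing in your SAT sketch explains where these numbers would come from.
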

\subsection{Related Work}

\paragraph{Placing Green Bridges Optimally.}
Fluschnik and Kellerhals~\cite{FluschnikK24}
introduced the problem of cost-optimal placement of green bridges
under habitat constraints
and studied \prob{$d$-Reach GBP} (connectivity),
\prob{$d$-Diam GBP} (diameter~$d$),
and \prob{$d$-Closed GBP} (clique-structure).
Herkenrath~\etal~\cite{HerkenrathFGK22}
focus on \prob{$1$-Reach GBP} with small habitats in degree bounded planar graphs,
and provide experiments on real-world landscapes fragmented by highways.
Wallisch~\etal~\cite{WallischFK25}
study \prob{$2$-Diam GBP}
with the same focus as ours,
i.e.,
on small habitats and bounded degrees.
For every constant~$\Delta$ there exist a constant~$\eta_{\Delta}$ such that \prob{$2$-Diam GBP} on graphs of maximum degree~$\Delta$ is trivially polynomial-time solvable if a habitat has size at least~$\eta_{\Delta}$;
we prove that this does not hold for~\rgbpAcr{}.

\paragraph{Spanning 2-connected subgraph and survivable network design.}
\rgbpAcr{} with only one habitat is equivalent to the \prob{Minimum $2$-connected Subgraph (M2cS)} problem,
where one seeks to select a minimum-cost (spanning) edge subset from a graph while keeping it 2-connected.
\prob{M2cS}
is \NP-hard~\cite{GareyJ79},
and the reduction is from \prob{Hamiltonian Cycle (HC)},
which is \NP-hard on cubic graphs~\cite{GareyJT76}.
Since the graph in reduction remains unchanged,
it follows that \rgbpAcr{} is already \NP-hard for only one habitat and maximum degree three.
Note that if the only habitat is of constant size,
then the problem is trivial.
Cheriyan and V\'{e}gh~\cite{CheriyanVegh13} proves a 6-approximation for the problem.

Ene and Vakilian~\cite{EneVakilian14} studies the Survivable Network Design (SNDP) problem
where one seeks a minimum-cost subgraph fulfilling degree and connectivity requirements.
For 2-edge-connectivity,
Jain~\cite{Jain01} proves a 2-approximation.

\paragraph{Community-aware network sparsification.}
The green bridge placement framework of Fluschnik and Kellerhals~\cite{FluschnikK24} is closely related to community-aware network sparsification~\cite{AngluinAR15,GionisRTT17,HerrendorfKMS24},
where habitats correspond to communities and constraints include connectivity~\cite{AngluinAR15,GionisRTT17,HerrendorfKMS24},
star containment~\cite{GionisRTT17,HerrendorfKMS24},
and density~\cite{GionisRTT17}.
Note that density constraints,
even for large~$\alpha<1$,
do not guarantee 2-connectivity.
For a comprehensive overview,
see~\cite{HerrendorfKMS24}.
Cohen~\etal~\cite{CohenHMSW18} study the cost-free clique case,
implying polynomial-time solvability for~$\eta=4$ but \NP-hardness for~$\eta>4$.
In contrast,
we consider bounded-degree graphs,
complementing their clique-based setting.

\paragraph{Group Steiner Trees.}
Khandekar~\etal~\cite{KhandekarKN12} studies the problem of connecting 
(at least one member of) 
each given group via two internally vertex- or edge-disjoint paths. 
Their groups are assumed to be disjoint, 
and not internally 2-connected.
Other works on this are~\cite{ChalermsookDELV18}.

\section{Preliminaries}
\label{sec:prelims}

An undirected graph~$G=(V,E)$ consists of a vertex set~$V$ and edge set~$E\subseteq \binom{V}{2}$.
We also write~$V(G)$ and~$E(G)$ to refer to the vertex and edge set of~$G$,
respectively.
For a vertex subset~$W\subseteq V$,
we denote by~$G[W]\ceq (W,E\cap \binom{W}{2})$ the graph~$G$ induced by~$W$.
For a vertex subset~$W\subseteq V$ or edge subset~$E'\subseteq E$,
we denote the deletion from~$G$ by~$G-W\ceq G[V\setminus W]$ and by~$G-E'\ceq (V,E\setminus E')$
(for singletons, we often write~$G-v$ and~$G-e$).
By~$N_G(v)=\{w\in V\mid \{v,w\}\in E\}$ we define the neighborhood of vertex~$v\in V$.
The degree of vertex~$v$ is~$\deg_G(v)=|N_G(v)|$.
The maximum degree of~$G$ is $\max_{v\in V} \deg_G(v)$.
Two different vertices~$s,t\in V$ are connected if there is a sequence
$(v_1,\dots,v_q)$ of pairwise different vertices in~$V$ with~$s=v_1$ and~$t=v_q$ such that~$\{v_i,v_{i+1}\}\in E$ for all~$i\in\set{q-1}$.
Such a sequence is called a \emph{path}
(a path on~$n$ vertices is also denoted by~$P_n$).
A graph is connected if every pair of vertices is connected.
A subset~$W\subseteq V$ forms a \emph{(connected) component} in~$G$ if every pair of vertices in~$W$ is connected and no vertex~$v\in V\setminus W$ is connected to a vertex in~$W$.
A graph is 2-vertex-connected
(2-connected for short)
if it has at least three vertices, and
for every vertex~$v\in V$,
$G-v$ is connected.
A graph is 2-edge-connected
if it has at least two edges, and
for every edge~$e\in E$,
$G-e$ is connected.\footnote{%
Thus,
any instance~$I=(G,c,\Fin,\calH,k)$ of \rgbpAcr{},
where~$G$ has maximum degree~2,
is a trivial \no-instance
if $|\calH|\geq 1$
or~$c(\Fin)>k$,
and a trivial \yes-instance otherwise.}

\subsection{Decision Problems}

We use the following \cref{prob:hcvc} and \cref{prob:ttts},
which are \NP-complete~\cite{FLEISCHNER20102742,DarmannD21},
in our polynomial-time many-one reductions.

\decprob{\hcvcTsc~(\hcvcAcr)}{hcvc}
{an undirected graph~$G=(V,E)$, a Hamiltonian cycle~$C$ in~$G$, and an integer~$p\in\N$}
{there is vertex subset~$W\subseteq V$ of size at most~$p$ such that every edge in~$E$ has an endpoint in~$W$}

\decprob{\tttsatTsc~(\tttsatAcr)}{ttts}
{a set~$X$
of variables and a (2,2)-3-CNF formula~$\phi=\bigland_{i=1}^M C_i$ of clauses over~$X$, i.e., a 3-CNF formula of clauses over~$X$ each of size exactly three such that each variable appears exactly twice negated and twice unnegated}
{there is a satisfying truth assignment~$\alpha\colon X\to\{\true,\false\}$}

\subsection{Preprocessing}
\newcommand{\therrs}{rr:trivialno,rr:redundantH,rr:forcededge,rr:solvedH,rr:deledge}

To solve an instance of \rgbpAcr{},
it will be useful to preprocess the instance using the following \cref{\therrs}.
We call an instance of \rgbpAcr{} \emph{reduced} if and only if
none of
\cref{\therrs}
is applicable.
Since their correctness and polynomial-time applicability
are immediate,
the corresponding proofs are brief.
Our first rule detects trivial \no-instances.

\begin{rrule}
 \label{rr:trivialno}
 If~$G[H]$ is not 2-connected for some~$H\in \calH$,
 return \no.
\end{rrule}
\begin{proof}
 For every edge set~$F\subseteq E$,
 it holds that~$G[H,F]$ is subgraph of~$G[H]$.
 Thus,
 since $G[H]$ is not 2-connected,
 neither is $G[H,F]$ for every $F\subseteq E$.

 We can apply the rule in polynomial-time
 since for each habitat we can check whether its induced graph is 2-connected in polynomial-time.
\end{proof}
We can safely assume that there are no duplicated habitats.

\begin{rrule}
 \label{rr:redundantH}
 If there are~$H,H'\in \calH$ with~$H=H'$,
 delete one of them.
\end{rrule}
\begin{proof}
 Let~$F\subseteq E$ be a solution.
 Then,
 $G[H,F]$ is 2-connected
 if and only if~$G[H',F]$ is connected.

 We can apply the rule in polynomial-time
 since checking whether two finite sets are equal can be done in polynomial time.
\end{proof}
Every edge required for $G[H]$ to be 2-connected can be safely enforced.

\begin{rrule}
 \label{rr:forcededge}
 If for a habitat~$H$
 there is an edge~$e\in E(G[H])$ such that~$G[H]-e$ is not 2-connected,
 then add~$e$ to~$\Fin$.
\end{rrule}
\begin{proof}
 For every~$F\subseteq E\setminus \{e\}$,
 we have that~$G[H,F]$ is a subgraph of~$G[H]-e$.
 Thus,
 if~$G[H]-e$ is not 2-connected,
 neither is~$G[H,F]$ for every $F\subseteq E\setminus \{e\}$.

 We can apply the rule in polynomial-time
 since for each edge in each habitat
 we check whether the graph without the edge is 2-connected in polynomial-time.
\end{proof}
Note that this implies that if for a habitat~$H$
the graph $G[H]$ contains a vertex~$v$ of degree two,
then its two adjacent edges in~$G[H]$ are added to~$\Fin$.
In particular,
all edges of a 2-connected size-3 habitat (which is a triangle) are forced.

Since~$\Fin\subseteq F$ for every solution~$F$,
we have:

\begin{rrule}
 \label{rr:solvedH}
 If~$G[H,\Fin]$ is 2-connected for some habitat~$H$,
 delete~$H$.
\end{rrule}
\begin{proof}
 Since~$\Fin\subseteq F$ for every solution~$F$,
 we have that if~$G[H,\Fin]$ is 2-connected,
 so is~$G[H,F]$.

 We can apply the rule in polynomial-time
 since for each habitat we can check whether its induced graph on~$\Fin$ is 2-connected in polynomial-time.
\end{proof}
An edge induced by no habitat can be deleted
(budget update may be required).

\begin{rrule}
 \label{rr:deledge}
 If there is an edge~$e\in E$ not contained in~$G[H]$ for every~$H\in \calH$,
 then delete~$e$
 and if~$e\in \Fin$,
 then set~$k\ceq k-c(e)$.
\end{rrule}
\begin{proof}
 Since~$e$ is not in any habitat's induced graph,
 for every minimal solution~$F$
 we have that~$e\in F$ if and only if~$e\in \Fin$.

 We can apply the rule in polynomial-time
 since we can check in polynomial time whether an edge is part of a habitat and~$\Fin$.
\end{proof}

\subsection{\Bhg{}}

Most of our polynomial-time algorithms
rely on the following auxiliary graph that reflects how the habitats intersect in the input graph
(see~\cref{fig:bhg} for an~example).
\begin{figure}[t]
 \centering
 \begin{tikzpicture}
  \def\xr{1.2}
  \def\yr{1}
  \tikzpreamble{}

  \begin{scope}
		\foreach \x in {1,3}{\node (x\x) at (\x*\xr,1*\yr)[xnode]{};}
		\foreach \x in {1,2,3}{\node (y\x) at (\x*\xr,0*\yr)[xnode]{};}
		\node (x2) at (2*\xr,0.5*\yr)[xnode]{};
		\node (y4) at (4*\xr,0.5*\yr)[xnode]{};

		\foreach \x in {1,2,3}{\draw[xedge] (x\x) to (y\x);}
		\foreach \x/\y in {1/2,2/3,3/4}{
			\draw[xedge] (x\x) to (y\y);
			\ifnum\x>0\draw[xedge] (y\x) to (y\y);
				\ifnum\y<4\draw[xedge] (y\x) to (x\y);\fi\fi
			\ifnum\y<4\draw[xedge] (x\x) to (x\y);\fi
		}
		\draw[xedge] (x2) to (y4);
		\draw[xedge] (x1) to (x3);

		\draw[xfedge,ultra thick] (x2) to (y2);
		\draw[xfedge,ultra thick] (x2) to (x3);
		\draw[xfedge,ultra thick] (y2) to (x3);

		\def\teps{0.2*\xr*\yr}
		\draw[xhabA] \xnw{x1} -- \xne{x2} -- \xse{y2} -- \xsw{y1} -- cycle;
		\def\teps{0.3*\xr*\yr}
		\draw[xhabB] \xnw{x1} -- \xne{x3} -- \xse{y2} -- \xsw{y2} -- cycle;
		\def\teps{0.3*\xr*\yr}
		\draw[xhabC] \xnw{x2} -- \xne{x3} -- \xse{y3} -- \xsw{y2} -- cycle;
		\def\teps{0.2*\xr*\yr}
		\draw[xhabF] \xw{x2} -- \xne{x3} -- \xe{y4} -- \xsw{y3} -- cycle;
		\def\teps{0.2*\xr*\yr}
  \end{scope}

	\begin{scope}[xshift=7*\xr cm, yshift=0.5*\yr cm]
	 \node (A) at (0*\xr,0*\yr)[xnodeA]{};
	 \node (B) at (1*\xr,0*\yr)[xnodeB]{};
	 \node (C) at (2*\xr,0*\yr)[xnodeC]{};
	 \node (D) at (3*\xr,0*\yr)[xnodeF]{};

	 \foreach \x/\y in {A/B,C/D}{\draw[xedge] (\x) to (\y);}
	\end{scope}
 \end{tikzpicture}
 \caption{Illustrative example to \cref{def:HG}. (Left) A graph $G$, a habitat set~$\calH$ with a blue, orange, magenta, and green size-4 habitat (contained vertices are encircled), and the set~$\Fin$ of forced edges consisting of the thick/red edges.
 (Right) The corresponding basic habitat graph~$\calG_{G,\calH,\Fin}$ with two components each being a path of length one.}
 \label{fig:bhg}
\end{figure}
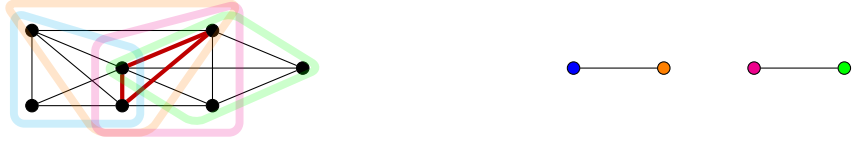

\begin{definition}
 \label{def:HG}
 Given a graph~$G=(V,E)$ with forced edges~$\Fin\subseteq E$ and a set~$\calH$ of habitats,
 the \emph{\bhg{}} $\calG_{G,\Fin,\calH}=(\calH,Y)$
 is the graph with vertex set~$\calH$ and edge set~$Y=\{\{H,H'\}\mid (E(G[H])\cap E(G[H']))\setminus \Fin \neq \emptyset\}$.
\end{definition}
Wallisch~\etal~\cite{WallischFK25} introduced the \emph{habitat (intersection) graph},
which is similar to our \bhg{}
but also requires that no third habitat ``lies in between'' two adjacent habitats.
We distinguish from their notion via the
suffix ``basic''.
We prove,
in analogue to~\cite[Lemma 12]{WallischFK25},
that the problem essentially decomposes across connected components of the habitat graph.

\begin{observation}
 \label{obs:HG:components}
 Let~$I=(G,c,\Fin,\calH,k)$ be an instance of \rgbpAcr{}.
 Let $C_1,\dots,C_q$ form the connected components in~$\calG_{G,\Fin,\calH}$.
 For each~$p\in\set{q}$,
 let $G_p\ceq (\bigcup_{H\in C_p} H, \bigcup_{H\in C_p} E(G[H]))$.
 Then,
 $F$ is a minimum-cost solution for~$I$
 if and only if
 $F\cap E(G_p)$ is a minimum cost solution for~$I_p\ceq (G_p,c,\Fin\cap E(G_p),C_p,k)$ for every~$p\in\set{q}$.
\end{observation}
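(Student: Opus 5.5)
The plan is to show that the cost decomposes over the components and that the constraints also decompose, so that the problem splits into $q$ independent subproblems. I would work under the natural reading that the edges in $G_1,\dots,G_q$ cover all relevant edges. This is the case, for example, in a reduced instance, where by \cref{rr:deledge} every edge lies in some $G[H]$. Any forced edge outside $\bigcup_p E(G_p)$ contributes a fixed cost to every solution, so it can be ignored.

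\textbf{Step 1: non-forced edges belong to a unique subgraph.} Let $e\in E\setminus\Fin$ and suppose $e\in E(G_p)\cap E(G_{p'})$. Then $e\in E(G[H])\cap E(G[H'])$ for some $H\in C_p$ and $H'\in C_{p'}$. By \cref{def:HG}, $H$ and $H'$ are then adjacent in $\calG_{G,\Fin,\calH}$, or $H=H'$. In either case $p=p'$. Hence the sets $E(G_p)\setminus\Fin$ are pairwise disjoint.

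\textbf{Step 2: the constraints decompose.} For $H\in C_p$ we have $G[H,F]=G[H,F\cap E(G_p)]$, since every edge of $G[H]$ lies in $E(G_p)$. Every solution contains $\Fin$. It follows that $F\supseteq\Fin$ is feasible for $I$ if and only if, for every $p$, the set $F\cap E(G_p)$ is feasible for $I_p$ with respect to its habitat constraints and its forced edges $\Fin\cap E(G_p)$. The budget is set aside here and treated through the minimality claim.

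\textbf{Step 3: the cost decomposes.} Write
\[ c(F)=c(\Fin)+\sum_{p} c\bigl((F\cap E(G_p))\setminus\Fin\bigr), \]
which is valid by the disjointness from Step 1. The first term is constant over all solutions. The remaining sum splits into independent summands, the $p$-th depending only on $F\cap E(G_p)$.

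\textbf{Step 4: minimality.} Suppose $F$ is a minimum solution but some $F\cap E(G_p)$ is not minimum for $I_p$. Replace the non-forced part of $F$ in $G_p$ by that of a cheaper feasible solution for $I_p$. The other parts are unaffected, because of the disjointness from Step 1, so the result is feasible by Step 2 and cheaper by Step 3, a contradiction.

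Conversely, suppose every part $F\cap E(G_p)$ is minimum for its $I_p$. Any feasible $F'$ for $I$ restricts to feasible sets for the $I_p$. Each such restriction costs at least as much as the corresponding part of $F$, so summing over $p$ gives $c(F')\ge c(F)$.

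\textbf{Main obstacle.} The delicate point is the bookkeeping of forced edges that are shared between several $G_p$. These edges are counted once in $c(F)$ but appear in several $c(F\cap E(G_p))$. This is why the comparison in Step 4 should be done on the non-forced parts, and why the statement should be read as minimality modulo these fixed costs rather than as additivity of the budgets $k$.
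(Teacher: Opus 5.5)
Your proposal is correct and takes essentially the same route as the paper. Both arguments rest on the same steps: the non-forced parts $(F\cap E(G_p))\setminus\Fin$ are pairwise disjoint by the definition of the basic habitat graph, giving the decomposition $F=\Fin\uplus\biguplus_{p}(F\cap E(G_p))\setminus\Fin$; the forward direction then follows by an exchange argument, and the backward direction by summing and comparing the non-forced parts. Your explicit handling of forced edges shared by several $G_p$, of edges outside all $G_p$ (via the reduced-instance assumption), and of the budget only makes precise what the paper's proof uses tacitly — including the assumption in the backward direction that $I$ has some solution at all, since otherwise each $I_p$ could be a \yes{}-instance while the summed cost exceeds~$k$.
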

\begin{proof}
 By the definition of $\calG_{G,\Fin,\calH}$,
 we know that for every two distinct~$p,p'\in\set{q}$
 we have that $((F\cap E(G_p))\setminus \Fin)\cap ((F\cap E(G_{p'}))\setminus \Fin)=\emptyset$,
 since otherwise there is a habitat in~$C_p$ and one in~$C_{p'}$ that is adjacent.
 Thus,
 every minimum-cost solution $F$ can be represented as~$F=\Fin \uplus \biguplus_{p=1}^q (F\cap E(G_p))\setminus \Fin$.
 The claim now follows since,
 any subsolution not being of minimum cost can be replaced by a minimum-cost subsolution
 without losing feasibility.

 \RD{}
 Let~$F$ be minimum-cost solution.
 Suppose towards a contradiction that there is~$p'\in\set{q}$ such that~$F\cap E(G_{p'})$ is not a minimum cost solution for~$I_{p'}$.
 Let~$F_{p'}$ denote a minimum-cost solution for~$I_{p'}$.
 Then
 $F'=\Fin \uplus (F_{p'}\setminus \Fin) \uplus \biguplus_{p\in\set{q}\setminus\{p'\}} (F\cap E(G_p))\setminus \Fin$
 is a solution to~$I$ of smaller cost;
 a contradiction.

 \LD{}
 Let~$F_p$ be a minimum-cost solution for~$I_p$ for every~$p\in\set{q}$.
 Suppose towards a contradiction that
 $F=\Fin \uplus \biguplus_{p=1}^q (F\cap E(G_p))\setminus \Fin$ is not of minimum-cost,
 and let~$F'$ be a minimum-cost solution for~$I$.
 Since~$F'=\Fin \uplus \biguplus_{p=1}^q (F'\cap E(G_p))\setminus \Fin$,
 there is~$p'\in\set{q}$ such that~$F'\cap E(G_{p'})$ a solution to~$I_{p'}$ of cost smaller than~$F_p$;
 a contradiction.
\end{proof}
Next we show that if the basic habitat graph of an instance is a path or a cycle,
we can use dynamic programming to solve it.
The dynamic program essentially enumerates all feasible edge sets for each habitat and combines them with solutions of its at most two neighboring habitats.
\begin{lemma}
 \label{lem:HG:paths}
 Let~$I=(G,c,\Fin,\calH,k)$ be an instance of \rgbpAcr{} with~$\calG_{G,\Fin,\calH}$ being a (i) path or (ii) a cycle.
 Then,
 we can either report that~$I$ is a \no-instance
 or output a minimum-cost solution~$F$ for~$I$
 in
 (i)~$O(2^{\eta\cdot\Delta}\cdot \poly(|I|))$ time
 or (ii) $O(2^{3\eta\cdot\Delta/2}\cdot \poly(|I|))$ time.\footnote{We denote by $\poly(|I|)$ a polynomial in the input size $|I|$.}
\end{lemma}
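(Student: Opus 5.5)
The plan is a dynamic program along the path or cycle $H_1,\dots,H_m$ of $\calG_{G,\Fin,\calH}$.

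For each habitat $H_i$, let $E_i\ceq E(G[H_i])\setminus\Fin$ be its free edges. Since $G$ has maximum degree $\Delta$, we have $|E(G[H_i])|\le \eta\Delta/2$. Hence $H_i$ has at most $2^{\eta\Delta/2}$ candidate edge sets $S\subseteq E_i$. Call $S$ \emph{feasible for $H_i$} if $G[H_i,S\cup\Fin]$ is 2-connected. Feasibility of each candidate is checkable in polynomial time, so we can precompute the feasible list for every habitat.

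The key structural fact comes from \cref{def:HG}. If $H_i$ and $H_j$ are non-adjacent in the basic habitat graph, then $E_i\cap E_j=\emptyset$. So the only free edges shared between habitats are shared between consecutive habitats $H_{i-1},H_i$ (indices cyclic in case (ii)). An edge could lie in several habitats only if those habitats are pairwise adjacent. On a path this cannot happen for three habitats, and on a cycle only for a triangle, which is constant size and can be brute-forced.

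A global solution restricted to $\bigcup_i E_i$ is therefore determined by a sequence $(S_1,\dots,S_m)$ with $S_i$ feasible for $H_i$. Consecutive choices must be consistent: $S_{i-1}\cap E_i = S_i\cap E_{i-1}$. Its cost is $c(\Fin)+c(\bigcup_i S_i)$. Conversely, every consistent feasible sequence yields a solution. We may also assume w.l.o.g.\ that a minimum solution contains no free edge outside $\bigcup_i E_i$; such edges can be dropped, or handled by \cref{rr:deledge}.

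In case (i), define $T[i,S]$ as the minimum cost of $\bigcup_{j\le i}S_j$ over consistent feasible sequences ending with $S_i=S$. The recurrence is
\[
T[i,S]=\min_{S'} \bigl(T[i-1,S']+c(S\setminus E_{i-1})\bigr),
\]
where the minimum ranges over $S'$ feasible for $H_{i-1}$ with $S'\cap E_i=S\cap E_{i-1}$. Subtracting $c(S\cap E_{i-1})$ avoids double counting, and this is correct because $S\cap E_{i-1}$ is already counted in $S'$ and $S$ shares no free edges with earlier habitats. Each layer compares pairs of sets, giving $2^{\eta\Delta/2}\cdot 2^{\eta\Delta/2}=2^{\eta\Delta}$ pairs and the claimed time bound. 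At the end we compare $\min_S T[m,S]+c(\Fin)$ with $k$ and backtrack to recover $F$. If some layer has no feasible entries, we report \no.

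In case (ii), we additionally guess the set $S_1$ for $H_1$, which costs a factor $2^{\eta\Delta/2}$. We then run the path DP on $H_2,\dots,H_m$ with $H_1$ fixed. In the final step we require consistency of $S_m$ with the guessed $S_1$ on $E_m\cap E_1$, and we do not count $S_m\cap E_1$ twice. This gives $2^{3\eta\Delta/2}\cdot\poly(|I|)$.

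The main obstacle is the cost accounting, i.e.\ ensuring each free edge is counted exactly once. That requires the claim that free edges are shared only by consecutive habitats. I would prove this claim first from \cref{def:HG}, treating separately the degenerate cases of short cycles (length $3$), where all three habitats may share an edge. In those cases $m$ is constant, so enumerating all combinations directly stays within the bound.
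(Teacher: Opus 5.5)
Your proposal is correct and follows essentially the paper's proof: a layered dynamic program along the path whose states are the feasible edge sets of each habitat (at most $2^{\eta\Delta/2}$ per habitat, hence $2^{\eta\Delta}$ transitions per layer), and, for the cycle, an outer enumeration of one habitat's edge set followed by the path DP. The only differences are cosmetic. The paper imposes no consistency condition on shared edges; it simply charges $c(F_i\setminus F_{i-1})$ for the union, relying on the fact that adding edges preserves 2-connectivity. It also handles the cycle by moving the guessed set into $\Fin$ and removing that habitat, rather than treating the triangle separately.
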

\begin{proof}
	We first prove the case (i) of a path
	and then use it for the case (ii) of a cycle.

	Let~$\calH=\{H_1,\dots,H_h\}$ with~$h=|\calH|$ be enumerated such that
	$\{H_i,H_{i+1}\}\in E(\calG_{G,\Fin,\calH})$ for every~$i\in\set{h-1}$.
	For each habitat~$H_i$,
	let~$\calS(H_i)$ denote the set of all edge subsets~$F_i\subseteq E(G[H_i])$
	such that $(\Fin\cap E(G[H_i]))\subseteq F_i$ and~$G[H_i,F_i]$ is 2-connected.
	Note that $|\calS(H_i)|\leq 2^{\frac{\eta\cdot\Delta}{2}}$,
	since in a graph of maximum degree~$\delta$ every graph induced by a size-at-most~$\eta$ habitat has at most~$\frac{\eta\cdot\Delta}{2}$ many edges
	(by Handshaking lemma).
	Moreover,
	we can compute~$\calS(H_i)$ in $O(2^{\frac{\eta\cdot\Delta}{2}}\cdot \poly(|I|))$ time,
	since checking for 2-connectivity can be done in polynomial time.
	If any~$\calS(H_i)$ is empty,
	then report~\no.
	Let~$D[i,F_i]$ with~$F_i\in \calS(H_i)$
	denote the minimum cost of an edge set~$F_i^*$ such that $F_i\subseteq F_i^*$ and for every~$j\in\set{i}$,
	it holds that $G[H_j,F_i^*]$ is 2-connected and $(\Fin\cap E(G[H_j]))\subseteq F_i^*$.
	Let~$D[1,F_1]=c(F_1)$ for every~$F_1\in \calS(H_1)$.
	For each~$i\in\set[2]{n}$ and each~$F_i\in \calS(H_{i})$, let
		\[ D[i,F_i] = \min_{F_{i-1}\in \calS(H_{i-1})} \big( D[i-1, F_{i-1}] + c(F_i\setminus F_{i-1}) \big).
		\]
  We prove that
	$D$ is correct.
	For this,
	we need to prove that for every~$i\in\set{h}$,
	for every~$F_i\in \calS(H_i)$,
	it holds that
	$D[i,F_i]=\alpha$ if and only if
	a minimum-cost edge set~$F_i^*$ such that $F_i\subseteq F_i^*$ and for every~$j\in\set{i}$,
	$G[H_j,F_i^*]$ is 2-connected and $(\Fin\cap E(G[H_j]))\subseteq F_i^*$,
	has cost~$\alpha$.
	By construction,
	we know that this is correct for~$H_1$.
	Assume it is correct for~$i-1$ with~$i>1$.
	Let~$F_{i-1}\in \calS(H_{i-1})$ such that~$D[i,F_i] = D[i-1, F_{i-1}] + c(F_i\setminus F_{i-1})$.

	``$\leq$'': By induction,
	we know that~$D[i-1,F_{i-1}]$ is the minimum cost of an edge set~$F_{i-1}^*$ such that $F_{i-1}\subseteq F_{i-1}^*$ and for every~$j\in\set{i-1}$,
	$G[H_j,F_{i-1}^*]$ is 2-connected and $(\Fin\cap E(G[H_j]))\subseteq F_{i-1}^*$.
	Hence,
	since~$F_i\in \calS(H_{i})$,
	we know that~$D[i,F_i]$ is the cost of an edge set~$F_i^*$ such that $F_i\subseteq F_i^*$ and for every~$j\in\set{i}$,
	$G[H_j,F_i^*]$ is 2-connected and $(\Fin\cap E(G[H_j]))\subseteq F_i^*$.

	``$\geq$'': Let $F_i^*$ be an edge set of minimum cost~$\OPT$ such that $F_i\subseteq F_i^*$ and for every~$j\in\set{i}$,
	$G[H_j,F_i^*]$ is 2-connected and $(\Fin\cap E(G[H_j]))\subseteq F_i^*$.
	For every~$j\in\set{i}$,
	let~$F_{j}'\ceq F_i^*\cap E(G[H_j])$ ;
	note that~$F_{j}'\in \calS(H_j)$ since $G[H_j,F_i^*]$ is 2-connected and $(\Fin\cap E(G[H_j]))\subseteq F_i^*$.
	Then,
	$F'=\bigcup_{j=1}^{i-1} F_{j}'$ is an edge subset such that $F_{i-1}'\subseteq F'$ and for every~$j\in\set{i-1}$,
	$G[H_j,F']$ is 2-connected and $(\Fin\cap E(G[H_j]))\subseteq F'$,
	and let~$\beta$ denote its cost.
	Since~$F_{i-1}'\in\calS(H_{i-1})$,
	by induction,
	we have that~$\OPT \geq \beta+c(F_i\setminus F_{i-1}')\geq  \min_{F_{i-1}\in \calS(H_{i-1})} \big( D[i-1, F_{i-1}] + c(F_i\setminus F_{i-1}) \big) = D[i,F_i]$.
	Finally,
	we can fill each of the at most~$h\cdot 2^{\frac{\eta\cdot\Delta}{2}}$ table entries in~$O(2^{\frac{\eta\cdot\Delta}{2}}\cdot \poly(|I|))$ time.
	Via backtracking,
	we can compute the minimum-cost edge set~$F^*$ such that for every~$j\in\set{n}$,
	it holds that $G[H_j,F^*]$ is 2-connected and $(\Fin\cap E(G[H_j]))\subseteq F^*$.
	Let~$F\ceq\Fin\cup F^*$.
	If~$c(F)>k$,
	output \no,
	otherwise output~$F$:
	since every minimum-cost solution contains~$\Fin$,
	$F$ is a minimum-cost solution for~$I$.

	Almost as a consequence,
	we get (ii) as follows.
	Let~$\calH=\{H_0,H_1,\dots,H_h\}$ with~$h=|\calH|-1$ be enumerated such that $\{H_i,H_{i+1 \bmod h}\}\in E(\calG_{G,\Fin,\calH})$ for every~$i\in\set[0]{h}$.
	For each habitat~$H_i$,
	let~$\calS(H_i)$ denote the set of all edge subsets~$F_i\subseteq E(G[H_i])$
	such that $(\Fin\cap E(G[H_i]))\subseteq F_i$ and~$G[H_i,F_i]$ is 2-connected.
	Recall that~$|\calS(H_i)|\leq 2^{\eta\cdot \Delta/2}$.
	For every~$F^{(j)}_0\in \calS(H_0)$,
	let~$I^{(j)}=(G,c,\Fin^{(j)},\calH^{(j)},k)$ be the instance with~$\Fin^{(j)}=\Fin\cup F^{(j)}_0$ and~$\calH^{(j)}=\calH\setminus H_0$.
	Note that~$\calG_{G,\calH^{(j)}}$ is a path of the form~$(H_1,\dots,H_h)$.
	Hence,
	with~(i),
	we can compute a minimum-cost solution~$F^{(j)}$ for~$I^{(j)}$ in $O(2^{\eta\cdot\Delta}\cdot \poly(|I|))$ time.
	Let~$F \in \arg\,\min_{F^{(j)}_0\in \calS(H_0)} (\Fin^{(j)}\cup F^{(j)})$.
	If~$c(F)>k$,
	then output \no,
	otherwise output~$F$:
	note that since~$\Fin\subseteq \Fin^{(j)}$ is in every minimum-cost solution,
	$F$ as a minimum-cost solution for~$I$.
\end{proof}
\newcommand{\bnd}{\partial}
We also introduce the following notation that will simplify
our proofs. For a set~$\calH'\subseteq \calH$,
we denote by~$\bnd_{\calG_{G,\Fin,\calH}}(\calH')$ the set of all habitats~$H\in \calH\setminus \calH'$
with~$H\cap \bigcup_{H'\in\calH'} H'\neq \emptyset$,
$H\setminus \bigcup_{H'\in\calH'} H'\neq \emptyset$,
and~$\{H,H'\}\in\calG_{G,\Fin,\calH}$ for some~$H'\in\calH'$.
We often write $\bnd(\calH')$ when the subscript is clear from the context.

\section{Habitats of size four}
\label{sec:four}

We first settle that for the case of size-4 habitats,
we need not distinguish between 2-vertex- and 2-edge-connectivity.

\begin{observation}
 \label{obs:H4:same}
 For every graph~$G$ on four vertices,
 it holds that~$G$ is 2-connected if and only if it is 2-edge-connected.
\end{observation}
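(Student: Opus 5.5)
The forward direction is standard and holds for every graph with at least three vertices. A 2-connected graph has a cycle through any two vertices, so it has at least three edges. Removing an edge $e=\{u,v\}$ cannot disconnect it, because $u$ and $v$ lie on a common cycle. Hence the graph is 2-edge-connected. This direction does not use the restriction to four vertices.

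For the converse, let $G$ be 2-edge-connected on four vertices. Then every vertex has degree at least two, since a degree-one vertex would make its only edge a bridge (and $G$ is connected with at least two edges, so there are no isolated vertices). I would suppose that $G$ is not 2-connected and derive a contradiction. Since $G$ has more than three vertices, there is then a cut vertex $v$, and $G-v$ has at least two components on the remaining three vertices. So some component $K$ of $G-v$ has exactly one vertex $w$, whose only possible neighbour is $v$. Then $\deg_G(w)\le 1$, contradicting minimum degree two.

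The only care needed is with the definitional side conditions. Two-connectivity requires at least three vertices, which holds since we have four. Two-edge-connectivity requires at least two edges, which follows from the degree bound. There is no real obstacle here; the key point is that three vertices cannot be split into two components each containing at least two vertices, and this is exactly where $n=4$ matters, since for $n=5$ two triangles sharing a vertex give a counterexample.
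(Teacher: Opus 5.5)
Your proof is correct and follows the paper's argument: a cut vertex $v$ leaves only three vertices in $G-v$, so some component is a singleton $\{w\}$ adjacent only to $v$, which makes $\{v,w\}$ a bridge. You phrase this last step as $\deg_G(w)\le 1$ contradicting minimum degree two, which amounts to the same thing, and you also spell out the standard forward direction that the paper takes for granted.
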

\begin{proof}
 We only need to prove the backward direction,
 which we prove via contraposition.
 Let~$G=(V,E)$ with~$V=\{u,v,w,x\}$ be connected but not 2-vertex-connected.
 We show that~$G$ is also not 2-edge-connected.
 \Wlog{}, let $v\in V$ such that~$G-v$ is disconnected.
 Let~$A,B$ be two connected components in~$G-v$.
 Since~$G-v$ has only three vertices,
 one of~$A$ or~$B$,
 say~$B$,
 contains a single vertex,
 say~$w$.
 Since~$G$ is connected,
 there is the edge~$e=\{v,w\}\in E$,
 but~$w$ is adjacent neither to~$u$ nor to~$x$.
 Hence,
 $G-e$ is disconnected,
 so
 $G$ is not 2-edge-connected.
\end{proof}
\subsection{\(\eta=4\) and \(\Delta\geq 7\)}

\begin{proposition}\label{prop:H4D7:NPhard}
 \rgbpAcr{} is \NP-hard even if~$\eta=4$ and~$\Delta\geq 7$.
\end{proposition}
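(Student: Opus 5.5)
We prove hardness for \brgbpAcr{} with $\eta=4$ and $\Delta=7$ by a polynomial-time many-one reduction from \hcvcAcr{} (\cref{prob:hcvc}). Larger~$\Delta$ then follows trivially, since an instance with maximum degree seven also has maximum degree at most~$\Delta$ for every~$\Delta\geq 7$. By \cref{obs:H4:same}, the same construction also yields hardness for the 2-edge-connectivity variant. The Hamiltonian cycle~$C$ is what keeps the degree bounded: it fixes a cyclic order of the vertices, so vertex gadgets can be chained along~$C$ and the gadgets of consecutive vertices interact only locally. The input graph is cubic, so each vertex has exactly one incident edge outside~$C$ (a chord).

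The gadgets work as follows. On a size-4 habitat~$H$ with $G[H]\cong K_4$, a minimal 2-connected spanning subgraph is either a 4-cycle (three choices, 4 edges each) or uses 5 edges. Thus a $K_4$-habitat encodes a three-way (or binary) choice at cost~4, and any additional requirement forces a fifth edge. For each vertex~$v$ of the input graph I build a small gadget of overlapping size-4 habitats with two cheap states of equal cost, \emph{in} ($v\in W$) and \emph{out}. The state is propagated through a chain of habitats that share an edge pairwise; the shared edge is selected or not depending on which 4-cycle is chosen.

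Each input edge $\{u,v\}$ gets an edge habitat on two vertices from the gadget of~$u$ and two from the gadget of~$v$. It is arranged so that it becomes 2-connected for free, using already-paid edges, whenever at least one endpoint is \emph{in*}. If both endpoints are \emph{out*}, it needs at least one extra edge. Choosing \emph{in} costs exactly one more edge than \emph{out}. The budget is $k= B + p$, where $B$ is the baseline cost of all gadgets in state \emph{out*} plus the edges forced by \cref{rr:forcededge}. Each vertex of the Hamiltonian cubic graph is involved in three edge habitats: the two cycle edges and one chord. Counting edges per gadget vertex, I expect a maximum degree of~7.

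For correctness, the forward direction takes a vertex cover~$W$, sets the gadgets of~$W$ to \emph{in} and all others to \emph{out*}, and checks every habitat. The backward direction is the crux: given a solution of cost at most~$k$, I must show it can be normalized. In every gadget, each $K_4$-habitat uses 4 edges in one of the intended patterns. Any gadget deviating (e.g.\ using five edges, or an inconsistent mix) can be replaced by the \emph{in} state without increasing cost. After normalization, an edge habitat with both endpoints \emph{out*} would need an extra edge, and flipping one endpoint to \emph{in} costs at most that edge. So one may assume all edge habitats are satisfied through gadget states, and the \emph{in} vertices form a vertex cover of size at most~$p$.

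I expect the main obstacle to be designing the gadget so that three simultaneous constraints hold: the degree stays at~7 while one gadget vertex participates in several habitats; no unintended cheap 2-connected pattern exists; and the exchange argument is local, meaning fixing one gadget never breaks a neighbor's habitat. I would first try one $K_4$ per vertex, with consecutive vertex gadgets along~$C$ sharing a vertex, and check by exhaustive case analysis of the three 4-cycles of~$K_4$.
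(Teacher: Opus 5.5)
There is a genuine gap. Your outline has the same skeleton as the paper's proof: a reduction from \hcvcAcr{}, one $K_4$-habitat per vertex whose \emph{out} state costs one edge and whose \emph{in} state costs two, a budget of the form baseline $+\,p$, and an exchange-based normalization. But for an \NP-hardness result the explicit construction is the proof, and you never give it. The gadgets, the habitats forcing all non-choice edges, the degree count (you only ``expect'' seven) and the normalization lemma are all left open.

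Moreover, the one concrete design you commit to cannot work. Take an edge habitat $\{a,a',b,b'\}$ with $a,a'$ in the gadget of $u$ and $b,b'$ in the gadget of $v$.
\begin{itemize}
\item The cross edges belong to neither gadget, so the only state-dependent edges of this habitat are $\{a,a'\}$ and $\{b,b'\}$.
\item Since 2-connectivity is preserved under adding edges, \emph{in} must mean that the respective edge is present.
\item Suppose the habitat is 2-connected for free when $u$ is \emph{in} and $v$ is \emph{out}. Then each of $b,b'$ has two neighbours in it, both among $a,a'$, via edges that are paid in every state.
\item Those four cross edges already form the 4-cycle $(a,b,a',b')$. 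So the habitat is 2-connected in every state, including when both endpoints are \emph{out}, and it encodes nothing.
\end{itemize}
Hence an ``or'' of the two endpoint states cannot be realized by a single size-4 habitat placed directly between two vertex gadgets. As a minor point, your states cannot have ``equal cost'' as you first say; \emph{in} must cost exactly one edge more, as you say later.

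The paper avoids this by routing each input edge through a separate gadget $V_{i,j}=\{u_{i,j},w_{i,j},c_{i,j},b_{i,j}\}$ with hub $b_{i,j}$. It uses three $K_4$-habitats: $V_{i,j}$, $\{v_i^3,v_i^4,u_{i,j},b_{i,j}\}$ and $\{v_j^1,v_j^2,w_{i,j},b_{i,j}\}$. Five auxiliary habitats with degree-2 vertices force every edge except those in $E_i^*$ and in $E_{i,j}^*=\{\{v_i^4,b_{i,j}\},\{b_{i,j},v_j^2\},\{u_{i,j},b_{i,j}\},\{w_{i,j},b_{i,j}\}\}$.

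The edge gadget is therefore never free. It costs two edges of $E_{i,j}^*$ if an \emph{in}-edge $\{v_i^3,v_i^4\}$ or $\{v_j^1,v_j^2\}$ is selected, and three otherwise, which gives $k=|E_2'|+2m+n+p$. Normalization is then a two-stage exchange. First, enforce exactly two edges per $E_{i,j}^*$, e.g.\ by swapping $\{v_i^4,b_{i,j}\}$ for $\{v_i^3,v_i^4\}$. Second, enforce no two adjacent edges per $E_i^*$.

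The Hamiltonian cycle is not used for chaining gadgets. It orients the input edges so that every vertex is the first endpoint of at most two edges and the second endpoint of at most two. This splits the external load between $\{v_i^3,v_i^4\}$ and $\{v_i^1,v_i^2\}$, giving degree $3+4=7$ (and $b_{i,j}$ also has degree $3+4$).
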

We give a polynomial-time many-one reduction from \hcvcAcr{} (\probref{hcvc}).
\begin{intuition}
 \label{int:H4D7:NPhard}
 Each vertex of the vertex cover instance is represented by a $K_4$-habitat.
 All but three edges are forced into every solution (via auxiliary habitats),
 leaving a choice between selecting either one or two of the remaining edges
 to achieve 2-connectivity.
 Selecting two edges corresponds to choosing the vertex into the vertex cover;
 selecting only one corresponds to excluding it.
 If two edges are selected,
 then in the incident edge gadget only two further edges are needed
 to make each of its three $K_4$-habitats 2-connected.
\end{intuition}

\begin{construction}\label{constr:H4D7:NPhard}
 Let~$I=(G,p,C)$ be an instance of HCVC with~$G=(V,E)$ and let $V=\{v_1,\dots,v_n\}$ be enumerated such that~$C=(V,\bigcup_{i=1}^{n-1}\{\{v_i,v_{i+1}\}\}\cup \{\{v_n,v_1\}\})$ forms a Hamiltonian cycle in~$G$,
 and let~$m\ceq|E|$.
 Let~$X=\{(i,j)\mid \{v_i,v_j\}\in E \land i<j\}\setminus\{(1,n)\}\cup\{(n,1)\}$
 (note that~$|X|=|E|$).
 Construct an instance~$I'=(G',\calH,k)$ with~$G'=(V',E')$ of \brgbpAcr{} as follows
 (see~\cref{fig:H4D7:NPhard} for an illustration).
 \begin{figure}[t]
 \centering
 \begin{tikzpicture}
  \def\xr{1}
  \def\yr{1}
  \tikzpreamble{}

  \newcommandx{\nodegadget}[7][6=1]{%
	\node (#1sw) at (#2*\xr-0.5*\xr,#3*\yr-0.5*\yr)[xnode, label=-#6*90:{$v_{#7}^2$}]{};
	\node (#1se) at (#2*\xr+0.5*\xr,#3*\yr-0.5*\yr)[xnode, label=-#6*90:{$v_{#7}^4$}]{};
	\node (#1nw) at (#2*\xr-0.5*\xr,#3*\yr+0.5*\yr)[xnode, label=#6*90:{$v_{#7}^1$}]{};
	\node (#1ne) at (#2*\xr+0.5*\xr,#3*\yr+0.5*\yr)[xnode, label=#6*90:{$v_{#7}^3$}]{};

	\foreach \x/\y in {%
		se/ne,
		sw/nw,
		se/sw,
		ne/nw,
		se/nw,
		sw/ne}{\draw[xedge] (#1\x) to (#1\y);}

	\foreach \x/\y in {#4}{\draw[xfedge] (#1\x) to (#1\y);}
	\foreach \x/\y in {#5}{\draw[xsedge] (#1\x) to (#1\y);}
  }

  \newcommand{\edgegadgetpur}[9]{%
    \node (#1c) at (#4*\xr,#5*\yr+0.1*\yr)[xnode,label={[label distance=-2pt]90:{$c_{#8,#9}$}}]{};
    \node (#1tll) at (#4*\xr-1.5*\yr,#5*\yr+0.5*\yr)[xnode,label=90:{$v_{#8}^3$}]{};
    \node (#1trr) at (#4*\xr+1.5*\yr,#5*\yr+0.5*\yr)[xnode,label=90:{$v_{#9}^1$}]{};
    \node (#1bl) at (#4*\xr-1.5*\yr,#5*\yr-0.5*\yr)[xnode,label=-90:{$v_{#8}^4$}]{};
    \node (#1br) at (#4*\xr+1.5*\yr,#5*\yr-0.5*\yr)[xnode,label=-90:{$v_{#9}^2$}]{};
    \node (#1b) at (#4*\xr,#5*\yr-0.5*\yr)[xnode,label=-90:{$b_{#8,#9}$}]{};
    \node (#1tl) at (#4*\xr-0.5*\yr,#5*\yr+0.5*\yr)[xnode,label=90:{$u_{#8,#9}$}]{};
    \node (#1tr) at (#4*\xr+0.5*\yr,#5*\yr+0.5*\yr)[xnode,label=90:{$w_{#8,#9}$}]{};

	\foreach \x/\y in {%
	    trr/br,
	    tll/bl,
		c/tl,
		c/b,
		c/tr,
		tr/tl,
		b/tl,
		b/tr,
		tr/trr,
		b/trr,
		tr/br,
		b/br,
		tl/tll,
		b/tll,
		tl/bl,
		b/bl}{\draw[xedge] (#1\x) to (#1\y);}
	\foreach \x/\y in {#6}{\draw[xfedge] (#1\x) to (#1\y);}
	\foreach \x/\y in {#7}{\draw[xsedge] (#1\x) to (#1\y);}
  }

  \newcommand{\edgegadget}[9]{%
    \node (#1c) at (#4*\xr,#5*\yr+0.1*\yr)[xnode,label={[label distance=-2pt]90:{$c_{#8,#9}$}}]{};
    \node (#1tll) at (#2ne)[xnode]{};
    \node (#1trr) at (#3nw)[xnode]{};
    \node (#1bl) at (#2se)[xnode]{};
    \node (#1br) at (#3sw)[xnode]{};
    \node (#1b) at (#4*\xr,#5*\yr-0.5*\yr)[xnode,label=-90:{$b_{#8,#9}$}]{};
    \node (#1tl) at (#4*\xr-0.5*\yr,#5*\yr+0.5*\yr)[xnode,label=90:{$u_{#8,#9}$}]{};
    \node (#1tr) at (#4*\xr+0.5*\yr,#5*\yr+0.5*\yr)[xnode,label=90:{$w_{#8,#9}$}]{};

		\foreach \x/\y in {%
			c/tl,c/b,c/tr,tr/tl,
			b/tl,b/tr,tr/trr,b/trr,
			tr/br,b/br,tl/tll,b/tll,tl/bl,
			b/bl}{\draw[xedge] (#1\x) to (#1\y);}
		\foreach \x/\y in {#6}{\draw[xfedge] (#1\x) to (#1\y);}
		\foreach \x/\y in {#7}{\draw[xsedge] (#1\x) to (#1\y);}
  }

  \newcommand{\edgegadgetx}[9]{%
    \node (#1c) at (#4*\xr,#5*\yr+0.1*\yr)[xnode,label=0:{$c_{#8,#9}$}]{};
    \node (#1tll) at (#2ne)[xnode]{};
    \node (#1trr) at (#3nw)[xnode]{};
    \node (#1bl) at (#2se)[xnode]{};
    \node (#1br) at (#3sw)[xnode]{};
    \node (#1b) at (#4*\xr,#5*\yr-0.5*\yr)[xnode,label=-45:{$b_{#8,#9}$}]{};
    \node (#1tl) at (#4*\xr-0.5*\yr,#5*\yr+0.5*\yr)[xnode,label=180:{$u_{#8,#9}$}]{};
    \node (#1tr) at (#4*\xr+0.5*\yr,#5*\yr+0.5*\yr)[xnode,label=90:{$w_{#8,#9}$}]{};

	\foreach \x/\y in {%
		c/tl,c/b,c/tr,tr/tl,
		b/tl,b/tr,tr/trr,b/trr,
		tr/br,b/br,tl/tll,b/tll,tl/bl,
		b/bl}{\draw[xedge] (#1\x) to (#1\y);}
	\foreach \x/\y in {#6}{\draw[xfedge] (#1\x) to (#1\y);}
	\foreach \x/\y in {#7}{\draw[xsedge] (#1\x) to (#1\y);}
  }
	\newcommandx{\xlabel}[3][1=-0.66,2=0.9]{\node at (#1*\xr,#2*\yr)[anchor=east]{(#3)};}

	\begin{scope}[xshift=-4*\xr cm]
		\xlabel{a}
		\nodegadget{A}{0}{0}{se/nw,sw/ne,ne/nw}{}{i}
		\draw[xhabA] \xnw{Anw} -- \xne{Ane} -- \xse{Ase} -- \xsw{Asw} -- cycle;
	\end{scope}

	\begin{scope}[xshift=0*\xr cm]
		\xlabel[-1.66]{b}
		\edgegadgetpur{AB}{A}{B}{0}{0}{b/c,c/tr,c/tl,tr/tl,tl/tll,tl/bl,tll/b,tr/trr,tr/br,trr/b}{}{i}{j}
		\draw[xhabA] \xnw{ABtll} -- \xne{ABtl} -- \xw{ABc} -- \xne{ABb} -- \xse{ABb} -- \xsw{ABbl} -- cycle;
		\draw[xhabB] \xnw{ABtl} -- \xne{ABtr} -- \xse{ABb} -- \xsw{ABb} -- cycle;
		\draw[xhabC] \xsw{ABb} -- \xnw{ABb} -- \xe{ABc} -- \xnw{ABtr} -- \xne{ABtrr} -- \xse{ABbr} -- cycle;
	\end{scope}

	\begin{scope}[xshift=4*\xr cm]
		\xlabel{c}
		\nodegadget{A}{0}{0}{se/nw,sw/ne,ne/nw}{}{i}
		\edgegadgetpur{AB}{A}{B}{2}{0}{b/c,c/tr,c/tl,tr/tl,tl/tll,tl/bl,tll/b,tr/trr,tr/br,trr/b}{}{i}{j}
		\nodegadget{B}{4}{0}{se/nw,sw/ne,ne/nw}{}{j}

		\draw[xhabA] \xsw{Anw} -- \xnw{Anw} -- \xne{Ane} -- \xne{ABtl} -- \xse{ABtl} -- \xse{Ase} -- \xsw{Ase} -- cycle;
		\draw[xhabB] \xsw{Ane} -- \xnw{Ane} -- \xne{ABtl} -- \xne{ABc} -- \xse{ABb} -- \xsw{ABb} -- cycle;
		\draw[xhabD] \xse{ABb} -- \xsw{ABb} -- \xnw{ABc} -- \xnw{ABtr} -- \xne{ABtrr} -- \xse{ABtrr} -- cycle;
		\draw[xhabE] \xnw{ABtr} -- \xsw{ABtr} -- \xsw{ABbr} -- \xse{ABbr} -- \xse{Bne} -- \xne{Bne} -- cycle;
		\def\teps{0.3*\xr*\yr}
		\draw[xhabC] \xsw{ABtl} -- \xnw{ABtl} -- \xne{ABtr} -- \xse{ABtr} -- \xse{ABc} -- \xsw{ABc} -- cycle;
	\end{scope}

	\begin{scope}[xshift=2*\xr cm, yshift=-5*\yr cm]
		\xlabel[-4.5][3]{d}
		\nodegadget{V}{0}{0}{se/nw,sw/ne,ne/nw}{se/ne,sw/nw}{i}
		\nodegadget{U}{-4}{0}{se/nw,sw/ne,ne/nw}{se/sw}{i-1}
		\nodegadget{W}{4}{0}{se/nw,sw/ne,ne/nw}{se/sw}{i+1}

		\nodegadget{X}{4}{2.25}{se/nw,sw/ne,ne/nw}{se/sw}[1]{j}

		\edgegadget{UV}{U}{V}{-2}{0}{b/c,c/tr,c/tl,tr/tl,tl/tll,tl/bl,tll/b,tr/trr,tr/br,trr/b}{b/tr,b/bl}{i-1}{i}
		\edgegadget{VW}{V}{W}{2}{0}{b/c,c/tr,c/tl,tr/tl,tl/tll,tl/bl,tll/b,tr/trr,tr/br,trr/b}{b/tl,b/br}{i}{i+1}

		\begin{scope}[rotate around={45:(1.75*\xr,2*\yr)}]
			\edgegadgetx{VX}{V}{X}{1.75}{2}{b/c,c/tr,c/tl,tr/tl,tl/tll,tl/bl,tll/b,tr/trr,tr/br,trr/b}{b/tl,b/br}{i}{j}
		\end{scope}
	\end{scope}
 \end{tikzpicture}
 \caption{Illustration to~\cref{constr:H4D7:NPhard}.
 (a) $V_i$ and habitat~$H_i$ (cyan).
 (b) $V_{i,j}$ with its neighbors in~$V_i$ and~$V_j$, as well as the habitats~$H_{i,j}$ (orange), $H_{i,j}'$ (cyan), and~$H_{i,j}''$ (magenta).
 (c) $V_i$, $V_j$, and $V_{i,j}$, as well as the habitats~$H_{i,j}^1$ (cyan), $H_{i,j}^2$ (orange), $H_{i,j}^3$ (orange), $H_{i,j}^4$ (teal), and $H_{i,j}^5$ (blue).
 (d) An excerpt from~$G'$ around~$V_i$,
 where green edges indicate additional edges in a solution.
 (Every red edge is contained in~$E_2'$.)}
 \label{fig:H4D7:NPhard}
\end{figure}
 Let~$V'=\bigcup_{i=1}^n V_i \cup \bigcup_{(i,j)\in X} V_{i,j}$ with~$V_i=\{v_i^1,v_i^2,v_i^3,v_i^4\}$ and~$V_{i,j} = \{u_{i,j},w_{i,j},c_{i,j},b_{i,j}\}$.
 Next,
 construct~$E'$.
 For each~$i\in\set{n}$,
 let~$V_i$ form a clique,
 i.e.,
 for each~$x,y\in V_i$,
 add the edge~$\{x,y\}$.
 Moreover,
 for each~$(i,j)\in X$,
 let each of $V_{i,j}$,
 $\{v_i^3,v_i^4,u_{i,j},b_{i,j}\}$,
 and~$\{v_j^1,v_j^2,w_{i,j},b_{i,j}\}$ form a clique.
 This finishes the construction of~$G'$.
 Denote by~$E_1'\ceq \bigcup_{i=1}^n E_i^*\cup \bigcup_{(i,j)\in X} E_{i,j}^*$,
 where $E_i^*\ceq \{\{v_i^1,v_i^2\},\{v_i^2,v_i^4\},\{v_i^3,v_i^4\}\}$
 for each~$i\in\set{n}$
 and $E_{i,j}^*\ceq \{\{v_i^4,b_{i,j}\},\{b_{i,j},v_j^2\},\{u_{i,j},b_{i,j}\},\{w_{i,j},b_{i,j}\}\}$
 for each~$(i,j)\in X$.
 Denote by~$E_2'\ceq E'\setminus E_1'$.
 We introduce this notation to distinguish between
 the edges~$E_1^*$ a solution can select from,
 and the edges~$E_2^*$ which are fully contained in every solution.
 Construct the habitat set~$\calH$ as follows.
 Add the habitat~$H_i=V_i$ for each~$i\in\set{n}$
 (intuitively,
 $H_i$ encodes whether~$v_i$ is part of the vertex cover).
 For each $(i,j)\in X$,
 add the habitats
 $H_{i,j}=V_{i,j}$,
 $H_{i,j}'=\{v_i^3,u_{i,j},b_{i,j},v_i^4\}$, and
 $H_{i,j}''=\{v_j^1,w_{i,j},b_{i,j},v_j^2\}$
 (intuitively,
 these three habitats encode that edge~$\{v_i,v_j\}$ must be covered).
 Moreover,
 for each $(i,j)\in X$,
 add the auxiliary habitats
 $H_{i,j}^1=\{v_i^1,v_i^3,u_{i,j},v_i^4\}$,
 $H_{i,j}^2=\{v_i^3,u_{i,j},c_{i,j},b_{i,j}\}$,
 $H_{i,j}^3=\{u_{i,j},c_{i,j},w_{i,j}\}$,
 $H_{i,j}^4 = \{v_j^1,w_{i,j},c_{i,j},b_{i,j}\}$,
 and $H_{i,j}^5=\{v_j^1,v_j^3,w_{i,j},v_j^2\}$.
 Set~$k=|E_2'|+2m+n+p$.
 \phantom{.}\cqed
\end{construction}
\begin{observation}
 \label{obs:H4D7:NPhard:seven}
 The maximum degree of~$G'$ is seven.
\end{observation}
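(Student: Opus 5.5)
The plan is to bound the degree of every vertex of~$G'$ by a direct count, separating the vertices of the gadgets~$V_{i,j}$ from those of the gadgets~$V_i$, and then to exhibit a vertex of degree exactly seven. Recall that \hcvcAcr{} is \NP-complete on cubic graphs, so each~$v_i$ has exactly three incident edges. Two of these are the cycle edges~$\{v_{i-1},v_i\}$ and~$\{v_i,v_{i+1}\}$ (indices modulo~$n$), and the third is a chord.

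First, the vertices of~$V_{i,j}$ for~$(i,j)\in X$. By \cref{constr:H4D7:NPhard}, each of them lies in at most the three cliques~$V_{i,j}$, $\{v_i^3,v_i^4,u_{i,j},b_{i,j}\}$ and~$\{v_j^1,v_j^2,w_{i,j},b_{i,j}\}$, and these vertices belong to no other gadget.
\begin{itemize}
\item $c_{i,j}$ lies only in~$V_{i,j}$, so it has degree~$3$.
\item $u_{i,j}$ has its three neighbours in~$V_{i,j}$ plus~$v_i^3,v_i^4$, so it has degree~$5$; symmetrically, $w_{i,j}$ has degree~$5$.
\item $b_{i,j}$ has its three neighbours in~$V_{i,j}$ plus~$v_i^3,v_i^4,v_j^1,v_j^2$, so it has degree exactly~$7$.
\end{itemize}
Since~$X$ is nonempty, this already shows that the maximum degree is at least seven.

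Second, the vertices of~$V_i$. Each such vertex has three neighbours inside the clique~$V_i$.
\begin{itemize}
\item $v_i^3$ and~$v_i^4$ gain two neighbours (namely~$u_{i,j}$ and~$b_{i,j}$) for every pair~$(i,j)\in X$ in which~$i$ is the \emph{first} coordinate.
\item $v_i^1$ and~$v_i^2$ gain two neighbours ($w_{j,i}$ and~$b_{j,i}$) for every pair~$(j,i)\in X$ in which~$i$ is the \emph{second} coordinate.
\end{itemize}
It therefore suffices to show that every index~$i$ is the first coordinate of at most two pairs in~$X$ and the second coordinate of at most two pairs. The definition of~$X$ orients each edge of~$G$ from smaller to larger index, except that~$\{v_1,v_n\}$ is oriented as~$(n,1)$. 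Under this orientation the Hamiltonian cycle~$C$ becomes the directed cycle~$v_1\to v_2\to\dots\to v_n\to v_1$. Hence each~$v_i$ has exactly one outgoing and one incoming cycle edge. Only the chord remains, and it adds one to either the outgoing or the incoming count, but not both. So each count is at most two, and every vertex of~$V_i$ has degree at most~$3+2\cdot 2=7$.

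The only delicate step is this orientation argument. In particular, one must check that the exceptional pair~$(n,1)$ is exactly what makes~$C$ a consistently directed cycle. Without this exception,~$v_1$ would have both cycle edges outgoing, and could reach three outgoing pairs once its chord is added. Combining the two parts, every vertex of~$G'$ has degree at most seven and~$b_{i,j}$ attains seven, which proves the observation.
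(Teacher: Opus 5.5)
Your proposal is correct and follows essentially the same route as the paper. Both count degrees gadget by gadget: $c_{i,j}$ has degree~$3$, $u_{i,j}$ and~$w_{i,j}$ have degree~$5$, and~$b_{i,j}$ has degree exactly~$7$. For the vertices of~$V_i$, both bound the degree by $3+2\cdot 2$, using that each index is the first coordinate of at most two pairs in~$X$ and the second coordinate of at most two. Your orientation argument, that the exceptional pair~$(n,1)$ turns~$C$ into a consistently directed cycle, justifies explicitly what the paper only asserts (``there are $(x,y),(x',y')\in X$ such that $x=i$ and $y'=i$'').
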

\begin{proof}
 For each~$i\in\set{n}$,
 there are $(x,y),(x',y')\in X$ such that~$x=i$ and~$y'=i$,
 and~$i$ appears in exactly three tuples in~$X$.
 Thus,
 each vertex in~$V_i$ has exactly three neighbors inside~$V_i$ and at most four neighbors outside of~$V_i$.
 For each~$(i,j)\in X$,
 $u_{i,j}$ and~$w_{i,j}$ have exactly five (three inside and two outside~$V_{i,j}$)
 $c_{i,j}$ exactly three (all inside~$V_{i,j}$),
 and~$b_{i,j}$ exactly seven neighbors (three inside and four outside~$V_{i,j}$).
 Hence,
 all vertices have degree at most seven.
\end{proof}
\begin{lemma}
 \label{lem:H4D7:NPhard:forced}
 Let~$I'$ be a \yes-instance.
 Then every solution contains~$E_2'$.
\end{lemma}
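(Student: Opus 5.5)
The plan is to argue edge by edge, using the principle behind \cref{rr:forcededge}. If an edge~$e\in E_2'$ lies in~$G'[H]$ for some habitat~$H\in\calH$ and~$G'[H]-e$ is not 2-connected, then every solution~$F$ contains~$e$. Indeed, otherwise~$G'[H,F]$ is a subgraph of~$G'[H]-e$ and so cannot be 2-connected. We use two consequences:
\begin{itemize}
\item if some vertex of~$H$ has degree exactly two in~$G'[H]$, both of its incident edges in~$G'[H]$ are forced (deleting either one leaves it with degree at most one);
\item all three edges of a triangle habitat are forced.
\end{itemize}
So it suffices to find, for every edge of~$E_2'$, a witness habitat with such a low-degree vertex.

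First list~$E_2'$ explicitly. Removing~$E_i^*$ from the clique on~$V_i$ leaves~$\{v_i^1,v_i^3\}$, $\{v_i^1,v_i^4\}$ and~$\{v_i^2,v_i^3\}$. Removing~$E_{i,j}^*$ from the cliques on~$V_{i,j}$, $\{v_i^3,v_i^4,u_{i,j},b_{i,j}\}$ and~$\{v_j^1,v_j^2,w_{i,j},b_{i,j}\}$ leaves three groups of edges:
\begin{itemize}
\item from~$V_{i,j}$: $\{u_{i,j},c_{i,j}\}$, $\{c_{i,j},w_{i,j}\}$, $\{u_{i,j},w_{i,j}\}$ and~$\{c_{i,j},b_{i,j}\}$;
\item from the~$v_i$ side: $\{v_i^3,u_{i,j}\}$, $\{v_i^3,b_{i,j}\}$ and~$\{v_i^4,u_{i,j}\}$;
\item from the~$v_j$ side: $\{v_j^1,w_{i,j}\}$, $\{v_j^1,b_{i,j}\}$ and~$\{v_j^2,w_{i,j}\}$.
\end{itemize}

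Next, assign witnesses by computing the induced subgraphs of the auxiliary habitats in~$G'$.
\begin{itemize}
\item $H_{i,j}^3=\{u_{i,j},c_{i,j},w_{i,j}\}$ is a triangle, so it forces~$uc$, $cw$ and~$uw$.
\item In~$G'[H_{i,j}^2]$, the vertex~$c_{i,j}$ has neighbours only~$u_{i,j},b_{i,j}$, and~$v_i^3$ has neighbours only~$u_{i,j},b_{i,j}$. This forces~$cb$, $v_i^3u$ and~$v_i^3b$.
\item $H_{i,j}^4$ symmetrically forces~$v_j^1w$ and~$v_j^1b$.
\item In~$G'[H_{i,j}^1]$, the vertex~$u_{i,j}$ is adjacent only to~$v_i^3,v_i^4$, and~$v_i^1$ is adjacent only to~$v_i^3,v_i^4$, because~$v_i^1u_{i,j}\notin E'$. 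This forces~$v_i^4u$, $v_i^1v_i^3$ and~$v_i^1v_i^4$.
\item In~$G'[H_{i,j}^5]$, the vertices~$w_{i,j}$ and~$v_j^3$ each have degree two. This forces~$v_j^2w$, $v_j^1v_j^3$ and~$v_j^2v_j^3$.
\end{itemize}
Each of these degree claims needs a check against the precise edge set of~$G'$, namely which cross edges exist between~$V_i$, $V_{i,j}$ and~$V_j$. This is routine but is the step where care is needed.

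The main obstacle is coverage of the edges inside each~$V_i$. The edges~$v_i^1v_i^3$ and~$v_i^1v_i^4$ are witnessed only by some~$H_{i,j}^1$, which requires a tuple in~$X$ with first coordinate~$i$. The edge~$v_i^2v_i^3$ is witnessed only by some~$H_{x,i}^5$, which requires a tuple with second coordinate~$i$. This is exactly where the Hamiltonian cycle and the special choice of~$X$ (replacing~$(1,n)$ by~$(n,1)$) enter. The tuples~$(i,i+1)$ for~$i<n$ and~$(n,1)$ give every index an occurrence in both positions, as also used in the proof of \cref{obs:H4D7:NPhard:seven}. With this, every edge of~$E_2'$ has a witness, so every solution contains~$E_2'$.
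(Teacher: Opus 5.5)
Your proof is correct and follows the paper's argument. Both proofs force the edges of $E_2'$ through the degree-two vertices of $H_{i,j}^1,H_{i,j}^2,H_{i,j}^4,H_{i,j}^5$ and the triangle $H_{i,j}^3$. You also make explicit a step the paper leaves implicit: the edges $\{v_i^1,v_i^4\}$ and $\{v_i^2,v_i^3\}$ are covered only because every index $i$ occurs both as a first and as a second coordinate in $X$, which the Hamiltonian cycle and the replacement of $(1,n)$ by $(n,1)$ guarantee.
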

\begin{proof}
 Let~$F$ be a solution and $(i,j)\in X$.
 In the graph induced by habitat
 $H_{i,j}^1=\{v_i^1,v_i^3,u_{i,j},v_i^4\}$,
 the vertices~$v_i^1$ and~$u_{i,j}$ have degree two,
 and hence edges~$\{v_i^1,v_i^3\}$, $\{v_i^3,u_{i,j}\}$. $\{u_{i,j},v_i^4\}$, and~$\{v_i^1,v_i^4\}$ are in~$F$.
 In the graph induced by the habitat
 $H_{i,j}^2=\{v_i^3,u_{i,j},c_{i,j},b_{i,j}\}$,
 the vertices~$v_i^3$ and~$c_{i,j}$ have degree two,
 and hence the edges $\{v_i^3,u_{i,j}\}$, $\{u_{i,j},c_{i,j}\}$, $\{c_{i,j},b_{i,j}\}$, and~$\{v_i^3,b_{i,j}\}$ are in~$F$.
 The graph induced by the habitat~$H_{i,j}^3=\{u_{i,j},c_{i,j},w_{i,j}\}$
 induces a triangle and hence all its edges are in~$F$.
 In the graph induced by habitat~$H_{i,j}^4=\{v_j^1,w_{i,j},c_{i,j},b_{i,j}\}$,
 the vertices~$v_j^1$ and~$c_{i,j}$ have degree two,
 and hence the edges~$\{v_j^1,w_{i,j}\}$, $\{w_{i,j},c_{i,j}\}$, $\{c_{i,j},b_{i,j}\}$, and~$\{v_j^1,b_{i,j}\}$ are in~$F$.
 In the graph induced by~$H_{i,j}^5=\{v_j^1,v_j^3,w_{i,j},v_j^2\}$,
 the vertices~$w_{i,j}$ and~$v_j^3$ have degree two,
 and hence the edges~$\{v_j^1,v_j^3\}$, $\{v_j^1,w_{i,j}\}$, $\{w_{i,j},v_j^2\}$, and~$\{v_j^3,v_j^2\}$ are in~$F$.
\end{proof}
Next,
we show that in case of a \yes-instance,
there is always a solution that matches our \cref{int:H4D7:NPhard};
the proof essentially uses edge switching arguments.

\begin{lemma}
 \label{lem:H4D7:NPhard}
 Let~$I'$ be a \yes-instance.
 Then there is a solution~$F$ where
 (i) for every~$(i,j)\in X$,
 $F$ contains exactly two edges in~$E_{i,j}^*$,
 and
 (ii) for every~$i\in\set{n}$,
 $F$ contains no two adjacent edges in~$E_i^*$.
\end{lemma}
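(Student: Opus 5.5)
We start from an arbitrary solution~$F$ and modify it by local edge exchanges. No exchange increases~$|F|$, and each preserves feasibility. Since \brgbpAcr{} has unit costs, it suffices to show that $F\setminus E_2'$ can be brought into the required shape without increasing $|F\setminus E_2'|$. By \cref{lem:H4D7:NPhard:forced}, $E_2'\subseteq F$, and each exchange only touches edges of~$E_1'$.

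First I determine the local requirements. In $H_i$ the forced edges are $\{v_i^1,v_i^3\},\{v_i^1,v_i^4\},\{v_i^2,v_i^3\}$, which form the path $v_i^4v_i^1v_i^3v_i^2$. So $G'[H_i,F]$ is 2-connected if and only if $F$ contains $\{v_i^2,v_i^4\}$, or contains both $\{v_i^1,v_i^2\}$ and $\{v_i^3,v_i^4\}$. In $H_{i,j}$ the forced edges form the triangle $u_{i,j}c_{i,j}w_{i,j}$ plus the pendant edge $\{c_{i,j},b_{i,j}\}$, so at least one of $\{u_{i,j},b_{i,j}\},\{w_{i,j},b_{i,j}\}$ is needed. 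In $H_{i,j}'$ the forced edges form the path $v_i^4u_{i,j}v_i^3b_{i,j}$. Thus we need $\{v_i^4,b_{i,j}\}$, or both $\{v_i^3,v_i^4\}$ and $\{u_{i,j},b_{i,j}\}$. Symmetrically, $H_{i,j}''$ needs $\{b_{i,j},v_j^2\}$, or both $\{v_j^1,v_j^2\}$ and $\{w_{i,j},b_{i,j}\}$.

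For (i) I argue in two parts. For the lower bound, one edge of $E_{i,j}^*$ never suffices. The single edge would have to be $\{u_{i,j},b_{i,j}\}$ or $\{w_{i,j},b_{i,j}\}$ because of $H_{i,j}$. But then $H_{i,j}''$, respectively $H_{i,j}'$, still needs a further edge of $E_{i,j}^*$. For the upper bound, suppose $F$ contains at least three edges of $E_{i,j}^*$. I then replace $F\cap E_{i,j}^*$ by $\{\{u_{i,j},b_{i,j}\},\{b_{i,j},v_j^2\}\}$ and add $\{v_i^3,v_i^4\}$ if it is missing. This removes at least one edge and adds at most one, so the size does not increase. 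Afterwards $H_{i,j}$, $H_{i,j}'$ and $H_{i,j}''$ are satisfied. The habitats $H_i$ and $H_j$, and every other gadget, only gain edges, so they remain 2-connected. I repeat this for each $(i,j)\in X$; each step keeps all other gadgets at exactly two, since only $E_i^*$ grows.

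For (ii), an adjacent pair in $E_i^*$ always contains $\{v_i^2,v_i^4\}$, since $E_i^*$ is the path $v_i^1v_i^2v_i^4v_i^3$. So suppose $F$ contains $\{v_i^2,v_i^4\}$ together with one of $\{v_i^1,v_i^2\}$ or $\{v_i^3,v_i^4\}$. If it contains all three edges, I simply delete $\{v_i^2,v_i^4\}$. Otherwise I replace $\{v_i^2,v_i^4\}$ by the missing edge among $\{v_i^1,v_i^2\},\{v_i^3,v_i^4\}$. Afterwards $H_i$ is satisfied via the second option above. The key check is that $\{v_i^2,v_i^4\}$ lies in no habitat other than $H_i$, so no other habitat loses an edge; all other habitats only gain edges. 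This exchange does not touch any $E_{i,j}^*$, so (i) is preserved.

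The main obstacle is the bookkeeping in (i): one must verify, using the neighborhood structure of \cref{constr:H4D7:NPhard}, that the replacement edges $\{v_i^3,v_i^4\}$ and $\{u_{i,j},b_{i,j}\}$ do not break any other habitat. The argument is that only removals can hurt and removals occur only inside $E_{i,j}^*$, whose edges lie only in $H_{i,j}$, $H_{i,j}'$ and $H_{i,j}''$.
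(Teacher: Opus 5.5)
Your proposal is correct and is essentially the paper's edge-switching argument. For (ii) your exchange coincides with the paper's; for (i) you use one uniform replacement, $\{\{u_{i,j},b_{i,j}\},\{b_{i,j},v_j^2\}\}$ plus $\{v_i^3,v_i^4\}$, instead of the paper's case analysis (delete a redundant $\{u_{i,j},b_{i,j}\}$ or $\{w_{i,j},b_{i,j}\}$, or swap $\{v_i^4,b_{i,j}\}$ for $\{v_i^3,v_i^4\}$, respectively $\{b_{i,j},v_j^2\}$ for $\{v_j^1,v_j^2\}$), and you make explicit the lower bound of two edges that the paper leaves implicit. One minor imprecision: $\{u_{i,j},b_{i,j}\}$ and $\{w_{i,j},b_{i,j}\}$ also lie in $H_{i,j}^2$ and $H_{i,j}^4$, but their forced edges already form 4-cycles there, so removing these edges does no harm.
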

\begin{proof}
 Suppose towards a contradiction that (i) is false,
 i.e.,
 for every solution~$F$ there is~$(i,j)\in X$ such that $F$ contains three or four edges from~$E_{i,j}^*$.
 Let~$F$ be a minimum solution such that the number of tuples~$(i,j)\in X$ with three or four edges from~$E_{i,j}^*$ being contained in~$F$ is minimal.
 Since~$G[H_{i,j},F]$ is 2-connected,
 we know that at least one the edges~$\{u_{i,j},b_{i,j}\}$ and~$\{w_{i,j},b_{i,j}\}$ is contained in~$F$.
 If both edges and~$\{v_i^4,b_{i,j}\}$ are contained,
 then $F\setminus \{\{u_{i,j},b_{i,j}\}\}$ is still a solution,
 contradicting the minimality of~$F$.
 If both edges and~$\{v_i^2,b_{i,j}\}$ are contained,
 then $F\setminus \{\{w_{i,j},b_{i,j}\}\}$ is still a solution,
 contradicting the minimality of~$F$.
 Finally,
 let $\{v_i^4,b_{i,j}\}$, $\{v_i^2,b_{i,j}\}$,
 and exactly one of~$\{u_{i,j},b_{i,j}\}$ and~$\{w_{i,j},b_{i,j}\}$ be contained in~$F$.
 If $\{u_{i,j},b_{i,j}\}\in F$,
 then~$(F\setminus \{\{v_i^4,b_{i,j}\}\}) \cup \{\{v_i^3,v_i^4\}\}$ is again a solution with less tuples~$(i,j)\in X$ with three or four edges from~$E_{i,j}^*$ being contained in~$F$,
 contradicting the choice of~$F$.
 If $\{w_{i,j},b_{i,j}\}\in F$,
 then~$(F\setminus \{\{v_i^2,b_{i,j}\}\}) \cup \{\{v_i^1,v_i^2\}\}$ is again a solution with less tuples~$(i,j)\in X$ with three or four edges from~$E_{i,j}^*$ being contained in~$F$,
 contradicting the choice of~$F$.

 Suppose towards a contradiction that given (i), (ii) is false,
 i.e.,
 for every solution~$F$ respecting (i) there is $i\in\set{n}$ such that $F$ contains two adjacent edges in~$E_i^*$.
 Let~$F$ be a minimal solution respecting (i) such that the number of~$i\in\set{n}$ with $F$ containing two adjacent edges in~$E_i^*$ is minimal.
 If~$F$ contains all three edges from~$E_i^*$,
 then~$F\setminus \{\{v_i^2,v_i^4\}\}$ is still a solution respecting (i),
 contradicting the minimality of~$F$.
 Let~$F$ contain exactly two adjacent edges from~$E_i^*$.
 Then $\{v_i^2,v_i^4\}\in F$ and one of $\{v_i^1,v_i^2\},\{v_i^3,v_i^4\}$ is in~$F$,
 and hence, $(F\setminus \{\{v_i^2,v_i^4\}\}) \cup \{\{v_i^1,v_i^2\},\{v_i^3,v_i^4\}\}$ is still a solution respecting (i);
 a contradiction to the choice of~$F$.
\end{proof}

\begin{proof}[Proof of \cref{prop:H4D7:NPhard}]
 Let~$I'$ be the instance obtained from input instance~$I=(G,p,C)$ of \prob{HCVC}
 via~\cref{constr:H4D7:NPhard} in polynomial-time.
 We prove that~$I$ is a \yes-instance if and only if~$I'$ is a \yes-instance.

 \RD{}
 Let~$W\subseteq V$ be a vertex cover of size~$p$.
 Let~$\calI_W\ceq \{i\in\set{n}\mid v_i\in W\}$ and~$\calJ_W\ceq \set{n}\setminus I_W$.
 Let~$F'\ceq \{\{v_i^2,v_i^4\}\mid i\in \calJ_W\} \cup \{\{v_i^1,v_i^2\},\{v_i^3,v_i^4\}\mid i\in \calI_W\}$
 and let~$F''\ceq \bigcup_{(i,j)\in X} F_{i,j}$,
 where
 \[ F_{i,j}\ceq \begin{cases}
									\{\{u_{i,j},b_{i,j}\},\{b_{i,j},v_j^2\}\}, &\cif{} i\in \calI_W,\\
									\{\{w_{i,j},b_{i,j}\},\{b_{i,j},v_j^4\}\}, &\cif{} i\in \calJ_W.
                \end{cases}
	\]
 Let~$F=E_2'\cup F'\cup F''$.
 Note that~$|F|=|E_2'|+2m+n+p$.
 We claim that~$F$ is a solution to~$I'$.
 For every~$(i,j)\in X$,
 we have that~$G[H_{i,j}^x,F]$ is 2-connected for every $x\in\set{5}$ since~$E_2'\subseteq F$.
 Also,
 it holds that $G[H_{i,j},F]$ is 2-connected since either~$\{u_{i,j},b_{i,j}\}$ or~$\{w_{i,j},b_{i,j}\}$ is contained in~$F_{i,j}$.
 By the definitions of~$F'$ and~$F_{i,j}$,
 we also have that $G[H_{i,j}',F]$ is 2-connected since either $\{b_{i,j},v_j^4\}$ or~$\{\{u_{i,j},b_{i,j}\},\{v_i^3,v_i^4\}\}$ is contained in~$F$.
 Moreover,
 for every~$i\in\set{n}$,
 it holds that~$G[H_i,F]$ is 2-connected.
 Suppose towards a contradiction that there is~$(i,j)\in X$ such that $G[H_{i,j}'',F]$ is not 2-connected.
 Then,
 neither~$\{b_{i,j},v_j^2\}$ nor~$\{\{w_{i,j},b_{i,j}\},\{v_j^1,v_j^2\}\}$ is contained in~$F$.
 This implies that~$i\in \calJ_W$ and that~$j\in \calJ_W$.
 Hence,
 edge~$\{v_i,v_j\}\in E$ is not covered by~$W$,
 a contradiction.

 \LD{}
 Let $F$ be a solution where for every~$(i,j)\in X$,
 $F$ contains exactly two edges in~$E_{i,j}^*$
 (let~$F''\subseteq F$ denote the set of all these edges),
 and for every~$i\in\set{n}$,
 $F$ contains no two adjacent edges in~$E_i^*$
 (which exists due to~\cref{lem:H4D7:NPhard}).
 We know that~$E_2'\subseteq F$ (see~\cref{lem:H4D7:NPhard:forced}).
 We claim that~$W=\{v_i\in V\mid |F\cap E_i^*|=2\}$ is a vertex cover of~$G$.
 Suppose towards a contradiction that
 there is an~$(i,j)\in X$ such that~$\{v_i,v_j\}\cap W=\emptyset$.
 Thus,
 $F\cap E_i^*=\{v_i^2,v_i^4\}$ and $F\cap E_j^*=\{v_j^2,v_j^4\}$.
 Since~$G[H_{i,j}',F]$ and~$G[H_{i,j}'',F]$ are 2-connected,
 $F$ contains both~$\{v_i^4,b_{i,j}\}$ and $\{v_j^2,b_{i,j}\}$.
 Since $F$ contains exactly two edges in~$E_{i,j}^*$,
 $G[H_{i,j},F]$ is not 2-connected;
 contradicting the fact that~$F$ is a solution.
 Finally,
 we have that~$k\geq |F|=|E_2'|+|F''|+|\{i\in\set{n} \mid |F\cap E_i^*|=1\}|+2\cdot |\{i\in\set{n} \mid |F\cap E_i^*|=2\}| = |E_2'| + 2m + n + |W|$.
 Hence,
 $p\geq |W|$.
\end{proof}

\subsection{$\eta=4$ and~$\Delta\leq 5$}

Only one type of habitats of size four remains in an reduced instance:
 If $I$ is reduced,
 then every habitat induces a~$K_4$.
In a~$K_4$,
every vertex has degree 3;
hence,
we get the following.

\begin{observation}
 \label{prop:H4D4:P}
 \rgbpAcr{} is polynomial-time solvable if~$\eta=4$ and~$\Delta\leq 4$.
\end{observation}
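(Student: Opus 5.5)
We first apply \cref{\therrs} exhaustively. This takes polynomial time and either returns \no{} or yields a reduced instance. From here on we assume the instance is reduced and that every habitat has size at most four. Habitats of size at most two cannot be 2-connected and are removed by \cref{rr:trivialno}. Size-3 habitats must induce triangles, whose edges are all forced by \cref{rr:forcededge}; such habitats are then deleted by \cref{rr:solvedH}.

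Next we handle size-4 habitats. A 2-connected graph on four vertices is $C_4$, the diamond $K_4^-$, or $K_4$. Every edge of $C_4$ is essential, so \cref{rr:forcededge} forces all of its edges and \cref{rr:solvedH} then removes the habitat. In the diamond, the two degree-2 vertices make all four cycle edges forced, and these four edges already form a 2-connected $C_4$; hence the habitat is again removed. Consequently every remaining habitat induces a $K_4$ in $G$, which is the remark made just before the statement.

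The key step is to show that, once every vertex of a habitat $H$ already has degree $3$ inside $G[H]$ and $\Delta\le 4$, the basic habitat graph has maximum degree at most two and so each component is a path or cycle. Suppose two $K_4$-habitats $H\neq H'$ share a non-forced edge $\{u,v\}$. Then $u$ has its three neighbours in $H$ and at least one further neighbour in $H'\setminus H$. Since $\deg(u)\le 4$, we get $|H'\setminus H|=1$, which forces $|H\cap H'|=3$. Write $H=\{u,v,x,y\}$ and $H'=\{u,v,x,z\}$ with $z\neq y$. Now $u$, $v$ and $x$ each have degree exactly $4$, with neighbourhoods $\{y,z\}$ together with the other two of $\{u,v,x\}$.

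A third habitat $H''$ sharing a non-forced edge with $H$ would have to meet $H$ in three vertices and contain an edge among $\{u,v,x\}$, or one incident to $y$. Any such $H''$ is a $K_4$ containing a vertex of $\{u,v,x\}$, whose neighbourhood is fully determined, so $H''\subseteq\{u,v,x,y,z\}$. Because $y$ and $z$ are nonadjacent, this gives $H''\in\{H,H'\}$. The only exception is a habitat sharing only edges at $y$, namely one of the form $\{y,a,b,\cdot\}$ with $a,b\in\{u,v,x\}$; but then it contains $a$ and is again inside $\{u,v,x,y,z\}$. A $K_4$ containing $y$ but no two of $\{u,v,x\}$ cannot share an edge with $H$.

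From this I expect each component of $\calG_{G,\Fin,\calH}$ to have at most two vertices, and in any case maximum degree at most two. Applying \cref{obs:HG:components} together with \cref{lem:HG:paths} with $\eta=4$ and $\Delta\le4$ then gives running time $O(2^{24}\cdot\poly(|I|))$, which is polynomial.

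The main obstacle is the careful case check that no habitat has three neighbours in the basic habitat graph. I would carry it out by fixing the degree-4 vertices $u,v,x$ and enumerating which vertices of $H$ a neighbouring habitat can contain.
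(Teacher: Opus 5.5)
Your reduction to $K_4$-habitats is correct, and so is the degree argument that two adjacent $K_4$-habitats $H,H'$ meet in exactly three vertices. You also correctly show that every habitat adjacent to $H$ lies in $S=\{u,v,x,y,z\}$.

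\textbf{The gap.} The step that breaks is the assertion that $y$ and $z$ are nonadjacent; nothing forces this. If $\{y,z\}\in E$, then $G[S]$ is a $K_5$, and since $\Delta\le 4$ it is a connected component of $G$. All five 4-subsets of $S$ then induce $K_4$'s and may all be habitats. For a concrete counterexample, take $G=K_5$ with all five 4-subsets as habitats, unit costs and $\Fin=\emptyset$:
\begin{itemize}
\item This instance is reduced, since no edge of a $K_4$ is essential and every edge lies in some habitat.
\item Any two of these habitats share a triangle of unforced edges, so $\calG_{G,\Fin,\calH}$ is itself a $K_5$.
\end{itemize}
Hence your claim that the basic habitat graph has maximum degree at most two is false. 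Components need not be paths or cycles (already four habitats inside a $K_5$ give a $K_4$), so \cref{lem:HG:paths} cannot be applied to them.

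\textbf{The repair.} Bound the \emph{size} of a component rather than its shape; this is what the paper does. Your own argument, applied to $H$ and symmetrically to $H'$, shows that every habitat adjacent to $H$ or $H'$ lies inside $S$. There are then two cases.
\begin{itemize}
\item If $\{y,z\}\notin E$, the only $K_4$'s inside $S$ are $H$ and $H'$, so the component is exactly $\{H,H'\}$.
\item If $\{y,z\}\in E$, then $S$ is a connected component of $G$, so every habitat in the component lies in $S$.
\end{itemize}
Either way, each component of the basic habitat graph spans at most five vertices and contains at most $\binom{5}{4}=5$ habitats. It can therefore be solved by brute force over the at most $2^{10}$ edge subsets of $G[S]$, and the components are combined via \cref{obs:HG:components}. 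The dynamic program of \cref{lem:HG:paths} is not needed for this statement.
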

\begin{proof}
 If~$\Delta=3$,
 then for every vertex~$v\in H$,
 it holds that~$N_G(v)\subseteq H$.
 Thus,
 we can solve each habitat independently via brute force in constant time.

 Let~$\Delta=4$ and let the input instance~$I$ be reduced
 (and thus, $I$ contains only~$K_4$-habitats).
 Let~$H$ be a $K_4$-habitat such that there is another $K_4$-habitat~$H'\in\bnd(\{H\})$
 intersecting~$H$.
 If~$|H'\setminus H|=2$,
 then each of the vertices in~$H\cap H'$,
 since they have degree three in~$G[H]$,
 can only be adjacent to at most one of the vertices in $H'\setminus H$,
 yielding a contradiction to the fact that~$H'$ is a~$K_4$-habitat.
 Thus,
 let~$G[H\cap H']$ be a triangle.
 Let~$x=H\setminus H'$ and~$x'=H'\setminus H$.
 Assume towards a contradiction that there is
 a third habitat~$H''\in\bnd(\{H, H'\})$.
 Note that~$H''\setminus (H\cup H')$
 can only be adjacent with~$x,x'$ in~$G[H'']$
 since all vertices in~$H\cap H'$ have degree four.
 Moreover,
 $|H''\setminus (H\cup H')|<2$ since~$x,x'$ are of degree three in~$G[H\cup H']$.
 Thus,
 the vertex in $H''\setminus (H\cup H')$ neighbors only~$x,x'$,
 and hence~$H''$ cannot form a~$K_4$;
 a contradiction to the fact that~$I$ is reduced.
 It follows that the habitat graph consists of components of size at most~$\binom{5}{4}=5$,
 each of which can be solved via brute-force in constant time.
\end{proof}

\begin{proposition}\label{prop:H4D5:P}
 \rgbpAcr{} is polynomial-time solvable if~$\eta=4$ and~$\Delta=5$.
\end{proposition}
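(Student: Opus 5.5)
We would reduce the problem to the two tools already available: \cref{obs:HG:components}, which lets us solve each connected component of the basic habitat graph $\calG_{G,\Fin,\calH}$ separately, and \cref{lem:HG:paths}, which solves a component that is a path or a cycle in $O(2^{3\eta\Delta/2}\cdot\poly(|I|))$ time. For $\eta=4$ and $\Delta=5$ this running time is polynomial. It therefore suffices to show the following structural statement for a reduced instance:
\begin{quote}
every connected component of $\calG_{G,\Fin,\calH}$ is a path, a cycle, or has constant size.
\end{quote}
A constant-size component is solved by brute force over all edge subsets of the at most constantly many habitat edges. As in the proof of \cref{prop:H4D4:P}, we first apply \cref{\therrs} exhaustively, so every habitat induces a $K_4$. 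Inside its habitat each vertex then has degree three, and hence at most two neighbours outside it.

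The core is a local case analysis around one $K_4$-habitat $H$ and a neighbour $H'$ in the basic habitat graph. Such a neighbour shares a non-forced edge with $H$, so $|H\cap H'|\in\{2,3\}$.
\begin{itemize}
\item If $|H\cap H'|=2$, the shared vertices $u,v$ must be adjacent to both vertices of $H'\setminus H$. This uses up both of their outside slots, so they have degree exactly five and are saturated.
\item If $|H\cap H'|=3$, the shared triangle has one free slot per vertex left, and $H\setminus H'$ and $H'\setminus H$ each have one free slot.
\end{itemize}
In each case we would count how many further $K_4$-habitats can share a non-forced edge with $H\cup H'$. A further $K_4$ through an edge $xy$ needs two common neighbours of $x$ and $y$. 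The degree budget of five allows only very few such common neighbours, so the number of candidate $K_4$s containing a given edge is small. We would also check that \cref{rr:forcededge} and \cref{rr:solvedH} forbid degenerate overlaps; for example, two habitats sharing a triangle together with a third habitat sharing the remaining vertices would force edges into $\Fin$.

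From this analysis we aim to derive a dichotomy. Either every habitat has at most two neighbours in $\calG_{G,\Fin,\calH}$, so its components are paths and cycles. Or a habitat has three or more neighbours, and then the vertices of the union of these habitats are all saturated at degree five. In the latter case the union is a closed configuration of constantly many vertices: no further habitat can attach, because every attachment needs an unused slot at a shared vertex. Its component then has constant size, comparable to the $\binom{5}{4}$ bound in \cref{prop:H4D4:P}.

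We expect the main obstacle to be this case analysis, in particular the triangle-sharing case. There, chains such as $H_1,H_2,H_3,\dots$, each overlapping the next in a triangle, can continue indefinitely. The delicate point is to show that such chains cannot branch, meaning no habitat meets three others along non-forced edges without the configuration closing up. We would first try a charging argument: each additional neighbour of $H$ consumes at least two of the at most eight outside slots of $H$ ($4$ vertices times $2$ slots). Consequently, beyond two neighbours the slots are exhausted and the configuration closes. If this counting is too coarse, the fallback is to enumerate the few possible local configurations explicitly, exploiting that the degree is exactly five at shared vertices.
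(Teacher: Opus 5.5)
Your framework is the paper's: decompose along components of $\calG_{G,\Fin,\calH}$ (\cref{obs:HG:components}), run \cref{lem:HG:paths} on path and cycle components, and brute-force components of constant size. The dichotomy you aim for is in fact true. The gap is that this structural claim, which is the real content of the proposition, is left unproved, and each mechanism you sketch for it fails.

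(1) Your picture of the triangle case is wrong: triangle-sharing chains cannot grow at all. If $H=\{u,v,w,x\}$ and $H'=\{u,v,w,x'\}$, then $u,v,w$ already have degree four in $G[H\cup H']$, and no two adjacent vertices there have two free slots each. So a habitat that has a vertex outside $H\cup H'$ and shares an edge with $G[H\cup H']$ consists of a single new vertex attached to a triangle. After one such attachment no further one is possible, so the whole component spans at most six vertices (\cref{lem:H4D5:P:triangle}). Showing that such chains do not branch is therefore not the issue.

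(2) The charging bound does not give two. Eight slots at two per neighbour allow four neighbours. Moreover the charging is not disjoint, since several neighbours through the same vertex may use the same two outside neighbours of it.

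(3) The claim that a habitat with three neighbours forces every vertex of the union to have degree five is false. Take $\{x,x'\}\in E$ and the habitats $\{u,v,w,x\}$, $\{u,v,w,x'\}$, $\{u,v,w,x''\}$, $\{u,v,x,x'\}$. This is a reduced instance of maximum degree five in which the first habitat has three neighbours, yet $x''$ has degree three. Also, $x$ and $x'$ have two free slots each unless they are adjacent, not one.

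What is missing is a split by intersection type rather than by degree. A triangle intersection closes the component at six vertices, as in (1). An edge intersection $H\cap H'=\{u,v\}$ saturates $u$ and $v$, as you note, so every habitat containing $u$ lies inside $N[u]=H\cup H'$. Hence, if no adjacent pair meets in a triangle, one of two things happens:
\begin{itemize}
\item The neighbours of $H$ meet it in pairwise disjoint edges. This gives at most two neighbours: a $K_4$ has at most two disjoint edges, and a neighbour meeting $H$ in $\{p,q\}$ is determined by the two neighbours of $p$ outside $H$.
\item Two neighbours meet $H$ in incident edges $\{u,v\}$ and $\{u,a\}$. Then both contain the same two outside vertices $c,d$, the vertices $u,v,a$ are saturated, and no habitat with a vertex outside $H\cup\{c,d\}$ can share an edge with these six vertices.
\end{itemize}
This yields your dichotomy.

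Note that the paper's \cref{lem:H4D5:cyclespaths} asserts that edge-sharing neighbours always sit on non-adjacent edges, and so it misses the incident case. For example, take $K_6$ on $\{u,v,a,b,c,d\}$ with $\Fin=\{\{u,c\},\{u,d\},\{c,d\}\}$ and habitats $\{u,v,a,b\}$, $\{u,v,c,d\}$, $\{u,a,c,d\}$, $\{u,b,c,d\}$. The instance is reduced and \cref{rrule:H4D5:smallcomps} does not apply, yet $\calG_{G,\Fin,\calH}$ is a star with three leaves. So your degree-based formulation is the right target, but it needs the case analysis above to be proved.
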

We first show the following.

\begin{lemma}
 \label{lem:H4D5:P:triangle}
 Let~$I=(G,c,\Fin,\calH,k)$ be reduced.
 If there is $\{H,H'\}\in E(\calG_{G,\Fin,\calH})$ such that~$G[H\cap H']$ is a triangle,
 then the union of all habitats in the component containing~$H$ and~$H'$ in~$\calG_{G,\Fin,\calH}$ is of cardinality six.
\end{lemma}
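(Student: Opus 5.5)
Since $I$ is reduced and $\eta=4$, every habitat induces a $K_4$ in $G$ (as noted before \cref{prop:H4D4:P}); each vertex of a habitat thus has degree three inside it, leaving at most two further neighbors when $\Delta=5$. The plan is to start from the pair $H,H'$ with $G[H\cap H']$ a triangle, write $H\cap H'=\{a,b,c\}$, $H\setminus H'=\{x\}$, $H'\setminus H=\{x'\}$, and analyze the union $U=H\cup H'$ of five vertices. In $G[U]$ the vertices $a,b,c$ have degree four (adjacent to each other and to $x,x'$), while $x$ and $x'$ have degree three, plus possibly the edge $\{x,x'\}$. So each of $a,b,c$ has at most one neighbor outside $U$, and $x,x'$ have at most two each.

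The core step is to show that every habitat $H''$ adjacent in $\calG_{G,\Fin,\calH}$ to a habitat inside $U$ satisfies $|H''\setminus U|\le 1$, and then that all such outside vertices coincide, giving exactly one extra vertex $y$. Since $H''$ shares a non-forced edge with some habitat in $U$, $|H''\cap U|\ge 2$. If $|H''\setminus U|=2$, the two outside vertices must each be adjacent to both vertices of $H''\cap U$ and to each other. The two vertices of $H''\cap U$ then each need two outside neighbors, so neither is among $a,b,c$; thus $H''\cap U=\{x,x'\}$, which requires $\{x,x'\}\in E$ and consumes all remaining degree of $x,x'$. Here I will show that the shared edge $\{x,x'\}$ is either forced by \cref{rr:forcededge} or yields a contradiction with $H,H'$. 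The plan is to argue that $\{x,x'\}$ lies in no habitat contained in $U$ other than possibly $\{x,x',\cdot,\cdot\}$ $K_4$'s; those would require two of $a,b,c$ adjacent to both $x,x'$, which is fine degree-wise. So I must handle this case carefully, showing that the resulting component has union $U\cup\{y_1,y_2\}$ of size seven, or is impossible. This is the step I expect to be the main obstacle. I would first try to rule it out by noting that no habitat inside $U$ contains both $x$ and $x'$ together with a triangle edge structure that is edge-adjacent to $H''$ via a non-forced edge other than $\{x,x'\}$. If needed, I will revisit whether the lemma implicitly requires $\{x,x'\}\notin E$ (then $x,x'$ have degree three in $G[U]$ and only the one-outside-vertex case remains).

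If $|H''\setminus U|=1$ with vertex $y$, then $H''\cap U$ is a triangle in $G[U]$ all of whose vertices are adjacent to $y$. Using the residual degrees (one for $a,b,c$, two for $x,x'$), I will show the triangle must contain $x$ or $x'$ and that $y$ is then unique. Any second outside vertex $y'$ would need three neighbors in $U$, which exceeds the residual degree budget: $a,b,c$ contribute at most one each, and a triangle cannot consist of $x,x'$ only. Finally, I will repeat the argument for habitats adjacent to those containing $y$. With $U\cup\{y\}$ of six vertices, every vertex's degree is essentially saturated, so no further outside vertex can be added. This gives exactly six vertices, and the existence of at least one such $H''$ follows from the assumption that the component extends beyond $U$; otherwise I will check whether the statement should read ``at most six.''
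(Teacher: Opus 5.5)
Your plan has a real gap, and it sits exactly at the step you call the main obstacle: the degree count there is wrong. If $\{x,x'\}\in E$, then $x$ and $x'$ have degree four in $G[U]$, so each has at most \emph{one} neighbor outside $U$, not two.

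Hence $H''=\{x,x',y_1,y_2\}$ is impossible outright, since $x$ would be adjacent to $a,b,c,x',y_1,y_2$. No appeal to forcedness of $\{x,x'\}$ is needed. More generally, $G[U]$ has no two adjacent vertices of degree at most three. The two vertices of $H''\cap U$, however, would have to be adjacent and each have two free slots. This is the paper's one-line argument for $|H''\setminus U|\le 1$. As written, you leave this case open (even allowing a seven-vertex union), so your core step is not established.

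The same miscount breaks your uniqueness argument for $y$. With residual two at $x$ and $x'$, the triangles $\{a,x,x'\}$ and $\{b,x,x'\}$ could each host a different outside vertex within your stated budget. Only the correct residual of one at $x,x'$ rules this out.

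Two further claims in your plan are false. First, the triangle $H''\cap U$ need not contain $x$ or $x'$. Let $G$ be the complete graph on $\{a,b,c,x,x',y\}$ minus the three edges among $x,x',y$, with habitats $\{a,b,c,x\}$, $\{a,b,c,x'\}$, $\{a,b,c,y\}$. This instance has maximum degree five, is reduced with no forced edges, and the three habitats form one component. It is case~(a) of the paper's proof. The other cases are (b) $x$ plus two of $a,b,c$, and (c) $x,x'$ plus one of $a,b,c$, which requires $\{x,x'\}\in E$.

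Second, the six-vertex union is not saturated. In the example above, $x,x',y$ have two free slots each, and in case~(b) the vertices $c,x',y$ are unsaturated as well. What actually stops growth is structural. A further habitat $H^*$ with a vertex $z$ outside the union must take one of two forms:
\begin{itemize}
\item $z$, a second outside vertex, and two adjacent union vertices that each have two free slots; or
\item $z$ together with a triangle of the union all of whose vertices still have a free slot.
\end{itemize}
You must check in each of (a)--(c) that neither configuration exists, because the edges needed to form such a pair or triangle use up the remaining free slots. That is what the paper verifies case by case.

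Your remark that the conclusion should read ``at most six'' is correct. If no third habitat attaches, the union is $U$ itself, and the paper's proof likewise only shows at most six, which is all the subsequent reduction rule needs.
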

\begin{proof}
 Let~$H$ and~$H'$ intersect in~$u,v,w$ and let~$x\in H\setminus H'$ and~$x'\in H'\setminus H$.
 Note that~$\deg_{G[H\cup H']}(y)= 4$ for each~$y\in\{u,v,w\}$.
 We distinguish how a third habitat~$H''\in\bnd(\{H,H'\})$ intersects~$H\cup H'$
 (see~\cref{fig:H4D5:P}).
 \begin{figure}[t]
	\centering
	\begin{tikzpicture}
		\def\xr{1}
		\def\yr{0.8}
		\tikzpreamble{}

		\newcommandx{\nodegadget}[5][5=1]{%
		\node (sw) at (#1*\xr-0.5*\xr,#2*\yr-0.5*\yr)[xnode, label=-#5*180:{$v$}]{};
		\node (se) at (#1*\xr+0.5*\xr,#2*\yr-0.5*\yr)[xnode, label=-#5*0:{$w$}]{};
		\node (nw) at (#1*\xr-0.5*\xr,#2*\yr+0.5*\yr)[xnode, label=#5*90:{$x$}]{};
		\node (ne) at (#1*\xr+0.5*\xr,#2*\yr+0.5*\yr)[xnode, label={[label distance=-3pt]#5*45:{$u$}}]{};
		\node (nee) at (#1*\xr+1.5*\xr,#2*\yr+0.5*\yr)[xnode, label=#5*90:{$x'$}]{};

		\foreach \x/\y in {%
			se/ne,sw/nw,se/sw,ne/nw,se/nw,sw/ne,
			ne/nee,se/nee,sw/nee%
			}{\draw[xedge] (\x) to (\y);}

		\foreach \x/\y in {#3}{\draw[xfedge] (\x) to (\y);}
		\foreach \x/\y in {#4}{\draw[xsedge] (\x) to (\y);}
		}
		\newcommand{\xlabel}[1]{\node at (-0.66*\xr,1.*\yr)[anchor=east]{(#1)};}

	  \begin{scope}
		 \xlabel{a}
	   \nodegadget{0}{0}{}{}
	   \node (z) at ($(nw)!0.5!(ne)+(0,0.5*\yr)$)[xnodeA,label={[label distance=-1pt,yshift=1pt]0:{$x''$}}]{};
	   \draw[xedgeA] (sw) -- (z) -- (se);
	   \draw[xedgeA] (ne) -- (z);

		 \draw[xedgeB,dashed] (nw) -- (z) -- (nee);
		 \draw[xedgeB,dashed] (nw) to [out=125,in=55,looseness=1.2](nee);
	  \end{scope}

	  \begin{scope}[xshift=4*\xr cm]
	   \xlabel{b}
	   \nodegadget{0}{0}{}{}
	   \node (z) at ($(nw)!0.5!(ne)+(0,0.5*\yr)$)[xnodeA,label={[label distance=-1pt,yshift=1pt]0:{$x''$}}]{};
	   \draw[xedgeA] (sw) -- (z) -- (ne);
	   \draw[xedgeA] (nw) -- (z);
	  \end{scope}

	  \begin{scope}[xshift=8*\xr cm]
	   \xlabel{c}
	   \nodegadget{0}{0}{}{}
	   \node (z) at ($(nw)!0.5!(ne)+(0,0.5*\yr)$)[xnodeA,label={[label distance=-1pt,yshift=1pt]0:{$x''$}}]{};
	   \draw[xedgeA] (nee) -- (z) -- (ne);
	   \draw[xedgeA] (nw) -- (z);

	   \draw[gray] (nw) to [out=125,in=55,looseness=1.2](nee);
	  \end{scope}

	\end{tikzpicture}
	\caption{Illustration to the three cases (a), (b), and (c) in the proof of~\cref{lem:H4D5:P:triangle}.}
	\label{fig:H4D5:P}
 \end{figure}
 Note that~$H''$ cannot be intersecting and have two vertices~$x'',y''\in H''\setminus (H\cup H')$ since~$H$ and~$H'$ have no adjacent vertices of degree at most three.
 Let~$x''= H''\setminus(H\cup H')$.

 \xcase{(a)}{$H\cap H''=\{u,v,w\}$}
 Then, $\deg_{G[H\cup H'\cup H'']}(y)= 5$ for each~$y\in\{u,v,w\}$.
 We claim that~$\bnd(\{H,H',H''\})=\emptyset$.
 Suppose that
 there is~$H^*\in \bnd(\{H,H',H''\})$ with~$z\in H^*\setminus (H\cup H'\cup H'')$.
 Vertex $z$ cannot be adjacent with all of~$x,x',x''$ in~$G[H^*]$,
 since either two of them are not adjacent,
 or they have degree six.
 Hence,
 $H^*$ must contain a vertex from~$\{u,v,w\}$,
 none of which is adjacent with~$z$;
 contradicting the fact that~$I$ is reduced.

 \xcase{(b)}{$H''$ contains~$x$ and two vertices from~$\{u,v,w\}$}
 \Wlog{} let~$H\cap H''=\{x,u,v\}$.
 Then,
 $\deg_{G[H\cup H'\cup H'']}(x)= 4$ and $\deg_{G[H\cup H'\cup H'']}(y)= 5$ for each~$y\in\{v,u\}$.
 Overall,
 in~$G[H\cup H'\cup H'']$ only the vertices~$x,x',w,x''$ can be of degree smaller than five,
 where~$x',x''$ are of degree at least 3,
 and~$x,w$ are of degree at least 4.
 We claim that~$\bnd(\{H,H',H''\})=\emptyset$.
 Suppose towards a contradiction that
 there is~$H^*\in \bnd(\{H,H',H''\})$.
 Note that~$|H^*\setminus (H\cup H'\cup H'')|<2$ since~$G[H^*]$ must contain an edge from~$G[H]$, $G[H']$, or $G[H'']$.
 Let~$z= H^*\setminus (H\cup H'\cup H'')$.
 Then,
 $H^*$ must contain a vertex~$a\in\{x',x''\}$ and one vertex~$b\in\{x,w\}$ where~$\{a,b\}\not \in E(G[H\cup H'\cup H''])$;
 contradicting the fact that~$I$ is reduced.

 \xcase{(c)}{$H''$ contains~$x,x'$ and one vertex from~$\{u,v,w\}$}
 In this case,
 we have~$\{x,x'\}\in E$ since otherwise~$H''$ does not form a~$K_4$.
 \Wlog{} let~$u= H''\cap \{u,v,w\}$.
 Then,
 only the vertices in~$\{v,w\}$ have degree at least four in~$G[H\cup H'\cup H'']$.
 We claim that~$\bnd(\{H,H',H''\})=\emptyset$.
 Suppose towards a contradiction that
 there is~$H^*\in \bnd(\{H,H',H''\})$.
 Note that~$|H^*\setminus (H\cup H'\cup H'')|<2$ since~$x''$ is the only vertex in $G[H\cup H'\cup H'']$ that can have degree three.
 Let~$z\in H^*\setminus (H\cup H'\cup H'')$.
 Then,
 $H^*$ must contain~$\{v,w\}$,
 each of which is then of degree two in~$G[H^*]$;
 contradicting the fact that~$I$ is reduced.
\end{proof}
\cref{lem:H4D5:P:triangle} implies that the following is correct and executable in polynomial-time.

\begin{rrule}\label{rrule:H4D5:smallcomps}
 Let $\{H,H'\}\in E(\calG_{G,\Fin,\calH})$ such that~$G[H\cap H']$ is a triangle.
 Let~$\calH'$ be the set of all habitats in the connected component in~$\calG_{G,\Fin,\calH}$ containing~$H$ and~$H'$,
 and let~$V'=\bigcup_{H''\in\calH'} H''$.
 Decide instance~$I'=(G[V'],c,\Fin\cap E(G[V']),\calH',k)$.
 If~$I'$ is a \no-instance,
 then output a trivial \no-instance.
 If $I'$ is a \yes-instance,
 then compute a minimum-cost solution~$F$ for~$I'$,
 remove~$\calH'$ from~$\calH$,
 and add~$F$ to~$\Fin$.
\end{rrule}
\begin{lemma}
 \label{lem:H4D5:cyclespaths}
 If~$I=(G,c,\Fin,\calH,k)$ is reduced and \cref{rrule:H4D5:smallcomps} is inapplicable,
 then every connected component in~$\calG_{G,\Fin,\calH}$ is a path or a cycle.
\end{lemma}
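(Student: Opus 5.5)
Throughout, $I$ is reduced with $\eta=4$ and $\Delta=5$, so every habitat induces a $K_4$. Since \cref{rrule:H4D5:smallcomps} is inapplicable, no two adjacent habitats in $\calG_{G,\Fin,\calH}$ share a triangle. Two adjacent habitats must share an edge not in $\Fin$, so they intersect in exactly two vertices, and these form an edge of both $K_4$'s. The plan is to show that every habitat has at most two neighbours in $\calG_{G,\Fin,\calH}$. A connected graph of maximum degree at most two is a path or a cycle, which gives the claim.

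\textbf{Degree bound at the shared edge.} Let $H\in\calH$ and let $H'$ be a neighbour with $H\cap H'=\{a,b\}$. Then $\{a,b\}\notin\Fin$, and $a$ has three neighbours inside $H$. It also has two neighbours in $H'\setminus H$, which are distinct from those in $H$. So $\deg_G(a)\ge 5$, hence $=5$, and the same holds for $b$. Thus $N_G(a)\subseteq H\cup H'$ and $N_G(b)\subseteq H\cup H'$. Consequently, no third habitat $H''\neq H'$ adjacent to $H$ can contain $a$. Indeed, $H''\ni a$ would force $H''\subseteq N_G[a]\subseteq H\cup H'$, and a $K_4$ inside $H\cup H'$ that meets $H$ in exactly two vertices must equal $H'$. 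The only alternative is that it meets $H$ in three vertices, which gives a triangle intersection and is excluded.

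\textbf{Counting the edges of $H$.} Each neighbour of $H$ uses a pair of vertices of $H$, and the previous step shows that the pairs used by different neighbours are vertex-disjoint. In a $K_4$ there are at most two vertex-disjoint pairs, so $H$ has at most two neighbours. I also need that the same pair $\{a,b\}$ cannot serve two different neighbours $H',H''$. This holds because $a$ would then have at least $3+2+2>5$ neighbours, unless $H'\setminus H$ and $H''\setminus H$ coincide, in which case $H'=H''$, contradicting \cref{rr:redundantH}.

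\textbf{Main obstacle.} The delicate step is the claim that a habitat contained in $H\cup H'$ must be $H$ or $H'$. I will make it rigorous by a direct check. Write $H=\{a,b,x,y\}$ and $H'=\{a,b,p,q\}$; a $K_4$ containing $a$ inside $H\cup H'$ and meeting $H$ in exactly two vertices is $\{a,b,p,q\}$ or $\{a,z,p,q\}$ with $z\in\{x,y\}$. The first option is $H'$. The second contains $b$'s triangle-partner edges from $p,q$ to $z$, but $p$ and $q$ already have neighbours $a,b$ and each other. In that case I instead argue via adjacency to $H$: that habitat meets $H$ in $\{a,z\}$, and $\{a,z\}$ is disjoint from neither pair, so it is the case where a single vertex $a$ lies in two used pairs. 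Since $a$ has degree exactly $5$ with all neighbours already accounted for, $\{a,z\}$ must also be a pair for which $a$ has two neighbours outside $H$; these are $p,q$. Then $z$ must be adjacent to $p$ and $q$, so $\{a,z,p,q\}$ together with $b$ gives $H\cap\{a,z,p,q\}\cup\ldots$. Pairing this with the symmetric degree-$5$ condition at $z$ yields either a triangle intersection between this habitat and $H'$ (namely $\{a,p,q\}$), contradicting the inapplicability of \cref{rrule:H4D5:smallcomps}, or a contradiction with $\Delta=5$.
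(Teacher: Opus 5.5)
\textbf{The vertex-disjointness step fails.} Your plan is the paper's plan: every habitat has at most two neighbours in $\calG_{G,\Fin,\calH}$ because its neighbours meet it in vertex-disjoint edges. But the step meant to deliver vertex-disjointness is false. Take $H=\{a,b,x,y\}$ and $H'=\{a,b,p,q\}$. Your degree count at $a$ correctly shows that any further neighbour $H''\ni a$ of $H$ satisfies $H''\setminus H=\{p,q\}$, i.e.\ $H''=\{a,z,p,q\}$ with $z\in\{x,y\}$. Nothing excludes this once $z$ is adjacent to $p$ and $q$. Your repair in the last paragraph fails on both branches:
\begin{itemize}
\item $z$ then has exactly five neighbours ($a$, $b$, the other vertex of $\{x,y\}$, $p$, $q$), so $\Delta=5$ is respected.
\item The triangle $H'\cap H''=\{a,p,q\}$ makes \cref{rrule:H4D5:smallcomps} applicable only if $H'$ and $H''$ are \emph{adjacent} in $\calG_{G,\Fin,\calH}$. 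This fails as soon as $\{a,p\},\{a,q\},\{p,q\}\in\Fin$.
\end{itemize}

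\textbf{The statement itself is false, so the gap cannot be closed.} Take $G=K_6$ on $\{a,b,c,d,p,q\}$ with habitats $H=\{a,b,c,d\}$, $H_1=\{a,b,p,q\}$, $H_2=\{a,c,p,q\}$, $H_3=\{a,d,p,q\}$, and $\Fin=\{\{a,p\},\{a,q\},\{p,q\}\}$.
\begin{itemize}
\item The habitats are distinct $K_4$'s, so \cref{rr:trivialno,rr:redundantH,rr:forcededge} do nothing.
\item $G[H,\Fin]$ is edgeless and each $G[H_i,\Fin]$ is a triangle plus an isolated vertex, so \cref{rr:solvedH} does nothing.
\item All $15$ edges lie in some habitat, so \cref{rr:deledge} does nothing.
\end{itemize}
Now $H$ is adjacent to each $H_i$ via the unforced edges $\{a,b\},\{a,c\},\{a,d\}$. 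Any two $H_i$ share only the three forced edges. So $\calG_{G,\Fin,\calH}$ is the claw $K_{1,3}$, and since none of its edges has a triangle intersection, \cref{rrule:H4D5:smallcomps} is inapplicable. The same reduced instance arises from $\Fin=\emptyset$ by adding the triangle habitat $\{a,p,q\}$.

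The paper's own proof has exactly your hole. It asserts without argument that a habitat meets its neighbours ``only on non-adjacent edges''.

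\textbf{What is true instead.} Suppose two neighbours of $H=\{a,b,c,d\}$ meet it in $\{a,b\}$ and $\{a,c\}$. Then:
\begin{itemize}
\item they are $\{a,b,p,q\}$ and $\{a,c,p,q\}$;
\item the three edges on $\{a,p,q\}$ are forced, since otherwise these two habitats would be adjacent with a triangle intersection;
\item $a,b,c$ have closed neighbourhood $S=H\cup\{p,q\}$.
\end{itemize}
Every unforced edge of a habitat inside $S$ then contains a vertex with closed neighbourhood $S$. This is one of $a,b,c$, or $p$ resp.\ $q$ if the edge is $\{d,p\}$ resp.\ $\{d,q\}$. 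Hence the whole component lies in $S$ and can be brute-forced, as in \cref{rrule:H4D5:smallcomps}. If no habitat has two neighbours meeting it in intersecting pairs, your vertex-disjointness argument goes through and yields a path or cycle. The lemma should therefore allow a third case: components whose union has at most six vertices.
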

\begin{proof}
 Since $I=(G,c,\Fin,\calH,k)$ is reduced and \cref{rrule:H4D5:smallcomps} is inapplicable,
 we know that for every~$\{H,H'\}\in E(\calG_{G,\Fin,\calH})$ that~$H$ and~$H'$ intersect only in an edge.
 In this case,
 every vertex in~$H\cap H'$ has one neighbor in~$H\cap H'$,
 two neighbors in~$H\setminus H'$ and two neighbors in~$H'\setminus H$.
 Hence,
 each habitat can intersect at most two other habitats,
 and only on non-adjacent edges.
 It follows that~$\calG_{G,\Fin,\calH}$ has degree at most two,
 and hence,
 it contains only disjoint paths and cycles.
\end{proof}
\begin{proof}[Proof of \cref{prop:H4D5:P}]
	Let~$I=(G,c,\Fin,\calH,k)$ be a reduced instance such that
	\cref{rrule:H4D5:smallcomps} is inapplicable.
	Compute the \bhg{}~$\calG_{G,\Fin,\calH}$ in polynomial time.
	Let~$\calC$ denote the set of all connected components in~$\calG_{G,\Fin,\calH}$.
	Due to~\cref{lem:H4D5:cyclespaths},
	we know that each component is a path or a cycle.
	For each connected component~$C\in \calC$,
	let~$\calH_C$ denote the set of all habitats in~$C$
	and~$G_C=(\bigcup_{H\in\calH_C} H,\bigcup_{H\in\calH_C} E(G[H]))$.
	Let~$I_C=(G_C,c,\Fin\cap E(G_C),\calH_C,k)$ denote the subinstance of~$I$ restricted to~$C$.
	Due to~\cref{lem:HG:paths},
	we can compute a minimum-cost solution~$F_C$ for~$I_C$ in polynomial time
	(or report that~$I$ is a \no-instance).
	Then,
	let~$F=\Fin\cup \bigcup_{C\in\calC} F_C$.
	Output~\yes{} if~$c(F)\leq k$ (which is correct due to~\cref{obs:HG:components}),
	and \no{} otherwise.
\end{proof}

\section{Habitats of size five}
\label{sec:five}

\subsection{\(\eta=5\) and \(\Delta\geq 6\)}

\begin{proposition}
 \label{prop:H5D6:NPhard}
 \brgbpAcr{} is \NP-hard even if~$\eta=5$ and~$\Delta\geq 6$.
\end{proposition}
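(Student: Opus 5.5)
We give a polynomial-time many-one reduction from \hcvcAcr{} (\probref{hcvc}), following the template of \cref{constr:H4D7:NPhard}, but with size-5 habitats so that the maximum degree can be lowered to six. As before, each vertex~$v_i$ of the \hcvcAcr{} instance gets a vertex gadget~$V_i$ with a habitat~$H_i$. Each edge~$\{v_i,v_j\}$ gets an edge gadget~$V_{i,j}$ whose habitats connect~$V_i$ and~$V_j$. The enumeration along the Hamiltonian cycle~$C$ and the index set~$X$ are defined exactly as in \cref{constr:H4D7:NPhard}, so every vertex occurs in exactly three pairs of~$X$ and all degree bookkeeping is uniform.

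The key design change concerns how 2-connectivity is forced. A size-5 habitat that induces a wheel or a $5$-cycle plus chords can be 2-connected with fewer edges per vertex than a $K_4$. This lets the shared ``hub'' vertex (the analogue of~$b_{i,j}$, which had degree seven) be split or receive fewer incident edges. Concretely, I would design~$H_i$ so that it has exactly two minimal 2-connected edge completions of sizes~$1$ and~$2$ beyond its forced edges. The cheap completion means ``$v_i\notin W$'' and the expensive one means ``$v_i\in W$''. The edge gadget should need one extra edge if at least one endpoint chose the expensive option, and two extra edges otherwise. All remaining edges are forced via auxiliary habitats (of size at most five) in which the relevant vertices have degree two, so that \cref{rr:forcededge} applies. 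The budget is set to $k=|E_2'|+2m+n+p$, mirroring the earlier proof.

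The proof then proceeds in three steps:
\begin{enumerate}[(1)]
 \item A degree count, analogous to \cref{obs:H4D7:NPhard:seven}, shows that the maximum degree is six.
 \item An analogue of \cref{lem:H4D7:NPhard:forced} shows that every solution contains~$E_2'$. This follows from degree-two vertices in the auxiliary habitats.
 \item An exchange lemma, analogous to \cref{lem:H4D7:NPhard}, shows that some minimum solution uses a canonical choice inside each edge gadget and a non-redundant choice inside each vertex gadget.
\end{enumerate}
With these in place, the equivalence argument is identical to the proof of \cref{prop:H4D7:NPhard}. A vertex cover~$W$ yields a solution of cost~$k$. Conversely, from a canonical solution we read off $W=\{v_i\mid |F\cap E_i^*|=2\}$. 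An uncovered edge would then force three gadget edges, contradicting canonicity or the budget.

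The main obstacle is the concrete gadget design under the degree bound. Each vertex gadget touches three edge gadgets, and every auxiliary forcing habitat adds edges at shared vertices. The total degree must stay at six while the required case analysis still holds: the exchange arguments must go through, and no unintended cheaper 2-connected completions may exist for the 5-vertex habitats. I would start from the figure of the $\eta=4$ construction, merge~$c_{i,j}$ and the auxiliary triangle~$H_{i,j}^3$ into size-5 habitats, and distribute~$b_{i,j}$'s four outside edges across two vertices. I would then verify by exhaustive case inspection of each habitat's 2-connected spanning subgraphs, a finite check, that the intended choices are the only minimal ones.
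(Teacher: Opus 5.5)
\textbf{Genuine gap: the gadget construction is missing.} The whole substance of this proposition is a concrete gadget that keeps every habitat at size five and the maximum degree at six while preserving a tight cost accounting. You explicitly leave exactly this open (``the main obstacle is the concrete gadget design \dots''). Steps (1)--(3) are claims about a graph that is never defined, so none of them can be checked.

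The modifications you suggest for \cref{constr:H4D7:NPhard} also do not obviously suffice.
\begin{itemize}
 \item Degree seven there is not attained only at $b_{i,j}$. If $i$ is the first coordinate of two pairs in $X$ (e.g.\ the smaller endpoint of a chord), then $v_i^3$ and $v_i^4$ have three neighbors inside $V_i$ and four outside. So the degree bookkeeping is not uniform, as you claim.
 \item Splitting $b_{i,j}$ is not a local change. All four unforced edges of the edge gadget are incident to $b_{i,j}$. The 2-versus-3 accounting rests on $\{u_{i,j},b_{i,j}\}$ (resp.\ $\{w_{i,j},b_{i,j}\}$) serving both $H_{i,j}$ and $H_{i,j}'$ (resp.\ $H_{i,j}''$). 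These habitats would have to be redesigned and re-verified.
 \item The accounting you do state is inconsistent. Suppose an edge gadget needs one extra edge when covered and two otherwise, with $k=|E_2'|+2m+n+p$. Then you can take the cheap completion in every vertex gadget and two extra edges in every edge gadget, at cost $|E_2'|+n+2m\le k$. Every instance would then map to a \yes-instance. Your converse (``three gadget edges'') instead presupposes a 2-versus-3 gadget, as in the $\eta=4$ proof.
\end{itemize}

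\textbf{How the paper avoids this.} The paper sidesteps asymmetric vertex gadgets and exchange arguments by reducing from \tttsatAcr{} instead.
\begin{itemize}
 \item Its building block is a wheel $W_5$ habitat whose four spokes and two opposite rim edges are forced by two triangle habitats. Exactly one of the two remaining rim edges is then needed, and both options cost the same.
 \item Variable $x_i$ gets such a wheel on $\{v_i,t_i^1,t_i^2,f_i^1,f_i^2\}$, with free rim edges $\{t_i^1,t_i^2\}$ (true) and $\{f_i^1,f_i^2\}$ (false).
 \item Clause $C_j$ gets three wheels $H_{C_j}^{-p}$ with common center $w_j$. Together they require at least two of the three rungs $\{u_{j,p}^1,u_{j,p}^2\}$.
 \item For an unnegated occurrence of $x_i$ as the $p$-th literal of $C_j$, the habitat $\{u_{j,p}^1,u_{j,p}^2,t_i,t_i^1,t_i^2\}$ is again a wheel, centered at $t_i$. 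All its edges other than the rung and $\{t_i^1,t_i^2\}$ are forced by triangles. It is 2-connected if and only if the rung or $\{t_i^1,t_i^2\}$ is chosen; otherwise $t_i$ is a cut vertex. Negated occurrences use $f_i$ symmetrically.
 \item With $k=27N+14M$, the budget alone forces exactly one free rim edge per variable and exactly two rungs per clause (\cref{lemma:H5D6:NPhard}). So the omitted rung of each clause must point to a true literal.
 \item Since every literal occurs exactly twice, all vertices have degree at most six.
\end{itemize}
If you want to stay with \hcvcAcr{}, you must actually exhibit the gadgets and verify degree, forcing, and the cost gap. As it stands, the central component of the proof is missing.
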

We give a polynomial-time many-one reduction from \tttsatAcr{} (\probref{ttts}).

\begin{intuition}
 \label{int:H5D6:NPhard}
 Each variable is represented by a vertex gadget inducing a wheel~$W_5$
 (a size-4 cycle~$C_4$ with an additional \emph{center} vertex adjacent to all cycle vertices),
 and each clause by three overlapping $W_5$-habitats sharing the same center.
 A variable gadget is connected via an additional $W_5$-habitat
 to each clause gadget in which it appears.
 In every $W_5$-habitat,
 all but two non-adjacent edges can be forced,
 so that selecting either of them suffices for 2-connectivity.
 For a variable gadget,
 this choice encodes a truth assignment.
 For a clause gadget,
 a non-selected edge corresponds to a satisfied literal,
 and ensures that the connecting $W_5$-habitat is already 2-connected
 via the choice made in the variable gadget.
\end{intuition}

\begin{construction}\label{constr:H5D6:NPhard}
 Let~$(X,\phi=\bigland_{i=1}^M C_i)$ be an instance of~\prob{(2,2)-3SAT} with~$X=\{x_1,\dots,x_N\}$.
 Construct an instance~$I'=(G,\calH,k)$ with~$G=(V,E)$ of \brgbpAcr{} as follows
 (see~\cref{fig:H5D6:NPhard} for an illustration).
 \begin{figure}[t]
  \centering
  \begin{tikzpicture}
   \def\xr{1}
   \def\yr{1}
   \tikzpreamble{}
   \def\teps{0.2*\xr*\yr}

   \newcommand{\vargadget}[6]{%
		\node (#1nw) at (#2*\xr-0.5*\xr,#3*\yr+0.5*\yr)[xnode,label=90:{$t_{#6}^1$}]{};
		\node (#1ne) at (#2*\xr+0.5*\xr,#3*\yr+0.5*\yr)[xnode,label=90:{$f_{#6}^1$}]{};
		\node (#1c) at (#2*\xr,#3*\yr)[xnode,label=90:{$v_{#6}$}]{};

		\ifstrequal{#1}{P}{%
			\node (#1sw) at (#2*\xr-0.5*\xr,#3*\yr-0.5*\yr)[xnode,label={[label distance=-3pt,yshift=3pt]-135:{$t_{#6}^2$}}]{};
			\node (#1se) at (#2*\xr+0.5*\xr,#3*\yr-0.5*\yr)[xnode,label={[label distance=-3pt,yshift=3pt]-45:{$f_{#6}^2$}}]{};
			\node (#1e) at (#2*\xr+1*\xr,#3*\yr)[xnode,label=90:{$f_{#6}$}]{};
			\node (#1w) at (#2*\xr-1*\xr,#3*\yr)[xnode,label=90:{$t_{#6}$}]{};
			\draw[xfedge] (#1sw) -- (#1w) -- (#1nw);
			\draw[xfedge] (#1se) -- (#1e) -- (#1ne);
		}{}
		\ifstrequal{#1}{A}{%
			\node (#1sw) at (#2*\xr-0.5*\xr,#3*\yr-0.5*\yr)[xnode,label=-90:{$t_{#6}^2$}]{};
		\node (#1se) at (#2*\xr+0.5*\xr,#3*\yr-0.5*\yr)[xnode,label=-90:{$f_{#6}^2$}]{};
			\node (#1w) at (#2*\xr-1*\xr,#3*\yr)[xnode,label=-90:{$t_{#6}$}]{};
			\draw[xfedge] (#1sw) -- (#1w) -- (#1nw);
		}{}

		\foreach \x/\y in {%
			se/ne,
			sw/nw,
			se/sw,
			ne/nw,
			se/c,c/nw,
			sw/c,c/ne}{\draw[xedge] (#1\x) to (#1\y);}

		\foreach \x/\y in {se/sw,nw/ne,ne/c,nw/c,se/c,sw/c}{\draw[xfedge] (#1\x) to (#1\y);}
		\foreach \x/\y in {#5}{\draw[xsedge] (#1\x) to (#1\y);}
		}

		\newcommandx{\clausegadget}[7][7=-90]{%
		\node (#1sw) at (#2*\xr-1*\xr,#3*\yr-0.5*\yr)[xnode,label=#7:{$u_{#6,1}^2$}]{};
		\node (#1se) at (#2*\xr+1*\xr,#3*\yr-0.5*\yr)[xnode,label=-90:{$u_{#6,3}^2$}]{};
		\node (#1nw) at (#2*\xr-1*\xr,#3*\yr+0.5*\yr)[xnode,label=90:{$u_{#6,1}^1$}]{};
		\node (#1ne) at (#2*\xr+1*\xr,#3*\yr+0.5*\yr)[xnode,label=90:{$u_{#6,3}^1$}]{};
		\node (#1nc) at (#2*\xr-0.25*\xr,#3*\yr+0.5*\yr)[xnode,label=90:{$u_{#6,2}^1$}]{};
		\node (#1sc) at (#2*\xr+0.25*\xr,#3*\yr-0.5*\yr)[xnode,label=-90:{$u_{#6,2}^2$}]{};
		\node (#1c) at (#2*\xr,#3*\yr+1.5*\yr)[xnode,label=0:{$w_{#6}$}]{};

		\foreach \x/\y in {%
			se/ne,
			sw/nw,
			sc/nc,
			se/sc,sc/sw,
			ne/nc,nc/nw,
			se/c,c/nw,c/nc,c/sc,
			sw/c,c/ne}{\draw[xedge] (#1\x) to (#1\y);}

		\foreach \x/\y in {%
			se/sc,sc/sw,
			ne/nc,nc/nw,
			se/c,c/nw,c/nc,c/sc,
			sw/c,c/ne}{\draw[xfedge] (#1\x) to (#1\y);}
		\draw[xfedge] (#1sw) to [out=-45,in=-135](#1se);
		\draw[xfedge] (#1nw) to [out=45,in=135](#1ne);

		\foreach \x/\y in {#4}{\draw[xfedge] (#1\x) to (#1\y);}
		\foreach \x/\y in {#5}{\draw[xsedge] (#1\x) to (#1\y);}
		}

		\newcommandx{\xlabel}[2][1=1]{\node at (-0.95*\xr,#1*1.0*\yr)[anchor=east]{(#2)};}

		\begin{scope}[yshift=2.8*\yr cm]
		 \xlabel[0.9]{a}
		 \vargadget{P}{0}{0}{}{}{i}

		 \draw[xhabA] \xnw{Pnw} -- \xne{Pne} -- \xse{Pse} -- \xsw{Psw} -- cycle;
		 \draw[xhabB] \xnw{Pnw} -- \xne{Pne} -- \xse{Pc} -- \xsw{Pc} -- cycle;
		 \draw[xhabC] \xsw{Psw} -- \xse{Pse} -- \xne{Pc} -- \xnw{Pc} -- cycle;
		\end{scope}
		\begin{scope}[yshift=-0.25*\yr cm]
		 \xlabel[1.75]{b}
		 \clausegadget{C}{0}{0}{}{}{j}

		 \draw[xhabA] \xnw{Cc} -- \xne{Cc} -- \xe{Cnc} -- \xse{Csc} -- \xsw{Csw} -- \xw{Cnw} --cycle;
		 \draw[xhabB] \xnw{Cc} -- \xne{Cc} -- \xe{Cne} -- \xse{Cse} -- \xsw{Csc} -- \xw{Cnc} --cycle;
		 \def\teps{0.3*\xr*\yr}
		 \draw[xhabC] \xnw{Cc} -- \xne{Cc} -- \xe{Cne} -- \xse{Cse} -- \xsw{Cse} -- \xs{Cc} -- \xnw{Cnc} -- \xse{Csw} -- \xsw{Csw} -- \xw{Cnw} --cycle;
		\end{scope}

		\begin{scope}[xshift=4.25*\xr cm]
		 \xlabel[3.75]{c}
		 \vargadget{A}{0}{0}{}{se/ne}{i}
		 \clausegadget{C}{2.0}{2.25}{}{sw/nw,se/ne}{j}[180]
		 \clausegadget{D}{4.5}{0}{}{sc/nc,se/ne}{j'}

		 \node (c) at (1.35*\xr,1.3*\yr)[xnode, label=-90:{$f_i$}]{};

		 \foreach \x/\y in {%
			Ane/c,Ane/Cnc,Ase/c,Ase/Csc,
			Ane/Dnw,Ase/Dsw,
			c/Cnc,c/Csc,c/Dnw,c/Dsw}{ \draw[xfedge] (\x) to (\y);}

		 \draw[xhabA] \xn{Ane} -- \xn{Cnc} -- \xe{Cnc} -- \xe{Csc} --  \xse{Csc} -- \xs{Ase} -- \xsw{Ase} -- \xw{Ane} -- cycle;
		 \draw[xhabB] \xnw{Ane} -- \xn{c} -- \xne{Dnw} -- \xse{Dsw} --  \xsw{Ase} -- cycle;
		\end{scope}
  \end{tikzpicture}
	\caption{Illustration to \cref{constr:H5D6:NPhard} (red edges are in every solution).
	(a) $V_{x_i}$ and the habitats $H_{x_i}^1$ (orange),
		$H_{x_i}^2$ (magenta),
		and
		$H_{x_i}$ (cyan).
	(b) $W_j$ and the habitats~$H_{C_j}^{-1}$ (orange),
		$H_{C_j}^{-2}$ (magenta),
		and $H_{C_j}^{-3}$ (cyan)
	(c) An excerpt around~$V_i$ according to~$x_i$'s negated appearances
		with habitats~$H_{x_i,C_j}$ (cyan) and~$H_{x_i,C_{j'}}$ (orange).
		Green edges indicate a part of a solution that corresponds to setting~$x_i$ to false.
	}
	\label{fig:H5D6:NPhard}
 \end{figure}
 First, construct the graph~$G$.
 For each~$i\in \set{N}$,
 add the vertex set~$V_i=\{v_i\}\cup \{t_i,t_i^1,t_i^2\}\cup\{f_i,f_i^1,f_i^2\}$,
 and the edge set~$E_{x_i}=E_{x_i}'\cup E_{x_i}''\cup E_{x_i}^*$ with
 $E_{x_i}'=\{\{v_i,z\}\mid z\in\{t_i^1,t_i^2,f_i^1,f_i^2\}\}\cup \{\{t_i^1,f_i^1\},\{t_i^2,f_i^2\}\}$,
 $E_{x_i}''=\{\{t_i,t_i^1\},\{t_i,t_i^2\},\{f_i,f_i^1\},\{f_i,f_i^2\}\}$,
 and edge set
 $E_{x_i}^*=\{\{t_i^1,t_i^2\},\{f_i^1,f_i^2\}\}$
 (intuitively,
 a solution selects exactly one of~$E_{x_i}^*$,
 encoding whether $x_i$ is set to true, selecting~$\{t_i^1,t_i^2\}$,
 or to false, selecting~$\{f_i^1,f_i^2\}$).
 For each~$j\in \set{M}$,
 add the vertex set~$W_j=\{w_j\}\cup U_j$ with~$U_j=\{u_{j,p}^q\mid p\in\{1,2,3\},q\in\{1,2\}\}$
 and the edge set~$E_{C_j}=E_{C_j}'\cup E_{C_j}''\cup E_{C_j}^*$ with
 $E_{C_j}'=\{\{w_j,u\}\mid u\in U_j\}$,
 $E_{C_j}''=\bigcup_{q=1}^2 \{\{u_{j,1}^q,u_{j,2}^q\},\{u_{j,2}^q,u_{j,3}^q\},\{u_{j,1}^q,u_{j,3}^q\}\}$,
 and
 $E_{C_j}^*=\{\{u_{j,p}^1, u_{j,p}^2\}\mid p\in\{1,2,3\}\}$
 (intuitively,
 a solution selects exactly two from~$E_{C_j}^*$,
 leaving out the one for which the literal is evaluated to true).
 If variable~$x_i$ appears unnegated in~$C_j$ as the~$p$-th literal,
 $p\in\{1,2,3\}$,
 add the edge set $E_{x_i,C_j}=\{\{u_{j,p}^1,t_i\},\{u_{j,p}^2,t_i\},\{u_{j,p}^1,t_i^1\},\{u_{j,p}^2,t_i^2\}\}$.
 If variable~$x_i$ appears negated in~$C_j$ as the~$p$-th literal,
 $p\in\{1,2,3\}$,
 add the edge set $E_{x_i,C_j}=\{\{u_{j,p}^1,f_i\},\{u_{j,p}^2,f_i\},\{u_{j,p}^1,f_i^1\},\{u_{j,p}^2,f_i^2\}\}$.
 Next,
 we construct the habitat set~$\calH$.
 For each~$i\in\set{N}$,
 add the habitats~$H_{x_i}^1=\{v_i,t_i^1,f_i^1\}$,
 $H_{x_i}^2=\{v_i,t_i^2,f_i^2\}$,
 and
 $H_{x_i}=H_{x_i}^1\cup H_{x_i}^2$
 (intuitively,
 $H_{x_i}$ encodes the truth assignment of~$x_i$).
 For each~$j\in\set{M}$,
 and each~$p\in\{1,2,3\}$,
 add the habitat~$H_{C_j}^{-p}=W_j\setminus \{u_{j,p}^1,u_{j,p}^2\}$
 (intuitively,
 $H_{C_j}^{-1}$, $H_{C_j}^{-2}$, and $H_{C_j}^{-3}$ together encode whether and which literal of~$C_j$ is evaluated to true),
 and for each~$p'\in\{1,2,3\}$ with~$p'>p$ and each~$\ell\in\{1,2\}$,
 add $H_{C_j,\ell}^{p,p'}=\{w_j,u_{j,p}^\ell,u_{j,p'}^\ell\}$.
 If variable~$x_i$ appears unnegated in~$C_j$ as the~$p$-th literal,
 $p\in\{1,2,3\}$,
 then add the habitats~$H_{x_i,C_j}=\{u_{j,p}^1,u_{j,p}^2,t_i,t_i^1,t_i^2\}$,
 $H_{x_i,C_j}'=\{u_{j,p}^1,t_i,t_i^1\}$,
 and $H_{x_i,C_j}''=\{u_{j,p}^2,t_i,t_i^2\}$.
 If variable~$x_i$ appears negated in~$C_j$ as the~$p$-th literal,
 $p\in\{1,2,3\}$,
 then add the habitats~$H_{x_i,C_j}=\{u_{j,p}^1,u_{j,p}^2,f_i,f_i^1,f_i^2\}$,
 $H_{x_i,C_j}'=\{u_{j,p}^1,f_i,f_i^1\}$,
 and $H_{x_i,C_j}''=\{u_{j,p}^2,f_i,f_i^2\}$.
 (Intuitively,
 $H_{x_i,C_j}$ makes sure that the truth assignment
 is correctly propagated to the clause gadget.)
 Set~$k=27N + 14M$.
 \cqed
\end{construction}
\begin{observation}
 \label{lemma:H5D6:NPhard:degree}
 The maximum degree of~$G$ is six.
\end{observation}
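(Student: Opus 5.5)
The plan is a direct degree count in the graph~$G$ of \cref{constr:H5D6:NPhard}. The edge set of~$G$ is the union of the sets~$E_{x_i}$ ($i\in\set{N}$), $E_{C_j}$ ($j\in\set{M}$), and~$E_{x_i,C_j}$ (one per occurrence of a literal). So I would classify the vertices into the types~$v_i$, $t_i$, $t_i^q$, $f_i$, $f_i^q$, $w_j$, and~$u_{j,p}^q$, and for each type list which of these edge sets contribute edges at it. By the symmetry of the construction (true/false side, index~$q\in\{1,2\}$, position~$p\in\{1,2,3\}$), it suffices to treat~$v_i$, $t_i$, $t_i^1$, $w_j$, and~$u_{j,p}^1$.

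\textbf{Variable gadget.} Vertex~$v_i$ only lies in~$E_{x_i}'$, with neighbors~$t_i^1,t_i^2,f_i^1,f_i^2$, so its degree is four. Inside~$E_{x_i}$, vertex~$t_i$ has the neighbors~$t_i^1,t_i^2$. Vertex~$t_i^1$ has the neighbors~$v_i,f_i^1$ (from~$E_{x_i}'$), $t_i$ (from~$E_{x_i}''$), and~$t_i^2$ (from~$E_{x_i}^*$).

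\textbf{Clause gadget.} Vertex~$w_j$ is adjacent exactly to the six vertices of~$U_j$, so its degree is six. Inside~$E_{C_j}$, vertex~$u_{j,p}^1$ has the neighbors~$w_j$, the two other vertices~$u_{j,p'}^1$ of its triangle, and~$u_{j,p}^2$, i.e.\ four neighbors.

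\textbf{Connecting edges.} Here I use the (2,2) property: $x_i$ occurs exactly twice unnegated and twice negated. Only unnegated occurrences create edges at~$t_i,t_i^1,t_i^2$, and each such occurrence (as the $p$-th literal of~$C_j$) adds two neighbors~$u_{j,p}^1,u_{j,p}^2$ to~$t_i$ and one neighbor each to~$t_i^1$ and~$t_i^2$. Hence~$t_i$ has degree~$2+2\cdot 2=6$ and~$t_i^1$ has degree~$4+2=6$. The same holds for~$f_i$ and~$f_i^q$, using the two negated occurrences. Each~$u_{j,p}^q$ corresponds to exactly one literal position, so it gains exactly two neighbors ($t_i$ and~$t_i^q$, or~$f_i$ and~$f_i^q$), giving degree~$4+2=6$.

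\textbf{Main obstacle.} The only delicate step is ensuring that the neighbors counted in different edge sets are pairwise distinct, so that no edge is double counted and the stated degrees are exact. The sets~$V_i$ and~$W_j$ are disjoint, and connecting edges only join~$V_i$ to~$W_j$. Distinct occurrences of~$x_i$ use distinct positions~$(j,p)$, even if~$x_i$ occurred twice in the same clause, and therefore attach to distinct vertices~$u_{j,p}^q$. Finally, no~$E_{x_i}$ contains an edge~$\{t_i,v_i\}$ or~$\{t_i,f_i\}$, so the lists above contain no repetitions. Altogether every vertex has degree at most six, and~$w_j$ has degree exactly six, so the maximum degree of~$G$ is six.
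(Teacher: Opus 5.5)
Your proposal is correct and follows the same route as the paper's proof: a direct count, per vertex type, of the neighbors inside the gadget ($V_i$ or $W_j$) plus the neighbors added by the connecting sets $E_{x_i,C_j}$. In both, the (2,2) property gives two extra neighbors for each of $t_i^q, f_i^q, u_{j,p}^q$ and four for each of $t_i, f_i$. Your distinctness check, which makes the degrees exact, is a refinement the paper leaves implicit; it is not needed for the upper bound, and $w_j$ alone already attains degree six.
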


\begin{proof}
 For each~$i\in\set{N}$,
 $v_i$ has four neighbors in~$V_i$,
 each of~$t_i^1,t_i^2,f_i^1,f_i^2$ has four neighbors in~$V_i$ and two neighbors outside~$V_i$,
 and each of~$t_i,f_i$ has two neighbors in~$V_i$ and four neighbors outside~$V_i$.
 For each~$J\in\set{M}$,
 $w_j$ has six neighbors inside~$W_j$,
 and each~$u_{j,p}^\ell$, $p\in\{1,2,3\}$, $\ell\in\{1,2\}$,
 has four neighbors in~$W_j$ and two neighbors outside~$W_j$.
\end{proof}

\begin{lemma}\label{lemma:H5D6:NPhard}
 Let~$I'$ be \yes-instance.
 Then,
 for every solution~$F$,
 for every~$i\in\set{N}$ and for every~$j\in\set{M}$ with~$x_i$ appearing as literal in~$C_j$,
 it holds that~$E_{x_i,C_j}\subseteq F$.
 Moreover,
 for every~$i\in\set{N}$,
 it holds that $E_{x_i}'\cup E_{x_i}''\subseteq F$ and $|E_{x_i}^*\cap F|=|E_{x_i}^*|-1$,
 and for every $j\in\set{M}$,
 it holds that $E_{C_j}'\cup E_{C_j}''\subseteq F$ and $|E_{C_j}^*\cap F|=|E_{C_j}^*|-1$.
\end{lemma}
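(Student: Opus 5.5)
The plan is to combine a forcing argument based on triangle habitats with a counting argument showing that the budget~$k=27N+14M$ is tight. Let~$F$ be any solution.

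\emph{Step 1: forced edges.} A 2-connected graph on three vertices is a triangle, so for every size-3 habitat~$H$ all edges of~$G[H]$ lie in~$F$, as already noted after \cref{rr:forcededge}. I go through the size-3 habitats one by one.
\begin{itemize}
\item The habitats~$H_{x_i}^1=\{v_i,t_i^1,f_i^1\}$ and~$H_{x_i}^2=\{v_i,t_i^2,f_i^2\}$ force all of~$E_{x_i}'$.
\item The habitats~$H_{x_i,C_j}'$ and~$H_{x_i,C_j}''$ force~$\{u_{j,p}^1,t_i\},\{u_{j,p}^1,t_i^1\},\{t_i,t_i^1\}$ and~$\{u_{j,p}^2,t_i\},\{u_{j,p}^2,t_i^2\},\{t_i,t_i^2\}$, respectively, or the $f$-analogues for a negated occurrence. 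This yields~$E_{x_i,C_j}\subseteq F$.
\item Every variable occurs both unnegated and negated, so the previous item also gives both halves of~$E_{x_i}''$.
\item The habitats~$H_{C_j,\ell}^{p,p'}=\{w_j,u_{j,p}^\ell,u_{j,p'}^\ell\}$ force~$\{w_j,u_{j,p}^\ell\}$ and~$\{u_{j,p}^\ell,u_{j,p'}^\ell\}$ for all~$p<p'$ and~$\ell$. This gives~$E_{C_j}'\cup E_{C_j}''\subseteq F$.
\end{itemize}
Together these establish the first sentence of the lemma and the inclusions~$E_{x_i}'\cup E_{x_i}''\subseteq F$ and~$E_{C_j}'\cup E_{C_j}''\subseteq F$.

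\emph{Step 2: lower bounds on the starred edges.} The graph~$G[H_{x_i}]$ consists of~$E_{x_i}'$ together with~$E_{x_i}^*$. With~$E_{x_i}'$ alone it is two triangles glued at~$v_i$, so~$v_i$ is a cut vertex. Hence~$|F\cap E_{x_i}^*|\ge 1$.

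Similarly, the edges of~$G[H_{C_j}^{-p}]$ outside~$E_{C_j}^*$ form two triangles sharing~$w_j$. Therefore~$F$ must contain at least one of the two edges of~$E_{C_j}^*$ with index different from~$p$. If~$F$ contained at most one edge of~$E_{C_j}^*$, say with index~$q$ (or none), then~$H_{C_j}^{-q}$ (or any~$H_{C_j}^{-p}$) would violate this. So~$|F\cap E_{C_j}^*|\ge 2$.

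\emph{Step 3: tightness of the budget.} All the edge sets named above are pairwise disjoint, so~$|F|$ is at least the sum of their sizes. Per variable, $|E_{x_i}'|+|E_{x_i}''|=6+4$, plus at least~$1$ starred edge. Since each variable appears in exactly four clauses, it contributes~$4\cdot 4=16$ edges of the sets~$E_{x_i,C_j}$. This totals at least~$27$ per variable. Per clause, $6+6$ forced edges plus at least~$2$ starred edges give at least~$14$. Summing yields~$|F|\ge 27N+14M=k$. Since~$|F|\le k$, every inequality must be tight, which gives~$|E_{x_i}^*\cap F|=1$ and~$|E_{C_j}^*\cap F|=2$, as claimed.

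The main point requiring care is Step~3. I need to confirm that the listed edge sets are pairwise disjoint and that every variable occurs exactly four times, so that no edge is double counted and the sum matches~$k$ exactly. The forcing step itself is routine.
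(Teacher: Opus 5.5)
Your proof is correct and follows essentially the same route as the paper: triangle habitats force all non-starred edges, the cut vertices $v_i$ and $w_j$ give $|F\cap E_{x_i}^*|\ge 1$ and $|F\cap E_{C_j}^*|\ge 2$, and these lower bounds already sum to $k=27N+14M$. The paper phrases the last step as a contradiction (assuming some gadget has all its edges in $F$), whereas you conclude tightness directly, which is equivalent; one wording fix is that the $16$ edges per variable come from its four \emph{occurrences} (two negated, two unnegated), not necessarily four distinct clauses.
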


\begin{proof}
 For every~$i\in\set{N}$ and for every~$j\in\set{M}$ with~$x_i$ appearing as literal in~$C_j$,
 due to the habitats~$H_{x_i,C_j}'$
 and $H_{x_i,C_j}''$
 (each of which induces a triangle),
 it holds that $E_{x_i,C_j}\subseteq F$.
 These are $k'\ceq 16N$ edges.
 In particular,
 it also holds that~$E_{x_i}''\subseteq F$.
 For every~$i\in\set{N}$,
 due to the habitats~$H_{x_i}^1$ and~$H_{x_i}^2$
 (each of which induces a triangle),
 it holds that all edges from~$E_{x_i}'$ are in~$F$.
 Since~$G[H_{x_i},F]$ is 2-connected,
 at least one of the edges from~$E_{x_i}^*$ is in~$F$.
 For every $j\in\set{M}$,
 due to the habitats~$H_{C_j,\ell}^{p,p'}$ with $p,p'\in\{1,2,3\}$, $p'>p$, and $\ell\in\{1,2\}$,
 we know that all edges from~$E_{C_j}'\cup E_{C_j}''$ are in~$F$.
 Since $G[H_{C_j}^{-p},F]$ is 2-connected for every $p\in\{1,2,3\}$,
 at least two edges from~$E_{C_j}^*$ are in~$F$.

 Now suppose towards a contradiction that there is an~$i\in\set{N}$ such that
 $|E_{x_i}\cap F|=|E_{x_i}|$ or a $j\in\set{M}$ such that
 $|E_{C_j}\cap F|=|E_{C_j}|$.
 Then,
 we have $\sum_{i\in \set{N}} |E_{x_i}\cap F|>11N$ or $\sum_{j\in \set{M}} |E_{C_j}\cap F|>14M$,
 and hence,
 $|F|=\sum_{i\in \set{N}} |E_{x_i}\cap F| + k' + \sum_{j\in \set{M}} |E_{C_j}\cap F| > 11N + 16N + 14M = k$,
 a contradiction to~$F$ being a solution.
\end{proof}

\begin{proof}[Proof of \cref{prop:H5D6:NPhard}]
 Let~$I'$ be the instance obtained from input instance~$I=(X,\phi=\bigland_{i=1}^M C_i)$ of~\prob{(2,2)-3SAT} with~$X=\{x_1,\dots,x_N\}$
 via~\cref{constr:H5D6:NPhard}.
 We prove that~$I$ is a \yes-instance if and only if~$I'$ is a \yes-instance.

 \RD{}
 Let~$\alpha$ be a satisfying truth assignment.
 We construct a solution~$F$ as follows.
 Let~$F'\subseteq F$ be all edges forced by triangles.
 These are all but~$2N+3M$ edges.
 It follows that
 $G[H,F]$ is 2-connected for each habitat~$H$ inducing a triangle.
 Then,
 if~$\alpha(x_i)=\true$,
 then add~$\{t_i^1,t_i^2\}$ to~$F$,
 otherwise add $\{f_i^1,f_i^2\}$ to~$F$.
 By this,
 $G[H_{x_i},F]$ is 2-connected.
 These are another~$N$ edges.
 For each clause~$C_j$,
 take all edges from~$E_{C_J}^*$ into~$F$ except for the first literal that is evaluated to true in~$C_j$ by~$\alpha$,
 which exists since~$\alpha$ is satisfying.
 These are another $2M$ edges.
 By this,
 $G[H_{C_j}^{-p},F]$ for each~$j\in\set{M}$ and~$p\in\{1,2,3\}$ is 2-connected.
 Moreover,
 let~$x_i$ be the variable that corresponds to the first literal that is evaluated to true in~$C_j$.
 Then~$G[H_{x_i,C_j},F]$ is 2-connected,
 since if~$x_i$ appears unnegated, then we have~$\{t_i^1,t_i^2\}\in F$,
 and if~$x_i$ appears negated, then we have~$\{f_i^1,f_i^2\}\in F$.
 In total,
 $G[H,F]$ is 2-connected for each habitat~$H\in\calH$
 and
 $N+M=|E|-k$ edges from~$E$ are not contained in~$F$;
 thus,
 $|F|\leq k$.

 \LD{}
 Let~$F$ be a solution to~$I'$.
 Due to~\cref{lemma:H5D6:NPhard},
 we know that that there is exactly one edge from each~$E_{x_i}^*$ not in~$F$.
 If~$\{t_i^1,t_i^2\}\in F$,
 then set~$\alpha(x_i)=\true$ ,
 and if~$\{f_i^1,f_i^2\}\in F$,
 then set $\alpha(x_i)=\false$.
 Note that~$\alpha$ is well-defined.
 We claim that~$\alpha$ is satisfying.
 Suppose towards a contradiction that~$\alpha$ is not satisfying,
 and let~$C_j$ be an unsatisfied clause.
 Let~$p\in\{1,2,3\}$ such that $\{u_{j,p}^1,u_{j,p}^2\}\not\in F$.
 (which exists due to \cref{lemma:H5D6:NPhard}).
 Let~$x_i$ be the variable that appears as $p$-th literal in~$C_j$.
 Let~$x_i$ appear unnegated in~$C_j$.
 Then,
 $G[H_{x_i,C_j},F]$ is not 2-connected,
 since the edge~$\{t_i^1,t_i^2\}$ is not in~$F$ either,
 and hence~$t_i$ is a separator in~$G[H_{x_i,C_j},F]$;
 a contradiction to the fact that~$F$ is a solution.
 Let~$x_i$ appear negated in~$C_j$.
 Then,
 $G[H_{x_i,C_j},F]$ is not 2-connected,
 since the edge~$\{f_i^1,f_i^2\}$ is not in~$F$ either,
 and hence~$f_i$ is a separator in~$G[H_{x_i,C_j},F]$;
 a contradiction to the fact that~$F$ is a solution.
\end{proof}
	\subsection{\(\eta=5\) and \(\Delta\leq 4\)}
\cref{fig:H5D3:P} depicts
all pairwise non-isomorphic 2-connected size-5 habitats with degree at most 3.
In a reduced instance,
only size-5 habitats isomorphic to~(b) can appear.
Since all but~$x_1$ have degree 3,
no other habitat can intersect in an edge.
Thus,
each component of the basic habitat graph is small (cf.~\Cref{prop:H4D4:P}).

\begin{figure}[t]
\centering
\begin{tikzpicture}
	\def\xr{1}
	\def\yr{1}
	\def\xsh{2.5}
	\def\ysh{3.25}
	\tikzpreamble{}

	\newcommand{\nodegadget}[3]{%
		\node (sw) at (-0.5*\xr,-0.4*\yr)[xnode,label={[label distance=-1pt](-180):{$x_4$}}]{};
		\node (se) at (0.5*\xr,-0.4*\yr)[xnode,label={[label distance=-1pt](-0):{$x_3$}}]{};
		\node (nw) at (-0.5*\xr,0.4*\yr)[xnode,label={[label distance=-1pt](180):{$x_5$}}]{};
		\node (ne) at (0.5*\xr,0.4*\yr)[xnode,label={[label distance=-1pt](0):{$x_2$}}]{};
		\node (n) at (0,1*\yr)[xnode,label={[label distance=-1pt,yshift=1pt](0):{$x_1$}}]{};

		\foreach \x/\y in {#1}{\draw[xedge] (\x) to (\y);}

		\foreach \x/\y in {#2}{\draw[xfedge] (\x) to (\y);}
		\foreach \x/\y in {#3}{\draw[xsedge] (\x) to (\y);}
	}
	\newcommand{\xlabel}[1]{\node at (-0.75*\xr,1.05*\yr)[anchor=east]{(#1)};}

	\begin{scope}[xshift=0*\xsh*\xr cm]
		\xlabel{a}
		\nodegadget{}{n/ne,ne/se,se/sw,sw/nw,nw/n}{}
	\end{scope}
	\begin{scope}[xshift=1*\xsh*\xr cm]
		\nodegadget{nw/ne}{n/ne,n/nw,nw/sw,ne/se,se/sw}{}
	\end{scope}
	\begin{scope}[xshift=2*\xsh*\xr cm]
		\nodegadget{}{n/ne,n/nw,nw/sw,ne/se,se/nw,sw/ne}{}
	\end{scope}
	\begin{scope}[xshift=3.5*\xsh*\xr cm]
		\xlabel{b}
		\nodegadget{nw/sw,nw/se,sw/se,se/ne,ne/sw}{n/nw,n/ne}{}
	\end{scope}
\end{tikzpicture}
\caption{All pairwise non-isomorphic 2-vertex- and 2-edge-connected habitats of size five with degree at most three. Red edges are forced in every reduced instance.}
\label{fig:H5D3:P}
\end{figure}

\begin{observation}
 \label{prop:H5D3:P}
 \rgbpAcr{} is polynomial-time solvable if~$\eta=5$ and~$\Delta=3$.
\end{observation}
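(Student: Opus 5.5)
We first apply \cref{\therrs} exhaustively, in polynomial time, and assume from now on that the instance~$I=(G,c,\Fin,\calH,k)$ is reduced. Habitats of size at most four are handled as in \cref{prop:H4D4:P}. With~$\Delta=3$, a size-4 habitat induces a~$K_4$, so every vertex of it has all three of its neighbors inside the habitat. Hence no other habitat can share a non-forced edge with it, and it forms a singleton component of~$\calG_{G,\Fin,\calH}$. Size-3 habitats are triangles, whose edges are all forced; they are therefore removed by \cref{rr:solvedH}. It remains to treat size-5 habitats and their interactions.

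\textbf{Step 1: classify the size-5 habitats.} Every 2-connected graph on five vertices with maximum degree~3 is one of the graphs in \cref{fig:H5D3:P}. The first three graphs in~(a) are each left 2-connected only if every edge is kept: in each of them, deleting any single edge creates a degree-1 vertex or a cut vertex. By \cref{rr:forcededge} all their edges become forced, and then \cref{rr:solvedH} deletes the habitat. So every remaining size-5 habitat induces the graph~(b). In~(b), $x_1$ has degree~2, and its two edges are forced. The four vertices~$x_2,\dots,x_5$ have degree~3 in~$G[H]$, and hence, since~$\Delta=3$, they have no neighbors outside~$H$.

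\textbf{Step 2: components of the basic habitat graph are small.} Let~$H'$ be another habitat adjacent to~$H$ in~$\calG_{G,\Fin,\calH}$. Then~$H'$ shares with~$H$ a non-forced edge, which lies among~$x_2,\dots,x_5$. Suppose~$H'$ contains a vertex~$z$ outside~$H$. Since~$G[H']$ is 2-connected, $z$ must be joined by a path within~$H'$ to the shared endpoints. But~$x_2,\dots,x_5$ have no neighbors outside~$H$, so every such connection must go through~$x_1$. This makes~$x_1$ a cut vertex of~$G[H']$, or in the degenerate case forces~$H'\subseteq H$. Hence every habitat adjacent to~$H$ is a subset of~$H$, and the same argument applies to habitats adjacent to those. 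Consequently, every component of~$\calG_{G,\Fin,\calH}$ has all its habitats inside one 5-vertex set. Such a component contains at most~$\binom{5}{3}+\binom{5}{4}+1$ habitats, and its union spans at most~$\binom{5}{2}$ edges.

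\textbf{Step 3: combine.} By \cref{obs:HG:components}, the problem decomposes over the components of~$\calG_{G,\Fin,\calH}$. We solve each component by brute force over subsets of its constantly many edges, compute a minimum-cost solution per component, take the union with~$\Fin$, and compare the total cost with~$k$. All steps run in polynomial time.

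The main obstacle is Step~2, in particular making the cut-vertex argument rigorous when~$H'$ contains~$x_1$ together with vertices outside~$H$. It suffices to note that the set~$H'\setminus H$ and the set~$H'\cap\{x_2,\dots,x_5\}$ can interact only through~$x_1$. Since~$H'$ shares a non-forced edge with~$H$, the second set is nonempty, so~$x_1$ separates them unless~$H'\setminus H=\emptyset$.
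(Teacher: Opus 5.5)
Your proof is correct and follows essentially the paper's route: after reduction only graph~(b) survives, and its vertices $x_2,\dots,x_5$ are saturated inside~$H$, so any habitat that shares a non-forced edge with~$H$ and also leaves~$H$ is either disconnected or has $x_1$ as a cut vertex; the remaining constant-size pieces are then solved by brute force (you organize this via \cref{obs:HG:components}, whereas the paper solves each size-5 habitat directly and then invokes \cref{prop:H4D4:P} on what remains). One small correction in Step~1: in the $C_5$-plus-chord graph the chord is \emph{not} forced, since deleting it leaves a $C_5$; however, the five cycle edges are forced and already form a 2-connected $C_5$, so \cref{rr:solvedH} still deletes that habitat.
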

\begin{proof}
 Let~$I=(G,c,\Fin,\calH,k)$ be a reduced instance of~\rgbpAcr{}.
 All
 habitats from (a) do not appear,
 and hence,
 only habitats isomorphic to (b) remain.
 Let~$H\in\calH$ be isomorphic to (b).
 Suppose towards a contradiction that
 there is~$H'\in \bnd(\{H\})$ of size five.
 Let~$y\in H'\setminus H$.
 Then,
 $x_1\in H'$ and
 there is a vertex~$z\in H'\setminus H$ that is adjacent with~$x_1$.
 But then,
 $H'$ is neither 2-vertex- nor 2-edge-connected ($z$ is a separator, and~$\{x_1,z\}$ is a cut-edge);
 a contradiction to the fact that~$I$ is reduced.
 Further,
 note that not habitat isomorphic to~(b) can intersect with a habitat isomorphic to a~$K_4$.
 So,
 for each size-5 habitat~$H$,
 compute the edge set~$F_H\subseteq E(G[H])$ with~$\Fin\cap E(G[H])\subseteq F_H$ of minimum cost such that~$G[H,F_H]$ is 2-connected.
 Set~$k\ceq k-c(F_H)$,
 and remove~$\Fin\cap E(G[H])$ from~$\Fin$ and~$H$ from both~$\calH$ and~$G$.
 Then, an instance remains with only habitats of size four and degree three,
 which is polynomial-time solvable due to~\cref{prop:H4D4:P}.
\end{proof}
For~$\Delta=4$,
the proof is more involved,
consisting of an exhaustive case distinction over all possible ways a size-5 habitat can intersect with another habitat.

\begin{proposition}
	\label{prop:H5D4:P}
 \rgbpAcr{} is polynomial-time solvable if~$\eta=5$ and~$\Delta=4$.
\end{proposition}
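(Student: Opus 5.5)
We follow the same overall strategy as for \cref{prop:H4D5:P}. First reduce the instance, then show that every component of the basic habitat graph $\calG_{G,\Fin,\calH}$ is either of constant size or a path or cycle. Constant-size components we solve by brute force. Paths and cycles we handle with \cref{lem:HG:paths}, which runs in polynomial time because $\eta$ and $\Delta$ are constants. \cref{obs:HG:components} then lets us combine the component solutions. Habitats of size four are $K_4$'s in a reduced instance, and habitats of size three are triangles, which \cref{rr:solvedH} removes. So the work lies in understanding how a $2$-connected size-5 habitat $H$ with $\Delta(G[H])\le 4$ can share a non-forced edge with another habitat.

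\textbf{Step 1: classify the possible habitats.} In a reduced instance, every degree-2 vertex of $G[H]$ has both its edges in $\Fin$ (\cref{rr:forcededge}), and $G[H,\Fin]$ is not 2-connected (\cref{rr:solvedH}). This leaves only finitely many isomorphism types of size-5 habitats: the graph (b) of \cref{fig:H5D3:P}, the wheel $W_5$, $K_5$ minus a matching, $K_5$, and a few more of this kind. For each type we record the \emph{free} edges (those not in $\Fin$) and the spare degree of each vertex. A vertex of degree $d$ in $G[H]$ has at most $4-d$ neighbours outside $H$.

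\textbf{Step 2: case analysis on intersections.} Consider any $H'\in\bnd(\{H\})$. It must share a free edge $\{a,b\}$ with $H$, and every vertex of $H'\setminus H$ needs at least two neighbours inside $G[H']$. Degree-4 vertices of $G[H]$ have no outside neighbours. So $H'\setminus H$ can attach only to vertices of low degree in $G[H]$, and these are scarce.

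\begin{itemize}
\item If $G[H]$ is very dense (for example $K_5$, or $K_5$ minus one edge), the component is $\{H\}$ alone, or else $|\bigcup|$ stays bounded. The argument mirrors \cref{lem:H4D5:P:triangle}.
\item Otherwise, $H'\setminus H$ attaches at the few degree-2 or degree-3 vertices of $G[H]$. We show that the union of the component either has at most some constant number of vertices (say $\le 8$), or each habitat meets at most two others, through disjoint parts.
\end{itemize}

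In the second situation the component has maximum degree two, so it is a path or a cycle, as in \cref{lem:H4D5:cyclespaths}. We then add a reduction rule analogous to \cref{rrule:H4D5:smallcomps}: it solves constant-size components by brute force and adds the optimal edges to $\Fin$.

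\textbf{Main obstacle.} The hard part is the exhaustive case distinction in Step 2. Mixed intersections of size-5 habitats with $K_4$'s and with other size-5 types must all be checked, and in each case we must verify that some third habitat in $\bnd(\{H,H'\})$ either forces a contradiction with reducedness or keeps the structure path-like. To organise this, we would first enumerate the types by their degree sequence. We would then argue via the spare-degree budget, in the style of \cref{prop:H4D4:P}: every vertex of $H\cap H'$ of degree 4 in $G[H\cup H']$ is saturated. This quickly bounds how many habitats can meet any given free edge, and the degree of $\calG_{G,\Fin,\calH}$ away from bounded-size components.
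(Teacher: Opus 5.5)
Your framework (reduce, split along components via \cref{obs:HG:components}, brute-force bounded components) is sound, but the proposal does not contain the argument that makes it work. The dichotomy in Step~2 (``the union is bounded, or each habitat meets at most two others'') is only announced, and you defer it yourself as the main obstacle. That exhaustive case analysis \emph{is} the proof. The spare-degree count you propose cannot carry it on its own, since every newly attached vertex brings fresh spare degree.

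What is missing are the structural exclusions for reduced instances that the paper uses throughout:
(1) The only surviving size-5 types are $K_4$ plus a vertex joined to two of its vertices, the graph of \cref{fig:H5D3:P}(b), $W_5$, $K_5-e$, and $K_5$. Hence no habitat contains two degree-2 vertices (\cref{obs:H5D4:P:notwotwo}).
(2) By the same classification, no habitat contains a vertex with exactly two neighbours $a,b$ such that $a,b$ have no, exactly one, or only pairwise non-adjacent further common neighbours in it (\cref{obs:H5D4:P:P3free,obs:H5D4:P:K3free}). Such a habitat would be one of the excluded types of \cref{fig:H5D4:P}(a).
(3) Using (1) and \cref{obs:H5D4:P:deg1}, a size-5 habitat $H$ and any $H'\in\bnd(\{H\})$ intersect in a $K_3$, $C_4$, $Claw+e$, $K_4-e$ or $K_4$. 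If $G[H]$ is a $K_5$, then $\bnd(\{H\})=\emptyset$.
(4) For each of these patterns, a case check shows that any third or fourth habitat in $\bnd(\cdot)$ either contradicts reducedness (a degree-1 vertex, two degree-2 vertices, or an attachment excluded by (2)) or shares only forced edges. So the union of the component has at most $8$ vertices.

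It is (2), not degree counting, that stops a new vertex from hooking onto two unsaturated vertices and extending the component.

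The resulting statement is stronger than your dichotomy. Every component containing a size-5 habitat spans at most $8$ vertices, and $K_4$-only components are bounded by \cref{prop:H4D4:P}. So no path/cycle case and no use of \cref{lem:HG:paths} is needed.

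Your path/cycle branch would also not go through as stated. Analysing only $\bnd(\{H\})$ ignores habitats contained in $H$, for example $K_4$-habitats inside a $K_5-e$ habitat. In a reduced instance these are always adjacent to $H$ in $\calG_{G,\Fin,\calH}$, since such a sub-habitat has a non-forced edge that also lies in $G[H]$. So bounding $|\bnd(\{H\})|$ by two does not make the component a path or a cycle.
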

\cref{fig:H5D4:P} shows all pairwise non-isomorphic 2-vertex- resp.\ 2-edge-connected graphs of size five and degree at most four.
\begin{figure}[t]
 \centering
 \begin{tikzpicture}
  \def\xr{1}
  \def\yr{1}
  \def\xsh{2}
  \def\ysh{2.6}
  \tikzpreamble{}

  \newcommand{\nodegadget}[3]{%
		\node (sw) at (-0.5*\xr,-0.4*\yr)[xnode,label={[label distance=-1pt](-90):{$x_4$}}]{};
		\node (se) at (0.5*\xr,-0.4*\yr)[xnode,label={[label distance=-1pt](-90):{$x_3$}}]{};
		\node (nw) at (-0.5*\xr,0.4*\yr)[xnode,label={[label distance=0pt](90):{$x_5$~~}}]{};
		\node (ne) at (0.5*\xr,0.4*\yr)[xnode,label={[label distance=0pt](90):{~~$x_2$}}]{};
		\node (n) at (0,1*\yr)[xnode,label={[label distance=-1pt](90):{$x_1$}}]{};

		\foreach \x/\y in {#1}{\draw[xedge] (\x) to (\y);}

		\foreach \x/\y in {#2}{\draw[xfedge] (\x) to (\y);}
		\foreach \x/\y in {#3}{\draw[xsedge] (\x) to (\y);}
	 }
	 \newcommandx{\xlabel}[2][1=1]{\node at (-#1*0.5*\xr,#1*1.25*\yr)[anchor=east]{(#2)};}

	\begin{scope}[xshift=0*\xsh*\xr cm]
	 \xlabel[1.3]{a}
	 \xlabel{i}
	 \nodegadget{}{n/ne,ne/se,se/sw,sw/nw,nw/n}{}
	\end{scope}
	\begin{scope}[xshift=1*\xsh*\xr cm]
	 \xlabel{ii}
	 \nodegadget{nw/ne}{n/ne,n/nw,nw/sw,ne/se,se/sw}{}
	\end{scope}
	\begin{scope}[xshift=2*\xsh*\xr cm]
	 \xlabel{iii}
	 \nodegadget{nw/ne}{n/ne,n/nw,nw/sw,ne/se,se/nw,sw/ne}{}
	\end{scope}
	\begin{scope}[xshift=3*\xsh*\xr cm]
	 \xlabel{iv}
	 \nodegadget{}{n/ne,n/nw,nw/sw,ne/se,se/nw,sw/ne}{}
	\end{scope}
	\begin{scope}[xshift=4*\xsh*\xr cm]
	 \xlabel{v}
	 \nodegadget{n/sw,n/se}{n/nw,n/ne,ne/se,sw/nw}{se/sw}
	\end{scope}
	\begin{scope}[xshift=5*\xsh*\xr cm]
	 \xlabel{*}
	 \nodegadget{}{n/ne,n/nw,n/se,n/sw,n/se,se/ne,sw/nw}{}
	\end{scope}
	\begin{scope}[xshift=0*\xsh*\xr cm,yshift=-1*\ysh*\yr cm]
	 \xlabel[1.3]{b}
	 \xlabel{1}
	 \nodegadget{nw/ne,nw/sw,nw/se,sw/se,se/ne,ne/sw}{n/nw,n/ne}{}
	\end{scope}
	\begin{scope}[xshift=1*\xsh*\xr cm,yshift=-1*\ysh*\yr cm]
	\xlabel{2}
	 \nodegadget{nw/sw,nw/se,sw/se,se/ne,ne/sw}{n/nw,n/ne}{}
	\end{scope}
	\begin{scope}[xshift=2*\xsh*\xr cm,yshift=-1*\ysh*\yr cm]
	 \xlabel{3}
	 \nodegadget{n/ne,n/nw,n/sw,n/se,nw/ne,ne/se,se/sw,sw/nw}{}{}
	\end{scope}
	\begin{scope}[xshift=3*\xsh*\xr cm,yshift=-1*\ysh*\yr cm]
	 \xlabel{4}
	 \nodegadget{n/ne,n/nw,n/sw,n/se,ne/se,se/sw,sw/nw,se/nw,ne/sw}{}{}
	\end{scope}
	\begin{scope}[xshift=5*\xsh*\xr cm,yshift=-1*\ysh*\yr cm]
	 \xlabel{c}
	 \nodegadget{n/ne,n/nw,n/sw,n/se,nw/ne,ne/se,se/sw,sw/nw,se/nw,ne/sw}{}{}
	\end{scope}
 \end{tikzpicture}
 \caption{All pairwise non-isomorphic 2-connected habitats of size five with degree at most four. Red edges are forced. The green edge is also forced for 2-vertex-connectivity but not for 2-edge-connectivity. Habitat~$(*)$ is 2-edge-connected but not 2-vertex-connected.}
 \label{fig:H5D4:P}
\end{figure}
Assume that our instance is reduced.
Then,
none of the habitats in (a) appears.
From \cref{fig:H5D4:P}(i), (ii), and~(iv), we derive the following.
\begin{corollary}\label{obs:H5D4:P:P3free}
 There is no habitat~$H$ such that~$G[H]$ contains an induced~$P_3$ and outside the~$P_3$,
 the two endpoints have
 \begin{inparaenum}[(i)]
  \item no neighbor in common,
	\item exactly one neighbor in common, or
	\item all common neighbors being pairwise non-adjacent.
 \end{inparaenum}
\end{corollary}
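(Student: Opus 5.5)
The plan is to argue by contradiction using the classification in \cref{fig:H5D4:P}. Suppose $I$ is reduced, and some habitat~$H$ contains an induced $P_3$ $(a,b,c)$, where $\{a,c\}\notin E$, such that one of (i)--(iii) holds. Let $y,z$ be the two remaining vertices of~$H$. We use only that $G[H]$ is 2-connected (by \cref{rr:trivialno}) and has maximum degree at most four. The goal is to show that $G[H]$ is then isomorphic to (i), (ii) or (iv) of \cref{fig:H5D4:P}(a). The paper has already noted that these do not occur in a reduced instance, so this is the contradiction. We take that remark as given: every edge of those graphs is forced or lies in a fully forced habitat, so \cref{rr:forcededge} and \cref{rr:solvedH} would apply.

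Since $G[H]$ is 2-connected, both $a$ and $c$ have degree at least two. Because $\{a,c\}\notin E$, each of them has a neighbor in $\{y,z\}$. I split according to the set $N(a)\cap N(c)\cap\{y,z\}$.

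\xcase{(i)}{no common neighbor} Then, up to symmetry, $N(a)=\{b,y\}$ and $N(c)=\{b,z\}$. Removing $b$ must leave the graph connected, which forces $\{y,z\}\in E$. So $G[H]$ is the $5$-cycle $(a,b,c,z,y)$ plus possibly the chords $\{b,y\}$ and $\{b,z\}$. The $5$-cycle alone is (i). With one chord we get (ii), again a $5$-cycle with a chord. With both chords we check that the graph is one of the listed (a)-graphs; this is the step I would verify against the figure.

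\xcase{(ii)}{exactly one common neighbor, say $y$} Then $z$ is adjacent to at most one of $a,c$. Next I enumerate the edges among $\{b,y,z\}$ subject to 2-connectivity. In particular, $z$ needs two neighbors, and $G[H]-y$ must be connected. Each resulting graph is matched with (ii) or (iv).

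\xcase{(iii)}{both $y,z$ are common neighbors and $\{y,z\}\notin E$} Then $a,b,c$ together with $y,z$ form a $K_{2,3}$-like structure with parts $\{a,c\}$ and $\{b,y,z\}$, plus possible edges from $b$ to $y$ and to $z$. Again this is checked against (iv) and the remaining (a)-graphs.

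The main obstacle is completeness of this finite enumeration. We must make sure that every graph produced in the three cases really lands in family (a) and not in family (b) or among $K_5$-like graphs. Cross-checking degrees and edge counts against \cref{fig:H5D4:P} should suffice here. I would first count edges in each case, since the graphs in (a) have $5$--$7$ edges, and then use the degree sequence to pin down the isomorphism class.
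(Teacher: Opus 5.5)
Your enumeration cannot be completed, because under the reading you adopt the statement is false. You allow an arbitrary induced $P_3$ $(a,b,c)$ whose middle vertex $b$ may also be adjacent to $y$ or $z$.

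Take the graph (b)(2) of \cref{fig:H5D4:P}, which does survive in reduced instances. The path $(x_2,x_3,x_5)$ is induced. The common neighbors of $x_2,x_5$ outside it are $x_1$ and $x_4$, and $\{x_1,x_4\}\notin E$, so condition (iii) holds.

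Your own case analysis runs into exactly this:
\begin{itemize}
\item In case (iii), adding the chord $\{b,y\}$ to the $K_{2,3}$ yields (b)(2), and adding both chords yields the wheel (b)(3).
\item In case (ii), letting $z$ be adjacent to $a$, $b$ and $y$ yields (b)(2), and adding $\{b,y\}$ as well yields (b)(1).
\end{itemize}
So the ``obstacle'' you flag is not a matter of careful checking; these graphs really lie in family (b). Your case (i) is fine: it produces $C_5$, the house (ii), and the gem (v).

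The missing hypothesis is that the middle vertex has degree two in $G[H]$, i.e., $N_{G[H]}(b)=\{a,c\}$. This is how the corollary is used everywhere: a single new vertex $y''$ adjacent to exactly two vertices of $H''$. It is also why the paper reads the corollary off from (i), (ii), and (iv) only.

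With this hypothesis, no chords at $b$ exist and your three cases close at once:
\begin{itemize}
\item In (i), $a$ and $c$ have distinct private neighbors $y$ and $z$, and $\{y,z\}\in E$ is forced, giving $C_5$.
\item In (ii), $z$ must be adjacent to $y$ and to exactly one of $a,c$, giving the house.
\item In (iii), you get $K_{2,3}$.
\end{itemize}
All three are excluded in a reduced instance.
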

From \cref{fig:H5D4:P}(ii), (iii), and~(v), we derive the following.
\begin{corollary}\label{obs:H5D4:P:K3free}
 There is no habitat~$H$ such that~$G[H]$ contains an induced~$K_3$ with exactly one vertex of degree two and outside the~$K_3$,
 the two non-degree-2 vertices in the~$K_3$ have
 \begin{inparaenum}[(i)]
  \item no neighbor in common,
	\item exactly one neighbor in common, or
	\item all common neighbors being pairwise non-adjacent.
 \end{inparaenum}
\end{corollary}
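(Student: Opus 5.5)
The plan is to argue by contradiction against reducedness, in the same way the paper derives the preceding \cref{obs:H5D4:P:P3free} from \cref{fig:H5D4:P}. Suppose the instance is reduced, $\eta=5$, $\Delta=4$, and some habitat $H$ violates the statement. Then $G[H]$ contains an induced triangle $\{x_1,x_2,x_5\}$ in which $x_1$ has degree two in $G[H]$. Since $G[H]$ is 2-connected (otherwise \cref{rr:trivialno} applies), this forces $|H|=5$: for $|H|\le 4$ the reduced habitats are triangles or $K_4$'s, which have no such configuration with a further vertex. Write $H=\{x_1,x_2,x_3,x_4,x_5\}$. The hypothesis then says that among $\{x_3,x_4\}$ the vertices $x_2$ and $x_5$ have no common neighbour, exactly one common neighbour, or (if both are common) $x_3x_4\notin E$.

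First, I would enumerate the possible $G[H]$. Because $x_1$ has degree two in $G[H]$, 2-connectivity of $G[H]$ forces $\{x_2,x_5\}$ to separate $x_1$ from $\{x_3,x_4\}$. The remaining freedom is small:
\begin{itemize}
\item which of the edges $x_2x_3,\,x_2x_4,\,x_5x_3,\,x_5x_4$ are present;
\item whether $x_3x_4$ is present.
\end{itemize}
I would then use 2-connectivity to exclude the configurations in which some vertex among $x_3,x_4$ has fewer than two neighbours, or in which $\{x_3,x_4\}$ hangs off a single vertex.

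Second, I would check that the surviving graphs are, up to isomorphism, exactly (ii), (iii) and (v) of \cref{fig:H5D4:P}.
\begin{itemize}
\item No common neighbour gives the path $x_2x_3x_4x_5$ together with the triangle, which is (ii).
\item Exactly one common neighbour gives (iii).
\item Two non-adjacent common neighbours give (v).
\end{itemize}
The degree bound $\Delta=4$ is not really needed here beyond ensuring that these are the only candidates listed.

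Third, for each of the three graphs I would verify the markings in the figure. Every red edge $e$ has the property that $G[H]-e$ is not 2-connected, typically because some vertex then drops to degree at most one or a cut vertex appears. Hence \cref{rr:forcededge} puts these edges into $\Fin$. The red edges alone already form a 2-connected spanning subgraph of $G[H]$: a $C_5$ in (ii) and (iii), and in (v) the red edges together with the green edge, which is itself forced for 2-vertex-connectivity. So \cref{rr:solvedH} would delete $H$, contradicting reducedness. For the 2-edge-connectivity variant the same argument goes through, except that in case (v) one instead observes that the red edges give a 2-edge-connected graph; this is habitat $(*)$ of the figure.

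The main obstacle is the exhaustiveness of the case analysis in the first two steps, namely making sure no further 2-connected graph with this local structure slips through. I would handle it by fixing the number of common neighbours of $x_2,x_5$ and then brute-forcing the at most $2^5$ edge patterns on the remaining pairs, discarding the patterns that are not 2-connected.
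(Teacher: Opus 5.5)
Your argument for the statement as posed (the 2-vertex-connectivity version) is correct, and it follows the paper's route. The paper reads off from the complete list in \cref{fig:H5D4:P} that the only 2-connected five-vertex graphs matching the hypothesis are (a)(ii), (a)(iii) and (a)(v), and reduced instances cannot contain these. You carry out that small enumeration explicitly instead of relying on the figure's completeness, which is fine.

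Your case-to-figure correspondence is swapped, though.
\begin{itemize}
\item With exactly one common neighbour $x_3$, the vertex $x_4$ must be adjacent to $x_3$ and to one of $x_2,x_5$. That gives a vertex joined to a $P_4$, i.e.\ the gem (v).
\item Two non-adjacent common neighbours give $K_{2,3}$ plus the edge $x_2x_5$, i.e.\ (iii).
\end{itemize}
Also, the forced edges of (iii) form $K_{2,3}$, not a $C_5$; the only unforced edge there is $x_2x_5$. Since $K_{2,3}$ is 2-connected, \cref{rr:solvedH} still applies and your conclusion stands. (Minor point: triangle habitats do not survive reduction either, but they fail the hypothesis anyway.)

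Delete the sentence about the 2-edge-connectivity variant, because it is false. Take the gem (v) as labelled in the figure, with hub $x_1$ and path $x_2x_3x_4x_5$.
\begin{itemize}
\item Deleting $x_3x_4$ leaves the bowtie $(*)$, which is 2-edge-connected, so $x_3x_4$ is not forced for edge-connectivity.
\item Deleting $x_1x_3$ or $x_1x_4$ also leaves a 2-edge-connected graph, so those edges are not forced either.
\item The forced edges are therefore only $x_1x_2$, $x_2x_3$, $x_1x_5$ and $x_4x_5$, which form a path. Neither \cref{rr:forcededge} nor \cref{rr:solvedH} removes the habitat.
\end{itemize}
Hence case (ii) of the corollary genuinely fails in the edge variant. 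The bowtie $(*)$ is a different habitat whose edges are all forced; it is not the forced subgraph of (v). This is precisely why the paper leaves $(\eta,\Delta)=(5,4)$ open for the edge variant.
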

Thus,
only habitats from (b) remain.
We have the following:

\begin{observation}\label{obs:H5D4:P:notwotwo}
 If~$I$ is reduced,
 there is no habitat~$H$ such that~$G[H]$ contains at least two degree-2 vertices.
\end{observation}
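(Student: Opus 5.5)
Let $I$ be reduced and suppose some habitat $H$ has two distinct vertices $a,b$ with $\deg_{G[H]}(a)=\deg_{G[H]}(b)=2$. The plan is to show that $G[H]$ is then isomorphic to one of the graphs in \cref{fig:H5D4:P}(a), which cannot occur in a reduced instance. Throughout, $|H|=5$, since size-4 habitats are $K_4$ after reduction and size-3 habitats are triangles, where the statement is vacuous or handled separately.

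First use \cref{rr:forcededge}. A degree-2 vertex in $G[H]$ has both incident edges forced. So both edges at $a$ and both edges at $b$ lie in $\Fin$. This gives at least three forced edges, or four if $a,b$ are non-adjacent.

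Next, split on whether $a$ and $b$ are adjacent.

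\begin{itemize}
\item If $\{a,b\}\in E$, let $a'$ and $b'$ be the other neighbours of $a$ and $b$. If $a'=b'$, then $a'$ separates $\{a,b\}$ from the remaining two vertices, contradicting 2-connectivity (\cref{rr:trivialno}). So $a'\neq b'$, and $G[H]$ contains the induced path $a'\,a\,b\,b'$. The fifth vertex $z$ must keep $G[H]$ 2-connected, and the only edges still available lie among $a',b',z$. Enumerating these possibilities yields exactly the graphs (i)--(iii) of \cref{fig:H5D4:P}(a).
\item If $a$ and $b$ are non-adjacent, their neighbourhoods $N(a),N(b)$ lie in the three-vertex set $H\setminus\{a,b\}$. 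Hence either $N(a)=N(b)$, or $|N(a)\cap N(b)|=1$. Each configuration corresponds to an induced $P_3$ or a $K_3$ of the type excluded by \cref{obs:H5D4:P:P3free} and \cref{obs:H5D4:P:K3free}, with the remaining edges determined up to isomorphism. These are the graphs (iv), (v) and (i).
\end{itemize}

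In every case $G[H]$ falls into \cref{fig:H5D4:P}(a). Then \cref{rr:forcededge} forces all but at most one edge, and that edge is either redundant or forced too, so $G[H,\Fin]$ is 2-connected. \cref{rr:solvedH} would then delete $H$, contradicting that $I$ is reduced.

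The main obstacle is the exhaustive check that each configuration really lands in (a) and becomes solved, in particular (v), where the green edge is forced only under vertex-connectivity. I would verify this by directly checking, for each candidate graph, which single edge removals destroy 2-connectivity.
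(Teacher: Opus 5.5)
Your strategy matches the paper's. The observation holds because a habitat with two degree-2 vertices must be one of the graphs in \cref{fig:H5D4:P}(a). Each of those contains a 2-connected spanning subgraph made of edges forced by \cref{rr:forcededge}, so \cref{rr:solvedH} would have deleted it. The paper states the observation right after noting that only the graphs in (b) and the $K_5$ in (c) survive reduction, so it follows by inspecting them; you derive the relevant part of the classification directly, which is fine. There is no missing idea, but several concrete claims in your case analysis are false as written.

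\textbf{Adjacent case.} Here $z$ can only be adjacent to $a'$ and $b'$, so you get exactly $C_5$ and $C_5$ plus the chord $\{a',b'\}$, i.e.\ (i) and (ii). Graph (iii) cannot arise here, since its three degree-2 vertices are pairwise non-adjacent. Also, $a'\,a\,b\,b'$ need not be an induced path.

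\textbf{Non-adjacent case.} If $N(a)=N(b)=\{c,d\}$, the fifth vertex $e$ must be adjacent to $c$ and $d$. This gives $K_{2,3}$ and $K_{2,3}$ plus $\{c,d\}$, i.e.\ (iv) and (iii). If $N(a)=\{c,d\}$ and $N(b)=\{c,e\}$, you need $\{d,e\}\in E$, since otherwise $c$ is a cut vertex (e.g.\ the bowtie $(*)$). Adding any subset of $\{c,d\},\{c,e\}$ then gives (i), (ii), or (v). The appeal to \cref{obs:H5D4:P:P3free} and \cref{obs:H5D4:P:K3free} is unnecessary, since those are themselves read off from (a).

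\textbf{Forced edges.} Your claim that all but at most one edge is forced fails for (v). There, the two edges $\{c,d\},\{c,e\}$ at the apex $c$ both stay unforced. What you actually need is that the forced edges $\{a,c\},\{a,d\},\{b,c\},\{b,e\},\{d,e\}$ form a $C_5$. The last of these is forced because deleting it leaves a bowtie with cut vertex $c$.

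\textbf{Triangles.} Triangle habitats are not vacuous: they have three degree-2 vertices. However, all their edges are forced, so \cref{rr:solvedH} deletes them. The same argument eliminates $C_4$ and $K_4-e$ among size-4 habitats.
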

\newcommand{\Gcap}{\hat{G}}
\newcommand{\Vcap}{\hat{V}}
\newcommand{\Ecap}{\hat{E}}
We will argue over the way a size-5 habitat~$H$ intersects with another habitat~$H'\in\bnd_{\calG_{G,F^*,\calH}}(\{H\})$.
Note that since~$|H|=\eta=5$,
it also holds that $H\setminus H'\neq \emptyset$.
For the habitat in (c),
we have the following.

\begin{lemma}
 \label{lem:H5D4:P:K5}
 If for a habitat~$H\in \calH$ we have that~$G[H]$ is isomorphic to a~$K_5$,
 then the union of habitats in the connected component in~$\calG_{G,F^*\calH}$ containing~$H$ is of order five.
\end{lemma}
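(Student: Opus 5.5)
The plan is a degree argument. If $G[H]$ is a $K_5$ and $\Delta=4$, then every vertex $v\in H$ already has its four neighbors inside $H$, so $N_G(v)\subseteq H$. I will show that no habitat $H'\neq H$ can be adjacent to $H$ in $\calG_{G,\Fin,\calH}$ unless $H'\subseteq H$. An induction (or a direct closure argument) along the component then shows that every habitat in the component is a subset of $H$. Hence the union of the component's habitats is exactly $H$, which has order five.

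First I fix a habitat $H'$ adjacent to $H$ in the \bhg{}. By \cref{def:HG} they share an edge $\{u,v\}\notin\Fin$, so $u,v\in H\cap H'$. Suppose towards a contradiction that some $y\in H'\setminus H$ exists. Because the instance is reduced (\cref{rr:trivialno}), $G[H']$ is 2-connected. So $G[H']$ contains a path from $y$ to $u$ that avoids $v$, and also, by connectivity, a path from $y$ into $H\cap H'$. On such a path, let $a$ be the last vertex outside $H$, immediately followed by some $b\in H$. Then $\{a,b\}\in E$ and $a\notin H$, contradicting $N_G(b)\subseteq H$. Actually connectivity of $G[H']$ alone suffices: $H'$ meets both $H$ and $V\setminus H$, so some edge of $G[H']$ crosses between $H$ and $V\setminus H$, and no such edge exists in $G$. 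Hence $H'\subseteq H$.

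Next I propagate this along the component. Take any habitat $H''$ adjacent to some $H'\subseteq H$. They share an edge, so $H''$ meets $H$. Since $G[H'']$ is connected and no edge of $G$ leaves $H$, we get $H''\subseteq H$. By induction over the distance from $H$ in the component, every habitat of the component is contained in $H$, and the union contains $H$ itself. So the union has order exactly five.

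I expect no real obstacle beyond carefully invoking reducedness: connectivity of each habitat comes from \cref{rr:trivialno}, and the fact that adjacency implies a shared vertex comes from the definition of the \bhg{}. The only subtle point is that the component could contain habitats of size three or four nested inside $H$. This is harmless, since the claim concerns only the order of the union.
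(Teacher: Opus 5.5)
Your proof is correct and follows the paper's argument. The paper observes that every vertex of the $K_5$ already has degree four inside $H$, so $\bnd(\{H\})=\emptyset$. Your propagation step to habitats nested in $H$ makes explicit what the paper leaves implicit; it could be shortened by noting that any unforced edge shared with some $H'\subseteq H$ lies in $G[H]$, so such a habitat is already adjacent to $H$ itself.
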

\begin{proof}
 Each vertex in~$G[H]$ has degree four,
 and hence~$\bnd(\{H\})=\emptyset$.
\end{proof}
With the next observations,
we basically rule out possible intersection
for further investigations.

\begin{observation}\label{obs:H5D4:P:docking33}
 If in~$G[H\cup H']$,
 all but two vertices have degree four and these two have degree three,
 then the union of habitats in the connected component in~$\calG_{G,F^*\calH}$ containing~$H$ and~$H'$ is of order at most~$|H\cup H'|+1$.
\end{observation}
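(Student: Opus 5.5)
The plan is a degree-saturation argument: because $\Delta=4$, every vertex that already has degree four in $G[H\cup H']$ has all of its $G$-neighbours inside $H\cup H'$. Call the two exceptional vertices $a$ and $b$; each has exactly one $G$-neighbour outside $H\cup H'$. Throughout I use that $I$ is reduced. In particular, \cref{rr:trivialno} gives that $G[H'']$ is 2-connected for every habitat $H''$. Hence every vertex of $H''$ has at least two neighbours inside $H''$, and $G[H'']$ has no cut vertex.

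First step: I show that at most one vertex outside $H\cup H'$ can appear in any habitat $H''\in\bnd(\{H,H'\})$. Let $z\in H''\setminus(H\cup H')$. Its neighbours in $H\cup H'$ can only be $a$ or $b$. Suppose $H''$ contained two outside vertices $z,z'$. Since $a$ and $b$ each have only one outside neighbour, at most one of $z,z'$ (say $z$) is adjacent to $a$, and similarly for $b$.

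Now use that $H''$ also shares a non-forced edge with $H$ or $H'$, so it contains old vertices. Every path in $G[H'']$ from an outside vertex to an old vertex must enter $H\cup H'$ through an edge at $a$ or at $b$. These are at most two edges, $az_a$ and $bz_b$. If $z_a=z_b$, that vertex is a cut vertex of $G[H'']$. If $z_a\neq z_b$, then removing $a$ leaves only $b$ as an entry point, and a short case check shows $a$ or $b$ is a separator, or some outside vertex has degree at most one in $G[H'']$. Either way this contradicts 2-connectivity. So $H''$ has exactly one outside vertex $z$. Since $z$ needs two neighbours in $H''$, $z$ is adjacent to both $a$ and $b$, and this $z$ is uniquely determined: it is the unique outside neighbour of $a$.

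Second step: let $U\coloneqq H\cup H'\cup\{z\}$. In $G[U]$ every vertex of $H\cup H'$ now has degree four, so the only vertex of $U$ that may have $G$-neighbours outside $U$ is $z$. I then argue by induction along a BFS of the component of $\calG_{G,\Fin,\calH}$ that every habitat $H^*$ of the component satisfies $H^*\subseteq U$.

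Suppose $H^*$ is the first habitat reached that contains a vertex $w\notin U$. Then $H^*$ shares a non-forced edge with an earlier habitat, whose vertex set lies in $U$. So $H^*$ contains two vertices of $U$, at least one of which, say $y$, differs from $z$. Every path in $G[H^*]$ from $w$ to $y$ must pass through $z$, since $z$ is the only vertex of $U$ with neighbours outside $U$. Hence $z$ is a cut vertex of $G[H^*]$, contradicting \cref{rr:trivialno}. Consequently the union of the component's habitats is contained in $U$, which has order at most $|H\cup H'|+1$.

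The main obstacle is the first step in the case where $a$ and $b$ have different outside neighbours. I will handle it with the cut-vertex and low-degree case check described there, rather than with a pure counting argument.
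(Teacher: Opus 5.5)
Your overall route matches the paper's. Only the two degree-3 vertices $a,b$ can have neighbours outside $H\cup H'$; a boundary habitat adds exactly one vertex $z$, adjacent to both; afterwards only $z$ leads outside. Your second step (BFS with $z$ as the only exit, so $z$ is a cut vertex) is sound, and more careful than the paper's one-line conclusion $\bnd(\{H,H',H''\})=\emptyset$.

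\textbf{The gap.} It is the first step, in the subcase $z_a\neq z_b$, which you plan to settle with 2-connectivity (\cref{rr:trivialno}) alone. That cannot work. If $z_a$ and $z_b$ are joined inside $H''$, directly or through a further outside vertex, then $G[H'']$ can be, e.g.:
\begin{itemize}
\item the cycle $a\,z_a\,z_b\,b\,c$, with $c$ an old common neighbour of $a$ and $b$;
\item or the triangle $abc$ plus the path $a\,z_a\,z_b\,b$.
\end{itemize}
Both graphs are 2-connected with minimum degree two, and neither $a$ nor $b$ is a separator.

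In fact the statement is false without the size bound $\eta\le 5$. Take $H=\{a,c,d,x\}$ and $H'=\{b,c,d,x\}$, so $G[H\cup H']=K_5-ab$. Let $H''=\{a,b,c,d,z_a,z_b,z_c,z_d\}$, where $\{z_a,z_b,z_c,z_d\}$ is a $K_4$ attached by the edges $az_a$ and $bz_b$. This instance is reduced and has maximum degree four. $H''$ shares the unforced edge $cd$ with $H$, and the component has nine vertices, not at most $|H\cup H'|+1=6$. Your first step uses neither $\eta\le 5$ nor any reduction rule besides \cref{rr:trivialno}, so it cannot be completed as described.

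\textbf{What is missing.} You need the structure of reduced instances for $\eta=5$, which the paper's terse ``since $v,w$ are of degree three'' also relies on implicitly: size-4 habitats are $K_4$'s, and by \cref{obs:H5D4:P:notwotwo} no habitat has two degree-2 vertices. Since $H''$ contains at least two old vertices, it has at most three outside vertices.
\begin{itemize}
\item If its outside vertices are exactly $z_a\neq z_b$, each is adjacent in $H''$ only to its unique old neighbour and to the other outside vertex. Both then have degree at most two in $G[H'']$. This is impossible for a $K_4$ and excluded by \cref{obs:H5D4:P:notwotwo} for size five.
\item If $H''$ has three outside vertices, its two old vertices must be $a$ and $b$, since otherwise there is a cut vertex. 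Then $a$ and $b$ both have degree at most two in $G[H'']$, again a contradiction.
\end{itemize}
Use this in place of your case check, and apply it to every habitat adjacent to one lying inside $H\cup H'$, not only to members of $\bnd(\{H,H'\})$. Then the rest of your argument goes through. Also, $a$ and $b$ have at most one outside neighbour each, not exactly one.
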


\begin{proof}
 Let~$v,w\in H\cup H'$ denote the two vertices of degree three in~$G[H\cup H']$.
 Let~$H''\in\bnd(\{H,H'\})$.
 Every vertex in~$H''\setminus (H\cup H')$ can only be adjacent with~$v,w$ in~$H\cup H'$,
 and since~$v,w$ are of degree three,
 we have~$|H''\setminus (H\cup H')|=1$.
 Thus,
 in~$G[H\cup H'\cup H'']$,
 all vertices are of degree four except for the vertex in~$H''\setminus (H\cup H')$.
 Hence,
 $\bnd(\{H,H',H''\})=\emptyset$.
\end{proof}

\begin{observation}\label{obs:H5D4:P:docking32}
 If in~$G[H\cup H']$,
 all but two vertices have degree four and these two have degree three and degree two and no common neighbor,
 then the union of habitats in the connected component in~$\calG_{G,F^*\calH}$ containing~$H$ and~$H'$ is of order at most $|H\cup H'|+1$.
\end{observation}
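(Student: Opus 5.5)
Let~$v$ denote the vertex of degree three and~$w$ the vertex of degree two in~$G[H\cup H']$; all other vertices of~$H\cup H'$ have degree four, hence (as~$\Delta=4$) have no neighbors outside~$H\cup H'$. I would mimic the proof of \cref{obs:H5D4:P:docking33}. First, take any~$H''\in\bnd(\{H,H'\})$ and determine how it can meet~$H\cup H'$. Since~$H''$ shares a non-forced edge with~$H$ or~$H'$ (adjacency in~$\calG_{G,F^*,\calH}$), $G[H'']$ contains vertices of~$H\cup H'$. Every vertex of~$H''\setminus(H\cup H')$ can only be adjacent, within~$H\cup H'$, to~$v$ or~$w$. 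Moreover, $v$ has at most one neighbor outside~$H\cup H'$ and~$w$ at most two.

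Second, I will show~$|H''\setminus(H\cup H')|=1$. Let~$Z=H''\setminus(H\cup H')$. In the 2-connected graph~$G[H'']$, the set~$Z$ must be attached to~$H''\cap(H\cup H')$ by edges at two distinct vertices, since otherwise a single vertex would separate. These vertices must be~$v$ and~$w$. Because~$v$ has only one outside neighbor, at most one vertex of~$Z$ is adjacent to~$v$. Now suppose~$|Z|\ge 2$. Since~$|H''|\le 5$ and~$H''$ contains~$v,w$ plus at least one non-forced edge's endpoints, the remaining structure would be very constrained. Here I would use the hypothesis that~$v$ and~$w$ have no common neighbor. The non-forced edge shared with~$H$ or~$H'$ forces~$H''$ to contain at least one further vertex~$y\in H\cup H'$ adjacent to~$v$ or~$w$ inside~$H''$. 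With~$|Z|\ge2$ this leaves~$H''=\{v,w,y\}\cup Z$, and~$y$ would then have degree at most one in~$G[H'']$ unless it is a common neighbor of~$v$ and~$w$, which is excluded. Note that~$y$ has no neighbors in~$Z$, and~$\{v,w\}$ is not an edge in the relevant way. So~$|Z|=1$, say~$Z=\{z\}$, and~$z$ is adjacent to both~$v$ and~$w$ (to obtain two attachment points). This step, ruling out~$|Z|\ge 2$, is the main obstacle. The case~$H''\cap(H\cup H')=\{v,w\}$ together with a small~$Z$ also needs separate handling. In that case~$H''$ shares no edge of~$G[H]$ or~$G[H']$ unless~$\{v,w\}$ is such an edge, so~$H''\notin\bnd$ in that configuration, or~$G[H'']$ fails \cref{obs:H5D4:P:notwotwo}.

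Third, after adding~$H''$, $v$ has degree four and~$w$ has degree three, with its remaining capacity of at most one further neighbor. The vertex~$z$ then has degree at most four with neighbors~$v,w$ and possibly other vertices of~$Z$, which is empty. So any~$H^*\in\bnd(\{H,H',H''\})$ could only enter through~$w$ and~$z$. By the same two-attachment argument it would need a new vertex adjacent to both~$w$ and~$z$. Such an~$H^*$ would then contain the degree-two vertices~$w$-side and the new vertex, or an induced structure excluded by \cref{obs:H5D4:P:notwotwo} and \cref{obs:H5D4:P:P3free}. I would conclude~$\bnd(\{H,H',H''\})=\emptyset$, giving the bound~$|H\cup H'|+1$. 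If this last step fails, I would instead argue that every such~$H^*$ uses no vertex outside~$H\cup H'\cup\{z\}$, which already suffices for the stated order bound.
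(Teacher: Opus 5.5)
Your first two steps follow the paper's line: a habitat $H''\in\bnd(\{H,H'\})$ has at most one new vertex $z$, adjacent to both $v$ and $w$. Your third step, however, is not a proof, and it is also unnecessary.

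\textbf{The third step.} From ``$H^*$ must attach at $w$ and $z$'' it does not follow that a single new vertex is adjacent to both, since two different new vertices could carry the two attachments. The ensuing contradiction is only gestured at, and the fallback sentence just restates the goal.

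The idea you are missing is to apply the exclusion corollaries to $H''$ itself. Once $H''\setminus(H\cup H')=\{z\}$, the vertex $z$ has degree two in $G[H'']$ with neighbors $v,w$. Moreover, $v$ and $w$ have no further common neighbor in $G[H'']$, because all other vertices of $H''$ lie in $H\cup H'$. This is exactly what \cref{obs:H5D4:P:P3free}(i) forbids if $\{v,w\}\notin E$. If $\{v,w\}\in E$, it is forbidden by \cref{obs:H5D4:P:K3free}(i), or by \cref{obs:H5D4:P:notwotwo} if $w$ also has degree two in $G[H'']$. Hence $\bnd(\{H,H'\})=\emptyset$, which is how the paper concludes; this even gives order $|H\cup H'|$.

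If you prefer to keep your two-layer route, close it with forcedness:
\begin{itemize}
\item $w$ has degree exactly two in whichever of $G[H],G[H']$ contains it, so both its edges into $H\cup H'$ are forced by \cref{rr:forcededge}.
\item $\{z,v\}$ and $\{z,w\}$ are forced because $z$ has degree two in $G[H'']$.
\end{itemize}
So the unforced edge $\{a,b\}$ that $H^*$ shares avoids $w$ and $z$. This forces $H^*=\{w,z,a,b,z^*\}$ with $z^*$ adjacent to $w$ and $z$. Then either $v\in\{a,b\}$ and the other endpoint has degree one in $G[H^*]$, or $z$ and $z^*$ are two degree-two vertices; both are contradictions.

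\textbf{The subcase $H''\cap(H\cup H')=\{v,w\}$.} This subcase needs the same forcedness fact, and your appeal to \cref{obs:H5D4:P:notwotwo} there fails. Take $\{v,w\}\in E$ and three new vertices $z_1,z_2,z_3$, pairwise adjacent, with $z_1\sim v$ and $z_2,z_3\sim w$. Then $G[H'']$ is the second graph of \cref{fig:H5D4:P}(b): it is 2-connected, has only one degree-two vertex (namely $v$), and respects $\Delta=4$. It is excluded from $\bnd(\{H,H'\})$ only because its sole edge inside $H\cup H'$, namely $\{v,w\}$, lies in $G[H]$ or $G[H']$ and is forced there.
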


\begin{proof}
 Let~$v\in H\cup H'$ denote the vertex of degree three and~$w\in H\cup H'$ denote the vertex of degree two in~$G[H\cup H']$.
 Assume that there is~$H''\in\bnd(\{H,H'\})$.
 It holds that~$|H''\setminus (H\cup H')|<2$.
 since~$v$ is of degree three and~$v,w$ have no common neighbor in~$G[H\cup H']$.
 If $|H''\setminus (H\cup H')|=1$,
 then either~\cref{obs:H5D4:P:P3free} or~\cref{obs:H5D4:P:K3free}
 applies;
 a contradiction.
\end{proof}

In the following,
we write~$\Gcap\ceq G[H\cap H']$, $\Vcap\ceq V(\Gcap)$ and~$\Ecap\ceq E(\Gcap)$.
We first filter out several ways how~$H$ and~$H'$ cannot intersect.

\begin{observation}
 \label{obs:H5D4:P:deg1}
 If a vertex~$v\in\Vcap$ has exactly one neighbor~$w$ in~$\Gcap$,
 then edge~$\{v,w\}$ is forced.
\end{observation}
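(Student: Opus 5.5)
The plan is a degree-counting argument at~$v$, using only $\Delta=4$ and the fact that the instance is reduced.

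First, recall what we know about~$v$. Since~$I$ is reduced, \cref{rr:trivialno} is inapplicable, so both~$G[H]$ and~$G[H']$ are 2-connected. In particular, $v$ has degree at least two in each of them. Partition~$N_G(v)\cap(H\cup H')$ into three parts: the unique neighbor~$w$ in~$\Vcap$, the set $A\ceq N_G(v)\cap(H\setminus H')$, and the set $B\ceq N_G(v)\cap(H'\setminus H)$. Then $\deg_{G[H]}(v)=1+|A|$ and $\deg_{G[H']}(v)=1+|B|$. So 2-connectivity gives $|A|\geq 1$ and $|B|\geq 1$.

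Next, apply the degree bound. Since~$\Delta=4$, we have $1+|A|+|B|\leq \deg_G(v)\leq 4$, so $|A|+|B|\leq 3$. Together with $|A|,|B|\geq 1$, this forces $|A|=1$ or $|B|=1$. Say $|A|=1$; the other case is symmetric with~$H'$ in place of~$H$. Then $v$ has degree exactly two in~$G[H]$, and its two incident edges there are~$\{v,w\}$ and the edge to the single vertex of~$A$. Removing~$\{v,w\}$ from~$G[H]$ leaves~$v$ with degree one, so $G[H]-\{v,w\}$ is not 2-connected.

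Finally, conclude with the reduction rules. Because~$I$ is reduced, \cref{rr:forcededge} is inapplicable to~$H$ and the edge~$\{v,w\}$. This means $\{v,w\}$ already lies in~$\Fin$, which is exactly the remark following \cref{rr:forcededge} about degree-two vertices. Hence~$\{v,w\}$ is forced.

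There is no real obstacle here. The only point to check is that the two lower bounds $|A|,|B|\geq 1$ come from 2-connectivity of the two habitats separately, and that the budget $\Delta=4$ leaves no room for both habitats to give~$v$ two private neighbors.
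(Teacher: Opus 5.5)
Your proof is correct and follows the same degree-counting argument as the paper: $v$ needs a private neighbor on each side, $\Delta=4$ leaves room for only one more, so $v$ has degree two in $G[H]$ or $G[H']$, and \cref{rr:forcededge} then forces $\{v,w\}$. Your justification of the private neighbors is slightly more precise than the paper's. You derive them from 2-connectivity (minimum degree two), whereas the paper cites only connectivity, which on its own would not rule out $w$ being $v$'s sole neighbor in $G[H]$.
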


\begin{proof}
 Let~$v\in \Vcap$ such that~$w\in N_{\Gcap}(v)$ is the unique neighbor of~$v$.
 Since both~$G[H]$ and~$G[H']$ are connected,
 $v$ has a neighbor in~$H\setminus H'$ and in~$H'\setminus H$.
 It follows that~$v$ can only have one additional neighbor in~$H$ or~$H'$ and not in~$\Vcap$.
 Thus,
 $v$ has degree two in one of~$G[H]$ or~$G[H']$.
\end{proof}

For a graph~$G$,
we write~$G+xv$ for the graph~$G$ when $x$ many isolated vertices are added.
\cref{obs:H5D4:P:deg1} rules out the following graphs for~$\Gcap$.

\begin{corollary}
 $\Gcap$ is not isomorphic to~$P_2$, $P_2+v$, $P_3$, $P_3+v$, $P_2+2v$,
 or~$2 P_2$.
\end{corollary}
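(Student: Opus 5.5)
The plan is to show that in each of the listed graphs \emph{every} edge of~$\Gcap$ is forced, and that this contradicts $H'\in\bnd(\{H\})$. The reason is that $\{H,H'\}$ is an edge of~$\calG_{G,\Fin,\calH}$ only if the two habitats share a non-forced edge.

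First I would note that $E(G[H])\cap E(G[H'])=E(G[H\cap H'])=\Ecap$. Both sides are exactly the edges of~$G$ with both endpoints in~$H\cap H'$. By \cref{def:HG}, $H'\in\bnd(\{H\})$ therefore requires $\Ecap\setminus\Fin\neq\emptyset$.

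Next I would go through the six graphs and check that every edge of~$\Gcap$ has an endpoint of degree exactly one in~$\Gcap$.
\begin{itemize}
\item In~$P_2$, $P_2+v$ and $P_2+2v$, the unique edge has both endpoints of degree one.
\item In~$P_3$ and~$P_3+v$, each of the two edges contains an endpoint of the path, and that endpoint has degree one.
\item In~$2P_2$, each of the two edges has both endpoints of degree one.
\end{itemize}
By \cref{obs:H5D4:P:deg1}, each such edge is forced: it is an edge incident to a degree-2 vertex of $G[H]$ or~$G[H']$. Since the instance is reduced, \cref{rr:forcededge} is inapplicable, so all these edges already lie in~$\Fin$.

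Hence $\Ecap\subseteq\Fin$. This contradicts $\Ecap\setminus\Fin\neq\emptyset$, so $\Gcap$ is isomorphic to none of the listed graphs.

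The only delicate point is the degree-one case check for each graph, together with the identification of the shared edges with~$\Ecap$. Both are routine, so I expect no real obstacle.
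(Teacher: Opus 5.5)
Your proposal is correct and is exactly the paper's argument. The paper merely states that \cref{obs:H5D4:P:deg1} rules out these graphs: in each of them every edge of~$\Gcap$ has an endpoint of degree one in~$\Gcap$, so in a reduced instance that edge lies in~$\Fin$, hence $H$ and~$H'$ share no unforced edge and cannot be adjacent in the basic habitat graph; you spell out the details (including $E(G[H])\cap E(G[H'])=E(\Gcap)$) that the paper leaves implicit.
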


The next observation rules out several graphs for $\Gcap$ when~$\Vcap=4$.

\begin{observation}
 \label{obs:H5D4:P:deg11}
 If~$\Vcap=4$,
 then~$\Gcap$ contains at most one degree-1 vertices.
\end{observation}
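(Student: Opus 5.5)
The plan is a degree-counting argument in the same style as \cref{obs:H5D4:P:deg1}. The setting is a reduced instance with $\Delta=4$, where $H$ and $H'$ are size-5 habitats (or $H'$ is a smaller habitat) adjacent in $\calG_{G,\Fin,\calH}$ with $H'\in\bnd(\{H\})$ and $|\Vcap|=4$. Since $|H|=5$, exactly one vertex $x\in H\setminus H'$ exists. Likewise, since $H'\setminus H\neq\emptyset$ and $|H'|\le 5$, exactly one vertex $x'\in H'\setminus H$ exists. I assume for contradiction that $\Gcap$ has two distinct degree-1 vertices $v,w$.

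The first step analyses a single degree-1 vertex $v\in\Vcap$. By \cref{rr:trivialno}, both $G[H]$ and $G[H']$ are 2-connected, so $v$ has degree at least two in each. Because $v$ has only one neighbour inside $\Vcap$, it must be adjacent to $x$ (to reach degree two in $G[H]$) and also to $x'$ (to reach degree two in $G[H']$). Its degree is then exactly two in both $G[H]$ and $G[H']$. By the note after \cref{rr:forcededge}, both edges of $v$ in $G[H]$ are forced, including its unique $\Gcap$-edge $\{v,u\}$. The same applies to $w$.

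The second step uses this to contradict $\{H,H'\}\in E(\calG_{G,\Fin,\calH})$, which requires a non-forced edge in $E(G[H])\cap E(G[H'])=\Ecap$. With four vertices and two degree-1 vertices $v,w$, there are two shapes for $\Gcap$.
\begin{itemize}
\item If $v$ and $w$ are adjacent, then $\{v,w\}$ is one $P_2$ component. The remaining two vertices then induce either nothing or one edge, so $\Gcap\in\{P_2+2v,2P_2\}$. These shapes are already excluded by the preceding corollary.
\item Otherwise $\Gcap$ contains a path $v\,a\,w$ or $v\,a\,b\,w$ among its components. The possible shapes are $P_4$, a star $K_{1,3}$, or $P_3+v$, the last of which is excluded. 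For $P_4$ and $K_{1,3}$, I will count degrees of the non-leaf vertices.
\end{itemize}

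For $P_4=v\,a\,b\,w$, the leaves $v,w$ are adjacent to both $x$ and $x'$. The edges $\{v,a\}$ and $\{b,w\}$ are forced by the first step, so $\{a,b\}$ must be the non-forced shared edge. Now consider $a$. If $a$ is adjacent to neither $x$ nor $x'$, its neighbours in $G[H]$ are only $v$ and $b$, so its degree in $G[H]$ is two and $\{a,b\}$ is forced, a contradiction. If $a\sim x$ but $a\not\sim x'$, the same holds in $G[H']$. If $a$ is adjacent to both $x$ and $x'$, then $a$ has degree four, and $G[H]$ is the 5-vertex graph on $x,v,a,b,w$ with edges $va$, $ab$, $bw$, $xv$, $xw$, $xa$, plus possibly $xb$. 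That graph has $v$ and $w$ of degree two. This contradicts \cref{obs:H5D4:P:notwotwo}, which says a habitat contains at most one degree-2 vertex.

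For the star $K_{1,3}$ with centre $a$ and leaves $v,w,y$, I expect the main obstacle to be that the degree-4 bound does not immediately force the third leaf $y$ into a low degree. The plan is to apply the first step to all three leaves, so that all three edges of $\Gcap$ are forced and $H,H'$ cannot be adjacent. In every case, then, either a shared edge is forced or \cref{obs:H5D4:P:notwotwo} is violated. This contradiction proves the observation.
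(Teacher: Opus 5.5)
Your proof is correct, but it is much longer than it needs to be. Your first step already shows that each of the two degree-1 vertices $v,w$ of $\Gcap$ has degree exactly two in $G[H]$: its unique $\Gcap$-neighbour plus the single vertex $x$ of $H\setminus H'$. So $G[H]$ contains two degree-2 vertices, which contradicts \cref{obs:H5D4:P:notwotwo} immediately. That is the paper's entire proof.

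Your second step aims at a different contradiction: that every edge of $\Ecap$ is forced, so $H$ and $H'$ cannot be adjacent in $\calG_{G,\Fin,\calH}$. This forces you to enumerate the shapes of $\Gcap$ and rely on the preceding corollary. In the $P_4$ subcase where $a$ is adjacent to both $x$ and $x'$, you fall back on \cref{obs:H5D4:P:notwotwo} anyway. Yet its hypothesis, that $v$ and $w$ both have degree two in $G[H]$, holds in every case. So the detour buys nothing.

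The enumeration also has a small slip. If $v$ and $w$ are non-adjacent and have distinct, non-adjacent $\Gcap$-neighbours, then $\Gcap$ is $2P_2$. It contains neither a path $v\,a\,w$ nor a path $v\,a\,b\,w$, so your second case does not cover it. This is harmless only because $2P_2$ is excluded by the corollary as an isomorphism type.
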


\begin{proof}
 Suppose towards a contradiction that
 $\Gcap$ contains at least two degree-1 vertices~$x,y$.
 Each must have another neighbor in~$H\setminus H'$ and in~$H'\setminus H$.
 Since~$\Vcap=4$,
 we know that~$|H\setminus H'|=|H'\setminus H|=1$.
 Thus,
 each of~$x,y$ has degree two in each of~$G[H]$ and~$G[H']$,
 a contradiction to~\cref{obs:H5D4:P:notwotwo}.
\end{proof}

\cref{obs:H5D4:P:deg11} now excludes the following for~$\Gcap$.

\begin{corollary}
 $\Gcap$ is not isomorphic to~$P_4$, or $Claw$.
\end{corollary}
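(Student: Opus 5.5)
The statement to prove is the last corollary: when $|\Vcap|=4$, $\Gcap$ is isomorphic neither to~$P_4$ nor to the claw. The plan is to count degree-1 vertices and apply \cref{obs:H5D4:P:deg11} directly. That observation says that if $|\Vcap|=4$, then $\Gcap$ has at most one vertex of degree one. So it suffices to check that each of the two graphs has at least two such vertices.

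First, $P_4$ on vertices $a,b,c,d$ with edges $\{a,b\},\{b,c\},\{c,d\}$ has the two endpoints $a$ and $d$ of degree one. Second, the claw $K_{1,3}$ has three leaves of degree one. Both graphs have exactly four vertices, so $|\Vcap|=4$ holds whenever $\Gcap$ is isomorphic to either of them. Hence \cref{obs:H5D4:P:deg11} applies and gives a contradiction in both cases.

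It remains to confirm that \cref{obs:H5D4:P:deg11} is applicable in our setting, i.e., that its hypotheses hold. We work in a reduced instance, $H'\in\bnd(\{H\})$, and $\Gcap=G[H\cap H']$. All of this matches the standing assumptions of the current case analysis. No real obstacle arises; the only care needed is to note that "degree one" refers to the degree in $\Gcap$ and not in $G[H]$.

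For intuition on why the observation holds: since $|H|=|H'|=5$ and $|\Vcap|=4$, each of $H\setminus H'$ and $H'\setminus H$ is a single vertex. Each degree-1 vertex of $\Gcap$ then has degree exactly two in $G[H]$. Two such vertices would contradict \cref{obs:H5D4:P:notwotwo}.
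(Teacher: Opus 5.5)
Your proof is correct and is the paper's argument: the corollary follows immediately from \cref{obs:H5D4:P:deg11}, because $P_4$ has two degree-1 vertices and the claw has three, while both have four vertices. Your sketch of why that observation holds (a degree-1 vertex of $\Gcap$ has degree exactly two in $G[H]$, so two of them contradict \cref{obs:H5D4:P:notwotwo}) also matches the paper; note only that $|H'|=5$ is not an assumption but follows from $H'\setminus H\neq\emptyset$ together with $|\Vcap|=4$.
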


In total,
what remains for the intersections are~$K_3=C_3$,
$C_4$, $Claw+e$ (the graph with degree sequence $(3,2,2,1)$),
$K_4-e$, and~$K_4$.
We next carefully analyse each case separately.

\begin{lemma}
 If~$\Gcap$ is isomorphic to the~$K_3$,
 then the union of habitats in the connected component in~$\calG_{G,F^*\calH}$ containing~$H$ and~$H'$ is of order at most 8.
\end{lemma}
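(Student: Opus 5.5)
We argue via degree counting around the triangle $\Gcap$, in the style of the proofs of \cref{lem:H4D5:P:triangle} and \cref{prop:H4D4:P}. Let $\Vcap=\{a,b,c\}$, so $|H\setminus H'|=|H'\setminus H|=2$. Say $H\setminus H'=\{p,q\}$ and $H'\setminus H=\{p',q'\}$.

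\textbf{Step 1: the shape of $G[H\cup H']$.} The instance is reduced, so $G[H]$ and $G[H']$ are of type~(b) in \cref{fig:H5D4:P}. By \cref{obs:H5D4:P:notwotwo}, each has at most one degree-2 vertex. Each of $a,b,c$ already has degree two inside $\Gcap$. Hence it has at most two further neighbors in total over $H\setminus H'$ and $H'\setminus H$. Since $G[H]$ is 2-connected, at least two vertices of $\Vcap$ have a neighbor in $\{p,q\}$, and symmetrically for $\{p',q'\}$. Both $p$ and $q$ need degree at least two in $G[H]$. I would combine this with \cref{obs:H5D4:P:notwotwo}, which forbids two degree-2 vertices among $a,b,c,p,q$ inside $G[H]$. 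The goal is to enumerate the few admissible edge patterns between $\Vcap$ and $\{p,q\}$ (and $\{p',q'\}$). In every pattern, at least two of $a,b,c$ should end up with degree four in $G[H\cup H']$, and each of $p,q,p',q'$ should have high degree.

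\textbf{Step 2: bound the growth by one further habitat.} Let $H''\in\bnd(\{H,H'\})$. It must share a non-forced edge with $H$ or $H'$, and all its vertices need degree at least two in $G[H'']$. Only vertices of degree at most three in $G[H\cup H']$ can have neighbors outside $H\cup H'$, and after Step~1 there are few of them, with small residual degree. So, as in \cref{obs:H5D4:P:docking33} and \cref{obs:H5D4:P:docking32}, $|H''\setminus(H\cup H')|\le 1$. We then check that after adding $H''$ every vertex of $H\cup H'\cup H''$ except possibly the new one has degree four, or that the remaining low-degree vertices fall under \cref{obs:H5D4:P:P3free} or \cref{obs:H5D4:P:K3free}. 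This gives $\bnd(\{H,H',H''\})=\emptyset$. Whenever the degree-count condition of \cref{obs:H5D4:P:docking33} or \cref{obs:H5D4:P:docking32} holds directly on $G[H\cup H']$, those observations apply immediately.

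\textbf{Step 3: the bound of 8.} With $|H\cup H'|=7$, Steps~1 and~2 give at most one additional vertex, so the union has order at most $8$.

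The main obstacle is the case analysis in Step~1. In particular, we must rule out configurations where two vertices of $H\cup H'$ keep spare degree. In such configurations a new habitat could attach through two new vertices, or a chain of habitats could extend. I would handle this by using the type-(b) structure: the degree-2 vertex $x_1$ has two adjacent neighbors. I would then show that any $H''$ attaching with two new vertices would itself contain an induced $P_3$ or $K_3$ configuration excluded by \cref{obs:H5D4:P:P3free} or \cref{obs:H5D4:P:K3free}, or would contain two degree-2 vertices, contradicting \cref{obs:H5D4:P:notwotwo}.
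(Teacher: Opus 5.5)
Your overall strategy (degree counting around the triangle until $\bnd$ is empty) is the paper's, but the proposal has genuine gaps. First, $|H'\setminus H|=2$ is not forced. $H'$ may have size four, i.e.\ be a $K_4$ on $\{a,b,c,p'\}$, which is the paper's Case~1, and you drop it.

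Second, the outcome you predict for Step~1 is wrong, and Step~2 rests on it. Carrying out Step~1 with \cref{obs:H5D4:P:notwotwo} and $\Delta=4$ shows that each of $a,b,c$ has exactly one neighbour in $H\setminus H'$ and exactly one in $H'\setminus H$. After renaming, $G[H]$ is the triangle plus $p$ adjacent to $a,b$, and $q$ adjacent to $c$ and $p$. This is type~(b)(2) of \cref{fig:H5D4:P}, whose degree-2 vertex has \emph{non}-adjacent neighbours, so the lever you name at the end is unavailable. Thus $a,b,c$ are saturated. If there are no edges between $\{p,q\}$ and $\{p',q'\}$, the vertices $p,q,p',q'$ have residual degrees $1,2,1,2$. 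With four deficient vertices, \cref{obs:H5D4:P:docking33,obs:H5D4:P:docking32} do not apply. Residual degrees alone also do not give $|H''\setminus(H\cup H')|\le 1$. If $p,p'$ are both adjacent to $a,b$ (so $q,q'$ both hang on $c$), then $\{c,q,q',s,t\}$, with new $s,t$ adjacent to $q$, $q'$ and to each other, is a valid 2-connected size-5 habitat with two new vertices.

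The missing idea is \emph{where the unforced edges lie}. The only edge of $G[H]$ avoiding $\{a,b,c\}$ is $\{p,q\}$, which is forced; the same holds for $G[H']$, also when it is a $K_4$. Hence every $H''\in\bnd(\{H,H'\})$ contains a saturated triangle vertex and two of its neighbours from $H\cup H'$, so it has at most two new vertices. Two new vertices $s,t$ would each need degree three in $G[H'']$, since the triangle vertex already has degree two there. That forces exactly the configuration above, which meets $H$ and $H'$ only in the forced edges $\{c,q\},\{c,q'\}$ and so is not adjacent to them in $\calG_{G,\Fin,\calH}$.

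Third, your claim that after adding $H''$ all old vertices are saturated is false. Take $p,p'$ adjacent to $a,b$, no edges among $p,q,p',q'$, and $H''=\{p'',p,p',a,b\}$ with $p''$ adjacent to exactly $p,p'$ in $H\cup H'$. This is a type-(b)(2) habitat sharing the unforced edge $\{p,a\}$ with $H$, yet $q,q'$ keep degree two. Excluding a fourth habitat there again needs the unforced-edge argument: all unforced edges of $G[H'']$ meet $a$ or $b$, whose neighbours are now all saturated. You must also rule out that a habitat contained in $H\cup H'\cup H''$ introduces a new unforced edge such as $\{q,p''\}$, through which the component could leave the set. Showing $\bnd(\{H,H',H''\})=\emptyset$ alone does not cover that. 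The paper's proof does exactly this bookkeeping. It splits by whether $H'$ is a $K_4$ or $p,p'$ share two or one triangle neighbours, and then by the possible edges among $p,q,p',q'$.
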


\begin{proof}
 Let~$\{u,v,w\}$ form the~$K_3$.
 Let~$\{y,z\}= H\setminus H'$ and either~$y'=H'\setminus H$ or~$\{y',z\}=H'\setminus H$.
 Since~$H$ and~$H'$ cannot have at least two degree-2 vertices,
 each vertex from~$\{u,v,w\}$ is incident to a vertex in~$H\setminus H'$ and to a vertex in~$H'\setminus H$.
 Let~$y\in H$ be the vertex incident to two vertices from~$\{u,v,w\}$ in~$G[H]$.
 We distinguish the cases for~$G[H']$.

 \xcase{1}{$y'=H'\setminus H$}
 In this case,
 $H'$ is isomorphic to a~$K_4$.
 Let~$H''\in\bnd(\{H, H'\})$.
 Beforehand,
 we show that there are no additional edges between~$y,z$ and~$y'$
 (see~\cref{fig:H5D4:P:Intersect:K3:K4}).
 \begin{figure}[t]
	\centering
	\begin{tikzpicture}
		\def\xr{1}
		\def\yr{1}
		\def\xsh{3.5}
		\def\ysh{2.5}
		\tikzpreamble{}

		\newcommand{\nthreegadget}[3]{%
			\node (n) at (0.25*\xr,0.4*\yr)[xnode,label={[label distance=-1pt](90):{$u$}}]{};
			\node (s) at (0.25*\xr,-0.4*\yr)[xnode,label={[label distance=-1pt](-90):{$v$}}]{};
			\node (w) at (-0.25*\xr,0.0*\yr)[xnode,label={[label distance=0pt](90):{$w$~~}}]{};

			\foreach \x/\y in {#1}{\draw[xedge] (\x) to (\y);}
			\foreach \x/\y in {#2}{\draw[xfedge] (\x) to (\y);}
			\foreach \x/\y in {#3}{\draw[xsedge] (\x) to (\y);}
		}
		\newcommand{\xlabel}[1]{\node at (-0.5*\xr,1.25*\yr)[anchor=east]{(#1)};}

		\newcommand{\currH}{\nthreegadget{n/s,s/w,w/n}{}{}}

		\begin{scope}[xshift=0*\xsh*\xr cm,yshift=-1*\ysh*\yr cm]
			\currH{}
			\node (yA) at ($(w)+(-0.5,0.4)$)[xnodeA, label=90:{$y$}]{};
			\node (zA) at ($(w)+(-0.5,-0.4)$)[xnodeA, label=-90:{$z$}]{};
			\node (yB) at ($(w)+(+1,0.0)$)[xnodeB, label=-90:{$y'$}]{};

			\draw[xedgeA] (n) -- (yA);
			\draw[xedgeA,ultra thick] (yA) -- (zA) -- (s);
			\draw[xedgeB] (s) -- (yB) -- (n);
			\draw[xedgeA] (yA) -- (w);
			\draw[xedgeB] (yB) -- (w);

			\draw[gray] (yA) to [out=30,in=90](yB);
			\node at (0,-1*\yr)[]{$\downarrow$};
			\node at (0,-1*\yr)[color=red,thick]{$\times$};
		\end{scope}

		\begin{scope}[xshift=1*\xsh*\xr cm,yshift=-1*\ysh*\yr cm]
			\currH{}
			\node (yA) at ($(w)+(-0.5,0.4)$)[xnodeA, label=90:{$y$}]{};
			\node (zA) at ($(w)+(-0.5,-0.4)$)[xnodeA, label=-90:{$z$}]{};
			\node (yB) at ($(w)+(+1,0.0)$)[xnodeB, label=90:{$y'$}]{};

			\draw[xedgeA] (n) -- (yA);
			\draw[xedgeA,ultra thick] (yA) -- (zA) -- (s);
			\draw[xedgeB] (s) -- (yB) -- (n);
			\draw[xedgeA] (yA) -- (w);
			\draw[xedgeB] (yB) -- (w);
			\draw[gray] (zA) to [out=-30,in=-90](yB);
			\node at (0,-1*\yr)[]{$\downarrow$};
			\node at (0,-1*\yr)[color=red,thick]{$\times$};
		\end{scope}

		\begin{scope}[xshift=2*\xsh*\xr cm,yshift=-1*\ysh*\yr cm]
		\currH{}
		\node (yA) at ($(w)+(-0.5,0.4)$)[xnodeA, label=90:{$y$}]{};
		\node (zA) at ($(w)+(-0.5,-0.4)$)[xnodeA, label=-90:{$z$}]{};
		\node (yB) at ($(w)+(+1,0.0)$)[xnodeB, label=90:{$y'$}]{};

		\draw[xedgeA] (n) -- (yA);
		\draw[xedgeA,ultra thick] (yA) -- (zA) -- (s);
		\draw[xedgeB] (s) -- (yB) -- (n);
		\draw[xedgeA] (yA) -- (w);
		\draw[xedgeB] (yB) -- (w);
		\node at (-0.0*\xr,-1*\yr)[]{$\downarrow$};
		\end{scope}

		\begin{scope}[xshift=2*\xsh*\xr cm,yshift=-2*\ysh*\yr cm]
		\currH{}
		\node (yA) at ($(w)+(-0.5,0.4)$)[xnodeA, label=90:{$y$}]{};
		\node (zA) at ($(w)+(-0.5,-0.4)$)[xnodeA, label=-90:{$z$}]{};
		\node (yB) at ($(w)+(+1,0.0)$)[xnodeB, label=-90:{$y'$}]{};

		\draw[xedgeA] (n) -- (yA);
		\draw[xedgeA,ultra thick] (yA) -- (zA) -- (s);
		\draw[xedgeB] (s) -- (yB) -- (n);
		\draw[xedgeA] (yA) -- (w);
		\draw[xedgeB] (yB) -- (w);

		\node (z) at ($(n)+(+0.0,0.5)$)[xnodeC,label=0:{$y''$}]{};
		\draw[xedgeC,ultra thick] (yA) -- (z) -- (yB);
		\end{scope}

	\end{tikzpicture}
	\caption{$K_3$-Intersections for case 1. Blue-colored vertices and edges are exclusive for~$H$,
	and
	orange-colored vertices and edges are exclusive for~$H'$.
	Magenta-colored vertices and edges are exclusive for a third habitat~$H''$.
	Thick edges are forced.
	}
	\label{fig:H5D4:P:Intersect:K3:K4}
	\end{figure}
 If~$\{y,y'\}\in E$,
 then $H''$ can only be adjacent with~$z$.
 Let~$\{z,y'\}\in E$.
 Note that any vertex in $H''\setminus (H\cup H')$ can only be adjacent with~$y,z$,
 and that~$|H''\setminus (H\cup H')|<2$,
 since~$y,z$ are of degree three in~$G[H\cup H']$ and have no common neighbor.
 Thus,
 $|H''\setminus (H\cup H')|=1$ and~\cref{obs:H5D4:P:K3free} applies,
 a contradiction.

 Let there be no additional edges between~$y,z$ and~$y'$.
 Note that~$|H''\setminus (H\cup H')|<3$,
 since every two from~$y,z,y'$ are either non-adjacent or connected by a forced edge.
 Let~$H''\setminus (H\cup H')=\{y'',z''\}$.
 Since~$y'$ has degree three and is neither adjacent with~$y$ nor with~$z$,
 we have that~$y'',z''$ cannot be adjacent with all three of~$y,z,y'$.
 They can also not be adjacent only with~$y,y'$ since in this case,
 since the sum of their degrees would be four.
 This implies that~$y'',z''$ can only be adjacent with~$z$ and~$y'$,
 and~$z$ is adjacent with both~$y'',z''$.
 In this case,
 $y'$ and one of~$y'',z''$ are of degree two.

 Finally,
 let~$y''=H''\setminus (H\cup H')$.
 We already know that~$N_{G[H'']}(y'')=\{y,z\}$ is excluded.
 If~$N_{G[H'']}(y'')=\{y,z,y'\}$,
 then any of~$v,u,w$ must be in $H''$.
 Hence,
 $y'$ and the vertex in $\{v,u,w\}\cap H''$ are of degree two.
 Let~$N_{G[H'']}(y'')=\{y,y'\}$.
 Note that then~$H''$ must contain~$u,w$.
 Let~$H^*\in\bnd(\{H,H',H''\})$.
 Then~$H^*\setminus (H\cup H'\cup H'') $ can only be adjacent with~$z,y''$.
 If~$|H^*\setminus (H\cup H'\cup H'')|\geq 2$,
 then either~$H^*$ intersects no unforced edge if~$\{z,y''\}\not\in E$,
 or, if~$\{z,y''\}\in E$,
 $H^*$ contains two degree-two vertices among~$H^*\setminus (H\cup H'\cup H'')$.
 If~$|H^*\setminus (H\cup H'\cup H'')|=1$,
 then~\cref{obs:H5D4:P:P3free} applies (if $\{z,y''\}\not\in E$) or \cref{obs:H5D4:P:K3free} (if~$\{z,y''\}\in E$).
 Let~$N_{G[H'']}(y'')=\{z,y'\}$.
 Then,
 $y''$ is of degree two and to have~$z$ not of degree two,
 both~$y,v$ must be in~$H''$.
 But then~$y'$ is of degree two;
 a contradiction.

 \xcase{2}{$\{y',z'\}=H'\setminus H$ and~$|N_{G[H]}(y)\cap N_{G[H']}(y')|=2$}
 It follows that~$y'$ is such that~$N_{G[H]}(z)$ and~$N_{G[H']}(y')$ are disjoint.
 Let~$H''\in\bnd(\{H,H'\})$.
 Note that~$H''$ must contain a vertex from~$u,w$ since every unforced edge is incident with~$u$ or~$w$.
 First,
 we filter out how the vertices in~$\{y,z,y',z'\}$ can be adjacent
 (see~\cref{fig:H5D4:P:Intersect:K3:adj}(a)).
 Beforehand,
 we point out
 that independent of the adjacencies,
 $H''\setminus(H\cup H')$ cannot be adjacent only with $y,z$ (and symmetrically with~$y',z'$),
 since~$y$ has degree three in~$G[H\cup H']$ and~$y,z$ have no common neighbor in~$G[H\cup H']$.
 \begin{figure}[t]
	\centering
	\begin{tikzpicture}
		\def\xr{1}
		\def\yr{1}
		\def\xsh{3.25}
		\def\ysh{2}
		\tikzpreamble{}

		\newcommand{\nthreegadget}[3]{%
			\node (n) at (0.25*\xr,0.4*\yr)[xnode,label={[label distance=-1pt](90):{$u$}}]{};
			\node (s) at (0.25*\xr,-0.4*\yr)[xnode,label={[label distance=-1pt](-90):{$v$}}]{};
			\node (w) at (-0.25*\xr,0.0*\yr)[xnode,label={[label distance=0pt](90):{$w$~~}}]{};

			\foreach \x/\y in {#1}{\draw[xedge] (\x) to (\y);}
			\foreach \x/\y in {#2}{\draw[xfedge] (\x) to (\y);}
			\foreach \x/\y in {#3}{\draw[xsedge] (\x) to (\y);}
		}
		\newcommand{\xlabel}[1]{\node at (-0.5*\xr,1.125*\yr)[anchor=east]{(#1)};}

		\newcommand{\currH}{\nthreegadget{n/s,s/w,w/n}{}{}}

		\begin{scope}[xshift=0*\xsh*\xr cm,yshift=-1*\ysh*\yr cm]
			\xlabel{a}
			\currH{}
			\node (yA) at ($(w)+(-0.5,0.4)$)[xnodeA,label=90:{$y$}]{};
			\node (zA) at ($(w)+(-0.5,-0.4)$)[xnodeA,label=-90:{$z$}]{};
			\node (yB) at ($(n)+(+0.5,0.0)$)[xnodeB,label=90:{$y'$}]{};
			\node (zB) at ($(s)+(+0.5,-0.0)$)[xnodeB,label=-90:{$z'$}]{};

			\draw[xedgeA] (n) -- (yA);
			\draw[xedgeA,ultra thick] (yA) -- (zA) -- (s);
			\draw[xedgeB,ultra thick] (s) -- (zB) -- (yB);
			\draw[xedgeB] (yB) -- (n);
			\draw[xedgeA] (yA) -- (w);
			\draw[xedgeB] (yB) -- (w);

			\draw[gray] (yA) to [out=30,in=150](yB);
			\draw[gray,dashed] (zA) to [out=-30,in=-150](zB);
		\end{scope}

		\begin{scope}[xshift=1*\xsh*\xr cm,yshift=-1*\ysh*\yr cm]
			\currH{}
			\node (yA) at ($(w)+(-0.5,0.4)$)[xnodeA,label=90:{$y$}]{};
			\node (zA) at ($(w)+(-0.5,-0.4)$)[xnodeA,label=-90:{$z$}]{};
			\node (yB) at ($(n)+(+0.5,0.0)$)[xnodeB,label=90:{$y'$}]{};
			\node (zB) at ($(s)+(+0.5,-0.0)$)[xnodeB,label=-90:{$z'$}]{};

			\draw[xedgeA] (n) -- (yA);
			\draw[xedgeA,ultra thick] (yA) -- (zA) -- (s);
			\draw[xedgeB,ultra thick] (s) -- (zB) -- (yB);
			\draw[xedgeB] (yB) -- (n);
			\draw[xedgeA] (yA) -- (w);
			\draw[xedgeB] (yB) -- (w);

			\draw[gray,dashed] (zA) to [out=-30,in=-150](zB);
			\draw[gray] (zA) to [out=135,in=135,looseness=1.8](yB);
			\draw[gray] (zB) to [out=45,in=45,looseness=1.8](yA);
		\end{scope}

		\begin{scope}[xshift=0.5*\xsh*\xr cm,yshift=-2*\ysh*\yr cm]
			\currH{}
			\node (yA) at ($(w)+(-0.5,0.4)$)[xnodeA,label=90:{$y$}]{};
			\node (zA) at ($(w)+(-0.5,-0.4)$)[xnodeA,label=-90:{$z$}]{};
			\node (yB) at ($(n)+(+0.5,0.0)$)[xnodeB,label=90:{$y'$}]{};
			\node (zB) at ($(s)+(+0.5,-0.0)$)[xnodeB,label=-90:{$z'$}]{};

			\draw[xedgeA] (n) -- (yA);
			\draw[xedgeA,ultra thick] (yA) -- (zA) -- (s);
			\draw[xedgeB,ultra thick] (s) -- (zB) -- (yB);
			\draw[xedgeB] (yB) -- (n);
			\draw[xedgeA] (yA) -- (w);
			\draw[xedgeB] (yB) -- (w);

			\draw[gray] (zA) to [out=-30,in=-150](zB);
			\draw[gray] (zA) to [out=135,in=135,looseness=1.8](yB);
		\end{scope}
		\begin{scope}[xshift=2*\xsh*\xr cm,yshift=-1*\ysh*\yr cm]
		  \xlabel{b}
			\currH{}
			\node (yA) at ($(w)+(-0.5,0.4)$)[xnodeA,label=90:{$y$}]{};
			\node (zA) at ($(w)+(-0.5,-0.4)$)[xnodeA,label=-90:{$z$}]{};
			\node (yB) at ($(n)+(+0.5,0.0)$)[xnodeB,label=90:{$y'$}]{};
			\node (zB) at ($(s)+(+0.5,-0.0)$)[xnodeB,label=-90:{$z'$}]{};

			\draw[xedgeA] (n) -- (yA);
			\draw[xedgeA,ultra thick] (yA) -- (zA) -- (s);
			\draw[xedgeB] (s) -- (zB);
			\draw[xedgeB,ultra thick] (zB) -- (yB) -- (n);
			\draw[xedgeA] (yA) -- (w);
			\draw[xedgeB] (zB) -- (w);

			\draw[gray, dashed] (zA) to [out=135,in=135,looseness=1.8](yB);
			\draw[gray] (zB) to [out=45,in=45,looseness=1.8](yA);
		\end{scope}

		\begin{scope}[xshift=3*\xsh*\xr cm,yshift=-1*\ysh*\yr cm]
			\currH{}
			\node (yA) at ($(w)+(-0.5,0.4)$)[xnodeA,label=90:{$y$}]{};
			\node (zA) at ($(w)+(-0.5,-0.4)$)[xnodeA,label=-90:{$z$}]{};
			\node (yB) at ($(n)+(+0.5,0.0)$)[xnodeB,label=90:{$y'$}]{};
			\node (zB) at ($(s)+(+0.5,-0.0)$)[xnodeB,label=-90:{$z'$}]{};

			\draw[xedgeA] (n) -- (yA);
			\draw[xedgeA,ultra thick] (yA) -- (zA) -- (s);
			\draw[xedgeB] (s) -- (zB);
			\draw[xedgeB,ultra thick] (zB) -- (yB) -- (n);
			\draw[xedgeA] (yA) -- (w);
			\draw[xedgeB] (zB) -- (w);

			\draw[gray, dashed] (zA) to [out=135,in=135,looseness=1.8](yB);
			\draw[gray] (zA) to [out=-30,in=-150](zB);
			\draw[gray] (yA) to [out=30,in=150](yB);
		\end{scope}

		\begin{scope}[xshift=2.5*\xsh*\xr cm,yshift=-2*\ysh*\yr cm]
			\currH{}
			\node (yA) at ($(w)+(-0.5,0.4)$)[xnodeA,label=90:{$y$}]{};
			\node (zA) at ($(w)+(-0.5,-0.4)$)[xnodeA,label=-90:{$z$}]{};
			\node (yB) at ($(n)+(+0.5,0.0)$)[xnodeB,label=90:{$y'$}]{};
			\node (zB) at ($(s)+(+0.5,-0.0)$)[xnodeB,label=-90:{$z'$}]{};

			\draw[xedgeA] (n) -- (yA);
			\draw[xedgeA,ultra thick] (yA) -- (zA) -- (s);
			\draw[xedgeB] (s) -- (zB);
			\draw[xedgeB,ultra thick] (zB) -- (yB) -- (n);
			\draw[xedgeA] (yA) -- (w);
			\draw[xedgeB] (zB) -- (w);

			\draw[gray] (zA) to [out=135,in=135,looseness=1.8](yB);
			\draw[gray] (zA) to [out=-30,in=-150](zB);
		\end{scope}

	\end{tikzpicture}
	\caption{Impossible further adjacencies for $K_3$-Intersections for cases~2 and~3.}
	\label{fig:H5D4:P:Intersect:K3:adj}
	\end{figure}
	Let~$\{y,y'\}\in E$.
	Then every vertex in $H''\setminus (H\cup H')$ can only be adjacent with~$z,z'$.
	If additionally~$\{z,z'\}\in E$,
	then~\cref{obs:H5D4:P:docking33} applies.
	If additionally~$\{z,z'\}\not\in E$,
	then,
	since~$u,w$ are not neighboring~$z,z'$,
	any in~$H''\cap\{u,v\}$ must be of degree at most one in~$H''$;
	a contradiction.
	Let~$\{z,y'\},\{y,z'\}\in E$.
	If~$\{z,z'\}\in E$,
	then all vertices have degree four in~$G[H\cup H']$,
	and if $\{z,z'\}\not\in E$,
	then~\cref{obs:H5D4:P:docking33} applies.
	Let~$\{z,y'\},\{z,z'\}\in E$.
	In this case~\cref{obs:H5D4:P:docking33} applies.

	It follows that we only consider the cases where no two of $\{y,z,y',z'\}$ are adjacent,
	$\{z,y'\}\in E$, and~$\{z,z'\}\in E$
	(see~\cref{fig:H5D4:P:Intersect:3}).
	We distinguish into these cases.
	\begin{figure}[t]
	\centering
	\begin{tikzpicture}
		\def\xr{1}
		\def\yr{1}
		\def\xsh{3}
		\def\ysh{2.625}
		\tikzpreamble{}

		\newcommand{\nthreegadget}[3]{%
			\node (n) at (0.25*\xr,0.4*\yr)[xnode,label={[label distance=-1pt](90):{$u$}}]{};
			\node (s) at (0.25*\xr,-0.4*\yr)[xnode,label={[label distance=-1pt](-90):{$v$}}]{};
			\node (w) at (-0.25*\xr,0.0*\yr)[xnode,label={[label distance=0pt](90):{$w$~~}}]{};

			\foreach \x/\y in {#1}{\draw[xedge] (\x) to (\y);}
			\foreach \x/\y in {#2}{\draw[xfedge] (\x) to (\y);}
			\foreach \x/\y in {#3}{\draw[xsedge] (\x) to (\y);}
		}
		\newcommand{\xlabel}[1]{\node at (-0.5*\xr,1.25*\yr)[anchor=east]{(#1)};}

		\newcommand{\currH}{\nthreegadget{n/s,s/w,w/n}{}{}}

		\begin{scope}[xshift=0*\xsh*\xr cm,yshift=-1*\ysh*\yr cm]
			\currH{}
			\node (yA) at ($(w)+(-0.5,0.4)$)[xnodeA,label=90:{$y$}]{};
			\node (zA) at ($(w)+(-0.5,-0.4)$)[xnodeA,label=-90:{$z$}]{};
			\node (yB) at ($(n)+(+0.5,0.0)$)[xnodeB,label=90:{$y'$}]{};
			\node (zB) at ($(s)+(+0.5,-0.0)$)[xnodeB,label=-90:{$z'$}]{};

			\draw[xedgeA] (n) -- (yA);
			\draw[xedgeA,ultra thick] (yA) -- (zA) -- (s);
			\draw[xedgeB,ultra thick] (s) -- (zB) -- (yB);
			\draw[xedgeB] (yB) -- (n);
			\draw[xedgeA] (yA) -- (w);
			\draw[xedgeB] (yB) -- (w);
			\node at (0,-1*\yr)[]{$\downarrow$};
		\end{scope}

		\begin{scope}[xshift=0.0*\xsh*\xr cm,yshift=-2*\ysh*\yr cm]
			\currH{}
			\node (yA) at ($(w)+(-0.5,0.4)$)[xnodeA,label=180:{$y$}]{};
			\node (zA) at ($(w)+(-0.5,-0.4)$)[xnodeA,label=-90:{$z$}]{};
			\node (yB) at ($(n)+(+0.5,0.0)$)[xnodeB,label=90:{$y'$}]{};
			\node (zB) at ($(s)+(+0.5,-0.0)$)[xnodeB,label=-90:{$z'$}]{};

			\draw[xedgeA] (n) -- (yA);
			\draw[xedgeA,ultra thick] (yA) -- (zA) -- (s);
			\draw[xedgeB,ultra thick] (s) -- (zB) -- (yB);
			\draw[xedgeB] (yB) -- (n);
			\draw[xedgeA] (yA) -- (w);
			\draw[xedgeB] (yB) -- (w);

			\node (z) at ($(n)+(+0.0,0.4)$)[xnodeC,label=90:{$y''$}]{};
			\draw[xedgeC,ultra thick] (yA) -- (z) -- (yB);
			\draw[gray,dashed] (zA) -- (z) -- (zB);
		\end{scope}

		\begin{scope}[xshift=1.25*\xsh*\xr cm,yshift=-1*\ysh*\yr cm]
			\currH{}
			\node (yA) at ($(w)+(-0.5,0.4)$)[xnodeA,label=90:{$y$}]{};
			\node (zA) at ($(w)+(-0.5,-0.4)$)[xnodeA,label=-90:{$z$}]{};
			\node (yB) at ($(n)+(+0.5,0.0)$)[xnodeB,label=90:{$y'$}]{};
			\node (zB) at ($(s)+(+0.5,-0.0)$)[xnodeB,label=-90:{$z'$}]{};

			\draw[xedgeA] (n) -- (yA);
			\draw[xedgeA,ultra thick] (yA) -- (zA) -- (s);
			\draw[xedgeB,ultra thick] (s) -- (zB) -- (yB);
			\draw[xedgeB] (yB) -- (n);
			\draw[xedgeA] (yA) -- (w);
			\draw[xedgeB] (yB) -- (w);

			\draw[gray] (zA) to [out=135,in=135,looseness=1.8](yB);
			\node at (0,-1*\yr)[]{$\downarrow$};
			\node at (0,-1*\yr)[color=red,thick]{$\times$};
		\end{scope}

		\begin{scope}[xshift=2.5*\xsh*\xr cm,yshift=-1*\ysh*\yr cm]
			\currH{}
			\node (yA) at ($(w)+(-0.5,0.4)$)[xnodeA,label=90:{$y$}]{};
			\node (zA) at ($(w)+(-0.5,-0.4)$)[xnodeA,label=-90:{$z$}]{};
			\node (yB) at ($(n)+(+0.5,0.0)$)[xnodeB,label=90:{$y'$}]{};
			\node (zB) at ($(s)+(+0.5,-0.0)$)[xnodeB,label=-90:{$z'$}]{};

			\draw[xedgeA] (n) -- (yA);
			\draw[xedgeA,ultra thick] (yA) -- (zA) -- (s);
			\draw[xedgeB,ultra thick] (s) -- (zB) -- (yB);
			\draw[xedgeB] (yB) -- (n);
			\draw[xedgeA] (yA) -- (w);
			\draw[xedgeB] (yB) -- (w);

			\draw[gray] (zA) to [out=-30,in=-150](zB);

			\node at (0,-1*\yr)[]{$\downarrow$};
		\end{scope}

		\begin{scope}[xshift=2.5*\xsh*\xr cm,yshift=-2*\ysh*\yr cm]
			\currH{}
			\node (yA) at ($(w)+(-0.5,0.4)$)[xnodeA,label=180:{$y$}]{};
			\node (zA) at ($(w)+(-0.5,-0.4)$)[xnodeA,label=-90:{$z$}]{};
			\node (yB) at ($(n)+(+0.5,0.0)$)[xnodeB,label=90:{$y'$}]{};
			\node (zB) at ($(s)+(+0.5,-0.0)$)[xnodeB,label=-90:{$z'$}]{};

			\draw[xedgeA] (n) -- (yA);
			\draw[xedgeA,ultra thick] (yA) -- (zA) -- (s);
			\draw[xedgeB,ultra thick] (s) -- (zB) -- (yB);
			\draw[xedgeB] (yB) -- (n);
			\draw[xedgeA] (yA) -- (w);
			\draw[xedgeB] (yB) -- (w);

			\draw[gray] (zA) to [out=-30,in=-150](zB);

			\node (z) at ($(n)+(+0.0,0.4)$)[xnodeC,label=90:{$y''$}]{};
			\draw[xedgeC,ultra thick] (yA) -- (z) -- (yB);
		\end{scope}

	\end{tikzpicture}
	\caption{$K_3$-Intersections in case~2.}
	\label{fig:H5D4:P:Intersect:3}
	\end{figure}

	\xsubcase{a}{$\binom{\{y,z,y',z'\}}{2}\cap E=\emptyset$}
	Note that~$H''\setminus (H\cup H')$ cannot be adjacent with all four~$y,z,y',z'$ in~$G[H'']$,
	since each of them would have degree two.
	Assume that~$H''\setminus (H\cup H')$ is adjacent exactly with~$y,z,y'$ in~$G[H'']$
	(the case of~$y,y,z'$ is symmetric).
	Then,
	since $H''$ contains exactly one of~$u,w$,
	$z$ and~$y'$ are of degree at most two;
	a contradiction.

	Assume that~$H''\setminus (H\cup H')$ is adjacent exactly with~$y,z,z'$ in~$G[H'']$
	(the case of~$y',z,z'$ is symmetric).
	Then,
	the one vertex of~$u,v$ in~$H''$ is of degree one;
	a contradiction.

	Assume that~$H''\setminus (H\cup H')$ is adjacent exactly with~$z,z'$ in~$G[H'']$.
	We have that $|H''\setminus (H\cup H')|<2$ since none of~$u,v$ is adjacent with~$z$ or~$z'$.
	If~$|H''\setminus (H\cup H')|=1$,
	then~\cref{obs:H5D4:P:P3free} applies;
	a contradiction.

	Assume that $H''\setminus (H\cup H')$ is adjacent exactly with~$y,z'$ in~$G[H'']$
	(the case of~$y',z$ is symmetric).
	Then,
	$|H''\setminus (H\cup H')|=1$
	since~$y,z'$ have no common neighbor in~$G[H\cup H']$.
	Moreover,
	\cref{obs:H5D4:P:P3free} applies;
	a contradiction.

	Finally,
	assume that $H''\setminus (H\cup H')$ is adjacent exactly with~$y,y'$ in~$G[H'']$
	and let~$y''=H''\setminus (H\cup H')$
	(see~\cref{fig:H5D4:P:Intersect:3}).

	First,
	we show that if any of the edges~$\{z,y''\}$ or~$\{z',y''\}$
	exists,
	it is not part of a habitat~$H^\dagger$ with~$H^\dagger\subseteq H\cup H'\cup H''$
	and~$\{H^\dagger,H^\ddagger\}\in E(\calG_{G,\Fin,\calH})$ with~$H^\ddagger\in\{H,H',H''\}$.
	To see this,
	observe that each of~$z,y''$ (or $z',y''$)
	has only one common neighbor.
	Hence,
	$H^\dagger$ cannot be isomorphic to a~$K_4$,
	and has to contain five vertices.
	Since~$y''$ is of degree two,
	$H^\dagger$ must contain a vertex from~$\{u,w\}$.
	Consider the edge~$\{z,y''\}$ (the case of edge~$\{z',y''\}$ is symmetric).
	If~$y\not\in H^\dagger$,
	then both~$v\in H^\dagger$ (otherwise~$z$ has degree one)
	and~$y',z'\in H^\dagger$ (otherwise~$y''$ is of degree at most two);
	contradicting that~$H^\dagger$ has five vertices.
	Let~$y\in H^\dagger$.
	So,
	we have that~$\{y,z,y''\}$ form a triangle in which no two have a common neighbor outside the triangle;
	this can only result in~\cref{fig:H5D4:P}(ii);
	a contradiction.

	Now,
	assume that there is~$H^*\in\bnd(\{H,H',H''\})$.
	Again,
	$H^*$ must contain a vertex from~$u,v$ since every unforced edge is incident with~$u$ or~$v$.
	Note that none of $z,z',y''$ is adjacent with $u$ or~$v$.
	Hence,
	$|H^*\setminus (H\cup H'\cup H'')|<2$.
	Also,
	note that~$y^*=H^*\setminus (H\cup H'\cup H'')$
	cannot be adjacent with all three in~$G[H^*]$,
	since then~$\{u,v\}\cap H^*$ is isolated.
	If~$y^*$ is only adjacent with two,
	then \cref{obs:H5D4:P:P3free} or \cref{obs:H5D4:P:K3free} applies;
	a contradiction.

	\xsubcase{b}{$\binom{\{y,z,y',z'\}}{2}\cap E=\{z,y'\}$}
	First,
	observe that the edge $\{z,y'\}$
	is not an unforced edge for a habitat~$H^\dagger$ with~$H^\dagger\subseteq H\cup H'$
	and~$\{H^\dagger,H^\ddagger\}\in E(\calG_{G,\Fin,\calH})$ with~$H^\ddagger\in\{H,H'\}$.
	To see this,
	observe that to not be forced in~$G[H^\dagger]$,
	both~$y,v$ must be neighboring~$z$ in~$G[H^\dagger]$,
	and at least two from~$\{u,w,z'\}$ must be neighboring~$y'$.
	This is not possible since the maximum habitat size is five.
	Thus,
	$H''$ still must contain a vertex from~$u,w$.

	Assume that~$H''\setminus (H\cup H')$ is adjacent exactly with~$y,z,z'$ in~$G[H'']$.
	Then,
	$|H''\setminus (H\cup H')|=1$ and $H''$ contains exactly one of~$u,w$,
	which is then of degree one;
	a contradiction.

	Assume that~$H''\setminus (H\cup H')$ is adjacent exactly with~$z,z'$ in~$G[H'']$.
	Note that $|H''\setminus (H\cup H')|<2$,
	since none of~$z,z'$ is adjacent with any of~$u,w$.
	If~$|H''\setminus (H\cup H')|=1$,
	then \cref{obs:H5D4:P:P3free} applies;
	a contradiction.

	\xsubcase{c}{$\binom{\{y,z,y',z'\}}{2}\cap E=\{z,z'\}$}
	First,
	observe that the edge $\{z,z'\}$
	is not an unforced edge of a habitat~$H^\dagger$ with~$H^\dagger\subseteq H\cup H'$
	and~$\{H^\dagger,H^\ddagger\}\in E(\calG_{G,\Fin,\calH})$ with~$H^\ddagger\in\{H,H'\}$.
	To see this,
	observe that to not be forced in~$H^\dagger$,
	both~$z$ must be neighboring~$v$ and~$y$ in~$G[H^\dagger]$,
	and~$z'$ must be neighboring~$v$ and~$y'$ in~$G[H^\dagger]$.
	Thus, $H^\dagger=\{z,z',v,y,y'\}$,
	where both~$y,y'$ are of degree one in~$G[H^\dagger]$.
	Thus,
	$H''$ still must contain a vertex from~$u,w$.

	Now,
	note that~$|H''\setminus (H\cup H')|=1$ since all vertices have degree at least three in~$G[H\cup H']$.
	Thus,
	let $y''=H''\setminus (H\cup H')$.
	Note that~$y''$ cannot be adjacent with~$y,z,y',z'$ in~$G[H'']$,
	since~$H''$ contains one vertex from~$u,w$.

	Assume that $y''$ is adjacent exactly with~$y,z,y'$ in~$G[H'']$
	(the case of~$y,z',y'$ is symmetric).
	Then,
	$z$ and the vertex in~$\{u,w\}\cap H''$ are of degree two;
	a contradiction.

	Assume that $y''$ is adjacent exactly with~$y,z,z'$ in~$G[H'']$
	(the case of~$y',z,z'$ is symmetric).
	the vertex in~$\{u,w\}\cap H''$ is of degree one;
	a contradiction.

	Assume that~$y''$ is adjacent exactly with~$z,z'$ or with~$y,z'$ in~$G[H'']$ .
	Then~\cref{obs:H5D4:P:K3free} or \cref{obs:H5D4:P:P3free} applies.

	Assume that~$y''$ is adjacent exactly with~$y,y'$ in~$G[H'']$
	(see~\cref{fig:H5D4:P:Intersect:3}).
	Let~$H^*\in\bnd(\{H, H',H''\})$.
	Since~$y''$ is of degree two,
	$H^*$ must contain a vertex from~$u,w$.
	Note that~$|H^*\setminus (H\cup H'\cup H'')|<2$,
	since none of~$z,z',y''$ is adjacent with any of~$u,w$.
	Let~$y^*=H^*\setminus (H\cup H'\cup H'')$.
	If $y^*$ is adjacent with exactly two vertices in~$G[H^*]$,
	then \cref{obs:H5D4:P:P3free} or \cref{obs:H5D4:P:K3free} applies.
	If $y^*$ is adjacent with exactly three vertices in~$G[H^*]$,
	then the vertex in~$\{u,w\}\cap H^*$ is isolated;
	a contradiction.

	\xcase{3}{$\{y',z'\}=H'\setminus H$ and~$|N_{G[H]}(y)\cap N_{G[H']}(y')|=|N_{G[H]}(y)\cap N_{G[H']}(z')|=1$}
	Let~$y'$ be such that~$N_{G[H]}(z)$ and~$N_{G[H']}(y')$ are disjoint.
	Let~$H''\in\bnd(\{H,H'\})$.
	First,
	we filter out how the vertices in~$\{y,z,y',z'\}$ can be adjacent
	(see~\cref{fig:H5D4:P:Intersect:K3:adj}(b)).
	Let~$\{y,z'\}\in E$.
	If~$\{z,y'\}\in E$,
	then~\cref{obs:H5D4:P:docking33} applies.
	If~$\{z,y'\}\not\in E$,
	since~$z,y'$ are not adjacent and have no common neighbor in~$G[H\cup H']$,
	both have degree at most two in~$G[H'']$;
	a contradiction.
	Let~$\{y,y'\},\{z,z'\}\in E$.
	If~$\{z,y'\}\in E$,
	then all vertices have degree four.
	If~$\{z,y'\}\not\in E$,
	then~\cref{obs:H5D4:P:docking33} applies.
	Let~$\{z,y'\},\{z,z'\}\in E$
	(the case of~$\{y,y'\},\{y',z'\}\in E$ is symmetric).
	If~$\{y,y'\}\in E$,
	then all vertices have degree four.
	If~$\{y,y'\}\not\in E$,
	then~\cref{obs:H5D4:P:docking33} applies.

	It follows that we only consider the cases where no two of $\{y,z,y',z'\}$ is adjacent,
	$\{z,z'\}\in E$, or~$\{z,y'\}\in E$.
	(see~\cref{fig:H5D4:P:Intersect:2}).
	\begin{figure}[t]
	\centering
	\begin{tikzpicture}
		\def\xr{1}
		\def\yr{1}
		\def\xsh{3.125}
		\def\ysh{2.5}
		\tikzpreamble{}

		\newcommand{\nthreegadget}[3]{%
			\node (n) at (0.25*\xr,0.4*\yr)[xnode,label={[label distance=-1pt](90):{$u$}}]{};
			\node (s) at (0.25*\xr,-0.4*\yr)[xnode,label={[label distance=-1pt](-90):{$v$}}]{};
			\node (w) at (-0.25*\xr,0.0*\yr)[xnode,label={[label distance=0pt](90):{$w$~~}}]{};

			\foreach \x/\y in {#1}{\draw[xedge] (\x) to (\y);}
			\foreach \x/\y in {#2}{\draw[xfedge] (\x) to (\y);}
			\foreach \x/\y in {#3}{\draw[xsedge] (\x) to (\y);}
		}
		\newcommand{\xlabel}[1]{\node at (-0.5*\xr,1.25*\yr)[anchor=east]{(#1)};}

		\newcommand{\currH}{\nthreegadget{n/s,s/w,w/n}{}{}}

		\begin{scope}[xshift=0*\xsh*\xr cm,yshift=-1*\ysh*\yr cm]
			\currH{}
			\node (yA) at ($(w)+(-0.5,0.4)$)[xnodeA,label=90:{$y$}]{};
			\node (zA) at ($(w)+(-0.5,-0.4)$)[xnodeA,label=-90:{$z$}]{};
			\node (yB) at ($(n)+(+0.5,0.0)$)[xnodeB,label=90:{$y'$}]{};
			\node (zB) at ($(s)+(+0.5,-0.0)$)[xnodeB,label=-90:{$z'$}]{};

			\draw[xedgeA] (n) -- (yA);
			\draw[xedgeA,ultra thick] (yA) -- (zA) -- (s);
			\draw[xedgeB] (s) -- (zB);
			\draw[xedgeB,ultra thick] (zB) -- (yB) -- (n);
			\draw[xedgeA] (yA) -- (w);
			\draw[xedgeB] (zB) -- (w);
			\node at (0,-1*\yr)[]{$\downarrow$};
			\node at (0,-1*\yr)[color=red,thick]{$\times$};
		\end{scope}

		\begin{scope}[xshift=1.25*\xsh*\xr cm,yshift=-1*\ysh*\yr cm]
			\currH{}
			\node (yA) at ($(w)+(-0.5,0.4)$)[xnodeA,label=90:{$y$}]{};
			\node (zA) at ($(w)+(-0.5,-0.4)$)[xnodeA,label=-90:{$z$}]{};
			\node (yB) at ($(n)+(+0.5,0.0)$)[xnodeB,label=90:{$y'$}]{};
			\node (zB) at ($(s)+(+0.5,-0.0)$)[xnodeB,label=-90:{$z'$}]{};

			\draw[xedgeA] (n) -- (yA);
			\draw[xedgeA,ultra thick] (yA) -- (zA) -- (s);
			\draw[xedgeB] (s) -- (zB);
			\draw[xedgeB,ultra thick] (zB) -- (yB) -- (n);
			\draw[xedgeA] (yA) -- (w);
			\draw[xedgeB] (zB) -- (w);

			\draw[gray] (zA) to [out=-30,in=-150](zB);
			\node at (0,-1*\yr)[]{$\downarrow$};
			\node at (0,-1*\yr)[color=red,thick]{$\times$};
		\end{scope}

		\begin{scope}[xshift=2.5*\xsh*\xr cm,yshift=-1*\ysh*\yr cm]
			\currH{}
			\node (yA) at ($(w)+(-0.5,0.4)$)[xnodeA,label=90:{$y$}]{};
			\node (zA) at ($(w)+(-0.5,-0.4)$)[xnodeA,label=-90:{$z$}]{};
			\node (yB) at ($(n)+(+0.5,0.0)$)[xnodeB,label=90:{$y'$}]{};
			\node (zB) at ($(s)+(+0.5,-0.0)$)[xnodeB,label=-90:{$z'$}]{};

			\draw[xedgeA] (n) -- (yA);
			\draw[xedgeA,ultra thick] (yA) -- (zA) -- (s);
			\draw[xedgeB] (s) -- (zB);
			\draw[xedgeB,ultra thick] (zB) -- (yB) -- (n);
			\draw[xedgeA] (yA) -- (w);
			\draw[xedgeB] (zB) -- (w);

			\draw[gray] (zA) to [out=135,in=135,looseness=1.8](yB);
			\node at (0,-1*\yr)[]{$\downarrow$};
			\node at (0,-1*\yr)[color=red,thick]{$\times$};
		\end{scope}

	\end{tikzpicture}
	\caption{$K_3$-Intersections for case~3.}
	\label{fig:H5D4:P:Intersect:2}
	\end{figure}
	We distinguish into these cases.

	\xsubcase{a}{$\binom{\{y,z,y',z'\}}{2}\cap E=\emptyset$}
	Note that~$H''$ contains one of~$u,v,w$ since every unforced edge is incident with~$u$, $v$, or~$w$.
	Thus,
	$H''\setminus (H\cup H')$ cannot be adjacent with all four of~$\{y,z,y',z'\}$ in~$G[H'']$
	since $\{u,v,w\}\cap H''\neq \emptyset$.
    Let $H''\setminus (H\cup H')$ be ajdacent with exactly three of~$\{y,z,y',z'\}$ in~$G[H'']$.
    Again,
    since $\{u,v,w\}\cap H''\neq \emptyset$,
    we have that $|H''\setminus (H\cup H')|=1$.
	Assume that~$H''\setminus (H\cup H')$ is adjacent with~$y,z,y'$
	(the case of~$y,z',y'$ is symmetric).
	It follows that~$u\in H''$,
	since otherwise there is a degree-1 vertex in~$G[H'']$.
	Thus,
	each of~$z,u,y'$ is of degree two.

	Assume that~$H''\setminus (H\cup H')$ is adjacent only with~$y,z,z'$ in~$G[H'']$
	(the case of~$y',z,z'$ is symmetric).
	Then~$v$ or~$w$ is in~$H''$,
	and either way, this vertex and~$z'$ are of degree two.

	Assume that~$H''\setminus (H\cup H')$ is adjacent only with~$y,y'$ in~$G[H'']$
	(the case of~$z,z'$ is symmetric).
	He have~$|H''\setminus (H\cup H')|=1$,
	since $y$ and~$y'$ are not adjacent and~$y'$ of degree three in~$G[H\cup H']$.
	Hence,
	\cref{obs:H5D4:P:P3free} applies.

	Assume that~$H''\setminus (H\cup H')$ is adjacent only with~$y,z'$ in~$G[H'']$
	(the case of~$z,y'$ is symmetric).
	He have~$|H''\setminus (H\cup H')|=1$,
	since both $y$ and~$z'$ are of degree three in~$G[H\cup H']$.
	Then,
	\cref{obs:H5D4:P:P3free} applies.

	\xsubcase{b}{$\binom{\{y,z,y',z'\}}{2}\cap E=\{z,z'\}$}
	Assume that~$H''\setminus (H\cup H')$ is adjacent with~$y,z,y'$ in~$G[H'']$.
	If~$|H''\setminus (H\cup H')|=2$,
	then all of~$y,z,y'$ are of degree at most two in~$G[H'']$.
	If~$|H''\setminus (H\cup H')|=1$
	then~$y'$ and one of~$y,z$ is of degree two in~$G[H'']$.

	Assume that~$H''\setminus (H\cup H')$ is adjacent with~$z,y'$ in~$G[H'']$.
	Note that $|H''\setminus (H\cup H')|<2$
	since~$z,y'$ are not adjacent and~$y$ is of degree three in~$G[H\cup H']$.
	If $|H''\setminus (H\cup H')|=1$,
	then \cref{obs:H5D4:P:P3free} applies.

	Assume that~$H''\setminus (H\cup H')$ is adjacent with~$y,y'$ in~$G[H'']$.
	Note that $|H''\setminus (H\cup H')|<2$
	since~$y,y'$ are not adjacent and~$y$ is of degree three.
	If $|H''\setminus (H\cup H')|=1$,
	then \cref{obs:H5D4:P:P3free} applies.

	\xsubcase{c}{$\binom{\{y,z,y',z'\}}{2}\cap E=\{z,y'\}$}
	First,
	observe that the edge $\{z,y'\}$
	is not unforced in a habitat~$H^\dagger$ with~$H^\dagger\subseteq H\cup H'$
	and~$\{H^\dagger,H^\ddagger\}\in E(\calG_{G,\Fin,\calH})$ with~$H^\ddagger\in\{H,H'\}$.
	To see this,
	observe that to not be forced in~$H^\dagger$,
	$z$ must be neighboring~$v$ and~$y$ in~$G[H^\dagger]$,
	and~$y'$ must be neighboring~$u$ and~$y'$ in~$G[H^\dagger]$;
	a contradiction to~$H^\dagger$ being of size at most five.
	Thus,
	$H''$ contains one of~$u,v,w$ since every unforced edge is incident with~$u$, $v$, or~$w$.

	Since all vertices are of degree at least three
	and either are not adjacent or have no common neighbor in~$G[H\cup H']$,
	it holds that~$|H''\setminus (H\cup H')|=1$.
	Let~$y''=H''\setminus (H\cup H')$.
	Note that~$y''$ cannot be adjacent with all four in~$G[H'']$
	since~$\{u,v,w\}\cap H''\neq\emptyset$.
	Assume that~$y''$ is adjacent with~$y,z,y'$ in~$G[H'']$
	(the case of~$z',z,y'$ is symmetric).
	It follows that~$u\in H''$,
	since otherwise $y,y'$ are of degree two or there is a degree-1 vertex in~$G[H'']$.
	Thus,
	$u$ is of degree two in~$G[H'']$,
	and hence,
	no intersecting edge is unforced.

	Assume that~$y''$ is adjacent with~$y,z,z'$ in~$G[H'']$
	(the case of~$y,y',z'$ is symmetric).
	He have that either~$v$ or $w$ is in~$H''$,
	either of which and~$z'$ are of degree two.

	Assume that~$y''$ is adjacent with~$y,y'$
	(the case of~$z,z'$ is symmteric)
	or with~$y,z'$ in~$G[H'']$.
	Then,
	\cref{obs:H5D4:P:P3free} applies.

	Assume that~$y''$ is adjacent with~$z,y'$ or with~$y,z'$ in~$G[H'']$.
	Then,
	\cref{obs:H5D4:P:K3free} applies.

	This exhausts all possible adjacencies for $K_3$-intersections,
	hence no union of habitats in a connected component can exceed size 8.
\end{proof}

\begin{lemma}
 \label{lem:H5D4:P:C4}
 If~$\Gcap$ is isomorphic to the~$C_4$,
 then the union of habitats in the connected component in~$\calG_{G,F^*\calH}$ containing~$H$ and~$H'$ is of order at most 7.
\end{lemma}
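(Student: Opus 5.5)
We follow the pattern of the $K_3$-intersection lemma. Let $\Vcap=\{a,b,c,d\}$ with the $C_4$ given by $a,b,c,d$ in cyclic order; hence $\{a,c\},\{b,d\}\notin E$. Since $|H|,|H'|\le 5$ and $H\setminus H'\neq\emptyset\neq H'\setminus H$, we may write $H=\Vcap\cup\{x\}$ and $H'=\Vcap\cup\{x'\}$. By \cref{obs:H5D4:P:notwotwo}, each of $G[H]$ and $G[H']$ has at most one degree-2 vertex. Every vertex of $\Vcap$ has degree two inside $\Gcap$, so at least three vertices of $\Vcap$ are adjacent to $x$, and at least three are adjacent to $x'$. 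Counting degrees ($\Delta=4$), every vertex of $\Vcap$ adjacent to both $x$ and $x'$ is saturated, with degree four in $G[H\cup H']$. At least two vertices of $\Vcap$ are adjacent to both $x$ and $x'$.

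\textbf{Case distinction on the neighbourhoods.} We split on $N(x)\cap\Vcap$ and $N(x')\cap\Vcap$, each of size three or four.
\begin{itemize}
\item If both have size four, then all of $\Vcap$ is saturated. The graphs $G[H]$ and $G[H']$ are wheels $W_5$, matching \cref{fig:H5D4:P}(b3), and $x,x'$ have degree four in $G[H\cup H']$, or degree at most four counting a possible edge $\{x,x'\}$. Any $H''\in\bnd(\{H,H'\})$ must meet $H\cup H'$ in unsaturated vertices only, which leaves at most $\{x,x'\}$. By reducedness this contradicts 2-connectivity of $G[H'']$, since two vertices cannot separate off a new part with an unforced shared edge. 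So the component has order $6$.
\item If one neighbourhood has size four and the other size three, exactly one vertex of $\Vcap$ is unsaturated, with degree three. Together with $x'$, this gives at most two non-saturated vertices.
\item If both have size three, up to symmetry either they miss the same vertex of $\Vcap$, or they miss two adjacent vertices, or they miss two opposite vertices of the $C_4$.
\end{itemize}
In each subcase we also consider whether the edge $\{x,x'\}$ exists.

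\textbf{Applying the docking observations.} In each surviving configuration we compute the degrees in $G[H\cup H']$ and list the vertices of degree less than four. Typically these are $x$, $x'$ and at most one vertex of $\Vcap$. We then apply \cref{obs:H5D4:P:docking33} or \cref{obs:H5D4:P:docking32} directly, which bounds the union by $|H\cup H'|+1=7$. Where neither applies, we argue as in the $K_3$ case. A third habitat $H''$ can add at most one vertex $y''$, because the low-degree vertices are pairwise non-adjacent or have no common neighbour. The adjacency of $y''$ to two low-degree vertices is then ruled out by \cref{obs:H5D4:P:P3free} or \cref{obs:H5D4:P:K3free}. Three or more neighbours would force two degree-2 vertices in $G[H'']$, which contradicts \cref{obs:H5D4:P:notwotwo}. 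We also check that $H''$ must share an unforced edge, hence must contain a vertex of $\Vcap$ or the edge $\{x,x'\}$, which again restricts its shape.

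\textbf{Main obstacle.} We expect the main obstacle to be the subcase with two missed opposite vertices, say $a$ and $c$, where $\{x,x'\}\notin E$. Here $a$ and $c$ have degree three, and $x$ and $x'$ have degree three, so four vertices are unsaturated. Neither docking observation applies immediately. The plan is to show that any vertex $y''$ outside $H\cup H'$ can attach only to $\{a,c,x,x'\}$, with at most one edge to each of them. Then $G[H'']$ either has two degree-2 vertices, or contains an induced $P_3$ or $K_3$ of the kind forbidden by \cref{obs:H5D4:P:P3free} or \cref{obs:H5D4:P:K3free}. The only exceptions are configurations that saturate everything after one step, which again gives order $7$. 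We will verify this by enumerating the possible neighbourhoods of $y''$ in $\{a,c,x,x'\}$, exploiting the symmetry of the $C_4$.
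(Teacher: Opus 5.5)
Your case split is the right one, and it is slightly more complete than the paper's. The mixed case, where $x$ sees all of $\Vcap$ and $x'$ sees only three vertices, is not among the paper's Cases~1--4. It is settled by \cref{obs:H5D4:P:docking33}, since exactly $x'$ and the vertex it misses have degree three.

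\textbf{The gap.} It lies in the generic step you use whenever the docking observations do not apply. There you make three claims:
\begin{itemize}
\item a new vertex $y''$ adjacent to two low-degree vertices is always excluded by \cref{obs:H5D4:P:P3free} or \cref{obs:H5D4:P:K3free};
\item three or more such neighbours force two degree-2 vertices in $G[H'']$;
\item the only survivors ``saturate everything after one step''.
\end{itemize}
All three are false. A third habitat can in fact exist in each of your $3/3$ configurations with $\{x,x'\}\notin E$. Two examples:
\begin{itemize}
\item \emph{Both miss $d$.} Take $H''=\{y'',x,x',a,b\}$ with $N(y'')\cap H''=\{x,x'\}$. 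It induces the graph of panel~(2) of \cref{fig:H5D4:P}(b). The endpoints of the induced $P_3$ $x,y'',x'$ have the two \emph{adjacent} common neighbours $a,b$, so \cref{obs:H5D4:P:P3free} does not apply. Moreover $\{a,b\}$ is unforced in each of $G[H]$, $G[H']$, $G[H'']$.
\item \emph{Your ``main obstacle'' ($x$ misses $a$, $x'$ misses $c$).} Take $H''=\{y'',x,a,x',b\}$ with $y''$ adjacent to $x,a,x'$. It is again of that type, with the single degree-2 vertex $x$. It shares the unforced edge $\{a,x'\}$ with $H'$, and afterwards $c$ and $y''$ can still have degree three.
\end{itemize}
The adjacent-missed case behaves the same way. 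There, too, four vertices are unsaturated, contrary to your ``typically at most one vertex of $\Vcap$''.

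\textbf{The missing second layer.} The bound $7$ cannot come from excluding $H''$. It comes from admitting exactly one new vertex $y''$ and then proving $\bnd(\{H,H',H''\})=\emptyset$. This is what the paper does in its Cases~2--4. After adding $y''$, only two vertices remain unsaturated, for instance $d,y''$ in the first example and $c,y''$ in the second. Their common neighbours in $H\cup H'\cup H''$ are absent, unique, or pairwise non-adjacent. So a fourth habitat can only attach through these two vertices, and \cref{obs:H5D4:P:P3free} or \cref{obs:H5D4:P:K3free} excludes it.

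\textbf{A smaller slip.} In your first bullet, $x$ and $x'$ are not unsaturated: each already has four neighbours in $\Vcap$. Hence $\{x,x'\}\notin E$, all six vertices have degree four, and $\bnd(\{H,H'\})=\emptyset$ follows at once. Your argument about $H''$ meeting only $\{x,x'\}$ is unnecessary and not correct as stated.
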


\begin{proof}
 Let~$(u,v,w,x)$ form the~$C_4$.
 Let~$y= H\setminus H'$ and~$y'=H'\setminus H$.
 Since~$H$ and~$H'$ cannot have at least two vertices of degree-2,
 we know that each of~$y,y'$ has at least three neighbors in~$\Vcap$.
 We distinguish the following cases
 (see~\cref{fig:H5D4:P:Intersect:C4})
	\begin{figure}[t]
	\centering
	\begin{tikzpicture}
		\def\xr{1}
		\def\yr{1}
		\def\xsh{2.5}
		\def\ysh{2.5}
		\tikzpreamble{}

		\newcommand{\nfourgadget}[3]{%
			\node (ne) at (0.25*\xr,0.4*\yr)[xnode,label={[label distance=-1pt](90):{$u$}}]{};
			\node (se) at (0.25*\xr,-0.4*\yr)[xnode,label={[label distance=-1pt](-90):{$v$}}]{};
			\node (sw) at (-0.25*\xr,-0.4*\yr)[xnode,label={[label distance=0pt](-90):{$w$~~}}]{};
			\node (nw) at (-0.25*\xr,0.4*\yr)[xnode,label={[label distance=0pt](90):{$x$~~}}]{};

			\foreach \x/\y in {#1}{\draw[xedge] (\x) to (\y);}
			\foreach \x/\y in {#2}{\draw[xedge,ultra thick] (\x) to (\y);}
			\foreach \x/\y in {#3}{\draw[xsedge] (\x) to (\y);}
		}
		\newcommand{\xlabel}[1]{\node at (-0.5*\xr,1.25*\yr)[anchor=east]{(#1)};}

		\newcommand{\currH}{\nfourgadget{ne/se,se/sw,sw/nw,nw/ne}{}{}}

		\begin{scope}[xshift=-1.0*\xsh*\xr cm,yshift=-1*\ysh*\yr cm]
			\currH{}
			\node (yA) at ($(nw)!0.5!(sw)+(-0.5,-0.0)$)[xnodeA,label=90:{$y$}]{};
			\node (yB) at ($(ne)!0.5!(se)+(+0.5,-0.0)$)[xnodeB,label=90:{$y'$}]{};

			\foreach \x in {sw,se,nw,ne}{
				\draw[xedgeA] (yA) -- (\x);
				\draw[xedgeB] (yB) -- (\x);
			}
			\node at (0,-1*\yr)[]{$\downarrow$};
			\node at (0,-1*\yr)[color=red,thick]{$\times$};
		\end{scope}

		\begin{scope}[xshift=-0*\xsh*\xr cm,yshift=-1*\ysh*\yr cm]
			\nfourgadget{se/sw,sw/nw}{ne/se,nw/ne}{}
			\node (yA) at ($(nw)!0.5!(sw)+(-0.5,-0.0)$)[xnodeA,label=90:{$y$}]{};
			\node (yB) at ($(ne)!0.5!(se)+(+0.5,-0.0)$)[xnodeB,label=90:{$y'$}]{};

			\foreach \x in {sw,se,nw}{
				\draw[xedgeA] (yA) -- (\x);
				\draw[xedgeB] (yB) -- (\x);
			}
			\draw[gray,dotted] (yA) -- (yB);
			\node at (0*\xr,-1*\yr)[]{$\downarrow$};
		\end{scope}

		\begin{scope}[xshift=-0*\xsh*\xr cm,yshift=-2*\ysh*\yr cm]
			\nfourgadget{se/sw,sw/nw}{ne/se,nw/ne}{}
			\node (yA) at ($(nw)!0.5!(sw)+(-0.5,-0.0)$)[xnodeA,label=90:{$y$}]{};
			\node (yB) at ($(ne)!0.5!(se)+(+0.5,-0.0)$)[xnodeB,label=90:{$y'$}]{};

			\foreach \x in {sw,se,nw}{
				\draw[xedgeA] (yA) -- (\x);
				\draw[xedgeB] (yB) -- (\x);
			}
			\node (z) at ($(se)!0.5!(sw)+(+0.0,-0.6)$)[xnodeC]{};
			\draw[xedgeC,ultra thick] (yA) -- (z) -- (yB);
		\end{scope}

		\begin{scope}[xshift=1.5*\xsh*\xr cm,yshift=-1*\ysh*\yr cm]
			\nfourgadget{se/sw}{ne/se,nw/ne,sw/nw}{}
			\node (yA) at ($(nw)!0.5!(sw)+(-0.5,-0.0)$)[xnodeA,label=90:{$y$}]{};
			\node (yB) at ($(ne)!0.5!(se)+(+0.5,-0.0)$)[xnodeB,label=90:{$y'$}]{};

			\foreach \x in {sw,se,nw}{
				\draw[xedgeA] (yA) -- (\x);
			}
			\foreach \x in {sw,se,ne}{
				\draw[xedgeB] (yB) -- (\x);
			}
			\draw[gray,dotted] (yA) -- (yB);
			\node at (-0.5*\xr,-1*\yr)[]{$\swarrow$};
			\node at (0.5*\xr,-1*\yr)[]{$\searrow$};
		\end{scope}

		\begin{scope}[xshift=1*\xsh*\xr cm,yshift=-2*\ysh*\yr cm]
			\nfourgadget{se/sw}{ne/se,nw/ne,sw/nw}{}
			\node (yA) at ($(nw)!0.5!(sw)+(-0.5,-0.0)$)[xnodeA,label=90:{$y$}]{};
			\node (yB) at ($(ne)!0.5!(se)+(+0.5,-0.0)$)[xnodeB,label=90:{$y'$}]{};

			\foreach \x in {sw,se,nw}{
				\draw[xedgeA] (yA) -- (\x);
			}
			\foreach \x in {sw,se,ne}{
				\draw[xedgeB] (yB) -- (\x);
			}
			\node (z) at ($(ne)!0.5!(nw)+(+0.0,+0.6)$)[xnodeC,label=0:{$y''$}]{};
			\draw[xedgeC] (yA) -- (z) -- (nw);
			\draw[xedgeC] (yB) -- (z);
		\end{scope}

		\begin{scope}[xshift=2*\xsh*\xr cm,yshift=-2*\ysh*\yr cm]
			\nfourgadget{se/sw}{ne/se,nw/ne,sw/nw}{}
			\node (yA) at ($(nw)!0.5!(sw)+(-0.5,-0.0)$)[xnodeA,label=90:{$y$}]{};
			\node (yB) at ($(ne)!0.5!(se)+(+0.5,-0.0)$)[xnodeB,label=90:{$y'$}]{};

			\foreach \x in {sw,se,nw}{
				\draw[xedgeA] (yA) -- (\x);
			}
			\foreach \x in {sw,se,ne}{
				\draw[xedgeB] (yB) -- (\x);
			}
			\node (z) at ($(se)!0.5!(sw)+(+0.0,-0.6)$)[xnodeC,label=0:{$y''$}]{};
			\draw[xedgeC,ultra thick] (yA) -- (z);
			\draw[xedgeC,ultra thick] (yB) -- (z);
		\end{scope}

		\begin{scope}[xshift=3.5*\xsh*\xr cm,yshift=-1*\ysh*\yr cm]
			\nfourgadget{}{ne/se,nw/ne,se/sw,sw/nw}{}
			\node (yA) at ($(nw)!0.5!(sw)+(-0.5,-0.0)$)[xnodeA,label=90:{$y$}]{};
			\node (yB) at ($(ne)!0.5!(se)+(+0.5,-0.0)$)[xnodeB,label=90:{$y'$}]{};
			\foreach \x in {sw,se,nw}{
				\draw[xedgeA] (yA) -- (\x);
			}
			\foreach \x in {ne,se,nw}{
				\draw[xedgeB] (yB) -- (\x);
			}
			\draw[gray,dotted] (yA) -- (yB);
			\node at (-0.5*\xr,-1*\yr)[]{$\swarrow$};
			\node at (0.5*\xr,-1*\yr)[]{$\searrow$};
		\end{scope}

		\begin{scope}[xshift=3*\xsh*\xr cm,yshift=-2*\ysh*\yr cm]
			\nfourgadget{}{ne/se,nw/ne,se/sw,sw/nw}{}
			\node (yA) at ($(nw)!0.5!(sw)+(-0.5,-0.0)$)[xnodeA,label=90:{$y$}]{};
			\node (yB) at ($(ne)!0.5!(se)+(+0.5,-0.0)$)[xnodeB,label=90:{$y'$}]{};

			\foreach \x in {sw,se,nw}{
				\draw[xedgeA] (yA) -- (\x);
			}
			\foreach \x in {ne,se,nw}{
				\draw[xedgeB] (yB) -- (\x);
			}

			\node (z) at ($(ne)+(-0.2,+0.6)$)[xnodeC,label=0:{$y''$}]{};
			\draw[xedgeC] (yA) -- (z);
			\draw[xedgeC] (yB) -- (z);
			\draw[xedgeC] (ne) -- (z);
		\end{scope}

		\begin{scope}[xshift=4*\xsh*\xr cm,yshift=-2*\ysh*\yr cm]
			\nfourgadget{}{ne/se,nw/ne,se/sw,sw/nw}{}
			\node (yA) at ($(nw)!0.5!(sw)+(-0.5,-0.0)$)[xnodeA,label=90:{$y$}]{};
			\node (yB) at ($(ne)!0.5!(se)+(+0.5,-0.0)$)[xnodeB,label=90:{$y'$}]{};

			\foreach \x in {sw,se,nw}{
				\draw[xedgeA] (yA) -- (\x);
			}
			\foreach \x in {ne,se,nw}{
				\draw[xedgeB] (yB) -- (\x);
			}

			\node (z) at ($(ne)+(-0.2,+0.6)$)[xnodeC,label=0:{$y''$}]{};
			\draw[xedgeC] (yA) -- (z);
			\draw[xedgeC] (sw) -- (z);
			\draw[xedgeC] (ne) -- (z);
		\end{scope}

	\end{tikzpicture}
	\caption{$C_4$-Intersections.}
	\label{fig:H5D4:P:Intersect:C4}
	\end{figure}

	\xcase{1}{$N_{G[H]}(y)=N_{G[H']}(y')=\Vcap$}
	In this case,
	all vertices have degree four and there cannot be another vertex

	\xcase{2}{$N_{G[H]}(y)=N_{G[H']}(y')=\{x,v,w\}$}
		  Let~$H''\in\bnd(\{H,H'\})$.
			Note that~$H''\setminus (H\cup H')$ can only be adjacent with~$y,u,y'$.
			We have that~$|H''\setminus (H\cup H')|<3$
			since no pair of~$y,u,y''$ is adjacent
			(and hence
			the habitat would contain a degree-1 vertex).
			Assume~$|H''\setminus (H\cup H')|=2$.
			Then,
			$H''$ contains~$u$ as otherwise both in~$H''\setminus (H\cup H')$ are of degree 2.
			Moreover,
			$H''$ contains exactly one of~$y,y'$.
			Thus,
			$H''$ contains two degree two vertices,
			one of~$H''\setminus (H\cup H')$ and one of~$y,y'$;
			a contradiction.

			Let~$y''= H''\setminus (H\cup H')$.
			If~$y''$ is adjacent with all three~$y,u,y'$,
			then,
			independent of which habitat~$H''$ additionally contains from~$\{x,w,v\}$,
			$u$, $y$, and~$y'$ have degree two.
			If~$y''$ is adjacent with~$y,u$,
			then \cref{obs:H5D4:P:P3free} applies.
			If~$y''$ is adjacent with~$y',u$,
			then \cref{obs:H5D4:P:P3free} applies.
			Let~$N_{G[H'']}(y'')\neq \{y,y'\}$.
			Then,
			any fourth habitat~$H^*\in\bnd(\{H,H',H''\})$ must contain~$u$ and~$y''$,
			and since~$u$ and~$y''$ have no common neighbor,
			we have~$|H^*\setminus (H\cup H'\cup H'')|=1$.
			Thus, \cref{obs:H5D4:P:P3free} applies.

		\xcase{3}{$N_{G[H]}(y)=\{x,v,w\}$ and $N_{G[H']}(y')=\{u,v,w\}$}
		Let~$H''\in\bnd(\{H,H'\})$.
		First observe that~$ \{y,y'\}\not\in E$
		since otherwise~$H''\setminus (H\cup H')$ is only adjacent with~$x,u$,
		each of which is has degree three in`$G[H\cup H']$,
		and thus,
		\cref{obs:H5D4:P:K3free} applies.
		If~$|H''\setminus (H\cup H')|\geq 2$,
		then either there are two vertices from $y,x,u,y'$ of degree two in~$G[H'']$,
		or one of~$H''\setminus (H\cup H')$ and one of $y,x,u,y'$ are of degree two.
		Thus,
		let~$y''= H''\setminus (H\cup H')$.
		We have that~$y'$ is adjacent to some of~$y,x,u,y'$.
		If~$y'$ is adjacent to all four,
		then $H''$ is isomorphic to \cref{fig:H5D4:P}(v).
		If~$y'$ is adjacent to~$\{y,x,u\}$
		(the case~$\{x,u,y'\}$ is symmetry),
		then~$u$ and either~$y$ (if~$H''$ contains no further vertex)
		or one of~$w,v,y'')$ is of degree at most two in~$G[H'']$.
		Let~$y'$ be adjacent to~$\{y,x,y'\}$
		(the case~$\{y,u,y'\}$ is symmetric).
		Assume that there is~$H^*\in\bnd(\{H,H',H''\})$.
		Then,
		$H^*\setminus (H\cup H'\cup H'')$ must be adjacent with~$y'',u$,
		each of which is of degree three.
		Thus,
		\cref{obs:H5D4:P:P3free} applies.

		If $y''$ is adjacent only with~$\{y,u\}$ or $\{x,y'\}$,
		then \cref{obs:H5D4:P:P3free} applies.
		If $y''$ is adjacent only with~$\{y,x\}$, $\{u,x\}$, or $\{u,y'\}$,
		then \cref{obs:H5D4:P:K3free} applies.
		Let~$y''$ be adjacent to~$\{y,y'\}$.
		Let~$H^*\in\bnd(\{H,H',H''\})$.
		$H^*$ must contain~$y''$
		and
		at least one of~$x,u$.
		Hence,
		that~$H^*$ cannot contain both~$y$ and~$y'$,
		since otherwise any of these two is of degree one.
		Moreover,
		$|H^*\setminus (H\cup H'\cup H'')|<3$.

		Let~$y^*=H^*\setminus (H\cup H'\cup H'')$.
		Note that~$y^*$ cannot be adjacent to only one of~$x$,
		then then~\cref{obs:H5D4:P:P3free} applies.
		Let~$x,u\in H^*$,
		then one of~$y,y'$ must be in~$H^*$
		(since any~$w,v$ would be of degree one).
		If~$y\in H^*$,
		then~$y$ and~$u$ are of degree two.
		If~$y'\in H^*$,
		then~$y'$ and~$x$ are of degree two.
		Either way yields a contradiction.

		Let~$\{y^*,z^*\}=H^*\setminus (H\cup H'\cup H'')$.
		If~$x,u\in H^*$,
		then~$y''$ must be adjacent with both~$y^*,z^*$,
		and hence,
		both~$x$ and~$u$ are of degree two.
		Let~$x\in H^*$ and~$u\not\in H^*$
		(the case $x\not\in H^*$ and~$u\in H^*$ is symmetric).
		Then,
		$H^*$ must contain~$y$.
		Then,
		$y$ and $x$ are of degree two in~$G[H^*]$;
		a contradiction.

		\xcase{4}{$N_{G[H]}(y)=\{x,v,w\}$ and $N_{G[H']}(y')=\{x,u,v\}$}
	  Let~$H''\in\bnd(\{H,H'\}$.
	  Note that~$H''\setminus (H\cup H')$ can only be adjacent with vertices from~$\{y,u,w,y''\}$.
	  If~$\{y,y'\}\in E$,
	  then~$H''\setminus (H\cup H')$ is only adjacent with~$u$ and~$w$
	  each of which has degree three,
	  and hence,
	  \cref{obs:H5D4:P:P3free} applies.
	  Let~$\{y,y'\}\not\in E$.
	  $H''\setminus (H\cup H')$ cannot be adjacent with all four~$y,u,w,y''$
	  since there is no edge~$\{a,b\}\in E$ with~$a\in\{y,w\}$ and~$b\in\{u,y'\}$.

		Let~$H''\setminus (H\cup H')$ be adjacent with~$y,u,y'$
		(the case of $y,w,y'$ is symmetric).
		Since~$y$ is neither adjacent with $u$ nor with~$y'$,
		we have that $y''= H''\setminus (H\cup H')$.
		Assume that there is~$H^*\in \bnd(\{H,H'.H''\})$.
		Since~$w$ and~$y''$ are of degree three,
		$|H^*\setminus (H\cup H'\cup H'')|=1$,
		and thus,
		\cref{obs:H5D4:P:P3free} applies.

		Let~$H''\setminus (H\cup H')$ be adjacent with~$y,u,w$
		(the case of $y',u,w$ is symmetric).
		Since~$u$ is neither adjacent with $y$ nor with~$w$,
		we have that $y''= H''\setminus (H\cup H')$.
		Assume that there is~$H^*\in \bnd(\{H,H'.H''\})$.
		Since~$y'$ and~$y''$ are of degree three,
		$|H^*\setminus (H\cup H'\cup H'')|=1$,
		and thus,
		\cref{obs:H5D4:P:P3free} applies.

		Next we consider the case that $H''\setminus (H\cup H')$ is adjacent with exactly two in~$H\cup H'$.
		Note that all vertices are of degree at least three,
		we have that~$|H''\setminus (H\cup H')|=1$ and let~$y''=H''\setminus (H\cup H')$.
		If~$y''$ is adjacent with~$u,w$, $y,u$, $w,y''$, or~$y,y'$,
		then~\cref{obs:H5D4:P:P3free} applies.
		If~$y''$ is adjacent with~$y,w$ or $u,y'$,
		then~\cref{obs:H5D4:P:K3free} applies.
\end{proof}

\begin{lemma}\label{lem:H5D4:P:Claw}
 If~$\Gcap$ is isomorphic to the~$Claw+e$,
 then the union of habitats in the connected component in~$\calG_{G,F^*\calH}$ containing~$H$ and~$H'$ is of order at most 7.
\end{lemma}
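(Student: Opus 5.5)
Let $\Gcap$ be the $Claw+e$ on $\Vcap=\{u,v,w,x\}$ with degree sequence $(3,2,2,1)$. Say $u$ is the degree-3 vertex, $\{v,w\}$ is the extra edge (so $u,v,w$ form a triangle), and $x$ is the pendant vertex adjacent only to $u$. Since $|\Vcap|=4$ and $\eta=5$, we have $H\setminus H'=\{y\}$ and $H'\setminus H=\{y'\}$, and $G[H]$, $G[H']$ are among the graphs in \cref{fig:H5D4:P}(b)/(c). I follow the scheme of \cref{lem:H5D4:P:C4}.

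\textbf{Step 1: fix the neighbourhoods of $y$ and $y'$.} Because $x$ has only one neighbour in $\Gcap$, it must be adjacent to $y$ in $G[H]$ and to $y'$ in $G[H']$. By \cref{obs:H5D4:P:notwotwo}, $x$ is then the unique degree-2 vertex in both habitats. Hence $y$ and $y'$ each have at least three neighbours in $\Vcap$, including $x$, and every vertex of $\{u,v,w\}$ other than $u$ must receive an edge from $y$ (respectively $y'$).

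Since $\deg_G(u)\le 4$ and $u$ already has three neighbours in $\Vcap$, $u$ is adjacent to at most one of $y,y'$. This leaves essentially two cases:
\begin{enumerate}[(1)]
\item $N_{G[H]}(y)=N_{G[H']}(y')=\{x,v,w\}$;
\item $N_{G[H]}(y)=\{x,u,v,w\}$ and $N_{G[H']}(y')=\{x,v,w\}$, up to symmetry.
\end{enumerate}
In each case I record the degrees in $G[H\cup H']$. For instance, in case (1), $v$ and $w$ have degree 4, $x$ has degree 3, $u$ has degree 3, and $y,y'$ have degree 3, unless $\{y,y'\}\in E$. I also note which edges are forced: the edges at $x$, and any edge at a degree-2 vertex.

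\textbf{Step 2: bound any third habitat $H''\in\bnd(\{H,H'\})$.} Vertices of $H''\setminus(H\cup H')$ can attach only to the vertices of degree below four, i.e.\ to $\{x,u,y,y'\}$. I show $|H''\setminus(H\cup H')|\le1$ using the same counting as before: two new vertices would force two degree-2 vertices, contradicting \cref{obs:H5D4:P:notwotwo}, or a degree-1 vertex, contradicting reducedness. So let $H''\setminus(H\cup H')=\{y''\}$. I then enumerate $N_{G[H'']}(y'')\subseteq\{x,u,y,y'\}$. Most choices of size two are excluded by \cref{obs:H5D4:P:P3free} or \cref{obs:H5D4:P:K3free}. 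Choices of size three or four either make every vertex saturated, or produce two degree-2 vertices, for example $x$ together with another vertex. If $\{y,y'\}\in E$, then \cref{obs:H5D4:P:docking33} or \cref{obs:H5D4:P:docking32} applies directly.

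\textbf{Step 3: bound a fourth habitat in the surviving subcases.} In each surviving subcase (typically $y''$ adjacent to $\{y,y'\}$ or to $\{x,y,y'\}$), I check that $G[H\cup H'\cup H'']$ has at most two unsaturated vertices, and that these are non-adjacent or have no common neighbour. Then any $H^*\in\bnd(\{H,H',H''\})$ adds at most one vertex, and that vertex is ruled out by \cref{obs:H5D4:P:P3free}/\cref{obs:H5D4:P:K3free}. This yields a union of at most $5+1+1=7$ vertices.

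The main obstacle is Step 2 in case (1), where $u$ and $x$ both remain unsaturated and $y''$ may use the edge $\{u,x\}$. There I must verify carefully, as in the $K_3$ lemma, that $\{u,x\}$ lies in no unforced intersection that could pull in a further habitat. My first attempt is to show that any 5-habitat containing both $u$ and $x$ must contain a degree-2 vertex other than $x$.
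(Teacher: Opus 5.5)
Your case split and Step~1 match the paper's proof: your centre $u$ and pendant $x$ are its $w$ and $v$, and your cases~(1) and~(2) are its Cases~2 and~1. The bound $|H''\setminus(H\cup H')|\le 1$ is also correct. The gaps are in the subcase analysis.

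\textbf{Step~2.} Your dichotomy for $|N_{G[H'']}(y'')|\ge 3$ is false. Take case~(1) with $\{y,y'\}\notin E$, let $N_{G[H'']}(y'')=\{x,y,y'\}$ or $\{x,u,y\}$, and use $v$ (or $w$) as fifth vertex. This gives a 2-connected habitat ($K_4-e$ plus a degree-2 vertex attached to the non-adjacent pair) with exactly one degree-2 vertex. Not every vertex is saturated either: $u,y''$ resp.\ $y',y''$ remain unsaturated. These habitats are excluded only because \cref{rr:forcededge}, applied to $H''$, forces the two edges at its degree-2 vertex. Together with the edges at $x$, every edge $H''$ shares with $H$ or $H'$ then lies in $\Fin$, so $H''\notin\bnd(\{H,H'\})$. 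This forced-intersection argument (the paper's Subcase~(b)) is missing from your plan.

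\textbf{Step~3.} This step rests on a false invariant. The subcases that actually survive are:
(a) $N_{G[H'']}(y'')=\{y,y'\}$ with $H''=\{y'',y,y',v,w\}$;
(b) $N_{G[H'']}(y'')=\{u,y\}$ (or $\{u,y'\}$) with $H''=\{y'',u,y,v,w\}$.
\cref{obs:H5D4:P:P3free} does not apply to them, because the common neighbours $x,v,w$ of the endpoints include the adjacent pair $v,w$. Your candidate $\{x,y,y'\}$ does not survive (see above). In both surviving subcases, if $y''$ has no further neighbours in the union, $G[H\cup H'\cup H'']$ has three unsaturated vertices: $x,u,y''$ resp.\ $x,y',y''$. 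Moreover $y''$ has two free slots. So neither ``at most two unsaturated vertices'' nor ``$H^*$ adds at most one vertex'' holds.

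You need the extra step the paper uses. Every unforced edge of the union is incident to $v$ or $w$, so $H^*$ must contain one of them. Slot counting together with \cref{obs:H5D4:P:notwotwo} then excludes two new vertices, and also a new vertex adjacent to all three unsaturated ones. Only after that do \cref{obs:H5D4:P:P3free,obs:H5D4:P:K3free} finish the argument.

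\textbf{Main obstacle.} The obstacle you announce is misplaced. The edge $\{u,x\}$ is incident to $x$ and hence forced, so it can never be the unforced shared edge. The choice $N_{G[H'']}(y'')=\{x,u\}$ is excluded immediately by \cref{obs:H5D4:P:K3free}. Your auxiliary claim, that every $5$-habitat containing $u$ and $x$ has a degree-2 vertex other than $x$, is false: $H$ itself is a counterexample.
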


\begin{proof}
 Let~$\{u,v,w,x\}$ form the~$Claw+e$ with center~$w$ and additional edge~$\{x,u\}$.
 Let~$y= H\setminus H'$ and~$y'=H'\setminus H$.
 Since~$H$ and~$H'$ cannot have at least two degree-2,
 we know that each of~$y,y'$ is adjacent to at least~$u,v,x$.
 Moreover,
 the edges~$\{y,v\},\{y',v\},\{v,w\}$ are forced.
 We distinguish the following cases
 (see~\cref{fig:H5D4:P:Intersect:Claw}).
	\begin{figure}[t]
	\centering
	\begin{tikzpicture}
		\def\xr{1}
		\def\yr{1}
		\def\xsh{3}
		\def\ysh{2.5}
		\tikzpreamble{}

		\newcommand{\nfourgadget}[3]{%
			\node (ne) at (0.25*\xr,0.4*\yr)[xnode,label={[label distance=-1pt](90):{$u$}}]{};
			\node (se) at (0.25*\xr,-0.4*\yr)[xnode,label={[label distance=-1pt](-90):{$v$}}]{};
			\node (sw) at (-0.25*\xr,-0.4*\yr)[xnode,label={[label distance=0pt](-90):{$w$~~}}]{};
			\node (nw) at (-0.25*\xr,0.4*\yr)[xnode,label={[label distance=0pt](90):{$x$~~}}]{};

			\foreach \x/\y in {#1}{\draw[xedge] (\x) to (\y);}
			\foreach \x/\y in {#2}{\draw[xedge,ultra thick] (\x) to (\y);}
			\foreach \x/\y in {#3}{\draw[xsedge] (\x) to (\y);}
		}
		\newcommand{\xlabel}[1]{\node at (-0.5*\xr,1.25*\yr)[anchor=east]{(#1)};}

		\newcommand{\currH}{\nfourgadget{nw/ne,ne/sw,sw/nw}{sw/se}{}}

		\begin{scope}[xshift=0*\xsh*\xr cm,yshift=-1*\ysh*\yr cm]
			\currH{}
			\node (yA) at ($(nw)!0.5!(sw)+(-0.5,-0.0)$)[xnodeA,label=90:{$y$}]{};
			\node (yB) at ($(ne)!0.5!(se)+(+0.5,-0.0)$)[xnodeB,label=90:{$y'$}]{};

			\foreach \x in {se,nw,ne}{
				\draw[xedgeA] (yA) -- (\x);
				\draw[xedgeB] (yB) -- (\x);
			}
			\draw[gray,dotted] (yA) -- (yB);
			\draw[xedgeA,ultra thick] (yA) -- (se);
			\draw[xedgeB,ultra thick] (yB) -- (se);
			\node at (0*\xr,-1*\yr)[]{$\downarrow$};
			\node at (-0.5*\xr,-1*\yr)[]{$\swarrow$};
			\node at (0.5*\xr,-1*\yr)[]{$\searrow$};
		\end{scope}

		\begin{scope}[xshift=-0.8*\xsh*\xr cm,yshift=-2*\ysh*\yr cm]
			\currH{}
			\node (yA) at ($(nw)!0.5!(sw)+(-0.5,-0.0)$)[xnodeA,label=90:{$y$}]{};
			\node (yB) at ($(ne)!0.5!(se)+(+0.5,-0.0)$)[xnodeB,label=90:{$y'$}]{};

			\foreach \x in {se,nw,ne}{
				\draw[xedgeA] (yA) -- (\x);
				\draw[xedgeB] (yB) -- (\x);
			}
			\draw[xedgeA,ultra thick] (yA) -- (se);
			\draw[xedgeB,ultra thick] (yB) -- (se);
			\node (z) at ($(se)+(+0.0,-0.6)$)[xnodeC,label=0:{$y''$}]{};
			\draw[xedgeC] (se) -- (z) -- (sw);
			\draw[xedgeC] (yB) -- (z);

		\end{scope}

		\begin{scope}[xshift=0.8*\xsh*\xr cm,yshift=-2*\ysh*\yr cm]
			\currH{}
			\node (yA) at ($(nw)!0.5!(sw)+(-0.5,-0.0)$)[xnodeA,label=90:{$y$}]{};
			\node (yB) at ($(ne)!0.5!(se)+(+0.5,-0.0)$)[xnodeB,label=90:{$y'$}]{};

			\foreach \x in {se,nw,ne}{
				\draw[xedgeA] (yA) -- (\x);
				\draw[xedgeB] (yB) -- (\x);
			}
			\draw[xedgeA,ultra thick] (yA) -- (se);
			\draw[xedgeB,ultra thick] (yB) -- (se);
			\node (z) at ($(ne)!0.5!(nw)+(+0.0,+0.6)$)[xnodeC,label=0:{$y''$}]{};
			\draw[xedgeC,ultra thick] (yA) -- (z);
			\draw[xedgeC,ultra thick] (yB) -- (z);

		\end{scope}

		\begin{scope}[xshift=0.0*\xsh*\xr cm,yshift=-2*\ysh*\yr cm]
			\currH{}
			\node (yA) at ($(nw)!0.5!(sw)+(-0.5,-0.0)$)[xnodeA,label=90:{$y$}]{};
			\node (yB) at ($(ne)!0.5!(se)+(+0.5,-0.0)$)[xnodeB,label=90:{$y'$}]{};

			\foreach \x in {se,nw,ne}{
				\draw[xedgeA] (yA) -- (\x);
				\draw[xedgeB] (yB) -- (\x);
			}
			\draw[xedgeA,ultra thick] (yA) -- (se);
			\draw[xedgeB,ultra thick] (yB) -- (se);

			\node (z) at ($(se)!0.5!(sw)+(-0.75,-0.6)$)[xnodeC,label=0:{$y''$}]{};
			\draw[xedgeC,ultra thick] (sw) -- (z);
			\draw[xedgeC,ultra thick] (yA) -- (z);
		\end{scope}

		\begin{scope}[xshift=-1.75*\xsh*\xr cm,yshift=-1*\ysh*\yr cm]
		\currH{}
		\node (yA) at ($(nw)!0.5!(sw)+(-0.5,-0.0)$)[xnodeA,label=90:{$y$}]{};
		\node (yB) at ($(ne)!0.5!(se)+(+0.5,-0.0)$)[xnodeB,label=90:{$y'$}]{};

		\foreach \x in {se,nw,ne}{
			\draw[xedgeA] (yA) -- (\x);
			\draw[xedgeB] (yB) -- (\x);
		}
		\draw[xedgeA,ultra thick] (yA) -- (se);
		\draw[xedgeB,ultra thick] (yB) -- (se);
		\draw[xedgeA] (yA) -- (sw);
		\node at (0*\xr,-1*\yr)[]{$\downarrow$};
		\node at (0,-1*\yr)[color=red,thick]{$\times$};
		\end{scope}

	\end{tikzpicture}
	\caption{$(Claw+e)$-Intersections.}
	\label{fig:H5D4:P:Intersect:Claw}
	\end{figure}

	\xcase{1}{$N_{G[H]}(y)=\{u,v,w,x\}$}
	Then~$y'$ is only adjacent to~$x,u,v$ in~$G[H']$.
	Assume that there is~$H''\in\bnd(\{H, H'\})$.
	Any vertex in $H''\setminus(H\cup H')$ can only be adjacent with~$v$ and~$y'$.
	Note that~$v$ and~$y'$ have no common neighbor in~$G[H\cup H']$ and the edge~$\{v,y'\}$ is forced.
	Thus, $|H''\setminus(H\cup H')|=1$,
	and \cref{obs:H5D4:P:K3free} applies;
	a contradiction.

	\xcase{2}{$N_{G[H]}(y)=N_{G[H]}(y')=\{u,v,x\}$}
	Let~$H''\in\bnd(\{H, H'\})$.
	If~$\{y,y'\}\in E$,
	then~\cref{obs:H5D4:P:docking33} applies.
	Let~$\{y,y'\}\not\in E$.
	Note that~$|H''\setminus(H\cup H')|<2$,
	since all vertices in $y,y',v,w$ are of degree three and
	any of pair of them is either not adjacent or has no common neighbor in~$G[H\cup H']$.
	Let~$y''=H''\setminus(H\cup H')$.

	\xsubcase{a}{$y''$ is adjacent with all four of $y,y',v,w$ in~$G[H'']$}
	All possible intersecting edges are forced;
	a contradiction.

	\xsubcase{b}{$y''$ is adjacent with exactly three from~$y,y',v,w$}
	Then,
	$H''$ contains exactly one of~$x,u$,
	say~$x$ (by symmetry),
	since all other edges in~$G[H\cup H']$ are forced.
	If~$H''$ contains not both~$y,y'$,
	then~$x$ is of degree two, and hence there is no unforced edge in the intersection.
	If~$H''$ contains both~$y,y'$,
	then these two are of degree two if~$H''$ contains~$w$,
	or~$x$ is of degree two if~$H''$ contains~$v$
	(again, in this case there is no unforced edge in the intersection).
	Either way yields a contradiction.

	\xsubcase{c}{$y''$ is adjacent with exactly two from~$y,y',v,w$}
	If~$y''$ is adjacent with~$v$,
	then \cref{obs:H5D4:P:P3free} or \cref{obs:H5D4:P:K3free} applies;
	a contradiction.
	If~$y''$ is adjacent with~$w,y$
	(the case of~$w,y'$ is symmetric),
	then~$x,u$ must be in~$H''$.
	Assume that there is~$H^*\in\bnd(\{H,H',H''\})$.
    Let~$\{y'',v\}\in E$.
    Then~$H^*\setminus (H\cup H'\cup H'')$ is adjacent only with~$y'',z$ in~$G[H^*]$.
    Since both~$y'',z$ are of degree three and are non-adjacent,
    $|H^*\setminus (H\cup H'\cup H'')|<2$.
    If $|H^*\setminus (H\cup H'\cup H'')|=1$,
    then \cref{obs:H5D4:P:P3free} applies.
    Let~$\{y'',y'\}\in E$,
    Then~$H^*\setminus (H\cup H'\cup H'')$ is adjacent only with~$y'',v$ in~$G[H^*]$.
    Since both~$y'',v$ are of degree three and are non-adjacent in~$G[H\cup H'\cup H'']$,
    $|H^*\setminus (H\cup H'\cup H'')|<2$.
    If $|H^*\setminus (H\cup H'\cup H'')|=1$,
    then \cref{obs:H5D4:P:P3free} applies.
    Let $\{y'',v\}\{y'',y'\}\not\in E$.
    If $H^*\setminus (H\cup H'\cup H'')$ is adjacent to all three $y'',y',v$ in~$G[H^*]$,
    then both~$v,y'$ are of degree two.
    Let~$H^*\setminus (H\cup H'\cup H'')$ be adjacent with exactly two of $y'',y',v$ in~$G[H^*]$.
    Since~$v$ and~$y'$ are of degree three and none of them is adjacent with~$y''$,
    we have that~$|H^*\setminus (H\cup H'\cup H'')|=1$.
    Then \cref{obs:H5D4:P:P3free} or \cref{obs:H5D4:P:K3free} applies;
	a contradiction.
	Finally,
	let $y''$ be adjacent exactly with~$y,y'$ in~$G[H'']$.
	It follows that~$H''=\{y'',y,y',x,u\}$.
	Assume that there is~$H^*\in\bnd(\{H,H',H''\})$.
	$H^*$ must contain~$y''$,
	any of~$v,w$ (since they are, next to~$y''$, the only vertices of degree smaller than four),
	and any of~$x,u$ (since every unforced edge contains~$x$ or~$u$),
	say~$x$ (by symmetry).
	Since~$x$ cannot be of degree one in~$G[H^*]$,
	$H^*$ must contain two from~$N_{G[H\cup H'\cup H'']}(x)=\{y,u,w,y'\}$.
	It follows that~$|H^*\setminus(H\cup H'\cup H'')|=1$,
	let $y^*=H^*\setminus(H\cup H'\cup H'')$,
	and that~$H^*$ contains exactly one of~$v,w$.
	Thus,
	$y^*$ is of degree two.
	Since $H^*$ cannot contain both~$y,y'$ (since it is of size at most 5),
	$y''$ is also of degree two;
	a contradiction.
\end{proof}

\begin{lemma}\label{lem:H5D4:P:K4e}
 If~$\Gcap$ is isomorphic to the~$K_4-e$,
 then the union of habitats in the connected component in~$\calG_{G,F^*\calH}$ containing~$H$ and~$H'$ is of order at most 7.
\end{lemma}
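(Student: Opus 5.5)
Let $\Vcap=\{u,v,w,x\}$ induce $K_4-e$, with the non-edge $\{u,v\}$, so $w,x$ have degree three in $\Gcap$ and $u,v$ have degree two. Let $y=H\setminus H'$ and $y'=H'\setminus H$; since $|H|=|H'|=5$ and $|\Vcap|=4$, these are single vertices. First I pin down the local structure. By \cref{obs:H5D4:P:notwotwo}, neither $G[H]$ nor $G[H']$ has two degree-2 vertices. Hence each of $y,y'$ has at least three neighbours in $\Vcap$, and each of $u,v$ is adjacent to $y$ or to $y'$. Since $\Delta=4$, $w$ and $x$ each have room for only one neighbour outside $\Vcap$, while $u$ and $v$ have room for two. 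Up to symmetry this leaves three cases.
\begin{itemize}
\item Case 1: $N(y)=\Vcap$, so $y'$ has exactly three neighbours in $\Vcap$.
\item Case 2: $N(y)=N(y')=\{u,v,w\}$ (or symmetrically $\{u,v,x\}$).
\item Case 3: $N(y)=\{u,v,w\}$ and $N(y')=\{u,v,x\}$.
\end{itemize}
The option of both $y,y'$ avoiding $u$ is impossible, because then $u$ would have degree two in both habitats. In each case I also record which edges are forced by \cref{rr:forcededge}. I further note whether $\{y,y'\}\in E$, since this edge would raise the degrees of $y$ and $y'$ in $G[H\cup H']$.

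Next I compute the degree profile of $G[H\cup H']$ in each case, as in the $C_4$ and $Claw+e$ lemmas. In Case 1 with $N(y')\supseteq\{u,v\}$, only one or two vertices have degree below four. If the two deficient vertices both have degree three, I invoke \cref{obs:H5D4:P:docking33}; if they have degrees three and two with no common neighbour, I invoke \cref{obs:H5D4:P:docking32}. Either gives a component of order at most $|H\cup H'|+1=7$. The same applies whenever $\{y,y'\}\in E$, which saturates $y$ and $y'$. So the substantive work is Cases 2 and 3 with $\{y,y'\}\notin E$.

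In those cases, any $H''\in\bnd(\{H,H'\})$ must meet $H\cup H'$ in an unforced edge, and its new vertices can only attach to the few deficient vertices. These are $y,y'$ and whichever of $w,x$ still has free degree. For each possible neighbourhood of the new vertex or vertices, I check whether one of the following occurs:
\begin{itemize}
\item the neighbourhood forces two degree-2 vertices, contradicting \cref{obs:H5D4:P:notwotwo};
\item it produces a degree-1 vertex in the intersection;
\item it yields an induced $P_3$ or a triangle excluded by \cref{obs:H5D4:P:P3free} or \cref{obs:H5D4:P:K3free};
\item it makes every intersecting edge forced.
\end{itemize}
First I show $|H''\setminus(H\cup H')|=1$, using the fact that the deficient vertices are pairwise non-adjacent or lack common neighbours. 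The surviving configuration should be a single $y''$ adjacent to $y$ and $y'$ (possibly also to the free one of $w,x$). After adding it, at most one or two vertices are below degree four. A final application of the docking observations, or the same exclusion argument for a fourth habitat $H^*$, then shows $\bnd(\{H,H',H''\})=\emptyset$, so the order is at most $7$.

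The main obstacle is the surviving subcase where $y''$ is adjacent exactly to $y$ and $y'$. There one must exclude a fourth habitat $H^*$ through $y''$ and a deficient vertex among $w,x$. As in Case~2(c) of the $Claw+e$ proof, I would argue that $H^*$ must contain $y''$, a free vertex, and an endpoint of an unforced edge. Size five then forces two degree-2 vertices in $G[H^*]$.
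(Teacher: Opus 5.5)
Your proposal has a genuine gap at the first step, and it invalidates the case analysis.

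\textbf{The false premise.} The claim that each of $y,y'$ has at least three neighbours in $\Vcap$ does not follow from \cref{obs:H5D4:P:notwotwo}. That inference is valid for the $C_4$ intersection, where all four vertices of the intersection have degree two. Here only $u$ and $v$ do. If $N_{G[H]}(y)=\{u,v\}$, then $u,v,w,x$ all have degree three in $G[H]$ and $y$ is the only degree-2 vertex. Then $G[H]$ is the graph of \cref{fig:H5D4:P}(b)(2), which survives reduction.

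\textbf{Consequences for your case list.} It is both inconsistent and incomplete.
\begin{itemize}
\item Your Case~1 cannot occur as stated: $N(y)=\Vcap$ saturates $w$ and $x$, so $N(y')=\{u,v\}$ and $y'$ has only two neighbours.
\item Your Case~2 violates $\Delta=4$, since $w$ would be adjacent to $u,v,x,y,y'$.
\item Only your Case~3 is a genuine configuration. There \cref{obs:H5D4:P:docking33} settles it directly: if $\{y,y'\}\notin E$, exactly $y$ and $y'$ have degree three.
\end{itemize}

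\textbf{The correct enumeration.} Combine the single free slot at each of $w,x$ with \cref{obs:H5D4:P:notwotwo}. Up to symmetry the configurations are:
\begin{itemize}
\item $N(y')=\{u,v\}$ with $N(y)\in\{\Vcap,\{u,w,x\},\{u,v,w\}\}$;
\item $N(y)=\{u,v,w\}$ and $N(y')=\{u,v,x\}$ (your Case~3);
\item $N(y)=N(y')=\{u,v\}$.
\end{itemize}

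\textbf{What the missing configurations require.} In the ones you miss with $\{y,y'\}\notin E$, three or four vertices of $G[H\cup H']$ have degree below four. For $N(y)=N(y')=\{u,v\}$ this holds even when $\{y,y'\}\in E$, since then $y,y',w,x$ all have degree three. So your claim that this edge saturates $y$ and $y'$ also fails there. Neither docking observation applies, and a third habitat can genuinely exist. For $N(y)=N(y')=\{u,v\}$, take a new vertex $y''$ adjacent to $y,w,x$. Then $H''=\{y'',y,u,w,x\}$ is again a copy of \cref{fig:H5D4:P}(b)(2), and it meets $H$ in the unforced edges $\{u,w\},\{u,x\},\{w,x\}$.

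Proving the bound $7$ therefore needs an explicit argument that no fourth habitat attaches to $H\cup H'\cup H''$. Analogous subcase work is needed for $N(y)\in\{\{u,w,x\},\{u,v,w\}\}$ with $N(y')=\{u,v\}$. This is where the substance of the lemma lies, and your proposal does not address it. Indeed, the ``free one of $w,x$'' in your obstacle paragraph only arises for $N(y)=\{u,v,w\}$, $N(y')=\{u,v\}$, which your enumeration excludes.
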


\begin{proof}
 Let~$\{u,v,w,x\}$ form the~$K_4-e$ with edge~$\{x,v\}$ being missing.
 Let~$y= H\setminus H'$ and~$y'=H'\setminus H$.
 Since~$H$ and~$H'$ cannot have at least two degree-2,
 we know that each of~$y,y'$ is adjacent to at least one of~$x$ and~$v$,
 and if only to one,
 say~$x$,
 then it must contain~$u,w$ as otherwise~$y$ or~$y'$ are also of degree two.
 We distinguish the following cases
 (see~\cref{fig:H5D4:P:Intersect:K4e} and \cref{fig:H5D4:P:Intersect:K4e:22}).
	\begin{figure}[t]
	\centering
	\begin{tikzpicture}
		\def\xr{1}
		\def\yr{1}
		\def\xsh{2.75}
		\def\ysh{2.33}
		\tikzpreamble{}

		\newcommand{\nthreegadget}[3]{%
			\node (n) at (0.25*\xr,0.4*\yr)[xnode,label={[label distance=-1pt](90):{$u$}}]{};
			\node (s) at (0.25*\xr,-0.4*\yr)[xnode,label={[label distance=-1pt](-90):{$v$}}]{};
			\node (w) at (-0.25*\xr,0.0*\yr)[xnode,label={[label distance=0pt](90):{$w$~~}}]{};

			\foreach \x/\y in {#1}{\draw[xedge] (\x) to (\y);}
			\foreach \x/\y in {#2}{\draw[xedge,ultra thick] (\x) to (\y);}
			\foreach \x/\y in {#3}{\draw[xsedge] (\x) to (\y);}
		}
		\newcommand{\nfourgadget}[3]{%
			\node (ne) at (0.25*\xr,0.4*\yr)[xnode,label={[label distance=-1pt](90):{$u$}}]{};
			\node (se) at (0.25*\xr,-0.4*\yr)[xnode,label={[label distance=-1pt](-90):{$v$}}]{};
			\node (sw) at (-0.25*\xr,-0.4*\yr)[xnode,label={[label distance=0pt](-90):{$w$~~}}]{};
			\node (nw) at (-0.25*\xr,0.4*\yr)[xnode,label={[label distance=0pt](90):{$x$~~}}]{};

			\foreach \x/\y in {#1}{\draw[xedge] (\x) to (\y);}
			\foreach \x/\y in {#2}{\draw[xedge,ultra thick] (\x) to (\y);}
			\foreach \x/\y in {#3}{\draw[xsedge] (\x) to (\y);}
		}
		\newcommand{\xlabel}[1]{\node at (-0.5*\xr,1.25*\yr)[anchor=east]{(#1)};}

		\newcommand{\currH}{\nfourgadget{nw/ne,ne/se,se/sw,sw/nw,sw/ne}{}{}}

		\begin{scope}[xshift=0.0*\xsh*\xr cm,yshift=-1*\ysh*\yr cm]
			\currH{}
			\node (yA) at ($(nw)!0.5!(sw)+(-0.5,-0.0)$)[xnodeA,label=-90:{$y$}]{};
			\node (yB) at ($(ne)!0.5!(se)+(+0.5,-0.0)$)[xnodeB,label=-90:{$y'$}]{};

			\foreach \x in {nw,ne,se,sw}{
				\draw[xedgeA] (yA) -- (\x);
			}
			\foreach \x in {nw,se}{
				\draw[xedgeB] (yB) -- (\x);
			}
			\node at (0,-1*\yr)[]{$\downarrow$};
			\node at (0,-1*\yr)[color=red,thick]{$\times$};
		\end{scope}

		\begin{scope}[xshift=1*\xsh*\xr cm,yshift=-1*\ysh*\yr cm]
			\nfourgadget{nw/ne,sw/nw,sw/ne}{ne/se,se/sw}{}
			\node (yA) at ($(nw)!0.5!(sw)+(-0.5,-0.0)$)[xnodeA,label=-90:{$y$}]{};
			\node (yB) at ($(ne)!0.5!(se)+(+0.5,-0.0)$)[xnodeB,label=-90:{$y'$}]{};

			\foreach \x in {nw,ne,sw}{
				\draw[xedgeA] (yA) -- (\x);
			}
			\foreach \x in {nw,se}{
				\draw[xedgeB,ultra thick] (yB) -- (\x);
			}
			\draw[gray,dotted] (yA) -- (yB);
			\node at (0,-1*\yr)[]{$\downarrow$};
		\end{scope}

		\begin{scope}[xshift=1*\xsh*\xr cm,yshift=-2*\ysh*\yr cm]
			\nfourgadget{nw/ne,sw/nw,sw/ne}{ne/se,se/sw}{}
			\node (yA) at ($(nw)!0.5!(sw)+(-0.5,-0.0)$)[xnodeA,label=90:{$y$}]{};
			\node (yB) at ($(ne)!0.5!(se)+(+0.5,-0.0)$)[xnodeB,label=90:{$y'$}]{};

			\foreach \x in {nw,ne,sw}{
				\draw[xedgeA] (yA) -- (\x);
			}
			\foreach \x in {nw,se}{
				\draw[xedgeB,ultra thick] (yB) -- (\x);
			}
			\node (z) at ($(sw)!0.5!(se)+(0.5,-0.6*\yr)$)[xnodeC,label=0:{$y''$}]{};
			\draw[xedgeC,ultra thick] (z) -- (se);
			\draw[xedgeC,ultra thick] (yA) to [out=-90,in=180](z);
			\draw[gray,dashed] (yB) -- (z);
		\end{scope}

		\begin{scope}[xshift=2*\xsh*\xr cm,yshift=-1*\ysh*\yr cm]
			\currH{}
			\node (yA) at ($(nw)!0.5!(sw)+(-0.5,-0.0)$)[xnodeA,label=-90:{$y$}]{};
			\node (yB) at ($(ne)!0.5!(se)+(+0.5,-0.0)$)[xnodeB,label=-90:{$y'$}]{};

			\foreach \x in {nw,se,sw}{
				\draw[xedgeA] (yA) -- (\x);
			}
			\foreach \x in {nw,se}{
				\draw[xedgeB] (yB) -- (\x);
			}
			\draw[gray,dotted] (yA) -- (yB);
			\node at (0,-1*\yr)[]{$\downarrow$};
		\end{scope}

		\begin{scope}[xshift=2*\xsh*\xr cm,yshift=-2*\ysh*\yr cm]
			\currH{}
			\node (yA) at ($(nw)!0.5!(sw)+(-0.5,-0.0)$)[xnodeA,label=-90:{$y$}]{};
			\node (yB) at ($(ne)!0.5!(se)+(+0.5,-0.0)$)[xnodeB,label=-90:{$y'$}]{};

			\foreach \x in {nw,se,sw}{
				\draw[xedgeA] (yA) -- (\x);
			}
			\foreach \x in {nw,se}{
				\draw[xedgeB, ultra thick] (yB) -- (\x);
			}
			\node (z) at ($(nw)!0.5!(ne)+(0.5,0.6*\yr)$)[xnodeC,label=0:{$y''$}]{};
			\draw[xedgeC, ultra thick] (z) -- (ne);
			\draw[xedgeC, ultra thick] (yA) to [out=90,in=180](z);

			\draw[gray, dashed] (z) -- (yB);
		\end{scope}

		\begin{scope}[xshift=3*\xsh*\xr cm,yshift=-1*\ysh*\yr cm]
			\currH{}
			\node (yA) at ($(nw)!0.5!(sw)+(-0.5,-0.0)$)[xnodeA,label=-90:{$y$}]{};
			\node (yB) at ($(ne)!0.5!(se)+(+0.5,-0.0)$)[xnodeB,label=-90:{$y'$}]{};

			\foreach \x in {nw,se,sw}{
				\draw[xedgeA] (yA) -- (\x);
			}
			\foreach \x in {nw,se,ne}{
				\draw[xedgeB] (yB) -- (\x);
			}
			\draw[gray,dotted] (yA) -- (yB);
			\node at (0,-1*\yr)[]{$\downarrow$};
			\node at (0,-1*\yr)[color=red,thick]{$\times$};
		\end{scope}

	\end{tikzpicture}
	\caption{($K_4-e$)-Intersections.}
	\label{fig:H5D4:P:Intersect:K4e}
	\end{figure}

	\xcase{1}{$N_{G[H]}(y)=\{u,v,w,x\}$}
	Then~$N_{G[H]}(y')=\{x,v\}$.
	Hence,
	$y'$ is the only vertex of degree smaller than four in~$G[H\cup H']$,
	and thus~$\bnd(\{H,H'\})=\emptyset$.

	\xcase{2}{$N_{G[H]}(y)=\{x,u,w\}$ and~$N_{G[H]}(y')=\{x,v\}$}
	In this case,
	the edges~$\{v,w\},\{v,u\},\{x,y'\},\{v,y'\}$ are forced.
	Let~$H''\in\bnd(\{H, H'\})$.
	If~$\{y,y'\}\in E$,
	then~$H''\setminus(H\cup H')$ must be adjacent with~$v,y'$,
	but then \cref{obs:H5D4:P:K3free} applies.
	Let~$\{y,y'\}\not\in E$.
	Note that~$|H''\setminus(H\cup H')|<3$,
	since the only edge it could intersect with is~$\{v,y'\}$,
	which is forced.
	Let~$|H''\setminus(H\cup H')|=2$.
	$H''\setminus(H\cup H')$ cannot be adjacent with~$y,v$ both would have degree two in $G[H'']$.
	Thus,
	$H''$ must contain~$y'$,
	and exactly one of~$y,v$.
	If~$H''$ contains~$v$,
	then,
	since~$v$ and~$y'$ have no common neighbor,
	$H''$ contains at least two degree-at-most-2 vertices.
	If $H''$ contains~$y$,
	then it contains~$x$,
	all of which have degree two.

	Let~$y''=H''\setminus(H\cup H')$.
	If~$y''$ is adjacent exactly with~$v,y'$ or with~$y,y'$,
	then \cref{obs:H5D4:P:P3free} or \cref{obs:H5D4:P:K3free} applies.
	Let~$y''$ be adjacent exactly with~$y,v$.
	Let~$H^*\in\bnd(\{H, H', H''\})$.
	Then~$H^*\setminus (H\cup H'\cup H'')$ must be adjacent with~$y',y''$.
	If~$|H^*\setminus (H\cup H'\cup H'')|=1$,
	then \cref{obs:H5D4:P:K3free} or \cref{obs:H5D4:P:P3free} applies.
	If~$|H^*\setminus (H\cup H'\cup H'')|\geq 2$,
	then there is no unforced edge to intersect with.

	\xcase{3}{$N_{G[H]}(y)=\{x,v,w\}$ and~$N_{G[H]}(y')=\{x,v\}$}
	Let~$H''\in\bnd(\{H,H'\})$.
	If~$\{y,y'\}\in E$,
	then~$H''\setminus (H\cup H')$ is adjacent with~$u,y'$
	(each of which is of degree three in~$G[H\cup H']$ in this case),
	and hence,
	$|H''\setminus(H\cup H')|=1$ and \cref{obs:H5D4:P:P3free} applies.
	Let~$\{y,y'\}\not\in E$.
	Assume that $H''\setminus (H\cup H')$ is adjacent with~$y,u,y'$.
	If~$|H''\setminus(H\cup H')|=2$,
	then~$y,u$ are of degree one.
	Let~$y''=H''\setminus(H\cup H')$.
	Then,
	since only one of~$x,v$ must and can be in~$H''$,
	both~$y,y''$ are of degree two.
	Assume that $H''\setminus (H\cup H')$ is adjacent with~$u,y'$ resp.\ with~$y,y'$.
	Note that~$|H''\setminus(H\cup H')|<3$.
	If~$|H''\setminus(H\cup H')|=2$,
	then one of $H''\setminus(H\cup H')$ and~$u$ resp.\ $y$ is of degree two.
	If $|H''\setminus(H\cup H')|=1$,
	then~\cref{obs:H5D4:P:P3free} applies.
	Finally, assume that $H''\setminus (H\cup H')$ is adjacent with~$y,u$.
	Note that~$y''=H''\setminus(H\cup H')$.
	Assume that there is~$H^*\in\bnd(\{H,H',H''\})$.
	Then~$H^*\setminus (H\cup H'\cup H'')$ must be adjacent with~$y',y''$,
	and~$|H^*\setminus (H\cup H'\cup H'')|=1$ since~$y',y''$ have no common neighbor.
	Thus,
	\cref{obs:H5D4:P:K3free} or \cref{obs:H5D4:P:P3free} applies.

	\xcase{4}{$N_{G[H]}(y)=\{x,v,w\}$ and~$N_{G[H]}(y')=\{x,u,v\}$}
	Note that if~$\{y,y'\}\in E$,
	all vertices have degree four and no third habitat with vertex outside $H\cup H'$ can intersect.
	Thus,
	let $\{y,y'\}\not\in E$.
	Assume that there is~$H''\in\bnd(\{H,H'\})$.
	Since all vertices are of degree at least three,
	we have $|H''\setminus(H\cup H')|=1$,
	and thus,
	\cref{obs:H5D4:P:P3free} applies;
	a contradiction.

	\xcase{5}{$N_{G[H]}(y)=N_{G[H]}(y')=\{x,v\}$}
	In this case,
	the edges~$\{y,x\}$, $\{y,v\}$, $\{y',x\}$, and $\{y',v\}$ are forced.
	Let~$H''\in\bnd(\{H, H'\})$.
	We distinguish whether edge~$\{y,y'\}$ is in~$E$
	(see \cref{fig:H5D4:P:Intersect:K4e:22}).
	\begin{figure}[t]
	\centering
	\begin{tikzpicture}
		\def\xr{1}
		\def\yr{1}
		\def\xsh{2.75}
		\def\ysh{2.33}
		\tikzpreamble{}

		\newcommand{\nfourgadget}[3]{%
			\node (ne) at (0.25*\xr,0.4*\yr)[xnode,label={[label distance=-1pt](90):{$u$}}]{};
			\node (se) at (0.25*\xr,-0.4*\yr)[xnode,label={[label distance=-1pt](-90):{$v$}}]{};
			\node (sw) at (-0.25*\xr,-0.4*\yr)[xnode,label={[label distance=0pt](-90):{$w$~~}}]{};
			\node (nw) at (-0.25*\xr,0.4*\yr)[xnode,label={[label distance=0pt](90):{$x$~~}}]{};

			\foreach \x/\y in {#1}{\draw[xedge] (\x) to (\y);}
			\foreach \x/\y in {#2}{\draw[xedge,ultra thick] (\x) to (\y);}
			\foreach \x/\y in {#3}{\draw[xsedge] (\x) to (\y);}
		}
		\newcommand{\xlabel}[1]{\node at (-0.5*\xr,1.25*\yr)[anchor=east]{(#1)};}

		\newcommand{\currH}{\nfourgadget{nw/ne,ne/se,se/sw,sw/nw,sw/ne}{}{}}

		\begin{scope}[xshift=3*\xsh*\xr cm,yshift=-1*\ysh*\yr cm]
		  \xlabel{b}
			\currH{}
			\node (yA) at ($(nw)!0.5!(sw)+(-0.5,-0.0)$)[xnodeA,label=-90:{$y$}]{};
			\node (yB) at ($(ne)!0.5!(se)+(+0.5,-0.0)$)[xnodeB,label=-90:{$y'$}]{};

			\foreach \x in {nw,se}{
				\draw[xedgeA,ultra thick] (yA) -- (\x);
			}
			\foreach \x in {nw,se}{
				\draw[xedgeB,ultra thick] (yB) -- (\x);
			}
			\draw[gray] (yA) -- (yB);
			\node (z) at ($(yA)!0.5!(nw)+(0,0.6*\yr)$)[xnodeC,label=180:{$y''$}]{};
			\draw[xedgeC] (yA) -- (z) -- (ne);
			\draw[xedgeC] (sw) -- (z);
		\end{scope}

		\begin{scope}[xshift=4*\xsh*\xr cm,yshift=-1*\ysh*\yr cm]
			\currH{}
			\node (yA) at ($(nw)!0.5!(sw)+(-0.5,-0.0)$)[xnodeA,label=-90:{$y$}]{};
			\node (yB) at ($(ne)!0.5!(se)+(+0.5,-0.0)$)[xnodeB,label=-90:{$y'$}]{};

			\foreach \x in {nw,se}{
				\draw[xedgeA,ultra thick] (yA) -- (\x);
			}
			\foreach \x in {nw,se}{
				\draw[xedgeB,ultra thick] (yB) -- (\x);
			}
			\draw[gray] (yA) -- (yB);
			\node (z) at ($(nw)!0.5!(ne)+(0,0.6*\yr)$)[xnodeC,label=00:{$y''$}]{};
			\draw[xedgeC] (yA) -- (z) -- (yB);
			\draw[xedgeC] (ne) -- (z);
		\end{scope}

		\begin{scope}[xshift=1*\xsh*\xr cm,yshift=-1*\ysh*\yr cm]
			\xlabel{a}
			\currH{}
			\node (yA) at ($(nw)!0.5!(sw)+(-0.5,-0.0)$)[xnodeA,label=-90:{$y$}]{};
			\node (yB) at ($(ne)!0.5!(se)+(+0.5,-0.0)$)[xnodeB,label=-90:{$y'$}]{};

			\foreach \x in {nw,se}{
				\draw[xedgeA,ultra thick] (yA) -- (\x);
			}
			\foreach \x in {nw,se}{
				\draw[xedgeB,ultra thick] (yB) -- (\x);
			}
			\node (z) at ($(yA)!0.5!(nw)+(0,0.6*\yr)$)[xnodeC,label=180:{$y''$}]{};
			\draw[xedgeC] (yA) -- (z) -- (ne);
			\draw[xedgeC] (sw) -- (z);
		\end{scope}

		\begin{scope}[xshift=2*\xsh*\xr cm,yshift=-1*\ysh*\yr cm]
			\currH{}
			\node (yA) at ($(nw)!0.5!(sw)+(-0.5,-0.0)$)[xnodeA,label=-90:{$y$}]{};
			\node (yB) at ($(ne)!0.5!(se)+(+0.5,-0.0)$)[xnodeB,label=-90:{$y'$}]{};

			\foreach \x in {nw,se}{
				\draw[xedgeA,ultra thick] (yA) -- (\x);
			}
			\foreach \x in {nw,se}{
				\draw[xedgeB,ultra thick] (yB) -- (\x);
			}
			\node (z) at ($(yA)!0.5!(nw)+(0,0.6*\yr)$)[xnodeC,label=90:{$y''$}]{};
			\draw[xedgeC] (yA) -- (z) -- (ne);
			\draw[xedgeC] (sw) -- (z);
			\draw[gray] (yB) to [out=120,in=0](z);
		\end{scope}

	\end{tikzpicture}
	\caption{($K_4-e$)-Intersections where~$\deg_{G[H]}(y)=\deg_{G[H']}(y')=2$.}
	\label{fig:H5D4:P:Intersect:K4e:22}
	\end{figure}

	\xsubcase{a}{$\{y,y'\}\not\in E$ (see~\cref{fig:H5D4:P:Intersect:K4e:22}(a))}
	Note that~$H''\setminus(H\cup H')$ cannot be adjacent with all four vertices~$y,y',u,w$
	since both~$y,y'$ would have degree one.
	If~$H''\setminus(H\cup H')$ is adjacent with exactly three,
	then~$|H''\setminus(H\cup H')|=1$,
	and exactly one of~$x$ or~$v$ must and can be in~$H''$
	(since otherwise there are degree-1 vertices).
	But then,
	$H''\setminus(H\cup H')$ cannot be adjacent with~$y,y',u$,
	since then~$y,y'$ are of degree two.

	Assume that $H''\setminus(H\cup H')$ is adjacent with~$y,u,w$.
	Note that $|H''\setminus(H\cup H')|<2$
	since~$u$ and~$w$ are of degree three in~$G[H\cup H']$ and not neighboring~$y$.
	Let~$y''=H''\setminus(H\cup H')$.
	Let~$H^*\in\bnd(\{H,H',H''\})$.
	Note that~$|H^*\setminus(H\cup H'\cup H'')|<2$
	since no two of $y,y',y''$ have a common neighbor and only~$y'$ is of degree two.
	Let $y^*=H^*\setminus(H\cup H'\cup H'')$.
	Let $\{y',y''\}\not\in E$.
	$y^*$ cannot be adjacent with~$y,y',y''$,
	since each choice of one of~$x,u,w,v$ results in a degree-1 vertex.
	If~$y^*$ is adjacent with~$y',y''$ or with~$y,y'$,
	then \cref{obs:H5D4:P:P3free} applies.
	If~$y^*$ is adjacent with~$y,y''$,
  then \cref{obs:H5D4:P:K3free} applies.
  Let~$\{y',y''\}\in E$.
	Then~$y^*$ must be adjacent with~$y,y'$,
	and hence,
	\cref{obs:H5D4:P:P3free} applies.
  Either way yields a contradiction.

	Assume that $H''\setminus(H\cup H')$ is adjacent with~$y,y'$.
	Note that $|H''\setminus(H\cup H')|<3$,
	since~$\{y,y'\}\not\in E$.
	Let~$|H''\setminus(H\cup H')|=2$.
	Then, either~$x\in H''$ or~$v\in H''$.
	Either case,
	$H''$ only intersects on forced edges;
	a contradiction.
	Let~$|H''\setminus(H\cup H')|=1$.
	Then,
	\cref{obs:H5D4:P:P3free} applies;
	a contradiction.

	Assume that $H''\setminus(H\cup H')$ is adjacent with~$y,u$.
	Note that $|H''\setminus(H\cup H')|<2$,
	since otherwise~$u$ and a vertex from $H''\setminus(H\cup H')$
	are of degree two in~$G[H'']$.
	If~$|H''\setminus(H\cup H')|=1$,
	then \cref{obs:H5D4:P:P3free} applies;
	a contradiction.

	Assume that $H''\setminus(H\cup H')$ is adjacent with~$u,w$.
	Note that since~$u$ and~$w$ have degree three in~$G[H\cup H']$,
	it holds that~$|H''\setminus(H\cup H')|=1$.
	In this case,
	\cref{obs:H5D4:P:K3free} applies;
	a contradiction.

	\xsubcase{b}{$\{y,y'\}\in E$ (see~\cref{fig:H5D4:P:Intersect:K4e:22}(b))}
	Note that~$H''\setminus(H\cup H')$ cannot be adjacent with all four vertices~$y,y',u,w$
	since then both~$y,y'$ would have degree two.
	If~$H''\setminus(H\cup H')$ is adjacent with three,
	then~$|H''\setminus(H\cup H')|=1$,
	and exactly one of~$x$ or~$v$ must and can be in~$H''$
	(since otherwise there are degree-1 vertices).

	Assume that $H''\setminus(H\cup H')$ is adjacent with~$y,y',u$.
	Note that $|H''\setminus(H\cup H')|<2$
	since~$u$ is of degree three in~$G[H\cup H']$ and not neighboring~$y$ or~$y'$.
	Let~$H''\setminus(H\cup H')=y''$.
	Assume that there is~$H^*\in\bnd(\{H, H', H''\})$.
	Then~$H^*\setminus(H\cup H')$ must be adjacent with~$w,y''$,
	and~$|H^*\setminus (H\cup H'\cup H'')|=1$ since both~$w,y''$ are of degree three in~$G[H\cup H'\cup H'']$.
	Thus,
	\cref{obs:H5D4:P:P3free} applies;
	a contradiction.

	Assume that $H''\setminus(H\cup H')$ is adjacent with~$y,u,w$.
	Note that $|H''\setminus(H\cup H')|<2$,
	since~$y$ is of degree three in $G[H\cup H']$ and not neighboring~$u$ or~$w$.
	Let~$H''\setminus(H\cup H')=y''$.
	Assume that there is~$H^*\in\bnd(\{H, H', H''\})$.
	Then~$H^*\setminus(H\cup H')$ must be adjacent with~$y',y''$,
	and~$|H^*\setminus (H\cup H'\cup H'')|=1$ since both~$y',y''$ are of degree three in~$G[H\cup H'\cup H'']$.
	Thus,
	\cref{obs:H5D4:P:P3free} applies;
	a contradiction.

	Assume that $H''\setminus(H\cup H')$ is adjacent with~$y,y'$.
	Since~$y$ and~$y'$ have degree three in~$G[H\cup H']$,
	it holds that~$|H''\setminus(H\cup H')|=1$.
	Then,
	\cref{obs:H5D4:P:K3free} applies;
	a contradiction.

	Assume that $H''\setminus(H\cup H')$ is adjacent with~$y,u$.
	Since~$y$ and~$u$ have degree three in~$G[H\cup H']$,
	it holds that~$|H''\setminus(H\cup H')|=1$.
	Then,
	\cref{obs:H5D4:P:P3free} applies;
	a contradiction.

	Assume that $H''\setminus(H\cup H')$ is adjacent with~$u,w$.
	Since~$u$ and~$w$ have degree three in~$G[H\cup H']$,
	it holds that~$|H''\setminus(H\cup H')|=1$.
	Then,
	\cref{obs:H5D4:P:K3free} applies;
	a contradiction.
\end{proof}

\begin{lemma}\label{lem:H5D4:P:K4}
 If~$\Gcap$ is isomorphic to the~$K_4$,
 then the union of habitats in the connected component in~$\calG_{G,F^*\calH}$ containing~$H$ and~$H'$ is of order at most 6.
\end{lemma}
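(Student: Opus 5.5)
Let $\Vcap=\{u,v,w,x\}$ induce the $K_4$, and let $y=H\setminus H'$ and $y'=H'\setminus H$. In $G[\Vcap]$ every vertex already has degree three, and $\Delta=4$. Hence each vertex of $\Vcap$ has at most one neighbor outside $\Vcap$. Since $G[H]$ is 2-connected, $y$ has at least two neighbors in $\Vcap$, and likewise $y'$. The neighborhoods $N_{G[H]}(y)\cap\Vcap$ and $N_{G[H']}(y')\cap\Vcap$ are therefore disjoint, and each has size exactly two. So, \Wlog{}, $N_{G[H]}(y)=\{u,v\}$ and $N_{G[H']}(y')=\{w,x\}$, where we also use that $|\Vcap|=4$ and that $y,y'$ each need two neighbors. (If $y$ had three or more neighbors in $\Vcap$, then $y'$ could have at most one, contradicting the 2-connectivity of $G[H']$.)

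Then $y$ has degree two in $G[H]$, so $\{y,u\},\{y,v\}$ are forced, and symmetrically $\{y',w\},\{y',x\}$ are forced. After this, every vertex of $\Vcap$ has degree four in $G[H\cup H']$. Only $y$ and $y'$ may have further neighbors; each has at most two more, or one more if $\{y,y'\}\in E$.

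Next I would show that $\bnd(\{H,H'\})=\emptyset$. Let $H''\in\bnd(\{H,H'\})$. The set $H''$ must intersect $H\cup H'$ in a way that makes $\{H'',H^\circ\}$ an edge of $\calG_{G,\Fin,\calH}$ for some $H^\circ\in\{H,H'\}$. That is, $G[H'']$ must contain an unforced edge of $G[H]$ or $G[H']$, and by the above any such edge lies inside $\Vcap$. So $H''$ contains two adjacent vertices $a,b\in\Vcap$. Every vertex in $H''\setminus(H\cup H')$ is non-adjacent to all of $\Vcap$; only $y$ and $y'$ connect $\Vcap$ to the outside. I would then case on $|H''\cap\Vcap|\in\{2,3,4\}$, using $|H''|\le 5$ and the fact that $H''\setminus(H\cup H')\neq\emptyset$.

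\xcase{(a)}{$|H''\cap\Vcap|=4$} Then $H''$ contains one outside vertex $z$. The vertex $z$ can only reach $\Vcap$ via $y$ or $y'$, neither of which lies in $H''$. So $z$ is isolated in $G[H'']$.

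\xcase{(b)}{$|H''\cap\Vcap|=3$} The outside part has at most two vertices and must attach through $y$ or $y'$. For 2-connectivity these have to provide two disjoint routes back into $\Vcap$.

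\xcase{(c)}{$|H''\cap\Vcap|=2$} Here $a,b$ need neighbors in $H''$ other than each other, and these must be $y$ or $y'$.

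In cases (b) and (c) I expect contradictions from the following: a cut vertex among $y,y'$ (since $y$ has only $u,v$ in $\Vcap$), a degree-two vertex pair as in \cref{obs:H5D4:P:notwotwo}, or an intersection only on forced edges. Where a single extra vertex attaches to exactly two vertices, I would invoke \cref{obs:H5D4:P:P3free} or \cref{obs:H5D4:P:K3free}. Once $\bnd(\{H,H'\})=\emptyset$, the component consists of habitats inside $H\cup H'$, giving order at most $6$.

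The main obstacle is case (b) and (c) bookkeeping: $H''$ may contain both $y$ and $y'$, for instance $H''=\{u,w,y,y',z\}$ or $\{u,v,w,y,y'\}$. Here one must check carefully that $G[H'']$ contains no unforced edge of $H$ or $H'$. For example, in $\{u,v,w,y,y'\}$, $y'$ has only the neighbor $w$ (plus possibly $y$), which forces degree-two vertices and hence the edges. I would go through these few configurations explicitly, deriving for each either a degree-one vertex or two degree-two vertices, the latter contradicting \cref{obs:H5D4:P:notwotwo}.
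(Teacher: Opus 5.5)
Your proposal is correct and is essentially the paper's argument. You show that $y$ and $y'$ have disjoint two-element neighborhoods in $\Vcap$, so $\Vcap$ is saturated in $G[H\cup H']$ and $y,y'$ are the only gateways to the outside. The paper then notes directly that 2-connectivity forces $\{y,y'\}\subseteq H''$, which rules out your cases (a) and (b) at once; then $|H''|\le 5$ leaves exactly two adjacent vertices of $\Vcap$ in $H''$, both of degree two, contradicting \cref{obs:H5D4:P:notwotwo}.

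One side remark in your last paragraph is wrong, though harmlessly so. The set $\{u,v,w,y,y'\}$ is not in $\bnd(\{H,H'\})$, since it has no vertex outside $H\cup H'$. If $\{y,y'\}\in E$, it is in fact a legitimate habitat: it is isomorphic to the second graph in \cref{fig:H5D4:P}(b), with only $y'$ of degree two. It shares the unforced edges $\{u,v\},\{u,w\},\{v,w\}$ with $H$, so the check you describe (a degree-one vertex or two degree-two vertices) would fail there. This does not affect the lemma, because such a habitat does not enlarge the union.
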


\begin{proof}
 Let~$\{u,v,w,x\}$ form the~$K_4$.
 Let~$y= H\setminus H'$ and~$y'=H'\setminus H$.
 We know that each of~$y,y'$ is adjacent to exactly of $\{u,v,w,x\}$,
 and~$y$ and~$y'$ do not have a common neighbor in $\{u,v,w,x\}$.
 By symmetry,
 we can assume that~$N_{G[H]}(y)=\{x,w\}$ and $N_{G[H]}(y')=\{u,v\}$.
 (see~\cref{fig:H5D4:P:Intersect:K4}).
	\begin{figure}[t]
	\centering
	\begin{tikzpicture}
		\def\xr{1}
		\def\yr{1}
		\def\xsh{3}
		\def\ysh{2.5}
		\tikzpreamble{}

		\newcommand{\nfourgadget}[3]{%
			\node (ne) at (0.25*\xr,0.4*\yr)[xnode,label={[label distance=-1pt](90):{$u$}}]{};
			\node (se) at (0.25*\xr,-0.4*\yr)[xnode,label={[label distance=-1pt](-90):{$v$}}]{};
			\node (sw) at (-0.25*\xr,-0.4*\yr)[xnode,label={[label distance=0pt](-90):{$w$~~}}]{};
			\node (nw) at (-0.25*\xr,0.4*\yr)[xnode,label={[label distance=0pt](90):{$x$~~}}]{};

			\foreach \x/\y in {#1}{\draw[xedge] (\x) to (\y);}
			\foreach \x/\y in {#2}{\draw[xfedge] (\x) to (\y);}
			\foreach \x/\y in {#3}{\draw[xsedge] (\x) to (\y);}
		}
		\newcommand{\xlabel}[1]{\node at (-0.5*\xr,1.25*\yr)[anchor=east]{(#1)};}

		\newcommand{\currH}{
		\nfourgadget{ne/se,sw/nw,nw/ne,ne/sw,nw/se}{}{}
		\node (yA) at ($(nw)!0.5!(sw)+(-0.5*\xr,-0.0)$)[xnodeA]{};
		\node (yB) at ($(ne)!0.5!(se)+(+0.5*\xr,-0.0)$)[xnodeB]{};
		}

		\begin{scope}[xshift=4*\xsh*\xr cm,yshift=-1*\ysh*\yr cm]
		\nfourgadget{ne/se,se/sw,sw/nw,nw/ne,ne/sw,nw/se}{}{}
		\node (yA) at ($(nw)!0.5!(sw)+(-0.5*\xr,-0.0)$)[xnodeA,label=90:{$y$}]{};
		\node (yB) at ($(ne)!0.5!(se)+(+0.5*\xr,-0.0)$)[xnodeB,label=90:{$y'$}]{};

		\foreach \x in {sw,nw}{
			\draw[xedgeA,ultra thick] (yA) -- (\x);
		}
		\foreach \x in {se,ne}{
			\draw[xedgeB,ultra thick] (yB) -- (\x);
		}
		\draw[gray,dotted] (yA) -- (yB);
		\end{scope}

	\end{tikzpicture}
	\caption{$K_4$-Intersection.}
	\label{fig:H5D4:P:Intersect:K4}
	\end{figure}
	Assume that there is~$H''\in\bnd(\{H,H'\})$.
	Note that~$\{y,y'\}\subseteq H''$.
	Note that~$|H''\setminus(H\cup H')|\leq 1$,
	since if~$|H''\setminus(H\cup H')|\geq 2$,
	then~$y,y'$ have degree at most two
	(recall that~$y,y'$ have no common neighbor in~$G[H\cup H']$).
	Let~$y''=H''\setminus(H\cup H')$.
	Then,
	$H''\cap \{u,v,w,x\} = 2$,
	and these two have degree two.
\end{proof}

We are finally set to prove our main result.

\begin{proof}[Proof of \cref{prop:H5D4:P}]
	Let~$\calG_{G,F^*,H}$ be the habitat graph.
	For each connected component~$C$ containing a habitat of size five.
	Due to~\cref{lem:H5D4:P:K5,lem:H5D4:P:C4,lem:H5D4:P:Claw,lem:H5D4:P:K4e,lem:H5D4:P:K4},
	we know that the order of the union of the habitats in the connected component is at most 8.
	Hence,
	we can compute in polynomial time
	via brute force a minimum-cost solution
	on each of the corresponding edge subsets.

  After this,
  each connected component contains only habitats of size four.
  Due to \cref{prop:H4D4:P},
  we know that we can solve this case in polynomial time.
\end{proof}
\section{Habitats of size six}
\label{sec:six}
	\subsection{\(\eta=6\) and \(\Delta\geq 5\)}
\begin{proposition}
	\label{prop:H6D5:NPhard}
 \brgbpAcr{} is \NP-hard even if~$\eta=6$ and~$\Delta\geq 5$.
\end{proposition}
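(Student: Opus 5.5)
We give a polynomial-time many-one reduction, most naturally from \tttsatAcr{} (\probref{ttts}), following the template of \cref{prop:H5D6:NPhard}; \hcvcAcr{} in the style of \cref{prop:H4D7:NPhard} is the fallback. The idea is to trade one unit of degree for one extra habitat vertex. The degree-six vertices in \cref{constr:H5D6:NPhard} are $w_j$ (six neighbours in $W_j$) and $t_i,f_i$ (two neighbours in $V_i$ plus four towards the clause gadgets). With one more vertex per habitat available, we split each such hub into two adjacent copies of degree at most five.

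For a variable gadget we keep the wheel-like habitat $H_{x_i}$ of size five. Its unforced edges are the two non-adjacent edges $\{t_i^1,t_i^2\}$ and $\{f_i^1,f_i^2\}$, and exactly one of them is selected. The propagation habitat $H_{x_i,C_j}$ becomes a size-six habitat: it takes $t_i^1,t_i^2$, a private connector pair (or a connector vertex of degree at most five), and the two clause vertices $u_{j,p}^1,u_{j,p}^2$. Its induced graph is arranged so that everything except $\{t_i^1,t_i^2\}$ and $\{u_{j,p}^1,u_{j,p}^2\}$ is forced by auxiliary triangles or by degree-two vertices (\cref{rr:forcededge}). The arrangement must also ensure that $G[H_{x_i,C_j},F]$ is 2-connected if and only if at least one of these two edges is in $F$. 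For a clause gadget we replace the single centre $w_j$ by two adjacent centres $w_j,w_j'$: $w_j$ sees $u_{j,p}^1$ and $w_j'$ sees $u_{j,p}^2$ for $p\in\{1,2,3\}$, giving each centre degree four. The habitat $H_{C_j}^{-p}$ then consists of $\{w_j,w_j'\}$ together with the four $u$-vertices of the other two literals. This has size six, and its only unforced edges are the two rungs $\{u_{j,q}^1,u_{j,q}^2\}$ for $q\neq p$, both of which are needed; hence at least two rungs per clause are bought.

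The proof then follows the same three steps as before. First, we verify maximum degree five by a vertex-by-vertex count as in \cref{lemma:H5D6:NPhard:degree}. Second, we prove a forcing lemma analogous to \cref{lemma:H5D6:NPhard}: all non-rung edges lie in every solution, at least one of $E_{x_i}^*$ and at least two of $E_{C_j}^*$ are chosen, and with budget $k=|E|-N-M$ exactly those numbers are chosen. Third, we prove the equivalence. For the forward direction, the omitted rung of each clause is the one of a true literal, and then the corresponding $H_{x_i,C_j}$ is 2-connected through the variable edge. For the backward direction, an omitted rung whose literal is false leaves $H_{x_i,C_j}$ with a cut vertex, contradicting feasibility.

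The main obstacle is the local design. First, each size-six propagation habitat must be 2-connected exactly when one of its two designated edges is present, and no other combination of edges may suffice, while every vertex stays at degree at most five even though $t_i^1$ and $t_i^2$ now serve two clause appearances each. Second, every auxiliary habitat that forces an edge must itself be 2-connected in $G[H]$, as \cref{rr:trivialno} requires. I would design the gadget by trying to let each $t_i^\ell$ attach to each clause copy through a single forced edge. I would then find the remaining forcing via triangles through a fresh private vertex of degree two, which costs degree only at that private vertex. Finally, I would check degrees after the whole gadget is fixed, since this is where the count is most likely to fail and force a redesign.
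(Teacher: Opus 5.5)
Your proposal does not yet contain a reduction. The propagation habitat $H_{x_i,C_j}$, on which both directions of the equivalence rest, is left undesigned, and with the clause gadget you do fix, it cannot be built within degree five.

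The root cause is an incomplete diagnosis of \cref{constr:H5D6:NPhard}. Besides $w_j$, $t_i$ and $f_i$, also $t_i^\ell$, $f_i^\ell$ and every $u_{j,p}^\ell$ have degree six there, and splitting hubs does not touch them. Take $u_{j,p}^1$ with $\{p,q,q'\}=\{1,2,3\}$.
\begin{itemize}
\item In each of $G[H_{C_j}^{-q}]$ and $G[H_{C_j}^{-q'}]$ it needs a neighbour outside $\{w_j,u_{j,p}^2\}$, since otherwise it has degree two there and \cref{rr:forcededge} forces the rung.
\item In your design ($w_j'$ sees only level-2 vertices) these two neighbours come from the two different other literal pairs. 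So $u_{j,p}^1$ already has four neighbours inside the clause gadget.
\item In $H_{x_i,C_j}$ the rung is one of the two designated unforced edges. By the same rule, $u_{j,p}^1$ needs at least two neighbours there besides $u_{j,p}^2$.
\item The remaining vertices of that habitat are $t_i^1$, $t_i^2$ and private connectors, all outside the clause gadget.
\end{itemize}
Hence $\deg(u_{j,p}^1)\geq 6$ for every propagation habitat of the shape you describe. The degree check you defer to the end is exactly where the plan breaks, and you offer no idea for getting around it.

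The clause habitat is also mis-analysed. $G[H_{C_j}^{-p}]$ consists of the triangles $\{w_j,u_{j,q}^1,u_{j,q'}^1\}$ and $\{w_j',u_{j,q}^2,u_{j,q'}^2\}$ joined by three pairwise disjoint edges: $\{w_j,w_j'\}$ and the two rungs. It is 2-connected as soon as any two of these three are present.
\begin{itemize}
\item So $\{w_j,w_j'\}$ is unforced, and the two rungs are not both needed. If they were needed for every $p$, all three rungs of $C_j$ would be forced and the gadget would encode nothing.
\item Worse, a solution may drop $\{w_j,w_j'\}$ and buy all three rungs at the same cost. That makes every propagation habitat of $C_j$ 2-connected regardless of the assignment, so your backward direction fails unless $\{w_j,w_j'\}$ is forced by an extra habitat.
\end{itemize}
This part is repairable, e.g.\ by a private triangle on $w_j,w_j'$; the degree obstruction above is not, within your design.

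The paper avoids both problems by reducing from \hcvcAcr{} and putting all choices into the vertex gadgets. Each $v_i$ gets a six-vertex habitat $H_i$, completed either by the two edges of $E_i^{\rm out}$ or by the three edges of $E_i^{\rm in}$; an exchange argument shows a solution of this form exists. The three disjoint pairs in $E_i^{\rm in}$ serve as rungs of the three incident edge habitats. Each edge habitat is two forced triangles joined by an edge $\{y_{i,j},z_{i,j}\}$ that an auxiliary habitat forces. The Hamiltonian cycle fixes which pair serves which incident edge, so every degree stays at most five.
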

We give a polynomial-time many-one reduction from \hcvcAcr{} (\probref{ttts}).
The intuition for the reduction is similar to~\cref{int:H4D7:NPhard}.
\begin{construction}\label{constr:H6D5:NPhard}
 Let~$I=(G,p,C)$ be an instance of HCVC with~$G=(V,E)$ and let $V=\{v_1,\dots,v_n\}$ be enumerated such that~$C=(V,\bigcup_{i=1}^{n-1}\{\{v_i,v_{i+1}\}\}\cup \{\{v_n,v_1\}\})$ forms a Hamiltonian cycle in~$G$,
 and let~$m\ceq |E|$.
 Let~$X=X'\cup X''$ with~$X'=\{(i,i+1)\mid i\in\set{n-1}\}\cup\{(n,1)\}$ and~$X''=\{(i,j)\mid \{v_i,v_j\}\in E\land i<j\}\setminus(X'\cup \{(1,n)\})$
 (note that~$|X|=|E|$).
 Construct an instance~$I'=(G',\calH,k)$ with~$G'=(V',E')$ of \brgbpAcr{} as follows
 (see \cref{fig:H6D5:NPhard} for an illustration).
 \begin{figure}[t]
  \centering
	\begin{tikzpicture}
   \def\xr{1}
   \def\yr{1}
   \tikzpreamble{}
   \def\teps{0.2*\xr*\yr}

   \newcommandx{\nodegadget}[7][6=0]{%
		\node (#1sw) at (#2*\xr-0.5*\xr,#3*\yr-0.4*\yr)[xnode,label={[label distance=-5pt](-135+#6):{$\ell_{#7}^1$}}]{};
		\node (#1se) at (#2*\xr+0.5*\xr,#3*\yr-0.4*\yr)[xnode,label={[label distance=-4pt](-45+#6):{$b_{#7}^2$}}]{};
		\node (#1nw) at (#2*\xr-0.5*\xr,#3*\yr+0.4*\yr)[xnode,label={[label distance=-5pt](135+#6):{$\ell_{#7}^2$}}]{};
		\node (#1ne) at (#2*\xr+0.5*\xr,#3*\yr+0.4*\yr)[xnode,label={[label distance=-4pt](45+#6):{$t_{#7}^2$}}]{};

		\node (#1n) at (#2*\xr,#3*\yr+1*\yr)[xnode,label={[label distance=-2pt](90+#6):{$t_{#7}^1$}}]{};
		\node (#1s) at (#2*\xr,#3*\yr-1*\yr)[xnode,label={[label distance=-2pt](-90+#6):{$b_{#7}^1$}}]{};

		\foreach \x/\y in {%
			se/ne,
			sw/nw,
			se/sw,
			ne/nw,
			se/s,s/sw,
			n/ne,n/nw,n/s}{\draw[xedge] (#1\x) to (#1\y);}

		\foreach \x/\y in {#4}{\draw[xfedge] (#1\x) to (#1\y);}
		\foreach \x/\y in {#5}{\draw[xsedge] (#1\x) to (#1\y);}
	 }

	 \newcommandx{\nodegadgetS}[8][6=0]{%
		\node (#1sw) at (#2*\xr-0.5*\xr,#3*\yr-0.4*\yr)[xnode,label={[label distance=-3pt,yshift=+2pt](-135):{$\ell_{#7}^1$}}]{};
		\node (#1se) at (#2*\xr+0.5*\xr,#3*\yr-0.4*\yr)[xnode,label={[label distance=-3pt,yshift=+2pt](-45+#6):{$b_{#7}^2$}}]{};
		\node (#1nw) at (#2*\xr-0.5*\xr,#3*\yr+0.4*\yr)[xnode,label={[label distance=-3pt,yshift=-2pt](135):{$\ell_{#7}^2$}}]{};
		\node (#1ne) at (#2*\xr+0.5*\xr,#3*\yr+0.4*\yr)[xnode,label={[label distance=-3pt,yshift=-2pt](45+#6):{$t_{#7}^2$}}]{};

		\node (#1n) at (#2*\xr,#3*\yr+1*\yr)[xnode,label={[label distance=-1pt,yshift=1pt]0:{$t_{#7}^1$}}]{};
		\node (#1s) at (#2*\xr,#3*\yr-1*\yr)[xnode,label={[label distance=-1pt,yshift=-1pt]0:{$b_{#7}^1$}}]{};

		\foreach \x/\y in {%
			se/ne,
			sw/nw,
			se/sw,
			ne/nw,
			se/s,s/sw,
			n/ne,n/nw,n/s}{\draw[xedge] (#1\x) to (#1\y);}

		\foreach \x/\y in {#4}{\draw[xfedge] (#1\x) to (#1\y);}
		\foreach \x/\y in {#5}{\draw[xsedge] (#1\x) to (#1\y);}
		\foreach \x/\y in {#8}{\draw[xbedge] (#1\x) to (#1\y);}
	 }

	 \newcommand{\edgegadget}[5]{%
		\node (#1sw) at (#2*\xr-1*\xr,#3*\yr-0.5*\yr)[xnode]{};
		\node (#1se) at (#2*\xr+1*\xr,#3*\yr-0.5*\yr)[xnode]{};
		\node (#1nw) at (#2*\xr-1*\xr,#3*\yr+0.5*\yr)[xnode]{};
		\node (#1ne) at (#2*\xr+1*\xr,#3*\yr+0.5*\yr)[xnode]{};
		\node (#1nc) at (#2*\xr,#3*\yr+0.2*\yr)[xnode]{};
		\node (#1sc) at (#2*\xr,#3*\yr-0.2*\yr)[xnode]{};

		\foreach \x/\y in {%
			se/ne,
			sw/nw,
			se/sw,
			ne/nw,
			se/sc,sc/sw,
			nc/ne,nc/nw,nc/sc}{\draw[xedge] (#1\x) to (#1\y);}

		\foreach \x/\y in {#4}{\draw[xfedge] (#1\x) to (#1\y);}
		\foreach \x/\y in {#5}{\draw[xsedge] (#1\x) to (#1\y);}
	 }

	 \newcommand{\edgegadgetL}[5]{%
		\node (#1sw) at (#2*\xr-1*\xr,#3*\yr-0.5*\yr)[xnode,label=180:{$b_i^1$}]{};
		\node (#1se) at (#2*\xr+1*\xr,#3*\yr-0.5*\yr)[xnode,label=0:{$\ell_i^1$}]{};
		\node (#1nw) at (#2*\xr-1*\xr,#3*\yr+0.5*\yr)[xnode,label=90:{$b_i^2$}]{};
		\node (#1ne) at (#2*\xr+1*\xr,#3*\yr+0.5*\yr)[xnode,label=90:{$\ell_i^2$}]{};
		\node (#1nc) at (#2*\xr,#3*\yr+0.2*\yr)[xnode,label=90:{$y_{i,i+1}$}]{};
		\node (#1sc) at (#2*\xr,#3*\yr-0.2*\yr)[xnode,label=-90:{$z_{i,i+1}$}]{};

		\foreach \x/\y in {%
			se/ne,
			sw/nw,
			se/sw,
			ne/nw,
			se/sc,sc/sw,
			nc/ne,nc/nw,nc/sc}{\draw[xedge] (#1\x) to (#1\y);}

		\foreach \x/\y in {#4}{\draw[xfedge] (#1\x) to (#1\y);}
		\foreach \x/\y in {#5}{\draw[xsedge] (#1\x) to (#1\y);}
	 }

	 \newcommand{\edgegadgetM}[5]{%
		\node (#1sw) at (#2*\xr-1*\xr,#3*\yr-0.5*\yr)[xnode,label=180:{$t_i^2$}]{};
		\node (#1se) at (#2*\xr+1*\xr,#3*\yr-0.5*\yr)[xnode,label=0:{$t_j^1$}]{};
		\node (#1nw) at (#2*\xr-1*\xr,#3*\yr+0.5*\yr)[xnode,label=90:{$t_i^1$}]{};
		\node (#1ne) at (#2*\xr+1*\xr,#3*\yr+0.5*\yr)[xnode,label=90:{$t_j^2$}]{};
		\node (#1nc) at (#2*\xr,#3*\yr+0.2*\yr)[xnode,label=90:{$y_{i,j}$}]{};
		\node (#1sc) at (#2*\xr,#3*\yr-0.2*\yr)[xnode,label=-90:{$z_{i,j}$}]{};
		\node (#1c) at (#2*\xr-0.4*\xr,#3*\yr)[xnode,label=180:{$d_{i,j}$}]{};

		\foreach \x/\y in {%
			c/sc,c/nc,
			se/ne,
			sw/nw,
			se/sw,
			ne/nw,
			se/sc,sc/sw,
			nc/ne,nc/nw,nc/sc}{\draw[xedge] (#1\x) to (#1\y);}

		\foreach \x/\y in {#4}{\draw[xfedge] (#1\x) to (#1\y);}
		\foreach \x/\y in {#5}{\draw[xsedge] (#1\x) to (#1\y);}
	 }

	 \newcommand{\edgegadgetR}[5]{
		\node (#1nc) at (#2*\xr,#3*\yr+0.2*\yr)[xnode]{};
		\node (#1sc) at (#2*\xr,#3*\yr-0.2*\yr)[xnode]{};

		\foreach \x/\y in {nc/sc}{\draw[xedge] (#1\x) to (#1\y);}

		\foreach \x/\y in {#4}{\draw[xfedge] (#1\x) to (#1\y);}
		\foreach \x/\y in {#5}{\draw[xsedge] (#1\x) to (#1\y);}
	 }

	 \newcommandx{\xlabel}[3][1=-0.5,2=1.2]{\node at (#1*\xr,#2*\yr)[anchor=east]{(#3)};}

   \begin{scope}[xshift=-3*\xr cm, yshift=-3.0*\yr cm]
		\xlabel{a}
			\nodegadgetS{X}{0}{0}{n/nw,n/s,ne/se,s/sw}{nw/ne,se/sw}{i}{}
			\nodegadgetS{Y}{2.0}{0}{n/nw,n/s,ne/se,s/sw}{}{i}{n/ne,s/se,sw/nw}
	 \end{scope}

   \begin{scope}[xshift=-2*\xr cm, yshift=-5.75*\yr cm]
    \xlabel[-1.25][1]{b}
    \edgegadgetL{C}{0}{0}{nc/sc,nc/ne,ne/nw,nw/nc,sc/se,se/sw,sw/sc}{}

    \draw[xhabA] \xne{Cne} -- \xnw{Cnw} -- \xsw{Cnw} -- \xs{Cnc} -- \xse{Cne} -- cycle;
    \draw[xhabB] \xse{Cse} -- \xsw{Csw} -- \xnw{Csw} -- \xn{Csc} -- \xne{Cse} -- cycle;
	 \end{scope}

	 \begin{scope}[xshift=-2*\xr cm, yshift=-7.75*\yr cm]
	  \xlabel[-1.25][1]{c}
    \edgegadgetM{D}{0}{0}{nc/sc,nc/ne,ne/nw,nw/nc,sc/se,se/sw,sw/sc,c/sc,c/nc}{}
    \draw[xhabC] \xne{Dnc} -- \xnw{Dc} -- \xsw{Dc} -- \xse{Dsc} -- cycle;
   \end{scope}

   \begin{scope}[xshift=5.25*\xr cm,yshift=-6.5*\yr cm]
    \xlabel[-4.15][4.4]{d}
    \nodegadget{A}{-3.5}{0}{n/nw,n/s,ne/se,s/sw}{}{i-1}
    \nodegadget{B}{0}{0}{n/nw,n/s,ne/se,s/sw}{}{i}
    \nodegadget{C}{4.75}{0}{n/nw,n/s,ne/se,s/sw}{}{i+1}
		\begin{scope}[rotate around={180:(4*\xr,3*\yr)}]
			\nodegadget{D}{4}{3}{n/nw,n/s,ne/se,s/sw}{}[180]{j}
		\end{scope}

		\begin{scope}[rotate around={180:(-1*\xr,3*\yr)}]
			\nodegadget{E}{-1}{3}{n/nw,n/s,ne/se,s/sw}{}[180]{j+1}
		\end{scope}

		\begin{scope}[rotate around={7:(-1.75*\xr,-0.33*\yr)}]
			\edgegadgetR{AB}{-1.75}{-0.33}{nc/sc}{}
			\node at (ABnc)[label=0:{$y_{i-1,i}$}]{};
			\node at (ABsc)[label={[label distance=-1pt,xshift=3*\xr pt]-90:{$z_{i-1,i}$}}]{};
		\end{scope}

		\begin{scope}[rotate around={7:(2.375*\xr,-0.34*\yr)}]
			\edgegadgetR{BC}{2.375}{-0.34}{nc/sc}{}
			\node at (BCnc)[label=0:{$y_{i,i+1}$}]{};
			\node at (BCsc)[label=-90:{$z_{i,i+1}$}]{};
		\end{scope}

		\begin{scope}[rotate around={15:(2*\xr,1.5*\yr)}]
				\edgegadgetR{BD}{2}{1.5}{nc/sc}{}
				\node at (BDnc)[label=0:{$y_{i,j}$}]{};
				\node at (BDsc)[label=0:{$z_{i,j}$}]{};
		\end{scope}
		\node (d) at ($(BDnc)-(0.5*\xr,0.4*\yr)$)[xnode,label=180:{$d_{i,j}$}]{};

		\begin{scope}[rotate around={187:(1.5*\xr,3.35*\yr)}]
			\edgegadgetR{DE}{1.5}{3.35}{nc/sc}{}
			\node at (DEnc)[label=-90:{$y_{j,j+1}$}]{};
			\node at (DEsc)[label=90:{$z_{j,j+1}$}]{};
		\end{scope}

		\foreach\x/\y in {%
		d/BDnc,d/BDsc,
		As/ABsc,As/Bsw,Ase/ABnc,Ase/Bnw,Bnw/ABnc,Bsw/ABsc,
		Bs/BCsc,Bs/Csw,Bse/BCnc,Bse/Cnw,Cnw/BCnc,Csw/BCsc,
		Bn/BDnc,Bn/Dne,Bne/BDsc,Bne/Dn,Dn/BDsc,Dne/BDnc,
		Ds/DEsc,Ds/Esw,Dse/DEnc,Dse/Enw,Enw/DEnc,Esw/DEsc%
		}{\draw[xfedge] (\x) -- (\y);}
		\draw[xfedge,ultra thick] (BCnc) to (BDsc);
		\draw[xfedge,ultra thick] (DEnc) to (BDnc);
		\draw[xhabA] \xne{Bn} -- \xnw{ABnc} -- \xsw{ABsc} -- \xse{Bs} -- cycle;
		\draw[xhabB] \xnw{Bne} -- \xne{BDsc} -- \xse{BCnc} -- \xsw{Bse} -- cycle;
		\draw[xhabC] \xne{Dse} -- \xse{Dne} -- \xs{BDnc} -- \xsw{BDnc} -- \xnw{DEnc} -- cycle;
	\end{scope}

  \end{tikzpicture}
	\caption{Illustration to~\cref{constr:H6D5:NPhard} (red edges are in every solution).
	(a) The two minimal solutions for~$H_i$ (either green or blue).
	(b) Habitats~$H_{i,j}^y$ (cyan) and~$H_{i,j}^z$ (orange) for some~$(i,j)\in X'$.
	(c) Habitat~$H_{i,j}^d$ (magenta) for some~$(i,j)\in X''$.
	(d) An excerpt from~$G'$ around~$V_i$, with habitats~$H_i^\dagger$ (cyan),
	$H_i^\ddagger$ (orange), and~$H_j^\ddagger$ (magenta).
	}
	\label{fig:H6D5:NPhard}
 \end{figure}
 For each~$i\in\set{n}$,
 add the vertex set~$V_i=\{t_i^z,b_i^z,\ell_i^z\mid z\in\{1,2\}\}$
 and the edge set~$E_i=E_i'\cup E_i^{\rm in} \cup E_i^{\rm out}$,
 where~$E_i'=\{\{\ell_i^2,t_i^1\},\{\ell_i^1,b_i^{1}\},\{t_i^1,b_i^{1}\},\{b_i^{2},t_i^2\}\}$,
 $E_i^{\rm in} = \{\{t_i^1,t_i^2\},\{b_i^{2},b_i^{1}\},\{\ell_i^1,\ell_i^2\}\}$ and~$E_i^{\rm out}=\{\{\ell_i^1,b_i^{2}\},\{\ell_i^2,t_i^2\}\}$.
 For each~$(i,j)\in X'$,
 add two vertices~$y_{i,j}$ and~$z_{i,j}$ and the edge set~$E_{i,j}=\{\{y_{i,j},z_{i,j}\}\}\cup E_{i,j}'$ with $E_{i,j}'\ceq \{\{b_i^1,\ell_j^1\},\{b_i^1,z_{i,j}\},\{z_{i,j},\ell_j^1\}\}\cup\{\{b_i^2,\ell_j^2\},\{b_i^2,y_{i,j}\},\{y_{i,j},\ell_j^2\}\}$.
 For each $(i,j)\in X''$,
 add three vertices~$y_{i,j},z_{i,j},d_{i,j}$ and the edge set $E_{i,j}=E_{i,j}' \cup \{\{y_{i,j},z_{i,j}\},\{y_{i,j},d_{i,j}\},\{d_{i,j},z_{i,j}\}\}$ with $E_{i,j}'\ceq\{\{t_i^1,t_j^2\},\{t_i^1,y_{i,j}\},\{y_{i,j},t_j^2\}\}\cup\{\{t_i^2,t_j^1\},\{t_i^2,z_{i,j}\},\{z_{i,j},t_j^1\}\}$.
 Add the edges~$\{z_{i,j},y_{i,i+1}\}$ and~$\{y_{i,j},y_{j,j'}\}$,
 where $j'=j+1$ if~$j<n$, and~$j'=1$ otherwise.
 For each~$i\in\set{n}$,
 add the habitat~$H_i=V_i$.
 For each~$(i,j)\in X'$,
 add the habitats~$H_{i,j}^y = \{b_i^2,y_{i,j},\ell_j^2\}$,
 $H_{i,j}^z = \{b_i^1,z_{i,j},\ell_j^1\}$,
 and~$H_{i,j} = H_{i,j}^y\cup H_{i,j}^z$,
 as well as~$H_j^\dagger = \{y_{i,j},z_{i,j},\ell_j^1,b_j^1,t_j^1,\ell_j^2\}$.
 For each~$(i,j)\in X''$,
 add~$H_{i,j}^y = \{t_i^1,y_{i,j},t_j^2\}$,
 $H_{i,j}^z = \{t_i^2,z_{i,j},t_j^1\}$,
 $H_{i,j}^d = \{y_{i,j},z_{i,j},d_{i,j}\}$
 and~$H_{i,j} = H_{i,j}^y\cup H_{i,j}^z$,
 as well as the habitats~$H_i^\ddagger=\{t_i^2,z_{i,j},y_{i,i+1},b_i^2\}$
 and $H_j^\ddagger=\{t_j^2,y_{i,j},y_{j,j'},b_j^2\}$,
 where~$j'=j+1$ if~$j<n$, and~$j'=1$ otherwise.
 Set~$k=(7n+\frac{11}{2}n)+(4n+2n+p)$.
 \cqed
\end{construction}
\begin{lemma}%
 \label{lem:H6D5:NPhard:included}
 Let~$I'$ be a \yes-instance.
 Then,
 for every solution~$F$ the following holds:
 \begin{inparaenum}[(i)]
	\item For every~$i\in\set{n}$,
		we have~$E_i'\subseteq F$.
  \item For every~$(i,j)\in X$,
		we have~$E_{i,j}\subseteq F$
	\item For every~$(i,j)\in X''$,
		we have~$\{\{z_{i,j},y_{i,i+1}\},\{y_{i,j},y_{j,j'}\}\}\subseteq F$,
		where~$j'=j+1$ if~$j<n$, and~$j'=1$ otherwise.
 \end{inparaenum}
\end{lemma}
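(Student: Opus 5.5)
The plan mirrors \cref{lem:H4D7:NPhard:forced} and \cref{lemma:H5D6:NPhard}: we only use the principle behind \cref{rr:forcededge}. If a habitat~$H$ induces a triangle, or if in~$G'[H]$ some vertex has degree exactly two, then every edge incident to that vertex in~$G'[H]$ lies in every solution, because otherwise~$G'[H,F]$ has a vertex of degree at most one and is not 2-connected. So for each claimed edge we exhibit one habitat that forces it, and the proof is a bookkeeping check of induced degrees in~$G'$.

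For~(ii) and part of~(i): for each~$(i,j)\in X'$, the habitats~$H_{i,j}^y=\{b_i^2,y_{i,j},\ell_j^2\}$ and~$H_{i,j}^z=\{b_i^1,z_{i,j},\ell_j^1\}$ induce triangles, which forces all six edges of~$E_{i,j}'$. The remaining edge~$\{y_{i,j},z_{i,j}\}$ comes from~$H_j^\dagger=\{y_{i,j},z_{i,j},\ell_j^1,b_j^1,t_j^1,\ell_j^2\}$. There I will check that~$z_{i,j}$ has exactly the two neighbours~$\ell_j^1,y_{i,j}$ inside~$H_j^\dagger$ (note~$b_i^1\notin H_j^\dagger$). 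Hence~$\{y_{i,j},z_{i,j}\}$ and~$\{z_{i,j},\ell_j^1\}$ are forced. The same count should show that~$y_{i,j}$ has degree two (neighbours~$z_{i,j},\ell_j^2$) and that~$b_j^1$ has neighbours exactly~$\ell_j^1,t_j^1$ in~$G'[H_j^\dagger]$, since~$b_j^2\notin H_j^\dagger$. This yields the edges~$\{\ell_j^1,b_j^1\},\{t_j^1,b_j^1\}\in E_j'$. Likewise~$\ell_j^2$ has neighbours~$y_{i,j},\ell_j^1,t_j^1$; for this vertex I will instead use~$t_j^1$, whose neighbours in~$H_j^\dagger$ are~$\ell_j^2,b_j^1$, which forces~$\{\ell_j^2,t_j^1\}$. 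Because~$X'$ contains a pair~$(i,j)$ for every~$j$ (the Hamiltonian cycle), this covers three of the four edges of~$E_j'$ for every~$j$.

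For~$(i,j)\in X''$: the triangles~$H_{i,j}^y$, $H_{i,j}^z$ and~$H_{i,j}^d$ force all of~$E_{i,j}$. For~(iii) and the last edge~$\{b_i^2,t_i^2\}$ of~$E_i'$, I use~$H_i^\ddagger=\{t_i^2,z_{i,j},y_{i,i+1},b_i^2\}$. In~$G'[H_i^\ddagger]$ the edges are~$\{t_i^2,z_{i,j}\},\{z_{i,j},y_{i,i+1}\},\{y_{i,i+1},b_i^2\},\{b_i^2,t_i^2\}$, which is a~$C_4$ with every vertex of degree two. So all four edges are forced, including~$\{z_{i,j},y_{i,i+1}\}$ and~$\{b_i^2,t_i^2\}$. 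Symmetrically,~$H_j^\ddagger$ forces~$\{y_{i,j},y_{j,j'}\}$ and~$\{b_j^2,t_j^2\}$.

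The main obstacle is coverage. Every~$i$ needs some habitat~$H_i^\ddagger$ to force~$\{b_i^2,t_i^2\}$, so I must argue that every vertex~$v_i$ is the first or second coordinate of some pair in~$X''$. This holds because~$G$ is cubic: each vertex has exactly one edge outside the Hamiltonian cycle~$C$. The degree counts must be read carefully from the construction (including the extra edges~$\{z_{i,j},y_{i,i+1}\}$, which do not lie inside~$H_j^\dagger$). If one of them does not give degree two, the fallback is to use non-2-connectivity of~$G'[H]-e$ directly, as in \cref{rr:forcededge}.
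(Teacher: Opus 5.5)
Your proposal is correct and follows the paper's proof essentially verbatim. The triangles $H_{i,j}^y,H_{i,j}^z$ (and $H_{i,j}^d$ for $X''$) force $E_{i,j}'$ (respectively all of $E_{i,j}$). The degree-two vertices $y_{i,j},z_{i,j},b_j^1,t_j^1$ of $G'[H_j^\dagger]$, which is a $6$-cycle with the chord $\{\ell_j^1,\ell_j^2\}$, force $\{y_{i,j},z_{i,j}\}$ and three edges of $E_j'$. The induced $C_4$'s $G'[H_i^\ddagger]$ and $G'[H_j^\ddagger]$ force $\{b_i^2,t_i^2\}$ and the two connector edges in (iii).

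You also check explicitly that every index has an $H_j^\dagger$ (via $X'$) and an $H_i^\ddagger$ (via cubicity: each vertex has exactly one non-cycle edge). This fills in a point the paper leaves implicit.
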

\begin{proof}
 (i) Since the vertex~$t_i^1$ neighbors only~$\ell_i^2$ and~$b_i^1$ in~$G'[H_i^\dagger]$,
 we have $\{\{\ell_i^2,t_i^1\},\{b_i^1,t_i^1\}\}\subseteq F$,
 Moreover,
 since vertex~$b_i^1$ neighbors only~$\ell_i^1$ and~$t_i^1$ in~$G'[H_i^\dagger]$,
 we have $\{\{\ell_i^1,b_i^1\},\{b_i^1,t_i^1\}\}\subseteq F$.
 Since $b_i^2$ and~$t_i^2$ are neighbors and each of degree two in~$G'[H_i^\ddagger]$,
 we have~$\{b_i^2,t_i^2\}\in F$.

 (ii) Let $(i,j)\in X'$.
 Since each of~$H_{i,j}^y$ and $H_{i,j}^z$ forms a triangle,
 all edges in~$E_{i,j}\setminus \{\{y_{i,j},z_{i,j}\}\}$ are in~$F$.
 Since in~$G'[H_j^\dagger]$ the vertices $y_{i,j}$ and $z_{i,j}$ are neighboring and each has degree two,
 we have~$\{y_{i,j},z_{i,j}\}\in F$.
 Let~$(i,j)\in X''$.
 Since each of~$H_{i,j}^y,H_{i,j}^z,H_{i,j}^d$ form triangles,
 all edges in~$E_{i,j}$ are in~$F$.

 (iii)
 Let~$(i,j)\in X''$.
 Since each of $G'[H_i^\ddagger]$ and $G'[H_j^\ddagger]$ is a cycle
 where $z_{i,j}$ neighbors~$y_{i,i+1}$ and~$y_{i,j}$ neighbors~$y_{j,j'}$,
 respectively,
 we have $\{\{z_{i,j},y_{i,i+1}\},\{y_{i,j},y_{j,j'}\}\}\subseteq F$,
 where~$j'=j+1$ if~$j<n$, and~$j'=1$ otherwise.
\end{proof}

\begin{lemma}%
 \label{lem:H6D5:NPhard:sol}
 Let~$I'$ be a \yes-instance.
 Then,
 there is a solution~$F$
 such that for each~$i\in\set{n}$,
 either
 \begin{inparaenum}[(i)]
  \item $E_i^{\rm in}\subseteq F$ and~$E_i^{\rm out}\cap F=\emptyset$,
		or
  \item $E_i^{\rm out}\subseteq F$ and~$E_i^{\rm in}\cap F=\emptyset$.
 \end{inparaenum}
\end{lemma}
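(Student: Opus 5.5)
The plan is an exchange argument applied habitat by habitat. Start from an arbitrary solution~$F$; one exists since~$I'$ is a \yes-instance. Modify~$F$ locally on each~$V_i$ so that the cost does not increase and feasibility is kept. By \cref{lem:H6D5:NPhard:included}(i), $E_i'\subseteq F$. So only $F_i\ceq F\cap(E_i^{\rm in}\cup E_i^{\rm out})$ varies, where $E_i^{\rm in}\cup E_i^{\rm out}$ has five edges.

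First I would classify the possible~$F_i$ using the habitat $H_i=V_i$. The forced edges $E_i'$ form the path $\ell_i^2t_i^1b_i^1\ell_i^1$ together with the separate edge $b_i^2t_i^2$. Each vertex needs degree at least two in $G'[H_i,F]$, which gives four covering conditions:
\begin{itemize}
\item $\ell_i^1$ needs $\{\ell_i^1,\ell_i^2\}$ or $\{\ell_i^1,b_i^2\}$;
\item $\ell_i^2$ needs $\{\ell_i^1,\ell_i^2\}$ or $\{\ell_i^2,t_i^2\}$;
\item $t_i^2$ needs $\{t_i^1,t_i^2\}$ or $\{\ell_i^2,t_i^2\}$;
\item $b_i^2$ needs $\{b_i^1,b_i^2\}$ or $\{\ell_i^1,b_i^2\}$.
\end{itemize}
From these I get three cases.
\begin{itemize}
\item If $\{\ell_i^1,\ell_i^2\}\notin F$, then $E_i^{\rm out}\subseteq F_i$.
\item If $\{\ell_i^1,\ell_i^2\}\in F$ and $F_i\cap E_i^{\rm out}=\emptyset$, then $E_i^{\rm in}\subseteq F_i$.
\item Otherwise $F_i$ contains $\{\ell_i^1,\ell_i^2\}$ and at least one edge of $E_i^{\rm out}$. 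The covering conditions then force at least one further edge, so $|F_i|\ge 3$.
\end{itemize}
One directly checks that $E_i'\cup E_i^{\rm out}$ is a $6$-cycle and that $E_i'\cup E_i^{\rm in}$ is 2-connected. Hence both target configurations satisfy~$H_i$.

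Second, I would determine which other habitats induce edges of $E_i^{\rm in}\cup E_i^{\rm out}$. Going through the habitats of \cref{constr:H6D5:NPhard}:
\begin{itemize}
\item $H_i^\dagger$ contains $\ell_i^1,\ell_i^2$, and thus the edge $\{\ell_i^1,\ell_i^2\}\in E_i^{\rm in}$.
\item $H_i^\ddagger$ contains only $t_i^2,b_i^2$ from~$V_i$, and thus only the forced edge $\{b_i^2,t_i^2\}$.
\item The habitats $H_{i,j}$ and $H_{j,i}$ contain the pairs $\{b_i^1,b_i^2\}$, $\{\ell_i^1,\ell_i^2\}$, or $\{t_i^1,t_i^2\}$, which are again edges of $E_i^{\rm in}$.
\item The triangle habitats contain only one vertex of~$V_i$, or else two vertices joined by a forced edge.
\end{itemize}
The key claim is that the two edges of $E_i^{\rm out}$, namely $\{\ell_i^1,b_i^2\}$ and $\{\ell_i^2,t_i^2\}$, are induced by no habitat other than~$H_i$. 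This verification is the main obstacle: it is an exhaustive but routine inspection of all habitat types, and I would do it carefully, since every vertex set has to be compared against these two pairs.

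Third, I would define the replacement. Whenever $F_i$ is not already exactly $E_i^{\rm in}$ or exactly $E_i^{\rm out}$, replace $F_i$ by $E_i^{\rm in}$.
\begin{itemize}
\item Cost: in the first case of the classification with $F_i\neq E_i^{\rm out}$, and in the second and third cases, we have $|F_i|\ge 3=|E_i^{\rm in}|$, and all costs are~$1$. So the cost does not increase.
\item Feasibility: $H_i$ stays 2-connected by the first step. Every other habitat induces only edges of $E_i^{\rm in}$ from this set, by the second step, and these edges are now all present. Each such habitat therefore keeps all edges it had before and stays 2-connected.
\end{itemize}
Each replacement touches only edges inside~$V_i$, and the sets $E_i^{\rm in}\cup E_i^{\rm out}$ are pairwise disjoint. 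So the replacements for different~$i$ do not interact and can be applied for all~$i$ in turn. The result is a solution~$F$ in which every~$i$ satisfies (i) or (ii) of \cref{lem:H6D5:NPhard:sol}.
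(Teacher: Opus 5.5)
Your proposal is correct and follows essentially the paper's argument: replace $F\cap(E_i^{\rm in}\cup E_i^{\rm out})$ by $E_i^{\rm in}$, using that this intersection has at least three edges and that the edges of $E_i^{\rm out}$ are induced by no habitat other than $H_i$. The only differences are that you apply the replacement directly to every $i$ instead of arguing via a minimal counterexample, and that you spell out what the paper leaves implicit, namely the case analysis giving $|F\cap(E_i^{\rm in}\cup E_i^{\rm out})|\ge 3$ and the habitat-by-habitat check (including $H_i^\dagger$).
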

\begin{proof}
 Observe that for every~$i\in\set{n}$,
 we have that all but~$t_i^1,b_i^1$ have degree one in~$G'[H_i,E_i']$.
 Hence,
 $G'[H_i,F_i']$ with~$F_i'\subseteq E_i'$ and~$|F_i'|\leq 6$ is 2-connected
 if and only if
 $F_i'=E_i'\cup E_i^{\rm out}$.
 Moreover,
 note that for the same reason,
 $E_i'\cup E_i^{\rm in}$ is minimal in the sense that for every strict subset~$E''\subseteq E_i^{\rm in}$,
 we have that~$G[H_i,E_i'\cup E'']$ is not 2-connected.

 Now suppose towards a contradiction that the statement is false,
 i.e.,
 for every solution~$F$
 there is an~$i\in\set{n}$
 such that
 $E_i^{\rm in}\cap F\neq \emptyset$ and~$E_i^{\rm out}\cap F\neq \emptyset$.
 Let~$F$ be a solution with the minimum number of edges from the sets~$E_1^{\rm out},\dots, E_n^{\rm out}$,
 and let~$i\in\set{n}$ such that $E_i^{\rm in}\cap F\neq \emptyset$ and~$E_i^{\rm out}\cap F\neq \emptyset$.
 We know that there are at least three edges from~$E_i''\ceq E_i^{\rm in}\cup E_i^{\rm out}$ in~$F$.
 We claim that~$F'\ceq (F\setminus E_i^{\rm out}) \cup E_i^{\rm in}$
 is also a solution to~$I'$.
 We have that
 $|F'|=|F'\setminus E_i''| + |F'\cap E_i''| \leq |F\setminus E_i''| + |F\cap E_i''| = |F|$.
 Moreover,
 due to \cref{lem:H6D5:NPhard:included},
 we only need to check for the habitats~$H_i$ and~$H_{i',j'}$ with $(i',j')\in X$ and~$i\in\{i',j'\}$.
 We know that $G'[H_i,E_i'\cup E_i^{\rm in}]$ is 2-connected and hence
 $G'[H_i,F']$ is still 2-connected.
 For each~$(i',j')\in X$ with~$i\in\{i',j'\}$,
 we have that~$G'[H_{i',j'},F']$ is 2-connected since the edge in $E(G'[H_i,F'])\cap E(G'[H_{i',j'},F'])$ is also in~$F'$.
 Since~$F'$ contains less edges from the sets~$E_1^{\rm out},\dots, E_n^{\rm out}$ than~$F$,
 this contradicts the choice of~$F$.
\end{proof}

\begin{proof}[Proof of \cref{prop:H6D5:NPhard}]
 Let~$I'=(G',\calH,k)$ with~$G'=(V',E')$ be the instance of \brgbpAcr{} obtained from input instance~$I=(G,p,C)$ of \prob{HCVC}
 via~\cref{constr:H6D5:NPhard}.
 We prove that~$I$ is a \yes-instance if and only if~$I'$ is a \yes-instance.

 \RD{} Let~$W\subseteq V$ be a vertex cover of size~$p$.
 Let~$\Gamma\ceq \set{n}$ and~$\Gamma_W\ceq \{i\in\set{n}\mid v_i\in W\}$.
 For each~$i\in \Gamma_W$,
 add the edges~$E_i^{\rm in}$ to~$F$,
 and for each~$i\in \Gamma\setminus \Gamma_W$,
 add the edges~$E_i^{\rm out}$ to~$F$,
 We added~$3p + 2(n-p)=2n+p$ edges in this step.
 Next,
 add all edges according to \cref{lem:H6D5:NPhard:included}:
 These are~$4n$ according to~(i),
 $7\cdot |X'| + 9\cdot |X''| = 7n+\frac{9}{2}n$ according to~(ii),
 and~$2\cdot |X''| = n$ edges according to~(iii).
 Thus,
 it holds that~$|F|\leq k$.
 We claim that~$F$ is a solution.

 For each~$i\in\set{n}$,
 we have that~$G'[H_i,F]$ is 2-connected,
 since each of~$G'[H_i,E_i'\cup E_i^{\rm out}]$ and $G'[H_i,E_i'\cup E_i^{\rm in}]$ is 2-connected.
 Now suppose towards a contradiction that there is~$(i,j)\in X$ such that
 $G'[H_{i,j},F]$ is not 2-connected.
 As the first case,
 let~$(i,j)\in X'$.
 It follows that neither~$\{b_i^1,b_i^2\}$ nor $\{\ell_j^1,\ell_j^2\}$ is in~$F$.
 Thus,
 $\{i,j\}\cap W=\emptyset$,
 and hence,
 edge~$\{v_i,v_j\}\in E$ is not covered by~$W$;
 this contradicts the fact that~$W$ is a vertex cover.
 As the second case,
 let~$(i,j)\in X''$.
 It follows that neither~$\{t_i^1,t_i^2\}$ nor $\{t_j^1,t_j^2\}$ is in~$F$.
 Thus,
 $\{i,j\}\cap W=\emptyset$,
 and hence,
 edge~$\{v_i,v_j\}\in E$ is not covered by~$W$;
 this contradicts the fact that~$W$ is a vertex cover.

 \LD{}
 Let~$F$ be a solution of size at most~$k$ such that for each~$i\in\set{n}$,
 either $E_i^{\rm in}\subseteq F$ and~$E_i^{\rm out}\cap F=\emptyset$,
 or $E_i^{\rm out}\subseteq F$ and~$E_i^{\rm in}\cap F=\emptyset$
 (which exists due to~\cref{lem:H6D5:NPhard:sol}).
 We claim that~$W=\{v_i\mid E_i^{\rm in} \subseteq F\}$ is a vertex cover of size~$p$.
 We have that~$F$ contains all edges according to~\cref{lem:H6D5:NPhard:included};
 these are $4n+7n+\frac{11}{2}n$ edges.
 It follows that at most~$p$ times,
 $E_i^{\rm in}$ can be chosen into~$F$,
 and hence,
 $|W|\leq p$.
 Now suppose towards a contradiction that there is an edge~$e=\{v_i,v_j\}\in E$ with~$e\cap W=\emptyset$.
 Then,
 neither~$E_i^{\rm in}$ nor~$E_j^{\rm in}$ are contained in~$F$.
 This implies that
 if~$(i,j)\in X'$,
 neither~$\{b_i^1,b_i^2\}$ nor $\{\ell_j^1,\ell_j^2\}$ is in~$F$,
 and hence,
 $G'[H_{i,j},F]$ is not 2-connected;
 this contradicts the fact that~$F$ is a solution.
 Similarly,
 if~$(i,j)\in X''$,
 neither~$\{t_i^1,t_i^2\}$ nor $\{t_j^1,t_j^2\}$ is in~$F$,
 and hence,
 $G'[H_{i,j},F]$ is not 2-connected;
 this contradicts the fact that~$F$ is a solution.
\end{proof}

\begin{remark}
 \label{rem:H6D5:NPhard}
 The reduction works also for the edge-variant of \rgbpAcr{}
 due to the following:
 The logic for the habitats~$H_i$ and~$H_{i,j}$ works exactly the same way
 for 2-edge-connectivity.
 For~$H_i$,
 the two possible solutions are still minimal as they reconnect the habitat into cycles.
 For~$H_{i,j}$,
 the edge~$\{y_{i,j},z_{i,j}\}$ is a cut when none of the two unforced edges is contained in the solution.
 All the remaining, auxiliary habitats induce cycles (possibly with a chord)
 and hence enforce exactly the same edge set.
\end{remark}
\subsection{$\eta=6$ and~$\Delta\leq 3$}
\Cref{fig:H6D3:P} shows all pairwise non-isomorphic 2-vertex- and 2-edge-connected size-6 habitats of degree at most~3.
\begin{figure}[t]
 \centering
 \begin{tikzpicture}
  \def\xr{1}
  \def\yr{1}
  \def\xsh{2.225}
  \def\ysh{2.7}
  \tikzpreamble{}

  \newcommand{\nodegadget}[3]{%
		\node (sw) at (-0.5*\xr,-0.4*\yr)[xnode,label={[label distance=-4pt](-135):{$x_5$}}]{};
		\node (se) at (0.5*\xr,-0.4*\yr)[xnode,label={[label distance=-4pt](-45):{$x_3$}}]{};
		\node (nw) at (-0.5*\xr,0.4*\yr)[xnode,label={[label distance=-4pt](135):{$x_6$}}]{};
		\node (ne) at (0.5*\xr,0.4*\yr)[xnode,label={[label distance=-4pt](45):{$x_2$}}]{};
		\node (n) at (0,1*\yr)[xnode,label={[label distance=-1pt](0):{$x_1$}}]{};
		\node (s) at (0,-1*\yr)[xnode,label={[label distance=-1pt](-0):{$x_4$}}]{};

		\foreach \x/\y in {#1}{\draw[xedge] (\x) to (\y);}

		\foreach \x/\y in {#2}{\draw[xfedge] (\x) to (\y);}
		\foreach \x/\y in {#3}{\draw[xsedge] (\x) to (\y);}
	 }
	 \newcommandx{\xlabel}[2][1=1]{\node at (-0.5*\xr,#1*1.25*\yr)[anchor=east]{(#2)};}
	 \newcommand{\xlabelx}[2]{\node at (-#1*0.5*\xr,1.375*\yr)[anchor=east]{(#2)};}

	\begin{scope}[xshift=0*\xsh*\xr cm]
	 \xlabelx{1.66}{a}
	 \xlabel{i}
	 \nodegadget{}{n/ne,ne/se,se/s,s/sw,sw/nw,nw/n}{}
	\end{scope}
	\begin{scope}[xshift=1*\xsh*\xr cm]
	 \xlabel{ii}
	 \nodegadget{}{n/ne,n/nw,nw/s,nw/sw,ne/se,ne/s,se/sw}{}
	\end{scope}
	\begin{scope}[xshift=2*\xsh*\xr cm]
	\xlabel{iii}
	 \nodegadget{nw/ne}{n/ne,n/nw,nw/sw,sw/s,s/se,se/ne}{}
	\end{scope}
	\begin{scope}[xshift=3*\xsh*\xr cm]
	 \xlabel{iv}
	 \nodegadget{n/s}{n/ne,n/nw,s/sw,s/se,se/ne,sw/nw}{}
	\end{scope}
	\begin{scope}[xshift=4*\xsh*\xr cm, yshift=0*\ysh*\yr cm]
	 \nodegadget{ne/nw,se/sw}{n/ne,s/sw,n/nw,s/se,sw/nw,se/ne}{}
	\end{scope}

	\begin{scope}[xshift=0*\xsh*\xr cm, yshift=-1*\ysh*\yr cm]
	 \xlabel[0.9]{b}
	 \nodegadget{n/ne,n/nw,n/s,nw/ne}{s/sw,s/se,se/ne,sw/nw}{}
	\end{scope}
	\begin{scope}[xshift=1*\xsh*\xr cm, yshift=-1*\ysh*\yr cm]
	 \nodegadget{nw/sw,nw/s,se/sw,se/s,s/sw}{n/nw,n/ne,ne/se}{}
	\end{scope}
	\begin{scope}[xshift=2*\xsh*\xr cm, yshift=-1*\ysh*\yr cm]
	 \nodegadget{ne/sw,nw/se,nw/sw,ne/se,ne/sw}{n/nw,n/ne,se/s,sw/s}{}
	\end{scope}
	\begin{scope}[xshift=3.25*\xsh*\xr cm, yshift=-1*\ysh*\yr cm]
	 \xlabel[0.9]{c}
	 \nodegadget{n/nw,n/ne,n/s,ne/sw,nw/se,nw/sw,ne/se,ne/sw,se/s,sw/s}{}{}
	\end{scope}
	\begin{scope}[xshift=4.25*\xsh*\xr cm, yshift=-1*\ysh*\yr cm]
	 \nodegadget{n/nw,n/ne,n/s,nw/ne,sw/se,nw/sw,ne/se,se/s,sw/s}{}{}
	\end{scope}

 \end{tikzpicture}
 \caption{All pairwise non-isomorphic 2-vertex- and 2-edge-connected habitats of size six with degree at most three. Red edges are forced in every reduced instance.}
	\label{fig:H6D3:P}
\end{figure}
In a reduced instance,
no habitat in (a) appears.
Thus,
the habitats in (b) and (c) remain.
We call them \emph{(6,3)-habitats}.
\begin{proposition}
 \label{prop:H6D3:P}
 \rgbpAcr{} is polynomial-time solvable if~$\eta=6$ and~$\Delta=3$.
\end{proposition}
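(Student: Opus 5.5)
We follow the approach used for \cref{prop:H5D3:P} and \cref{prop:H5D4:P}: show that every connected component of the \bhg{} $\calG_{G,\Fin,\calH}$ of a reduced instance has a union of habitats of constant size, or is a path or cycle. We then solve each component by brute force or by \cref{lem:HG:paths}, and combine the partial solutions via \cref{obs:HG:components}. First we apply \cref{\therrs} exhaustively. The remaining habitats then have size at most six and are either $K_4$'s, size-5 habitats of type (b) in \cref{fig:H5D3:P}, or (6,3)-habitats from \cref{fig:H6D3:P}(b),(c). Components containing only habitats of size at most five are handled by \cref{prop:H5D3:P} and \cref{prop:H4D4:P}. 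So it suffices to analyse components containing a (6,3)-habitat.

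The key degree argument uses $\Delta=3$. If $v\in H$ has degree three in $G[H]$, then $N_G(v)\subseteq H$. Hence any habitat $H'\in\bnd(\{H\})$ can leave $H$ only through vertices of degree at most two in $G[H]$. In a reduced instance, edges incident to such degree-2 vertices are forced. Moreover, a non-forced shared edge must have both endpoints of degree three in $G[H']$, with all their neighbours inside $H\cap H'$.

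Habitats in (c) are cubic on six vertices, so $\bnd(\{H\})=\emptyset$ and they form singleton components. For the three graphs in (b), we list the degree-2 vertices, at most four of them, which lie on forced paths. We then classify how a second habitat $H'$ can contain an unforced edge of $H$. The unforced edges are those among degree-3 vertices. Containing both endpoints with degree at least two in $G[H']$ forces $H'$ to contain most of $H$ plus vertices attached through the degree-2 vertices. Since $|H'|\le 6$, this leaves only a few configurations. For each configuration we compute the degrees in $G[H\cup H']$. Typically all but at most two vertices then have degree three, which plays the role of \cref{obs:H5D4:P:docking33}, so at most one further habitat can attach. As in \cref{lem:H5D4:P:K5} to \cref{lem:H5D4:P:K4}, this bounds the union of each component by a constant, say at most twelve vertices.

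If some configuration admits chains, for instance two (b)-habitats sharing a pattern so that each meets at most two others in disjoint regions, we instead show that the component has maximum degree two in $\calG_{G,\Fin,\calH}$ and apply \cref{lem:HG:paths}. The main obstacle is the exhaustive case analysis of intersections for the three (b)-graphs, including intersections with $K_4$- and size-5 habitats. We would first try to shortcut it with the general principle that a non-forced edge $\{a,b\}$ of $H$ lies only in habitats containing $N_G(a)\cup N_G(b)$, which has size at most six. Such habitats essentially coincide with $H$ near that edge, and this should give the constant bound quickly.
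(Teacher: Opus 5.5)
\paragraph{Verdict: there is a genuine gap.}
Your framework is the same as the paper's. You reduce, split along the components of $\calG_{G,\Fin,\calH}$ via \cref{obs:HG:components}, and send components without (6,3)-habitats to \cref{prop:H5D3:P}. The gap is that the one step carrying all the content is never proved: what a component containing a (6,3)-habitat looks like.

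You only assert a constant bound (``say at most twelve''). Your support is ``typically all but at most two vertices have degree three'' plus a path/cycle fallback for unspecified ``chain'' configurations, and no configuration is actually checked. A docking-style count in the spirit of \cref{obs:H5D4:P:docking33} would still have to be iterated and verified case by case.

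What is true is much stronger, and the paper proves it as \cref{lem:H6D3:P}: $\partial(\{H\})=\emptyset$ for every (6,3)-habitat $H$. So no neighbour of $H$ in $\calG_{G,\Fin,\calH}$ leaves $H$, the union of the component is $H$ itself, and brute force on six vertices finishes.

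\paragraph{How to close it.}
Your closing ``principle'' is enough to prove this once you add one observation, and it gives a shorter route than the paper's enumeration.
\begin{enumerate}
\item Let $H'$ share an edge $\{a,b\}\notin\Fin$ with $H$. As you note, by \cref{rr:forcededge} both $a$ and $b$ have degree three in $G[H]$ and in $G[H']$. Hence $N_G(a)\cup N_G(b)\subseteq H\cap H'$.
\item Suppose $H'\not\subseteq H$. Every edge between $H\cap H'$ and $H'\setminus H$ ends in a vertex of degree two in $G[H]$. Since $a,b$ are not such vertices, 2-connectivity of $G[H']$ forces $H'$ to contain \emph{both} degree-2 vertices of $G[H]$. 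Otherwise $G[H']$ is disconnected, or the single attachment vertex is a cut vertex.
\item The (c)-graphs have no degree-2 vertices, so no habitat can leave them at all.
\item For the three (b)-graphs of \cref{fig:H6D3:P}, check each unforced edge. In every case $N_G(a)\cup N_G(b)$ together with the two degree-2 vertices is already all of $H$, leaving no room for a vertex outside $H$.
\item The single exception is $\{x_2,x_6\}$ in the first (b)-graph. There the only candidate is $H'=\{x_1,x_2,x_3,x_5,x_6,y\}$ with $y$ adjacent to $x_3$ and $x_5$. It induces a $6$-cycle with chord $\{x_2,x_6\}$. Its four degree-2 vertices force all six cycle edges into $\Fin$, so $G[H',\Fin]$ is 2-connected and \cref{rr:solvedH} would have removed $H'$.
\item You still need that the component cannot grow through habitats $H'\subseteq H$. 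Any habitat adjacent to such an $H'$ shares with it an edge of $G[H]$ outside $\Fin$. It is therefore adjacent to $H$, and hence also contained in $H$.
\end{enumerate}
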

To prove \Cref{prop:H6D3:P},
we essentially combine \Cref{obs:HG:components,prop:H5D3:P}
with the following.
\begin{lemma}
 \label{lem:H6D3:P}
 Let~$I=(G,c,\Fin,\calH,k)$ be reduced and~$H\in\calH$ be a (6,3)-habitat.
 If~$\{H,H'\}\in E(\calG_{G,\Fin,\calH})$ for some~$H'\in\calH$,
 then~$H'\subseteq H$.
\end{lemma}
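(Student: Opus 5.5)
Since $\Delta=3$, every vertex of $G$ has at most three neighbours in total. Call a vertex $v\in H$ \emph{saturated} if $\deg_{G[H]}(v)=3$; then $N_G(v)\subseteq H$.

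The plan is to exploit this saturation together with the list of (6,3)-habitats in \cref{fig:H6D3:P}(b) and (c). The key observation concerns an edge $e=\{u,v\}$ of $G[H]$ with $e\notin\Fin$ that lies in $E(G[H])\cap E(G[H'])$. For such an edge we aim to show that both endpoints are saturated in $G[H]$, and that every saturated vertex of $H$ lying in $H'$ has all its $G[H']$-neighbours inside $H$. The reason is that $G[H']$ is 2-connected, so every vertex of $H'$ has degree at least two in $G[H']$, and the neighbours of a saturated vertex all lie in $H$.

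The steps, in order, are as follows.

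\emph{Step 1: unforced edges have saturated endpoints.} Go through the five graphs in \cref{fig:H6D3:P}(b),(c) and check this property. In (c), $G[H]$ is cubic, so every vertex of $H$ is saturated, and $H$ is closed under $N_G$. In (b), each graph has vertices of degree two in $G[H]$, and by \cref{rr:forcededge} their incident edges are forced (these are drawn red). We verify, graph by graph, that every non-red edge joins two vertices of degree three in $G[H]$.

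\emph{Step 2: propagation.} Take $H'$ with $\{H,H'\}\in E(\calG_{G,\Fin,\calH})$, and let $e=\{u,v\}$ be a shared unforced edge; $u,v$ are saturated by Step 1. Suppose $H'\not\subseteq H$. Since $G[H']$ is connected, there is a path in $G[H']$ from $u$ to some vertex $w\in H'\setminus H$. Let $a\in H\cap H'$ be the last vertex on this path inside $H$; it has a neighbour outside $H$, so $a$ is not saturated and $\deg_{G[H]}(a)=2$.

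\emph{Step 3: local case analysis.} Use 2-connectivity of $G[H']$: removing any single vertex leaves $H'$ connected. Consider the unsaturated vertices of $H$, which are the degree-2 vertices of $G[H]$ (shown in (b)) and each have exactly one edge leaving $H$. The set $B$ of unsaturated vertices of $H$ that lie in $H'$ separates $H\cap H'$ from $H'\setminus H$ in $G[H']$, so 2-connectivity forces $|B|\ge 2$. The two endpoints $u,v$ of $e$ lie in $H\cap H'$ but not in $B$. Each vertex of $B$ has two neighbours in $H$ and one outside. For a vertex $b\in B$ to have degree at least two in $G[H']$, it needs a $G[H']$-neighbour inside $H$ as well as its outside neighbour.

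We then check each graph in (b): $|H'|\le 6$ must hold, $H'\cap H$ must contain $u$, $v$ and the vertices of $B$, and $H'\setminus H\neq\emptyset$. These constraints force $|H'\cap H|\le 5$ with at least two outside-attached vertices. Combined with the requirement that every vertex of $G[H']$ has degree at least two, this should yield a contradiction or force $G[H']$ not to be 2-connected. Typically, some saturated vertex of $H\cap H'$ has only one neighbour in $H'$, or $H'\cap H$ decomposes so that a single vertex becomes a cut vertex. This would contradict the assumption that the instance is reduced (\cref{rr:trivialno}).

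\emph{Main obstacle.} The main obstacle is Step 3, the exhaustive check over the three graphs in (b). It needs a careful count of how $H'$ can use at most $6-|H\cap H'|$ outside vertices, each reachable only through the few degree-2 vertices of $H$. If that check fails for some graph, we would first try to use the forced status of edges in $G[H']$ to show that $e$ is in fact forced by \cref{rr:forcededge} applied to $H'$.
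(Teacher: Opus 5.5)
Your Steps~1--2 and the separator bound are sound; Step~1 in fact follows directly from \cref{rr:forcededge} and $\Delta=3$. Since each graph in \cref{fig:H6D3:P}(b) has exactly two degree-2 vertices, these steps correctly give three facts: both degree-2 vertices lie in $H'$, each has its outside neighbour in $H'$, and $|H\cap H'|\in\{4,5\}$.

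The gap is Step~3, which you do not carry out and whose predicted outcome is wrong. The contradiction is generally not that $G[H']$ fails to be 2-connected (\cref{rr:trivialno}). Take the first graph of \cref{fig:H6D3:P}(b), with degree-2 vertices $x_3,x_5$ and unforced edges $\{x_1,x_2\},\{x_1,x_4\},\{x_1,x_6\},\{x_2,x_6\}$. A vertex $y\notin H$ adjacent to $x_3$ and $x_5$ gives $H'=\{y,x_2,x_3,x_5,x_6\}$, which induces a 2-connected $C_5$.

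Worse, your fallback via \cref{rr:forcededge} also fails. For $H'=\{y,x_1,x_2,x_3,x_5,x_6\}$, $G[H']$ is the cycle $(y,x_3,x_2,x_1,x_6,x_5)$ plus the chord $\{x_2,x_6\}$, so it is 2-connected. The shared edge $\{x_2,x_6\}$ is unforced in both habitats: deleting it from $G[H']$ leaves a $C_6$, and deleting it from $G[H]$ leaves a $C_6$ with chord $\{x_1,x_4\}$. Neither mechanism you name excludes this configuration.

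What rules it out is \cref{rr:solvedH}, which your proposal never uses. Every cycle edge of $G[H']$ is incident to a degree-2 vertex of $G[H']$, so in a reduced instance all six lie in $\Fin$. Then $G[H',\Fin]$ is 2-connected, and $H'$ would already have been deleted. The paper argues this way: it enumerates the possible $H'$ and shows each induces a $C_5$ or a graph from \cref{fig:H6D3:P}(a), none of which survives reduction.

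To repair your argument, split on $|H\cap H'|$:
\begin{itemize}
\item For $|H\cap H'|=4$, close the case uniformly by \cref{rr:forcededge}. If $G[H']-e$ were 2-connected for $e=\{u,v\}$, both saturated endpoints $u$ and $v$ would need both degree-2 vertices of $G[H]$ as neighbours. These four vertices would then form a component of $G[H]$, a contradiction.
\item For $|H\cap H'|=5$, $H'\setminus H$ is a single vertex adjacent to both degree-2 vertices. For each of the three graphs, check which degree-3 vertex of $H$ is missing from $H'$, and close each case with \cref{rr:forcededge} or \cref{rr:solvedH}.
\end{itemize}
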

\begin{proof}
	We prove that~$\bnd(\{H\})=\emptyset$ for every (6,3)-habitat~$H$.
	Suppose towards a contradiction that there is $H'\in \bnd(\{H\})$.
	We distinguish for all five (6,3)-habitats~$H$
	how $H'$ can intersect with~$H$.
	Let~$H$ be a habitat from \cref{fig:H6D3:P}(c).
	Note that in~$G[H]$,
	all vertices have degree three,
	and hence no neighbor~$H'$ of~$H$ in~$\calG_{G,\Fin,\calH}$ with~$H'\setminus H\neq\emptyset$ can exist.
	Next,
	we distinguish the three (6,3)-habitats from \cref{fig:H6D3:P}(b).
	See \Cref{fig:H6D3:P:Intersect} for an illustration and let~$H=\{x_1,\dots,x_6\}$ as depicted.
	\begin{figure}[t]
	\centering
	\begin{tikzpicture}
		\def\xr{1}
		\def\yr{1}
		\def\xsh{2.45}
		\def\ysh{3.5}
		\tikzpreamble{}

		\newcommand{\nodegadget}[3]{%
			\node (sw) at (-0.5*\xr,-0.4*\yr)[xnode,label={[label distance=-4pt](-135):{$x_5$}}]{};
			\node (se) at (0.5*\xr,-0.4*\yr)[xnode,label={[label distance=-4pt](-45):{$x_3$}}]{};
			\node (nw) at (-0.5*\xr,0.4*\yr)[xnode,label={[label distance=-4pt](135):{$x_6$}}]{};
			\node (ne) at (0.5*\xr,0.4*\yr)[xnode,label={[label distance=-4pt](45):{$x_2$}}]{};
			\node (n) at (0,1*\yr)[xnode,label=(90):{$x_1$}]{};
			\node (s) at (0,-1*\yr)[xnode,label=(-90):{$x_4$}]{};

			\foreach \x/\y in {#1}{\draw[xedge] (\x) to (\y);}

			\foreach \x/\y in {#2}{\draw[xfedge] (\x) to (\y);}
			\foreach \x/\y in {#3}{\draw[xsedge] (\x) to (\y);}
		}
		\newcommand{\xlabel}[1]{\node at (-0.5*\xr,1.25*\yr)[anchor=east]{(#1)};}

		\begin{scope}[xshift=0*\xsh*\xr cm, yshift=0*\ysh*\yr cm]
			\xlabel{a}
			\nodegadget{n/ne,n/nw,n/s,nw/ne}{s/sw,s/se,se/ne,sw/nw}{}
		\end{scope}
		\begin{scope}[xshift=-0.5*\xsh*\xr cm, yshift=-1*\ysh*\yr cm]
			\nodegadget{n/ne,n/nw,n/s,nw/ne}{s/sw,s/se,se/ne,sw/nw}{}
			\node (y) at ($(s)+(0,-0.8*\yr)$)[xnode,color=blue,label={[label distance=-2pt](0):{$y$}}]{};
			\draw[xfedge] (sw) -- (y) -- (se);
		\end{scope}
		\begin{scope}[xshift=0.5*\xsh*\xr cm, yshift=-1*\ysh*\yr cm]
			\nodegadget{n/ne,n/nw,n/s,nw/ne}{s/sw,s/se,se/ne,sw/nw}{}
			\node (y) at ($(s)+(-0.5*\xr,-0.8*\yr)$)[xnode,color=blue,label={[label distance=-2pt](180):{$y$}}]{};
			\node (z) at ($(s)+(0.5*\xr,-0.8*\yr)$)[xnode,color=blue,label={[label distance=-2pt](0):{$z$}}]{};
			\draw[xfedge] (sw) -- (y) -- (z) -- (se);
		\end{scope}
		\begin{scope}[xshift=1.5*\xsh*\xr cm, yshift=0*\ysh*\yr cm]
		  \xlabel{b}
			\nodegadget{nw/sw,nw/s,se/sw,se/s,s/sw}{n/nw,n/ne,ne/se}{}
		\end{scope}
		\begin{scope}[xshift=1.5*\xsh*\xr cm, yshift=-1*\ysh*\yr cm]
			\nodegadget{nw/sw,nw/s,se/sw,se/s,s/sw}{n/nw,n/ne,ne/se}{}
			\node (y) at ($(ne)+(0.2*\xr,0.6*\yr)$)[xnode,color=blue,label={[label distance=-2pt](0):{$y$}}]{};
			\draw[xedge] (n) -- (y) -- (ne);
		\end{scope}
		\begin{scope}[xshift=3*\xsh*\xr cm, yshift=0*\ysh*\yr cm]
			\xlabel{c}
			\nodegadget{ne/sw,nw/se,nw/sw,ne/se,ne/sw}{n/nw,n/ne,se/s,sw/s}{}
		\end{scope}
		\begin{scope}[xshift=2.5*\xsh*\xr cm, yshift=-1*\ysh*\yr cm]
			\nodegadget{ne/sw,nw/se,nw/sw,ne/se,ne/sw}{n/nw,n/ne,se/s,sw/s}{}
			\node (yaux) at ($(ne)!0.5!(se)$)[]{};
			\node (y) at ($(yaux)+(0.4*\xr,0.0*\yr)$)[xnode,color=blue,label={[label distance=-2pt](0):{$y$}}]{};
			\draw[xfedge] (n) to [out=0,in=90](y);
			\draw[xfedge] (s) to [out=0,in=-90](y);
		\end{scope}
		\begin{scope}[xshift=3.5*\xsh*\xr cm, yshift=-1*\ysh*\yr cm]
			\nodegadget{ne/sw,nw/se,nw/sw,ne/se,ne/sw}{n/nw,n/ne,se/s,sw/s}{}
			\node (y) at ($(ne)+(0.4*\xr,0.0*\yr)$)[xnode,color=blue,label={[label distance=-2pt](0):{$y$}}]{};
			\node (z) at ($(se)+(0.4*\xr,0.0*\yr)$)[xnode,color=blue,label={[label distance=-2pt](0):{$z$}}]{};
			\draw[xfedge] (n) to [out=0,in=90](y);
			\draw[xfedge] (s) to [out=0,in=-90](z);
			\draw[xfedge] (y) -- (z);
		\end{scope}

		\node at (0.25*\xsh*\xr,-0.5*\ysh*\yr)[]{$\searrow$};
		\node at (-0.25*\xsh*\xr,-0.5*\ysh*\yr)[]{$\swarrow$};
		\node at (1.5*\xsh*\xr,-0.5*\ysh*\yr)[]{$\downarrow$};
		\node at (3.25*\xsh*\xr,-0.5*\ysh*\yr)[]{$\searrow$};
		\node at (2.75*\xsh*\xr,-0.5*\ysh*\yr)[]{$\swarrow$};
	\end{tikzpicture}
	\caption{Possible intersections of~$H$ and~$H'$ of size five with degree at most three.}
	\label{fig:H6D3:P:Intersect}
	\end{figure}

	(a):
	Any vertex in~$H'\setminus H$ can only be adjacent with~$x_5$ or~$x_3$,
	and hence~$x_3,x_5\in H'$.
	Since~$H'$ must intersect with~$H$ in an unforced edge,
	$|H'\setminus H|\leq 2$,
	and any of~$\{x_1,x_2\}$, $\{x_1,x_6\}$, $\{x_2,x_6\}$, and~$\{x_1,x_4\}$ must be contained in~$H'$.

	\xcase{1}{$H'\setminus H=\{y\}$}
	Let~$H''=\{y,x_3,x_5\}$.
	If $H'\setminus H''=\{x_2,x_6\}$,
	then $H'$ induces a $C_5$.
	If~$H'\setminus H''=\{x_1,x_2,x_6\}$,
	then~$H'$ induces \cref{fig:H6D3:P}(iii).
	If~$H'\setminus H''=\{x_4,x_2,x_6\}$,
	then~$H'$ induces \cref{fig:H6D3:P}(ii).
	Finally,
	if~$H'\setminus H''\in\{\{x_4,x_1,x_2\},\{x_4,x_1,x_6\}\}$,
	then~$H'$ induces \cref{fig:H6D3:P}(iv).
	Either way,
	this contradicts the fact that~$I$ is reduced.

	\xcase{2}{$H'\setminus H=\{y,z\}$}
	Let~$H''=\{z,y,x_3,x_5\}$.
	Then,
	$H'\setminus H''$ must be either~$x_4$ or~$\{x_2,x_6\}$,
	and hence,
	$H'$ induces a~$C_5$ or \cref{fig:H6D3:P}(i);
	a contradiction to the fact that~$I$ is reduced.

	(b):
	Any vertex in~$H'\setminus H$ can only be adjacent with~$x_1$ or~$x_2$.
	Since~$H'$ must intersect with~$H$ in an unforced edge,
	$|H'\setminus H|=1$,
	and~$x_3,x_6\in H'$.
	Then,
	$H'$ contains either~$x_4$ or~$x_5$;
	in both cases,
	$H'$ induces a~\cref{fig:H6D3:P}(iii);
	a contradiction to the fact that~$I$ is reduced.

	(c):
	Any vertex in~$H'\setminus H$ can only be adjacent with~$x_1$ or~$x_4$.
	Since~$H'$ must intersect with~$H$ in an unforced edge,
	$|H'\setminus H|\leq 2$,
	and any of~$\{x_2,x_3\}$, $\{x_2,x_5\}$, $\{x_3,x_6\}$, and~$\{x_5,x_6\}$ must be contained in~$G[H']$.
	We distinguish into the size of $H'\setminus H$.

	\xcase{1}{$H'\setminus H=\{y\}$}
	Let~$H''=\{y,x_1,x_4\}$.
	We exploit the symmetry of \cref{fig:H6D3:P:Intersect}(c).
	If~$|(H'\setminus H'')\cap \{x_2,x_3\}|=|(H'\setminus H'')\cap \{x_5,x_6\}|=1$,
	then~$H'$ induces a~$C_5$.
	If~$|(H'\setminus H'')\cap \{x_2,x_3,x_5,x_6\}|=3$,
	then $H'$ induces a~\cref{fig:H6D3:P}(ii);
	a contradiction to the fact that~$I$ is reduced.

	\xcase{2}{$H'\setminus H=\{y,z\}$}
	Let~$H''=\{y,z,x_1,x_4\}$.
	Then~$|(H'\setminus H'')\cap \{x_2,x_3\}|=|(H'\setminus H'')\cap \{x_5,x_6\}|=1$.
	Either way,
	$H'$ induces \cref{fig:H6D3:P}(i);
	a contradiction to the fact that~$I$ is reduced.
\end{proof}
\begin{proof}[Proof of \cref{prop:H6D3:P}]
 Let~$I$ be a reduced instance.
 Compute the basic habitat graph~$\calG_{G,F^*,\calH}$ in polynomial time.
 Due to~\cref{obs:HG:components},
 we know that we can independently solve each component optimally.
 Due to~\cref{lem:H6D3:P},
 we know that if a component contains a (6,3)-habitat,
 then the union of all habitats in this component has cardinality six.
 Hence,
 in polynomial-time,
 we can compute a minimum-cost solution among the at most~$2^{\binom{6}{2}}$ possible solutions or report that~$I$ is a \no-instance.
 Every other component that includes only habitats of size at most five are polynomial-time solvable due to \cref{prop:H5D3:P}.
\end{proof}
\section{Degree Three and Four}
\label{sec:deg34}

We show that for maximum degree three and four,
the problem is \NP-hard for fairly large habitats of size twenty-two and thirteen, respectively.

\begin{proposition}
 \label{prop:D3D4:NPhard}
 \brgbpAcr{} is \NP-hard even if
 \begin{inparaenum}[(i)]
		\item $\eta=13$ and~$\Delta\geq 4$, or\label{item:H13D4:NPhard}
		\item $\eta=22$ and~$\Delta\geq 3$.\label{item:H22D3:NPhard}
 \end{inparaenum}
\end{proposition}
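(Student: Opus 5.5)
We give polynomial-time many-one reductions from \hcvcAcr{} (\probref{hcvc}), following \cref{constr:H4D7:NPhard,constr:H6D5:NPhard}. The difference is that every gadget vertex of degree higher than~$\Delta$ is now ``spread out'' over several vertices of degree at most~$\Delta$. This is affordable because the habitat size may grow to~$13$ resp.~$22$. For cubic $G$ with Hamiltonian cycle~$C$, every vertex~$v_i$ is incident to exactly three edges of~$G$: two cycle edges and one chord. Hence we use one vertex gadget per~$v_i$ with three \emph{ports} (one per incident edge) and one edge gadget per edge in~$X$, exactly as in \cref{constr:H6D5:NPhard}.

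\textbf{Gadgets.} The vertex gadget is a habitat~$H_i$ on a bounded number of vertices, in which all edges but a few are forced by auxiliary triangles and 4-cycles, via \cref{rr:forcededge}, exactly as in \cref{lem:H6D5:NPhard:included}. For~$\Delta=3$ triangles cost too much degree, so forcing there relies only on degree-2 vertices in auxiliary cycle-habitats. The unforced part of~$H_i$ has exactly two inclusion-minimal completions to a 2-connected graph: an ``in'' set of size~$s+1$ and an ``out'' set of size~$s$. The ``in'' set contains one designated port edge for each of the three ports, while the ``out'' set contains none. The edge gadget for~$\{v_i,v_j\}$ is a habitat~$H_{i,j}$ built like~$H_{i,j}^y\cup H_{i,j}^z$. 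It consists of two forced paths joining port~$p$ of~$V_i$ with port~$q$ of~$V_j$, linked by a forced rung, and it is 2-connected if and only if at least one of the two port edges lies in the solution. Finally, set $k$ as the number of forced edges plus~$sn+p$.

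\textbf{Correctness.} We then follow the proof of \cref{prop:H6D5:NPhard}. First, a forcing lemma shows that all auxiliary-habitat edges lie in every solution. Second, an exchange lemma analogous to \cref{lem:H6D5:NPhard:sol} lets us assume that each~$H_i$ uses exactly the ``in'' or exactly the ``out'' completion. Replacing a mixed choice by ``in'' never increases cost: the ``in'' set contains every port edge, so every~$H_{i,j}$ stays satisfied. The counting then shows that $W=\{v_i\mid \text{``in'' chosen}\}$ has size at most~$p$, and $W$ is a vertex cover because each uncovered edge leaves~$H_{i,j}$ with a cut vertex. The converse direction is the direct construction. We also check maximum degree and habitat sizes, and count the edges as in \cref{obs:H4D7:NPhard:seven}.

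\textbf{Main obstacle.} The hard step is designing the vertex gadget for~$\Delta=3$, and to a lesser extent for~$\Delta=4$. Each port vertex must carry an unforced intra-gadget edge, two forced edges into the edge gadget, and the forcing structure, all within degree~$3$. At the same time the forcing habitats must not create unintended connectivity that makes~$H_{i,j}$ 2-connected without a port edge. My plan is as follows. For~$\Delta=4$, subdivide each high-degree vertex of the \cref{constr:H6D5:NPhard} gadget into a short forced path, arriving at a 13-vertex~$H_i$. For~$\Delta=3$, replace each port by a ladder-like segment, a forced 4-cycle with one unforced rung, and chain three such segments around a cycle to get~$22$ vertices. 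I would first verify by hand that these have exactly the two minimal completions, and would adjust the sizes if they do not.
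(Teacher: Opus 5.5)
\textbf{There is a genuine gap: the vertex gadget is never built, and the architecture you sketch does not fit the degree bounds.} The whole content of the statement is that gadgets exist with maximum degree $4$ (resp.\ $3$) and habitat size $13$ (resp.\ $22$). You leave exactly this open (``I would first verify by hand \dots\ and would adjust the sizes''). The numbers $13$ and $22$ are read off the statement and attached to $H_i$, although in your scheme the edge habitats $H_{i,j}$ must also respect $\eta$.

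The architecture you borrow from \cref{constr:H6D5:NPhard} is a vertex habitat $H_i$ with an ``in'' and an ``out'' completion, whose port edges also lie in the edge habitats. A port endpoint must keep degree at least $2$ without its port edge in two places: in the ``out'' completion of $H_i$, and in $G[H_{i,j},F]$ when only the opposite port edge is chosen. At $\Delta=3$ this forces both of its non-port edges to lie in $H_i$ and in $H_{i,j}$ simultaneously. A port vertex carrying a forced edge that leaves $H_i$ therefore has degree at most $1$ in $G[H_i,F]$ under the ``out'' completion, so $H_i$ is not 2-connected. With the ``two forced edges into the edge gadget'' you describe, that degree is $0$. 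The same count kills your $\Delta=4$ plan: splitting a port vertex into a forced path that separates the $H_i$-side from the $H_{i,j}$-side leaves a pendant port endpoint in one of the two habitats.

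\textbf{The missing idea is that no vertex habitat is needed: a single unforced edge can serve all three edges incident to $v_i$.} Use two layers $c\in\{a,b\}$. Represent $v_i$ by a forced path in each layer, namely $r_i^cs_i^ct_i^c$ for $\Delta=4$ and $q_i^cr_i^cs_i^ct_i^cu_i^c$ for $\Delta=3$, plus the one unforced edge $e_i^*=\{s_i^a,s_i^b\}$. Let every edge habitat $H_{i,j}=H_{i,j}^a\cup H_{i,j}^b$ contain the layer paths of both endpoints. Each $H_{i,j}^c$ is an induced cycle, so all its edges are forced. The two cycles are joined in one of two ways:
\begin{itemize}
\item For $\Delta=4$, they share a vertex $x_{i,j}$ adjacent to $r_i^a,r_j^a,r_i^b,r_j^b$, giving $|H_{i,j}|=2\cdot 6+1=13$. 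The Hamiltonian cycle is not needed here, so you can reduce from \cvcAcr{}.
\item For $\Delta=3$, they are joined by a rung $\{x_{i,j}^a,x_{i,j}^b\}$, which is forced by auxiliary cycle habitats $H_i^\dagger$ and $H_{i,j}^\dagger$. The two edges of $C$ attach at $q,u$ and the chord attaches at $r,t$, so $|H_{i,j}|\le 2\cdot 11=22$.
\end{itemize}
Then $G[H_{i,j},F]$ is 2-connected if and only if $e_i^*\in F$ or $e_j^*\in F$, and every other edge is forced. Setting $k=|E'|-n+p$, the set $\{v_i\mid e_i^*\in F\}$ is directly a vertex cover of size at most $p$, and no exchange lemma is required.
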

Both reductions follow the same conceptional idea by Herkenrath~\etal~\cite[Constr.~3]{HerkenrathFGK22}.
We give a polynomial-time many-one reduction from \hcvcAcr~(\cref{prob:hcvc}).

\begin{construction}
 \label{constr:H22D3:NPhard}
 Let~$I=(G,p,C)$ be an instance of HCVC with~$G=(V,E)$ and let $V=\{v_1,\dots,v_n\}$ be enumerated such that~$C=(V,\bigcup_{i=1}^{n-1}\{\{v_i,v_{i+1}\}\}\cup \{\{v_n,v_1\}\})$ forms a Hamiltonian cycle in~$G$.
 Construct an instance~$I'=(G',\calH,k)$ of~\brgbpAcr{} as follows
 (see \cref{fig:H22D3:NPhard} for an illustration).
 \begin{figure}[t]
	\centering
	\begin{tikzpicture}
	 \def\xr{1}
	 \def\yr{1}
	 \def\ysq{0}
	 \tikzpreamble{}

	 \newcommand{\nodegaget}[4]{%
	  \ifstrequal{#1}{i}{\def\ncolA{blue}\def\ncolB{magenta}}{\ifstrequal{#1}{j}{\def\ncolA{blue}\def\ncolB{magenta}}{\def\ncolA{black}\def\ncolB{black}}}
	  \ifstrequal{#1}{i}{\def\ncolC{yellow}}{\def\ncolC{black}}

		\node (#1sa) at (#2*\xr+0*\xr,#3*\yr+0.5*\yr-\ysq*\yr)[xnode,label=0:{$s_{#4}^a$},color=\ncolA]{};
		\node (#1sb) at (#2*\xr+0*\xr,#3*\yr-0.5*\yr+\ysq*\yr)[xnode,label=0:{$s_{#4}^b$},color=\ncolB]{};
		\node (#1ra) at (#2*\xr-0.5*\xr,#3*\yr+1*\yr-\ysq*\yr)[xnode,label=0:{$r_{#4}^a$},color=\ncolA]{};
		\node (#1rb) at (#2*\xr-0.5*\xr,#3*\yr-1*\yr+\ysq*\yr)[xnode,label=0:{$r_{#4}^b$},color=\ncolB]{};
		\node (#1ta) at (#2*\xr+0.5*\xr,#3*\yr+1*\yr-\ysq*\yr)[xnode,label=0:{$t_{#4}^a$},color=\ncolA]{};
		\node (#1tb) at (#2*\xr+0.5*\xr,#3*\yr-1*\yr+\ysq*\yr)[xnode,label=0:{$t_{#4}^b$},color=\ncolB]{};
		\node (#1qa) at (#2*\xr-0.5*\xr,#3*\yr+1.5*\yr-\ysq*\yr)[xnode,label=0:{$q_{#4}^a$},color=\ncolC]{};
		\node (#1qb) at (#2*\xr-0.5*\xr,#3*\yr-1.5*\yr+\ysq*\yr)[xnode,label=0:{$q_{#4}^b$},color=\ncolC]{};
		\node (#1ua) at (#2*\xr+0.5*\xr,#3*\yr+1.5*\yr-\ysq*\yr)[xnode,label=0:{$u_{#4}^a$}]{};
		\node (#1ub) at (#2*\xr+0.5*\xr,#3*\yr-1.5*\yr+\ysq*\yr)[xnode,label=0:{$u_{#4}^b$}]{};

		\foreach \c in {a,b}{
			\foreach \x/\y in {q/r,r/s,s/t,t/u}{
					\draw[xfedge] (#1\x\c) -- (#1\y\c);
			}
		}
		\draw[xedge] (#1sa) -- (#1sb);
	 }

	 \nodegaget{im}{-3}{0}{i-1}
	 \nodegaget{i}{0}{0}{i}
	 \nodegaget{ip}{3}{0}{i+1}
	 \nodegaget{j}{7}{0}{j}
	 \node at (4.5*\xr,0*\yr)[]{$\cdots$};

	 \node (ximia) at ($(imsa)!0.5!(isa)$)[xnode,label=0:{$x_{i-1,i}^a$},color=yellow]{};
	 \node (ximib) at ($(imsb)!0.5!(isb)$)[xnode,label=0:{$x_{i-1,i}^b$},color=yellow]{};
	 \draw[xfedge] (ximia) -- (ximib);
	 \node (xiipa) at ($(isa)!0.5!(ipsa)$)[xnode,label=0:{$x_{i,i+1}^a$},color=yellow]{};
	 \node (xiipb) at ($(isb)!0.5!(ipsb)$)[xnode,label=0:{$x_{i,i+1}^b$},color=yellow]{};
	 \draw[xfedge] (xiipa) -- (xiipb);
	 \node (xija) at ($(isa)!0.75!(jsa)$)[xnode,label=0:{$x_{i,j}^a$},color=blue]{};
	 \node (xijb) at ($(isb)!0.75!(jsb)$)[xnode,label=0:{$x_{i,j}^b$},color=magenta]{};
	 \draw[xfedge] (xija) -- (xijb);

	 \foreach \x/\y in {im/i,i/ip}{
		 \draw[xfedge] (\x ua) to [out=25,in=155](\y ua);
		 \draw[xfedge] (\x ub) to [out=-25,in=-155](\y ub);
		 \draw[xfedge] (\x qa) to [out=75,in=90](x\x\y a);
		 \draw[xfedge] (x\x\y a) to (\y qa);
		 \draw[xfedge] (\x qb) to [out=-75,in=-90](x\x\y b);
		 \draw[xfedge] (x\x\y b) to (\y qb);
	 }

		\draw[xfedge] (ita) to [out=25,in=155](jta);
		\draw[xfedge] (itb) to [out=-25,in=-155](jtb);
		\draw[xfedge] (ira) to [out=45,in=135](xija);
		\draw[xfedge] (xija) to (jra);
		\draw[xfedge] (irb) to [out=-45,in=-135](xijb);
		\draw[xfedge] (xijb) to (jrb);

		\draw[xhabA] \xnw{imqa} -- \xsw{imra} -- \xsw{imsa} -- \xse{isa} -- \xse{ita} -- \xne{iua} -- cycle;
		\draw[xhabB] \xsw{imqb} -- \xnw{imrb} -- \xnw{imsb} -- \xne{isb} -- \xne{itb} -- \xse{iub} -- cycle;
		\def\teps{0.3*\xr*\yr}
		\draw[xhabC] \xnw{imqa} -- \xsw{imqb} -- \xse{iub} -- \xne{iua} -- cycle;
		\def\teps{0.35*\xr*\yr}

		\draw[xhabF, line width=4pt] (ximia) to (iqa) to (ira) to [out=45,in=135](xija) to (xijb) to [in=-45,out=-135](irb) to (iqb) to (ximib) to (ximia);

	\end{tikzpicture}
	\caption{Illustration to \cref{constr:H22D3:NPhard}: An excerpt around~$V_i$ where~$(i,j)\in X''$, with habitats $H_{i,i+1}^a$ (cyan), $H_{i,i+1}^b$ (orange), and $H_{i,i+1}$ (magenta), as well as $H_{i,j}^\dagger$ (green), $H_{i,j}^a$ (blue vertices),
	$H_{i,j}^b$ (magenta vertices), and~$H_i^\dagger$ (yellow vertices).}
	\label{fig:H22D3:NPhard}
 \end{figure}
 For each~$i\in\set{n}$,
 add the vertex set~$V_i=V_i^a\cup V_i^b$ with~$V_i^c=\{q_i^c,r_i^c,s_i^c,t_i^c,u_i^c\}$ for~$c\in\{a,b\}$
 and the edge set~$E_i=E_i^a\cup E_i^b\cup \{e_i^*\}$ with~$E_i^c=\{\{q_i^c,r_i^c\},\{r_i^c,s_i^c\},\{s_i^c,t_i^c\},\{t_i^c,u_i^c\}\}$ for~$c\in\{a,b\}$,
 and~$e_i^*=\{s_i^a,s_i^b\}$.
 Let~$X=X'\cup X''$ with~$X'=\{(i,i+1)\mid i\in\set{n-1}\}\cup\{(n,1)\}$
 and~$X''=\{(i,j)\mid \{v_i,v_j\}\in E\land i<j\}\setminus(X'\cup \{(1,n)\})$
 (note that~$|X|=|E|$).
 For each~$(i,j)\in X$,
 add vertices~$x_{i,j}^a$ and $x_{i,j}^b$,
 and the edge~$\{x_{i,j}^a,x_{i,j}^b\}$.
 For each~$(i,j)\in X'$,
 add the edge set~$E_{i,j}=E_{i,j}^a\cup E_{i,j}^b$ with~$E_{i,j}^c=\{\{q_i^c,x_{i,j}^c\},\{q_j^c,x_{i,j}^c\},\{u_i^c,u_j^c\}\}$ for~$c\in\{a,b\}$;
 add the habitats~$H_{i,j}^c=V_i^c\cup V_j^c\cup\{x_{i,j}^c\}$ for each~$c\in\{a,b\}$,
 and the habitat~$H_{i,j}=H_{i,j}^a\cup H_{i,j}^b$.
 For each~$(i,j)\in X''$,
 add the edge set~$E_{i,j}=E_{i,j}^a\cup E_{i,j}^b$
 with~$E_{i,j}^c=\{\{r_i^c,x_{i,j}^c\},\{r_j^c,x_{i,j}^c\},\{t_i^c,t_j^c\}\}$ for~$c\in\{a,b\}$;
 add the habitats~$H_{i,j}^c=(V_i^c\setminus\{q_i^c,u_i^c\})\cup (V_j^c\setminus\{q_j^c,u_j^c\})\cup\{x_{i,j}^c\}$ for each~$c\in\{a,b\}$,
 and the habitat~$H_{i,j}=H_{i,j}^a\cup H_{i,j}^b$.
 Moreover,
 add the habitat~$H_{i,j}^\dagger=\{x_{i',i}^1,q_i^a,r_i^a,x_{i,j}^a,x_{i,j}^b,r_i^b,q_i^b,x_{i',i}^b\}$,
 where~$i'=i-1$ if~$i>1$ and~$i'=n$ otherwise.
 Finally,
 for each~$i\in\set{n}$,
 add habitat~$H_i^\dagger=\{x_{i',i}^a,q_i^a,x_{i,i''}^a,x_{i,i''}^b,q_i^b,x_{i',i}^b\}$,
 where~$i'=i-1$ if~$i>1$ and~$i'=n$ otherwise,
 and~$i''=i+1$ if~$i<n$ and~$i''=1$ otherwise.
 Set~$k=|E'|-n+p$.
 \cqed
\end{construction}

\begin{lemma}\label{lem:H22D3:NPhard}
 Let~$I'$ be a \yes-instance.
 Then,
 for every solution~$F$ it holds that if~$e\in E'\setminus\bigcup_{i=1}^n \{\{s_i^a,s_i^b\}\}$,
 then~$e\in F$.
\end{lemma}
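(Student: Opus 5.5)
We prove \cref{lem:H22D3:NPhard} habitat by habitat, using the same principle as \cref{rr:forcededge}. If in $G'[H]$ a vertex has degree two, both its incident edges are in every solution. More generally, if $G'[H]-e$ is not 2-connected, then $e$ lies in every solution. So for each edge type of $E'$ other than the $e_i^*=\{s_i^a,s_i^b\}$, we exhibit one habitat in whose induced graph that edge is such a ``critical'' edge.

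First, the auxiliary habitats $H_i^\dagger$ and $H_{i,j}^\dagger$.
\begin{itemize}
\item For each $i$, $G'[H_i^\dagger]$ on $\{x_{i',i}^a,q_i^a,x_{i,i''}^a,x_{i,i''}^b,q_i^b,x_{i',i}^b\}$ is a $6$-cycle. Its edges are $\{x_{i',i}^c,q_i^c\}$, $\{q_i^c,x_{i,i''}^c\}$ for $c\in\{a,b\}$, and $\{x_{i',i}^a,x_{i',i}^b\}$, $\{x_{i,i''}^a,x_{i,i''}^b\}$. Every vertex has degree two, so all six edges are forced. This covers all $q$--$x$ edges of $E_{i,j}$ for $(i,j)\in X'$ and all $x^a$--$x^b$ edges for $(i,j)\in X'$.
\item For $(i,j)\in X''$, $G'[H_{i,j}^\dagger]$ (reading $x_{i',i}^1$ as $x_{i',i}^a$) is an $8$-cycle through $x_{i',i}^a,q_i^a,r_i^a,x_{i,j}^a,x_{i,j}^b,r_i^b,q_i^b,x_{i',i}^b$. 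Again all vertices have degree two. This forces $\{q_i^c,r_i^c\}$, $\{r_i^c,x_{i,j}^c\}$ and $\{x_{i,j}^a,x_{i,j}^b\}$.
\end{itemize}
We must check the induced graphs contain no further edges (chords). By construction, $q$'s neighbors are only $r$ and the $x$'s of consecutive pairs, and $x_{i,j}^c$ for $X''$ is adjacent only to $r_i^c,r_j^c,x_{i,j}^{\bar c}$, so no chords arise.

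Second, the habitats $H_{i,j}^c$ handle the remaining path edges and the $u$--$u$ and $t$--$t$ edges.
\begin{itemize}
\item For $(i,j)\in X'$, $G'[H_{i,j}^c]$ on $V_i^c\cup V_j^c\cup\{x_{i,j}^c\}$ is the cycle $q_i^c r_i^c s_i^c t_i^c u_i^c u_j^c t_j^c s_j^c r_j^c q_j^c x_{i,j}^c$. Here $s_i^a s_i^b$ is absent since only one side $c$ is present. Being a cycle, all its edges are forced, including $E_i^c$, $E_j^c$ and $\{u_i^c,u_j^c\}$.
\item For $(i,j)\in X''$, $G'[H_{i,j}^c]$ is the cycle $r_i^c s_i^c t_i^c t_j^c s_j^c r_j^c x_{i,j}^c$, which forces $\{r_j^c,x_{i,j}^c\}$ and $\{t_i^c,t_j^c\}$.
\end{itemize}
Every $j$ lies in some pair of $X'$, since $X'$ covers the Hamiltonian cycle. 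Hence every $E_i^c$ is covered, and every edge of $E'$ except the $e_i^*$ is forced.

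The main obstacle is bookkeeping rather than ideas. We must verify that each claimed induced subgraph is exactly a cycle, i.e.\ that no unexpected chord exists. This is where the degree-3 structure matters: the $s$-vertices carry $e_i^*$ and the $t$/$u$ vertices carry inter-gadget edges. We also have to handle the wrap-around indices $i'=n$ and $i''=1$ carefully. In $H_{i,j}$, edges such as $e_i^*$ stay unforced, which is consistent with the lemma's exception. Finally, we must confirm that the lemma ranges over all of $E'$ and that every edge was listed in one of the cases above.
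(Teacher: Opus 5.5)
Your proof is correct and takes essentially the paper's route. The paper likewise observes that every $G'[H_{i,j}^c]$ induces a cycle, which forces all edges except the $e_i^*$ and the $\{x_{i,j}^a,x_{i,j}^b\}$, and then uses the cycles induced by $H_i^\dagger$ (for $(i,j)\in X'$) and $H_{i,j}^\dagger$ (for $(i,j)\in X''$) to force the $\{x_{i,j}^a,x_{i,j}^b\}$; your explicit check that these induced subgraphs are chordless is a detail the paper leaves implicit.
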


\begin{proof}
 Since for each~$(i,j)\in X$,
 we have that each of~$H_{i,j}^a$ and $H_{i,j}^b$ induce a cycle,
 It remains to show that for all~$(i,j)\in X$ edge~$\{x_{i,j}^a,x_{i,j}^b\}$ is contained in~$F$.
 If~$(i,j)\in X'$,
 then
 the habitat~$H_i^\dagger$ induces a cycle and enforces the edge.
 If~$(i,j)\in X''$,
 then
 the habitat~$H_{i,j}^\dagger$ induces a cycle and enforces the edge.
\end{proof}

\begin{proof}[Proof of \cref{prop:D3D4:NPhard}\eqref{item:H22D3:NPhard}]
 Let~$I'$ be the instance obtained from input instance~$I=(G,p,C)$ of \prob{HCVC}
 via~\cref{constr:H22D3:NPhard}.
 We prove that~$I$ is a \yes-instance if and only if~$I'$ is a \yes-instance.

 \RD{}
 Let~$W\subseteq V$ be a vertex cover with~$|W|=p$ and let~$\calI_W\ceq \{i\in \set{n}\mid v_i\in W\}$ and~$\calJ_W\ceq\set{n}\setminus \calI_W$.
 Let~$F\ceq E'\setminus \bigcup_{i\in \calJ_W} \{\{s_i^a,s_i^b\}\}$.
 It holds that~$|F|=|E'|-n+p$.
 We claim that~$F$ is a solution.
 Observe that for each~$i\in\set{n}$,
 $G[H_i^\dagger,F]$ is 2-connected since every edge in~$G[H_i^\dagger]$ is contained in~$F$.
 With the same argument,
 it holds that each of $G[H_{i,j}^\dagger,F]$, $G[H_{i,j}^a,F]$, and $G[H_{i,j}^b,F]$ is 2-connected for every~$(i,j)\in X$.
 Suppose towards a contradiction that there is~$(i,j)\in X$ such that~$G[H_{i,j},F]$ is not 2-connected.
 This implies that neither~$\{s_i^a,s_i^b\}$ nor~$\{s_j^a,s_j^b\}$ are contained in~$F$
 (note that all remaining edges in~$G[H_{i,j}]$ are contained in~$F$).
 Thus,
 neither~$v_i$ nor~$v_j$ are in~$W$,
 and hence the edge~$\{v_i,v_j\}\in E$ is not covered by~$W$;
 this contradicts the fact that~$W$ is a vertex cover.

 \LD{}
 Let~$F$ be a solution to~$I'$.
 Let~$W\ceq \{v_i\in V\mid \{s_i^a,s_i^b\}\in F\}$.
 We claim that~$W$ is a vertex cover in~$G$ of size at most~$p$.
 Due to~\cref{lem:H22D3:NPhard},
 we know that~$|W|= |F| - |F\cap (E'\setminus\bigcup_{i=1}^n \{\{s_i^a,s_i^b\})| \leq |E'|-n+p - (|E'|-n) = p $.
 Suppose towards a contradiction that there is an edge~$e=\{v_i,v_j\}\in E$ with~$e\cap W=\emptyset$.
 Then,
 neither $\{s_i^a,s_i^b\}$ nor~$\{s_j^a,s_j^b\}$ are contained in~$F$.
 since $F$ contains all edges from~$G[H_{i,j}]$ except for $\{s_i^a,s_i^b\}$ and~$\{s_j^a,s_j^b\}$,
 we have that the habitat~$G[H_{i,j},F]$ is not 2-connected,
 with separators~$x_{i,j}^a$ and $x_{i,j}^b$;
 a contradiction to the fact that~$F$ is a solution.
\end{proof}

\begin{remark}
 \label{rem:H22D3:NPhard}
 The reduction works also for the edge-variant of \rgbpAcr{}
 due to the following:
 The logic for the habitats~$H_{i,j}$ works exactly the same way
 for 2-edge-connectivity,
 since the edge~$\{x_{i,j}^a,x_{i,j}^b\}$ is a cut when none of the two unforced edges~$\{s_i^a,s_i^b\}$, $\{s_j^a,s_j^b\}$ is contained in the solution.
 All the remaining, auxiliary habitats induce cycles
 and hence enforce exactly the same edge set.
\end{remark}

We give a polynomial-time many-one reduction from \cvcTsc{} (\cvcAcr),
which denotes the problem \hcvcAcr{} when the Hamiltonian cycle is excluded from the input.

\begin{construction}\label{constr:H13D4:NPhard}
Let~$I=(G,p)$ be an instance of \cvcAcr{} with~$G=(V,E)$ and let $V=\{v_1,\dots,v_n\}$ be enumerated.
 We construct an instance~$I'=(G',k,\calH)$ of~\rgbpAcr{} as follows
 (see \cref{fig:H13D4:NPhard} for an illustration).
 \begin{figure}[t]
	\centering
	\begin{tikzpicture}
	 \def\xr{1}
	 \def\yr{1}
	 \tikzpreamble{}

	 \newcommand{\nodegaget}[4]{%
	  \ifstrequal{#1}{i}{\def\ncolC{yellow}}{\def\ncolC{black}}

		\node (#1sa) at (#2*\xr+0*\xr,#3*\yr+0.5*\yr)[xnode,label=0:{$s_{#4}^a$}]{};
		\node (#1sb) at (#2*\xr+0*\xr,#3*\yr-0.5*\yr)[xnode,label=0:{$s_{#4}^b$}]{};
		\node (#1ra) at (#2*\xr-0.5*\xr,#3*\yr+1*\yr)[xnode,label=0:{$r_{#4}^a$}]{};
		\node (#1rb) at (#2*\xr-0.5*\xr,#3*\yr-1*\yr)[xnode,label=0:{$r_{#4}^b$}]{};
		\node (#1ta) at (#2*\xr+0.5*\xr,#3*\yr+1*\yr)[xnode,label=0:{$t_{#4}^a$}]{};
		\node (#1tb) at (#2*\xr+0.5*\xr,#3*\yr-1*\yr)[xnode,label=0:{$t_{#4}^b$}]{};

		\foreach \c in {a,b}{
			\foreach \x/\y in {r/s,s/t}{
					\draw[xfedge] (#1\x\c) -- (#1\y\c);
			}
		}
		\draw[xedge] (#1sa) -- (#1sb);
	 }

	 \nodegaget{im}{-3}{0}{i-1}
	 \nodegaget{i}{0}{0}{i}
	 \nodegaget{ip}{3}{0}{i+1}
	 \nodegaget{j}{7}{0}{j}
	 \node at (4.5*\xr,0*\yr)[]{$\cdots$};

	 \node (ximia) at ($(imsa)!0.5!(isa)$)[]{};
	 \node (ximib) at ($(imsb)!0.5!(isb)$)[]{};
	 \node (ximi) at ($(ximia)!0.5!(ximib)$)[xnode,label=0:{$x_{i-1,i}$}]{};
	 \node (xiipa) at ($(isa)!0.5!(ipsa)$)[]{};
	 \node (xiipb) at ($(isb)!0.5!(ipsb)$)[]{};
	 \node (xiip) at ($(xiipa)!0.5!(xiipb)$)[xnode,label=0:{$x_{i,i+1}$}]{};
	 \node (xija) at ($(isa)!0.75!(jsa)$)[]{};
	 \node (xijb) at ($(isb)!0.75!(jsb)$)[]{};
	 \node (xij) at ($(xija)!0.75!(xijb)$)[xnode,label=0:{$x_{i,j}$}]{};

	 \foreach \x/\y in {im/i,i/ip}{
		 \draw[xfedge] (\x ta) to [out=25,in=155](\y ta);
		 \draw[xfedge] (\x tb) to [out=-25,in=-155](\y tb);
		 \draw[xfedge] (\x ra) to [out=75,in=90](x\x\y);
		 \draw[xfedge] (x\x\y) to (\y ra);
		 \draw[xfedge] (\x rb) to [out=-75,in=-90](x\x\y);
		 \draw[xfedge] (x\x\y) to (\y rb);
	 }

		\draw[xfedge] (ita) to [out=25,in=155](jta);
		\draw[xfedge] (itb) to [out=-25,in=-155](jtb);
		\draw[xfedge] (ira) to [out=45,in=115](xij);
		\draw[xfedge] (xij) to (jra);
		\draw[xfedge] (irb) to [out=-45,in=-115](xij);
		\draw[xfedge] (xij) to (jrb);

		\draw[xhabA] \xnw{imra} -- \xsw{imra} -- \xsw{imsa} -- \xsw{ximi} -- \xse{ximi} -- \xse{isa} -- \xse{ita} -- \xne{ita} -- cycle;
		\draw[xhabB] \xsw{imrb} -- \xnw{imrb} -- \xnw{imsb} -- \xnw{ximi} -- \xne{ximi} -- \xne{isb} -- \xne{itb} -- \xse{itb} -- cycle;
		\def\teps{0.3*\xr*\yr}
		\draw[xhabC] \xnw{imra} -- \xsw{imrb} -- \xse{itb} -- \xne{ita} -- cycle;
		\def\teps{0.35*\xr*\yr}
	\end{tikzpicture}
	\caption{Illustration to \cref{constr:H13D4:NPhard}: An excerpt around~$V_i$ where~$(i,j)\in X''$, with habitats $H_{i,i+1}^a$ (cyan), $H_{i,i+1}^b$ (orange), and $H_{i,i+1}$ (magenta).}
	\label{fig:H13D4:NPhard}
 \end{figure}
 For each~$i\in\set{n}$,
 add the vertices~$V_i=V_i^a\cup V_i^b$ with~$V_i^c=\{r_i^c,s_i^c,t_i^c\}$ for~$c\in\{a,b\}$
 and the edges~$E_i=E_i^a\cup E_i^b\cup \{e_i^*\}$ with~$E_i^c=\{\{r_i^c,s_i^c\},\{s_i^c,t_i^c\}\}$ for each~$c\in\{a,b\}$, and~$e_i^*=\{s_i^a,s_i^b\}$.
 For each~$(i,j)\in X\ceq \{(i,j)\mid \{v_i,v_j\}\in E\land i<j\}$,
 add~$x_{i,j}$.
 For each~$c\in\{a,b\}$,
 add the edges~$E_{i,j}=E_{i,j}^a\cup E_{i,j}^b$ with~$E_{i,j}^c=\{\{r_i^c,x_{i,j}\},\{r_j^c,x_{i,j}\},\{t_i^c,t_j^c\}\}$ for~$c\in\{a,b\}$.
 For each~$(i,j)\in X$,
 add the habitats~$H_{i,j}^c=V_i^c\cup V_j^c\cup\{x_{i,j}\}$ for~$c\in\{a,b\}$,
 and the habitat~$H_{i,j}=H_{i,j}^a\cup H_{i,j}^b$.
 Set~$k=|E'|-n+p$.
 \cqed
\end{construction}

\begin{lemma}\label{lem:H13D4:NPhard}
 Let~$I'$ be a \yes-instance.
 Then,
 for every solution~$F$ it holds that if~$e\in E'\setminus\bigcup_{i=1}^n \{\{s_i^a,s_i^b\}\}$,
 then~$e\in F$.
\end{lemma}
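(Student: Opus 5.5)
The plan is to copy the proof of \cref{lem:H22D3:NPhard}. That is, we use the fact that every edge of $E'$ other than the edges $e_i^*=\{s_i^a,s_i^b\}$ lies in a habitat whose induced graph is a cycle. Recall the observation after \cref{rr:forcededge}. If $G'[H]$ is a cycle, then every vertex of $G'[H]$ has degree two. Removing any single edge then leaves a path, which is not 2-connected. Hence every solution $F$ must contain all edges of $G'[H]$.

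First, I would fix $(i,j)\in X$ and $c\in\{a,b\}$ and determine $G'[H_{i,j}^c]$, where $H_{i,j}^c=\{r_i^c,s_i^c,t_i^c,r_j^c,s_j^c,t_j^c,x_{i,j}\}$. Its edges are $\{r_i^c,s_i^c\}$, $\{s_i^c,t_i^c\}$, $\{t_i^c,t_j^c\}$, $\{t_j^c,s_j^c\}$, $\{s_j^c,r_j^c\}$, $\{r_j^c,x_{i,j}\}$ and $\{x_{i,j},r_i^c\}$. Together these form the $7$-cycle
\[(r_i^c,s_i^c,t_i^c,t_j^c,s_j^c,r_j^c,x_{i,j}).\]
I must also check that no further edge of $E'$ has both endpoints in $H_{i,j}^c$. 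The only edges $e_i^*,e_j^*$ join an $a$-vertex to a $b$-vertex, so they are excluded. Each edge $\{t_i^c,t_{j'}^c\}$ or $\{r_i^c,x_{i,j'}\}$ coming from another pair $(i,j')$ has an endpoint outside $H_{i,j}^c$. One subtle point is whether two distinct pairs in $X$ could create parallel or extra edges inside the same vertex set. This cannot happen, because each pair $(i,j)$ has its own vertex $x_{i,j}$ and contributes its own edge $\{t_i^c,t_j^c\}$. Hence $G'[H_{i,j}^c]$ is exactly a cycle, and all seven of its edges lie in $F$.

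Second, I would argue that every edge of $E'\setminus\bigcup_i\{e_i^*\}$ appears in some $E(G'[H_{i,j}^c])$. The edges of $E_i^a\cup E_i^b$ are of this form, provided that each $v_i$ has at least one incident edge in $G$. This holds because $G$ is cubic. The edges of $E_{i,j}$ are covered by construction. So each such edge is in every solution, which proves the lemma. The only real obstacle is the bookkeeping check that the induced subgraph contains no chords. Unlike in \cref{constr:H22D3:NPhard}, the vertex $x_{i,j}$ is shared between the $a$-side and the $b$-side. So I would explicitly confirm that the edges $\{r_i^{\bar c},x_{i,j}\}$ of the other side have an endpoint outside $H_{i,j}^c$.
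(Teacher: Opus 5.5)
Your proposal is correct and matches the paper's proof: each $G'[H_{i,j}^c]$ is the $7$-cycle $(r_i^c,s_i^c,t_i^c,t_j^c,s_j^c,r_j^c,x_{i,j})$, so all of its edges are forced, and these cycles cover every edge other than the $e_i^*$. Your explicit chord check and your remark that every $v_i$ is non-isolated because $G$ is cubic make precise what the paper's two-line proof leaves implicit.
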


\begin{proof}
 For each~$(i,j)\in X$ and~$c\in\{a,b\}$,
 $G[H_{i,j}^c]$ induces a cycle on seven vertices.
 Thus,
 every edge in~$E_i^c$, $E_j^c$, and $E_{i,j}^c$
 is contained in~$F$.
\end{proof}

\begin{proof}[Proof of \cref{prop:D3D4:NPhard}\eqref{item:H13D4:NPhard}]
 Let~$I'$ be the instance obtained from input instance~$I=(G,p,C)$ of \prob{HCVC}
 via~\cref{constr:H13D4:NPhard}.
 We prove that~$I$ is a \yes-instance if and only if~$I'$ is a \yes-instance.

 \RD{}
 Let~$W\subseteq V$ be a vertex cover with~$|W|=p$ and let~$\calI_W\ceq \{i\in \set{n}\mid v_i\in W\}$ and~$\calJ_W\ceq\set{n}\setminus \calI_W$.
 Let~$F\ceq E'\setminus \bigcup_{i\in \calJ_W} \{\{s_i^a,s_i^b\}\}$.
 It holds that~$|F|=|E'|-n+p$.
 We claim that~$F$ is a solution.
 Observe that for every~$(i,j)\in X$,
 each of $G[H_{i,j}^a,F]$ and $G[H_{i,j}^b,F]$ is 2-connected since every edge in~$G[H_{i,j}^a]$ and $G[H_{i,j}^b]$ is contained in~$F$.
 Suppose towards a contradiction that there is~$(i,j)\in X$ such that~$G[H_{i,j},F]$ is not 2-connected.
 This implies that neither~$\{s_i^a,s_i^b\}$ nor~$\{s_j^a,s_j^b\}$ are contained in~$F$
 (note that all remaining edges in~$G[H_{i,j}]$ are contained in~$F$).
 Thus,
 neither~$v_i$ nor~$v_j$ are in~$W$,
 and hence the edge~$\{v_i,v_j\}\in E$ is not covered by~$W$;
 this contradicts the fact that~$W$ is a vertex cover.

 \LD{}
 Let~$F$ be a solution to~$I'$.
 Let~$W\ceq \{v_i\in V\mid \{s_i^a,s_i^b\}\in F\}$.
 We claim that~$W$ is a vertex cover in~$G$ of size at most~$p$.
 Due to~\cref{lem:H13D4:NPhard},
 we know that~$|W|= |F| - |F\cap (E'\setminus\bigcup_{i=1}^n \{\{s_i^a,s_i^b\}\})| \leq |E'|-n+p - (|E'|-n) = p $.
 Suppose towards a contradiction that there is an edge~$e=\{v_i,v_j\}\in E$ with~$e\cap W=\emptyset$.
 Then,
 neither $\{s_i^a,s_i^b\}$ nor~$\{s_j^a,s_j^b\}$ are contained in~$F$.
 Since $F$ hence contains all edges from~$G[H_{i,j}]$ except for $\{s_i^a,s_i^b\}$ and~$\{s_j^a,s_j^b\}$,
 we have that the habitat~$G[H_{i,j},F]$ is not 2-connected,
 with separator~$x_{i,j}$;
 a contradiction to the fact that~$F$ is a solution.
\end{proof}

\begin{remark}\label{rem:H14D4:NPhard}
 Note that the reduction does not work for 2-edge-connectivity.
 Yet,
 with a small adaptation the reduction also works for 2-edge-connectivity,
 i.e.,
 when replacing each~$x_{i,j}$ with two adjacent vertices~$x_{i,j}^a$ and~$x_{i,j}^b$
 (see \cref{fig:H14D4:NPhard} for an illustration).
 \begin{figure}[t]
	\centering
	\begin{tikzpicture}
	 \def\xr{1}
	 \def\yr{1}
	 \tikzpreamble{}

	 \newcommand{\nodegaget}[4]{%
	  \ifstrequal{#1}{i}{\def\ncolC{yellow}}{\def\ncolC{black}}

		\node (#1sa) at (#2*\xr+0*\xr,#3*\yr+0.5*\yr)[xnode,label=0:{$s_{#4}^a$}]{};
		\node (#1sb) at (#2*\xr+0*\xr,#3*\yr-0.5*\yr)[xnode,label=0:{$s_{#4}^b$}]{};
		\node (#1ra) at (#2*\xr-0.5*\xr,#3*\yr+1*\yr)[xnode,label=0:{$r_{#4}^a$}]{};
		\node (#1rb) at (#2*\xr-0.5*\xr,#3*\yr-1*\yr)[xnode,label=0:{$r_{#4}^b$}]{};
		\node (#1ta) at (#2*\xr+0.5*\xr,#3*\yr+1*\yr)[xnode,label=0:{$t_{#4}^a$}]{};
		\node (#1tb) at (#2*\xr+0.5*\xr,#3*\yr-1*\yr)[xnode,label=0:{$t_{#4}^b$}]{};

		\foreach \c in {a,b}{
			\foreach \x/\y in {r/s,s/t}{
					\draw[xfedge] (#1\x\c) -- (#1\y\c);
			}
		}
		\draw[xedge] (#1sa) -- (#1sb);
	 }

	 \nodegaget{im}{-3}{0}{i-1}
	 \nodegaget{i}{0}{0}{i}
	 \nodegaget{ip}{3}{0}{i+1}
	 \nodegaget{j}{7}{0}{j}
	 \node at (4.5*\xr,0*\yr)[]{$\cdots$};

	 \node (ximia) at ($(imsa)!0.5!(isa)$)[xnode,label=0:{$x_{i-1,i}^a$}]{};
	 \node (ximib) at ($(imsb)!0.5!(isb)$)[xnode,label=0:{$x_{i-1,i}^b$}]{};
	 \draw[xfedge] (ximia) -- (ximib);
	 \node (xiipa) at ($(isa)!0.5!(ipsa)$)[xnode,label=0:{$x_{i,i+1}^a$}]{};
	 \node (xiipb) at ($(isb)!0.5!(ipsb)$)[xnode,label=0:{$x_{i,i+1}^b$}]{};
	 \draw[xfedge] (xiipa) -- (xiipb);
	 \node (xija) at ($(isa)!0.75!(jsa)$)[xnode,label=0:{$x_{i,j}^a$}]{};
	 \node (xijb) at ($(isb)!0.75!(jsb)$)[xnode,label=0:{$x_{i,j}^b$}]{};
	 \draw[xfedge] (xija) -- (xijb);

	 \foreach \x/\y in {im/i,i/ip}{
		 \draw[xfedge] (\x ta) to [out=25,in=155](\y ta);
		 \draw[xfedge] (\x tb) to [out=-25,in=-155](\y tb);
		 \draw[xfedge] (\x ra) to [out=75,in=90](x\x\y a);
		 \draw[xfedge] (x\x\y a) to (\y ra);
		 \draw[xfedge] (\x rb) to [out=-75,in=-90](x\x\y b);
		 \draw[xfedge] (x\x\y b) to (\y rb);
	 }

		\draw[xfedge] (ita) to [out=25,in=155](jta);
		\draw[xfedge] (itb) to [out=-25,in=-155](jtb);
		\draw[xfedge] (ira) to [out=45,in=135](xija);
		\draw[xfedge] (xija) to (jra);
		\draw[xfedge] (irb) to [out=-45,in=-135](xijb);
		\draw[xfedge] (xijb) to (jrb);

		\draw[xhabA] \xnw{imra} -- \xsw{imra} -- \xsw{imsa} -- \xse{isa} -- \xse{ita} -- \xne{ita} -- cycle;
		\draw[xhabB] \xsw{imrb} -- \xnw{imrb} -- \xnw{imsb} -- \xne{isb} -- \xne{itb} -- \xse{itb} -- cycle;
		\def\teps{0.3*\xr*\yr}
		\draw[xhabC] \xnw{imra} -- \xsw{imrb} -- \xse{itb} -- \xne{ita} -- cycle;
		\def\teps{0.35*\xr*\yr}
		\draw[xhabF, line width=4pt] (ximia) to (ira) to [out=45,in=135](xija) to (xijb) to [in=-45,out=-135](irb) to (ximib) to (ximia);
	\end{tikzpicture}
	\caption{Illustration to the adaptation of \cref{constr:H13D4:NPhard} for 2-edge-connectivity: An excerpt around~$V_i$ where~$(i,j)\in X''$, with habitats $H_{i,i+1}^a$ (cyan), $H_{i,i+1}^b$ (orange), and $H_{i,i+1}$ (magenta). The green-colored habitat enforces the edges~$\{x_{i-1,i}^a,x_{i-1,i}^b\}$ and~$\{x_{i,j}^a,x_{i,j}^b\}$, exemplarily.}
	\label{fig:H14D4:NPhard}
 \end{figure}
 Note,
 however,
 this adaptation comes with the cost of having habitats of size fourteen.
\end{remark}

\section{Conclusion}

Our results suggest that
some real-world instances of our problem could be tractable.
For example,
graphs derived from Manhattan-like street networks
are likely of maximum degree at most 4
(each bounded face has at most four adjacent faces).
Thus,
when habitats are of size at most 5,
these instances are tractable.
Even if the maximum degree increases from 4 to 5,
for small habitats of size at most 4
we are still in the tractable regime.
Overall,
since all of our \NP-hardness results hold without edge costs or forced edges,
this suggests that these two aspects
play at most a limited role in
the problem’s intractability.

Natural directions for future work include
(i) closing the gaps for~$\eta$ and $\Delta$ in our table and
(ii) $\kappa$-connected habitats with~$\kappa\geq 3$.
As to~(i),
the largest gap is for maximum degree 3: we prove \NP-hardness for habitats of size 22,
but the problem is polynomial-time solvable for habitats of size 6.
In the edge-variant of \rgbpAcr{},
the remaining gap for size-5 habitats is particularly interesting.

{\begingroup
	\let\clearpage\relax
	\renewcommand{\url}[1]{\href{#1}{$\ExternalLink$}}
	\bibliography{gbp-2con-bib}
\endgroup}
\end{document}